\newtheorem*{rep@theorem}{\rep@title}
\newcommand{\newreptheorem}[2]{%
\newenvironment{rep#1}[1]{%
 \def\rep@title{#2 \ref*{##1}, repeated}%
 \begin{rep@theorem}}%
 {\end{rep@theorem}}}
\newtheorem{lemma}{Lemma}
\newtheorem{problem}{Problem}
\newtheorem{proposition}{Proposition}
\newtheorem{corollary}{Corollary}[proposition]
\tikzset{middle segment/.style={decoration={middle},decorate, segment length=#1}}
\def\norm#1{ {|\hspace{-.015in}|#1|\hspace{-.015in}|} }
\def\bignorm#1{ {\left \Vert #1 \right \Vert} }
\def\abs#1{ {|#1|}}
\def\bigabs#1{ {\left|#1 \right|}}
\newcommand{\tr}[2]{\textnormal{Tr}_{#1}(#2)}
\newcommand{\bigtr}[2]{\textnormal{Tr}_{#1}\bigg(#2\bigg)}
\newcommand{\expect}[1]{\langle #1 \rangle}
\newcommand{\vecket}[1]{|{#1} \rrangle}
\newcommand{\bigket}[1]{\left|{#1} \right\rangle}
\newcommand{\bigbra}[1]{\left\langle{#1} \right|}
\newcommand{\vecbra}[1]{\llangle {#1}|}
\newcommand\numeq[1]%
\newcommand\numleq[1]%
\newcommand\numgeq[1]%
\begin{document}

% Use the \preprint command to place your local institutional report
% number in the upper righthand corner of the title page in preprint mode.
% Multiple \preprint commands are allowed.
% Use the 'preprintnumbers' class option to override journal defaults
% to display numbers if necessary
%\preprint{}

%Title of paper
%\title{Accuracy guarantees and quantum advantage for noisy analogue simulation of open quantum systems}
\title{Accuracy guarantees and quantum advantage in analogue open quantum simulation with and without noise}
\author{Vikram Kashyap}
\affiliation{Department of Electrical and Computer Engineering, University of Washington - 98195, USA}
\author{Georgios Styliaris}
\affiliation{Max-Planck-Institut f\"ur Quantenoptik, Garching bei M\"unchen - 85748, Germany}
\affiliation{Munich Center for Quantum Science and Technology (MCQST), Schellingstraße 4, D-80799 München, Germany}
\author{Sara Mouradian}
\affiliation{Department of Electrical and Computer Engineering, University of Washington - 98195, USA}
\author{J.~Ignacio Cirac}
\affiliation{Max-Planck-Institut f\"ur Quantenoptik, Garching bei M\"unchen - 85748, Germany}
\affiliation{Munich Center for Quantum Science and Technology (MCQST), Schellingstraße 4, D-80799 München, Germany}
\author{Rahul Trivedi}
\affiliation{Max-Planck-Institut f\"ur Quantenoptik, Garching bei M\"unchen - 85748, Germany}
\affiliation{Munich Center for Quantum Science and Technology (MCQST), Schellingstraße 4, D-80799 München, Germany}
\affiliation{Department of Electrical and Computer Engineering, University of Washington - 98195, USA}

\newcolumntype{L}[1]{>{\raggedright\let\newline\\\arraybackslash}p{#1}}

% repeat the \author .. \affiliation  etc. as needed
% \email, \thanks, \homepage, \altaffiliation all apply to the current
% author. Explanatory text should go in the []'s, actual e-mail
% address or url should go in the {}'s for \email and \homepage.
% Please use the appropriate macro foreach each type of information

% \affiliation command applies to all authors since the last
% \affiliation command. The \affiliation command should follow the
% other information
% \affiliation can be followed by \email, \homepage, \thanks as well.

%\homepage[]{Your web page}
%\thanks{}

%Collaboration name if desired (requires use of superscriptaddress
%option in \documentclass). \noaffiliation is required (may also be
%used with the \author command).
%\collaboration can be followed by \email, \homepage, \thanks as well.
%\collaboration{}
%\noaffiliation

\date{\today}

% insert suggested keywords - APS authors don't need to do this
%\keywords{}
\begin{abstract}
    Many-body open quantum systems, described by Lindbladian master equations, are a rich class of physical models that display complex equilibrium and out-of-equilibrium phenomena which remain to be understood. In this paper, we theoretically analyze noisy analogue quantum simulation of geometrically local open quantum systems and provide evidence that this problem is both hard to simulate on classical computers and could be approximately solved on near-term quantum devices. First, given a noiseless quantum simulator, we show that the dynamics of local observables and the fixed-point expectation values of rapidly-mixing local observables in geometrically local Lindbladians can be obtained to a precision of $\varepsilon$ in time that is $\text{poly}(\varepsilon^{-1})$ and uniform in system size. Furthermore,  we establish that the quantum simulator would provide a superpolynomial advantage, in run-time scaling with respect to the target precision and either the evolution time (when simulating dynamics) or the Lindbladian's decay rate (when simulating fixed-points), over any classical algorithm for these problems, assuming BQP $\neq$ BPP. We then consider the presence of noise in the quantum simulator in the form of additional geometrically-local Linbdladian terms. We show that the simulation tasks considered in this paper are stable to errors, i.e.~they can be solved to a noise-limited, but system-size independent, precision. Finally, we establish that, assuming BQP $\neq$ BPP, there are stable geometrically local Lindbladian simulation problems such that as the noise rate on the simulator is reduced, classical algorithms must take time superpolynomially longer in the inverse noise rate to attain the same precision as the analog quantum simulator. 
\end{abstract}
%\maketitle must follow title, authors, abstract, and keywords
\maketitle

% body of paper here - Use proper section commands
% References should be done using the \cite, \ref, and \label commands

\section{Introduction}
An extensive body of results suggest that, with a fault-tolerant quantum computer, several many-body problems relating to both dynamics and equlibrium properties can be efficiently simulated~\cite{bauer2023quantum, bauer2023quantum2}. However, despite the recent experimental demonstrations of quantum error correction to build logical qubits~\cite{sivak2023real,livingston2022experimental, yao2012experimental, bluvstein2024logical,Ryan-Anderson2021Dec}, building a large-scale fault-tolerant quantum computer still remains a massive technological challenge. At the same time, there has been increasing interest in using available quantum devices without error correction to obtain approximate solutions to quantum many-body problems. A major effort in this direction is to use quantum devices as ``analogue quantum simulators", where the target many-body Hamiltonian is configured on a (semi-) programmable quantum device, and due to its own physics the quantum device simulates the target many-body problem \cite{daley2022practical}.

Most of the focus of analogue quantum simulation has been confined to simulating closed many-body systems i.e.~many-body problems that are specified by a Hamiltonian. The goal is typically to measure an intensive observable in either a non-equilibrium state associated with the Hamiltonian (such as the state generated by time-evolving an initial product state under the Hamiltonian), or an associated equilibrium state (such as its Gibb's state or ground state). The mappings of a variety of physically interesting Hamiltonians arising in solid-state physics, condensed matter physics and high energy physics onto existing quantum computing platforms such as trapped ions, superconducting qubits, or atomic systems have been extensively developed~\cite{daley2022practical, buluta_nori_2009, bauer2023quantum, Houck2012Apr, Schafer2020Aug, georgescu_ashhab_nori_2014}.

However, closed many-body systems are just a special case of open many-body systems, which can be modelled by Lindblad master equations. Dynamics and equilibrium properties of open many-body systems have a number of physical effects -- such as driven-dissipative phase transitions \cite{kessler2012dissipative, horstmann2013noise}, super and sub-radiance \cite{benedict2018super, sierra2022dicke, masson2022universality, asenjo2017exponential, rubies2022superradiance}, and exceptional spectral points \cite{ding2022non, kawabata2019classification, graefe2008non, alvarez2018non, minganti2019quantum} -- that are qualitatively different from closed systems and make them independently interesting target problems for quantum simulation. Furthermore, there is evidence to suggest that open quantum system problems could inherently be more robust to noise than closed quantum system problems \cite{trivedi2024quantum}. In particular, if a quantum simulator is implementing only a Hamiltonian, then well-known no-go results indicate that due to entropy accumulation in the presence of a constant rate of depolarizing noise, the simulator state converges exponentially with simulation time to the maximally mixed state \cite{aharonov1996limitations, stilck2021limitations, de2023limitations, gonzalez2022error, mishra2023classically}. However, these no-go theorems are not applicable to a quantum simulator implementing a Lindbladian, since the simulator dynamics inherently could counter the accumulation of entropy due to external noise.

\begin{figure}
\includegraphics[width=.5\textwidth]{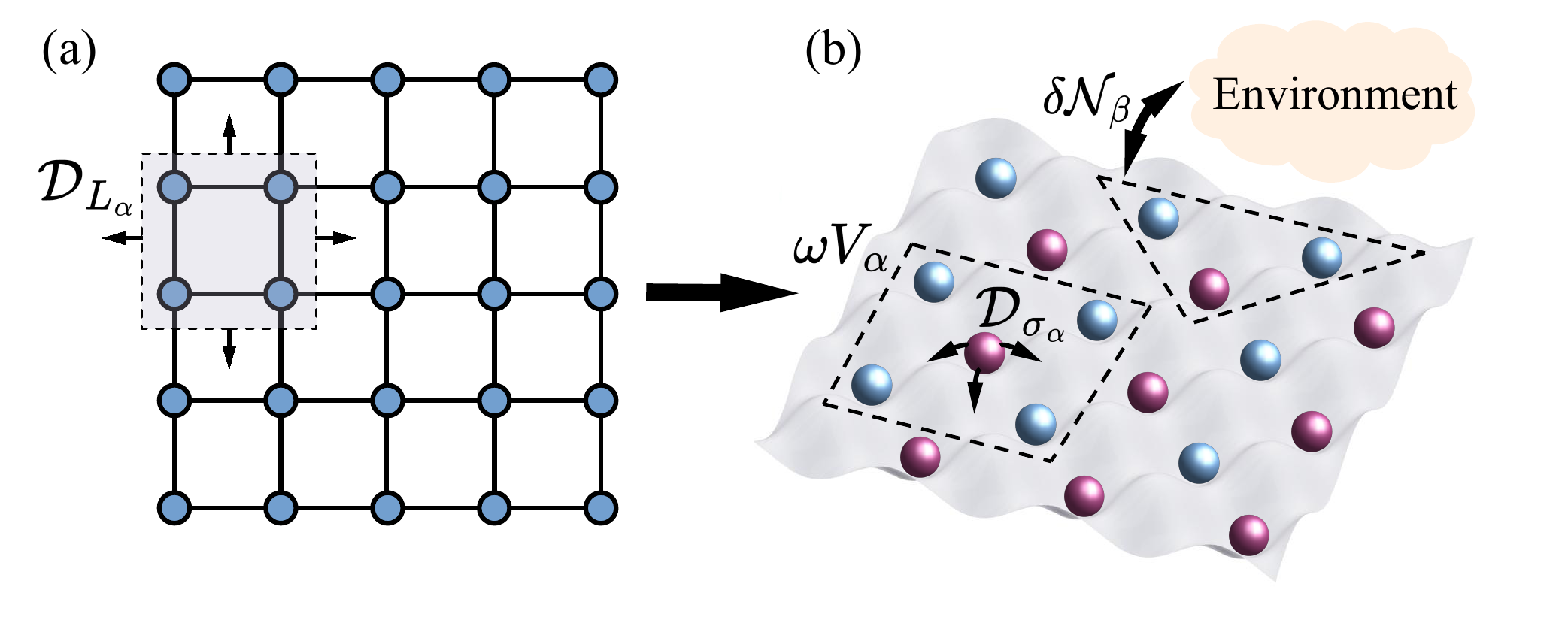}
\caption{\label{fig:setup}Diagrammatic depiction of the analogue quantum simulator. (a) The target Lindbladian acting on a lattice of sites is the sum of dissipator Linbdladians $\mathcal{D}_{L_\alpha}$, each corresponding to a geometrically local jump operator $L_\alpha$. (b) The analog quantum simulator approximates each Lindbladian $\mathcal{D}_{L_\alpha}$ by implementing a Hamiltonian $\omega V_\alpha$ that couples the sites (blue) on the lattice region to an ancilla qubit (purple) that undergoes dissipation resetting it to $\ket{0}$. Noise due to unwanted coupling to an environment is modeled as additional local Lindbladians $\mathcal{N}_\beta$ of strength $\delta$ each acting on regions that may include both simulated lattice sites and ancillae. This figure is inspired by the schematic in  Ref.~\cite{trivedi2024quantum}}.
\end{figure}

Several proposals for implementing the digital quantum simulation of Markovian \cite{ pocrnic2023quantum, cattaneo2023quantum, suri2023two, cleve2016efficient, childs2016efficient, li2022simulating} and non-Markovian open quantum systems \cite{trivedi2022description, li2023succinct} on fault-tolerant quantum computers have been recently developed. Furthermore, Ref.~\cite{guimaraes2023noise} has recently proposed and experimentally demonstrated using noise inherent in near-term devices to design digital circuits that can simulate several resctricted, but physically motivated, classes of Lindbladians and without using any ancilla qubits. 

Analogue quantum simulation of general Markovian open quantum systems has also been proposed  \cite{pastawski_clemente_cirac_2011, zanardi_marshall_venuti_2016, ding2024simulating}, although it has not received as much attention as its closed system counterpart. A relatively simple way of simulating a given Lindbladian is to use a set of ancillae, each continuously reset to its ground state (i.e.~experiencing an amplitude damping channel), and couple the ancillae to the system via a time-independent Hamiltonian that depends on the jump operators in the target Lindbladian. By tuning the strength of the Hamiltonian in comparison to the rate of rate of amplitude damping, an approximation of the target Lindbladian dynamics can be implemented on this system [Fig.~\ref{fig:setup}].  Since implementing both amplitude-damping dissipation and a Hamiltonian interaction is possible on many experimental platforms, this method is particularly suitable for analogue quantum simulation of open quantum systems. The method we consider here has been used to implement dissipative quantum memories \cite{pastawski_clemente_cirac_2011} and been identified as a strategy for Lindbladian simulation \cite{zanardi_marshall_venuti_2016}. However, it remains unclear if, and under what circumstances, this experimentally simple analogue quantum simulation protocol can provide a good approximation to many-body Lindbladians, especially in the presence of errors in the quantum simulator.
 
In this paper, we analyze the analogue quantum simulation of open quantum systems, both in the noiseless and noisy settings, and rigorously prove accuracy guarantees on the quantum simulation protocol. We first consider the noiseless setting, where a user-specified Hamiltonian and an amplitude damping channel can be perfectly implemented on the analog quantum simulator. We establish that a generic $k-$local Lindbladian implemented on $n$ qudits (of any dimension) for time $t$  can be simulated with an overhead in simulation time which is at most polynomial in the number of qudits $n$ and the evolution time $t$. We then focus on the case of a spatially local Lindbladian on a lattice and show that, as a consequence of Lieb-Robinson bounds \cite{poulin2010lieb, barthel_kliesch_2012}, local observables on an analogue quantum simulator can be measured in a simulation time that scales polynomially with $t$ but is uniform in the system size $n$. For geometrically local Lindbladians, we also consider the problem of measuring either long-time dynamics of the local observable, or its fixed point expectation value. We show that, under the assumption of rapid mixing \cite{cubitt2015, lucia2015rapid}, this can again be done on an analogue quantum simulator in a simulation time that is uniform in the system size $n$ (and additionally independent of time $t$ in the case of long time dynamics). Our main theoretical technical contribution that enables us to establish these results is to develop a rigorous analysis of adiabatic elimination of the damped ancillae which explicitly leverages the finite Lieb-Robinson velocity in the target system. 

Simultaneously, we build upon the quantum-circuit to Lindbladian mapping developed in Ref.~\cite{verstraete2009quantum} and show that unless Bounded Quantum Polynomial Time (BQP) = Bounded Probabilistic Polynomial Time (BPP), we do not expect there to be a classical algorithm that has a run-time which is polynomial in both the inverse precision $\varepsilon^{-1}$ and the evolution time $t$ (when simulating dynamics) or the inverse of the Lindbladian's decay rate $\gamma^{-1}$ (when simulating fixed-points), and consequently we expect there to be a superpolynomial separation between the run-time of the quantum simulator and the best possible classical algorithm. This holds true even when restricting ourselves to the physically relevant setting of time-independent and spatially local 2D Lindbladians. To establish this result for 2D local Lindbladians, we adapt the circuit-to-2D-Hamiltonian-ground-state mapping developed in Ref.~\cite{aharonov2008adiabatic} to the dissipative setting and theoretically establish that the resulting master equation mixes in a time that scales polynomially in the system size.

Finally, we consider the presence of errors or noise in the quantum simulator, which can either be coherent configuration errors or incoherent errors due to interaction with an external environment. Since typically every qudit on the quantum simulator would be noisy, there are extensively many errors on the simulator. In the worst case, these errors could accumulate and result in the observable being measured on an $n$-qudit quantum simulator incurring an error $\sim \delta \times n$. Consequently, as the system size is increased (e.g.~to simulate the thermodynamic limit of a many-body observable), any constant error rate $\delta$ would cause the simulated observable to diverge from its true value. However, we establish that when computing dynamics of local observables  or the fixed point expected value of rapidly mixing local observables, a noisy quantum simulator incurs a \emph{system-size independent} error---the precision in the simulated observable is only limited by noise rate and scales subpolynomially with the noise rate, reaching perfect precision for vanishing noise rate. Thus, these simulation task are \emph{stable} in the sense defined in Ref.~\cite{trivedi2024quantum} and can be solved using near-term experimental platforms to a hardware-error limited precision. Finally, by combining these stability results with the circuit-to-2D-geometrically-local-Lindbladian encoding, we also establish that to solve these problems to the same precision as achieved by the noisy quantum device, any classical algorithm would require time that scales superpolynomially with the hardware error rate.

\section{Summary of Results}
\subsection{Theoretical results}\label{sec:theory_result}
Throughout this paper, we will concern ourselves with simulating a Lindladian. Any Lindbladian over a Hilbert space $\mathcal{H}$ ($\text{dim}(\mathcal{H}) < \infty$) can be specified by a set of jump operators $L_1, L_2 \dots L_M$, and a Hamiltonian $H_\text{sys}(t)$. The corresponding Lindbladian master equation is 
\begin{subequations}\label{eq:lindbladian}
\begin{align}
    \frac{d}{dt}\rho(t) = -i[H_\text{sys}, \rho(t)] + \sum_{\alpha  = 1}^M \mathcal{D}_{L_\alpha}(\rho(t)),
\end{align}
where $\rho(t)$ is the time-dependent density matrix of the physical system under consideration and, for any operator $A$, the dissipator corresponding to $A$ is the superoperator
\begin{align}
\mathcal{D}_A(X) = AXA^\dagger - \frac{1}{2}\{X, A^\dagger A\}.
\end{align}
\end{subequations}
We assume that the time-unit in the physical problem is normalized such that $\norm{L_\alpha} \leq 1$ for all $\alpha$. For typical problems in physics the qudits are laid out in a lattice, the jump operators are local operators that are supported on a subset of neighbouring qudits, and the Hamiltonian is also geometrically local i.e.~a sum of such local operators. 

If a quantum simulator could configure any specified dissipator, there would be a direct mapping between the target master equation and the quantum simulator. However, in most available experimental systems the quantum simulator can controllably implement a (family of) Hamiltonians and simple single qubit dissipators, necessitating an experimentally simple way of mapping a Lindbladian to a quantum simulator.

This problem has been addressed previously --- in particular, Ref.~\cite{pastawski_clemente_cirac_2011} identified that a dissipator $\mathcal{D}_{L_\alpha}$ can be effectively implemented by using an ancilla qubit with a large amplitude damping dissipation (i.e.~a dissipator that maps $\ket{1}$ of the ancilla to $\ket{0}$) and coupling it weakly to the system with a Hamiltonian that depends on the jump operator $L_\alpha$. More specifically, to simulate the Lindbladian in Eq.~\ref{eq:lindbladian} over Hilbert space $\mathcal{H}_\mathcal{S}$, we define another Hilbert space $\mathcal{H}_\mathcal{A}=(\mathbb{C}^2)^{\otimes M}$ of $M$ ancilla qubits and  consider the following Lindbladian over $\mathcal{H}_\mathcal{S}\otimes \mathcal{H}_\mathcal{A}$
\begin{subequations}\label{eq:qsim_basic}
    \begin{align}
        \frac{d}{dt}\rho_\omega(t) = \mathcal{L}_\omega (\rho_\omega(t))
    \end{align}
    with
    \begin{align}
        \mathcal{L}_\omega(X) = -i\omega^2 [H_\text{sys}, X] -i\omega\bigg[\sum_{\alpha = 1}^M V_\alpha, X\bigg]+ 4\sum_{\alpha = 1}^M \mathcal{D}_{\sigma_\alpha}(X),
    \end{align}
where $\sigma_\alpha = \ket{0_\alpha}\!\bra{1_\alpha}$ is the lowering operator on the $\alpha^\text{th}$ ancilla qubit, 
\begin{align}
V_\alpha = L_\alpha \sigma_\alpha^\dagger + L_\alpha^\dagger \sigma_\alpha
\end{align}
\end{subequations}
is the interaction Hamiltonian between the $\alpha^\text{th}$ ancilla and the system, and the dimentionless parameter $\omega$ controls the strength of the interaction between the ancillae and the system. 

The parameter $\omega$ also controls the extent to which the reduced state of the system, $\text{Tr}_{\mathcal{A}}(\rho_\omega(t))$, is a faithful approximation of the target state $\rho(t)$ satisfying Eq.~\ref{eq:lindbladian}a. To see this physically, we treat the ancillae as an environment for the system qudits. The limit of small $\omega$ then corresponds to the limit of weak system-environment interaction in which we expect the ancillae to behave like a bath describable by the Born-Markov approximation \cite{carmichael2009open}. Equivalently, we expect $\text{Tr}_\mathcal{A}(\rho_\omega(t))$ to satisfy a Markovian master equation. Consequently, as we will rigorously establish in this paper with a concrete bound for finite $\omega$, it is true that a smaller $\omega$ guarantees a more accurate simulation of the target quantum dynamics. In particular,
\begin{align}\label{eq:quantum_simulator_qualitative_correctness}
\lim_{\omega \to 0}\text{Tr}_\mathcal{A}\bigg(\rho_\omega\bigg(\frac{t}{\omega^2}\bigg)\bigg) = \rho(t).
\end{align}
It is important to note that Eq.~\ref{eq:quantum_simulator_qualitative_correctness} suggests that even for a fixed time $t$, the simulation time needed on the quantum simulator increases on decreasing $\omega$, i.e.~the greater the accuracy required in simulating the target master equation, the slower the quantum simulator will be.

While this provides a qualitative reason why a quantum simulator can reproduce the dynamics of the target system, it leaves open several theoretical questions that are important from the point of view of current experiments. \emph{First}, can we provide concrete run-time bounds on the quantum simulator and gauge its performance for various physically relevant models? And is it true that an analogue open quantum system simulator can potentially provide a quantum advantage over classical algorithms for physically relevant  models? \emph{Second}, does the presence of errors (coherent or incoherent) catastrophically impact the performance of the analogue open quantum simulator, and if not, is the analogue open quantum simulator stable to errors for any problems which are both physically interesting \emph{and} classically hard? This question is even more important in view of Eq.~\ref{eq:quantum_simulator_qualitative_correctness} which suggests that the quantum simulation time must be increased to increase the accuracy of the simulation, and hence it is possible that even a small but constant error rate could accumulate over this long simulation time and result in the simulator's output being completely incorrect. It is therefore crucial to understand if and under what circumstances this quantum simulation protocol could be stable to errors in the quantum simulator and trusted to produce a faithful approximation of the target problem. In this paper, we provide answers to all of these questions.

\subsubsection{Noiseless setting}
\emph{Dynamics}. We first analyze the noiseless quantum simulator and provide concrete run-time bounds for the simulation tasks considered in this paper. We will also provide complexity-theoretic evidence of quantum advantage, even for the physically motivated restricted settings that we consider in this paper. We begin by considering the general Lindblad master equation (Eq.~\ref{eq:lindbladian}) as the simulation target. Our goal is to obtain the full density matrix at time $t$ to a given precision $\varepsilon$ in trace-norm. Assuming that $\norm{L_\alpha} \leq 1$ for all $\alpha$, we establish the following simulation time bound on the analogue open quantum simulator.
\begin{proposition}[Noiseless simulator runtime]
\label{prop:general_lindbladian}
    With all the ancillae initialized to the state $\ket{0}$ and for any time $t > 0$, $\norm{\rho(t) - \textnormal{Tr}_\mathcal{A}(\rho_\omega(t/\omega^2))}_1 \leq \varepsilon$ can be obtained with
\[
\omega = \Theta\left(\frac{\varepsilon^{1/2}}{\big(M + 4M \norm{H_\textnormal{sys}} t + 4M^2 t\big)^{1/2}} \right),
\]
which implies that the required run-time on the simulator $t_\textnormal{sim} = t/\omega^2$ scales as
\[
t_\textnormal{sim} =  \Theta\left(\frac{Mt + 4M \norm{H_\textnormal{sys}}t^2 + 4M^2 t^2}{\varepsilon} \right).
\]
\end{proposition}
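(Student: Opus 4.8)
The plan is to treat the simulator generator in Eq.~\ref{eq:qsim_basic} as a singularly-perturbed Lindbladian and perform a rigorous adiabatic elimination of the strongly-damped ancillae. After rescaling time to $s = \omega^2 t_\textnormal{sim}$ (so that $s=t$ corresponds to the lab time $t/\omega^2$), the rescaled generator splits as $\omega^{-2}\mathcal{L}_0 + \omega^{-1}\mathcal{L}_1 + \mathcal{L}_2$, where $\mathcal{L}_0 = 4\sum_\alpha \mathcal{D}_{\sigma_\alpha}$ is the fast ancilla damping, $\mathcal{L}_1 = -i[\sum_\alpha V_\alpha,\,\cdot\,]$ the system--ancilla coupling, and $\mathcal{L}_2 = -i[H_\textnormal{sys},\,\cdot\,]$. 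The kernel of $\mathcal{L}_0$ is the decoherence-free manifold of states with all ancillae in $\ket{0}$; I denote by $\mathcal{P}[X] = \tr{\mathcal{A}}{X}\otimes(\ket{0}\!\bra{0})^{\otimes M}$ the projection onto it and set $\mathcal{Q} = \mathrm{id} - \mathcal{P}$. Because the amplitude-damping dissipator is gapped, $\mathcal{L}_0$ is invertible on the range of $\mathcal{Q}$: on the single-excitation coherence sector it acts as multiplication by $-2$, so the reduced resolvent $\mathcal{L}_0^{-1}\mathcal{Q}$ is bounded by an $O(1)$ constant, uniformly in $\omega$ and in system size.

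First I would build a second-order ansatz $\tilde\rho_\omega(s) = \mathcal{P}[\rho(s)] + \omega\,\mathcal{R}_1[\rho(s)] + \omega^2\,\mathcal{R}_2[\rho(s)]$, where $\rho(s)$ solves the target equation Eq.~\ref{eq:lindbladian}a. Matching powers of $\omega$ in the simulator equation fixes $\mathcal{R}_1 = -\mathcal{L}_0^{-1}\mathcal{Q}\mathcal{L}_1\mathcal{P}$ and an analogous resolvent expression for $\mathcal{R}_2$. The order-$\omega^{-1}$ balance holds because $\mathcal{P}\mathcal{L}_1\mathcal{P}=0$ (the coupling raises an ancilla and so has no component back on the slow manifold), which guarantees the leading correction is genuinely second order. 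The crux of this step is to verify that the induced slow generator $\mathcal{P}\mathcal{L}_2\mathcal{P} - \mathcal{P}\mathcal{L}_1\mathcal{L}_0^{-1}\mathcal{Q}\mathcal{L}_1\mathcal{P}$ reproduces the target Lindbladian exactly: the first term gives $-i[H_\textnormal{sys},\,\cdot\,]$, and a direct computation using $V_\alpha = L_\alpha\sigma_\alpha^\dagger + L_\alpha^\dagger\sigma_\alpha$ shows the second term equals $(4/\kappa)\sum_\alpha\mathcal{D}_{L_\alpha}$ with $\kappa$ the damping rate, so the choice $\kappa=4$ makes it collapse precisely to $\sum_\alpha\mathcal{D}_{L_\alpha}$ (cross terms $\alpha\neq\beta$ drop out under $\mathcal{P}$ because they live in two-excitation sectors). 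I would also record that $\tr{\mathcal{A}}{\mathcal{R}_1[\rho]}=0$, since $\mathcal{R}_1[\rho]$ lies purely in the single-excitation ancilla-coherence sector; hence the reduced ansatz $\tr{\mathcal{A}}{\tilde\rho_\omega(s)}$ already agrees with $\rho(s)$ to $O(\omega^2)$.

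Next I would control $\rho_\omega(s)-\tilde\rho_\omega(s)$ via the variation-of-constants (Duhamel) identity, expressing it as the contraction-semigroup image of the initial mismatch $\Delta(0) = -\omega\mathcal{R}_1[\rho(0)] - \omega^2\mathcal{R}_2[\rho(0)]$ plus the time integral of the order-$\omega$ residual generated by the uncancelled terms $\mathcal{R}_i\mathcal{L}_\textnormal{target} - \mathcal{L}_j\mathcal{R}_k$. To bound the \emph{reduced} trace-norm error I would pass to the Heisenberg picture, pairing $\Delta(s)$ with a system observable $O\otimes I_\mathcal{A}$ and propagating $O\otimes I_\mathcal{A}$ under the adjoint semigroup; since a slow observable overlaps the fast residual sector only at $O(\omega)$, this upgrades the naive $O(\omega)$ estimate to $O(\omega^2)$. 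The $t$-independent contribution comes from the initial slip $\Delta(0)$ (a boundary layer damped on the rescaled timescale $\omega^2$), while integrating the residual over $s\in[0,t]$ yields the $t$-linear terms. Tracking the combinatorics---$\mathcal{L}_1$ is a sum over $M$ ancillae, so the dissipative second-order contribution carries $M^2$ cross terms and the Hamiltonian cross term carries a factor $M\norm{H_\textnormal{sys}}$---gives a reduced error of order $\omega^2\big(M + M\norm{H_\textnormal{sys}}t + M^2 t\big)$. Setting this $\leq\varepsilon$ and solving for $\omega$ reproduces the stated scaling, and $t_\textnormal{sim}=t/\omega^2$ follows immediately.

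The step I expect to be the main obstacle is precisely closing the gap between the $O(\omega)$ bound one obtains naively---from contractivity of the CPTP simulator semigroup applied to an $O(\omega)$ residual and an $O(\omega)$ initial slip---and the $O(\omega^2)$ error claimed for the reduced state. Making this rigorous requires exploiting that the $O(\omega)$ discrepancies live entirely in the fast ancilla-coherence sector annihilated by $\tr{\mathcal{A}}{\cdot}$ at leading order, so that their feedback into the slow system sector is itself suppressed by a further factor of $\omega$. This is the heart of the rigorous adiabatic-elimination argument, and it is where the uniform-in-size resolvent bound $\norm{\mathcal{L}_0^{-1}\mathcal{Q}} = O(1)$ and the careful $M$-versus-$M^2$ bookkeeping must be combined; by comparison, the order-matching algebra and the verification of the factor-$4$ normalization are routine once the projection and resolvent structure is in place.
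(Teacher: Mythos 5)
Your route is genuinely different from the paper's. You set up a Kato-style projection onto the zero-excitation manifold, build a formal power-series ansatz $\mathcal{P}\rho+\omega\mathcal{R}_1\rho+\omega^2\mathcal{R}_2\rho$ with $\mathcal{R}_1=-\mathcal{L}_0^{-1}\mathcal{Q}\mathcal{L}_1\mathcal{P}$, verify that the induced slow generator is the target Lindbladian, and control the error by Duhamel. The paper instead never expands the state: it writes exact coupled equations of motion for $\tr{\mathcal{A}}{\rho_\omega(t)}$ and $\tr{\mathcal{A}}{\sigma_\alpha\rho_\omega(t)}$, integrates the latter and integrates by parts so that the reduced dynamics becomes exactly $\omega^2\mathcal{L}\tr{\mathcal{A}}{\rho_\omega}$ plus an explicit remainder $\mathcal{R}_\omega(t)$ (Lemma~\ref{lemma:remainder}), and then bounds each remainder term using the excitation bounds $\norm{\tr{\mathcal{A}}{\sigma_\alpha\rho_\omega(t)}}_1\le\omega/2$ and $\norm{\tr{\mathcal{A}}{\sigma_\alpha^{(u)}\sigma_{\alpha'}\rho_\omega(t)}}_1\le\omega^2/4$ proved via input-output equations (Lemma~\ref{lemma:tr_sigma_bounds}). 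Your bookkeeping of the final error, $\omega^2(M+M\norm{H_\textnormal{sys}}t+M^2t)$, and your identification of which pieces give the $t$-independent versus $t$-linear contributions, match the paper's result; your approach is essentially the one the paper attributes to the Zanardi-type analyses and deliberately departs from, partly because the direct remainder form is easier to combine with Lieb--Robinson bounds later.

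The gap is the step you yourself flag as the main obstacle: you assert, but do not establish, that the time-integrated $O(\omega)$ residual contributes only $O(\omega^2)$ to the \emph{reduced} state. The claim that ``a slow observable overlaps the fast residual sector only at $O(\omega)$'' is itself a quantitative adiabatic-elimination statement about the Heisenberg evolution $e^{\mathcal{G}^\dagger\tau}(O\otimes I_{\mathcal{A}})$ under the full generator $\mathcal{G}=\omega^{-2}\mathcal{L}_0+\omega^{-1}\mathcal{L}_1+\mathcal{L}_2$, uniformly over $\tau\in[0,t]$ and with constants that must be tracked in $M$ and $\norm{H_\textnormal{sys}}$; proving it requires a second perturbative argument of the same difficulty as the one you are trying to close, so as written the argument is circular at its core. (A related loose end: $\mathcal{L}_1\mathcal{R}_2$ naively carries $M^3$ terms, and you need the observation that only the $\alpha=\alpha'$ pairings feed back into the range of $\mathcal{P}$ to recover the stated $M^2$ scaling.) This is precisely the content the paper supplies concretely through Lemma~\ref{lemma:tr_sigma_bounds}: the one-excitation and two-excitation moments of the ancillae are bounded by $\omega/2$ and $\omega^2/4$ for all times, which converts every ostensibly $O(\omega)$ remainder term into an $O(\omega^2)$ one with explicit constants. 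Until you prove an analogous uniform-in-time suppression statement, the $O(\omega)\to O(\omega^2)$ upgrade, and hence the proposition, is not established.
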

\noindent Importantly, in this general setting and if the required target is the full quantum state to a desired precision, then the analogue quantum simulator must run for a time that increases with the number of jump operators $M$, which in a typical $n$-qudit problem scales as $\textnormal{poly}(n)$. Furthermore, we expect this scaling to be tight as far as a target precision in the full many-body state is concerned since we incur an extensive error in using damped ancillae to mimic a Markovian environment.  Finally, we remark that our bound on the quantum simulation time exhibits a quadratic slowdown for the analogue quantum simulator with respect to the target dynamics i.e.~$t_\text{sim} \sim t^2$ --- this is expected and consistent with lower bounds on digital quantum simulation time for Lindbladians previously described in Ref.~\cite{cleve2016efficient}.

The key technical tools that we develop in this paper that enabled us to prove this proposition are (1) a procedure to rigorously adiabatically eliminate the ancillae while accounting for the adiabatic elimination errors in terms of the excitation number in the ancillae together with (2) an excitation number bound on the ancillae which decreases sufficiently fast with $\omega$. This analysis approach also forms the basis of the proof of the better run-time bounds for spatially local Lindbladians provided in the propositions that follow.

We remark that, to the best of our knowledge, such an analogue simulation protocol has been analyzed previously only in Refs.~\cite{zanardi_marshall_venuti_2016, zanardi2014coherent} and \cite{pastawski_clemente_cirac_2011}. In Refs.~\cite{zanardi_marshall_venuti_2016, zanardi2014coherent}, the bounds provided effectively do not explicitly evaluate the dependence of the simulator time $t_\text{sim}$ on the target time $t$ compared to the result in Proposition~\ref{prop:general_lindbladian}. Furthermore, our proof of this proposition also significantly differs from Refs.~\cite{zanardi_marshall_venuti_2016, zanardi2014coherent} and is more amenable to the analysis of geometrically local models considered in the remainder of this paper. Additionally, in Ref.~\cite{pastawski_clemente_cirac_2011}, the authors also considered an encoding similar to the one considered in Proposition~\ref{prop:general_lindbladian}, but their error analysis followed a significantly different approach and was restricted to a single jump operator.
\begin{figure*}
\includegraphics[width=1\textwidth]{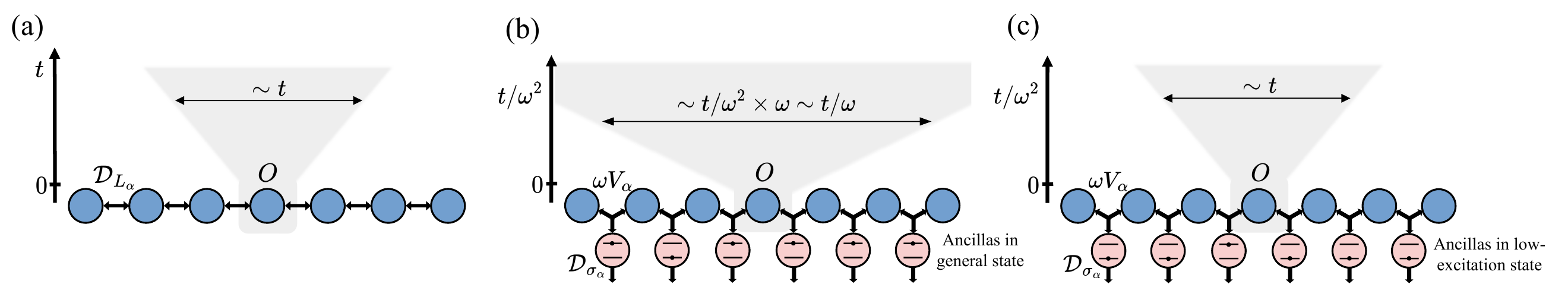}
\caption{\label{fig:lightcone}(a) In a system evolving under the target Lindbladian $\mathcal{L}$, the support of a local observable $O$ spreads in the Heisenberg picture in a ``light-cone" with a Lieb-Robinson velocity scaling as $\sim\norm{\mathcal{L}_\alpha}$. (b) In the quantum simulator we analyze, the Lieb-Robinson velocity scales as $\sim \omega\norm{V_\alpha} \sim \omega$; however the simulator also runs slower by a factor of $1/\omega^2$, leading to a  diverging Lieb-Robinson light-cone as we take $\omega \to 0$. (c) Our analysis shows that the effective light-cone is no longer diverging if we account for the ancillae (red) being heavily damped and thus in a low-excitation state.}
\end{figure*}

Next, we consider geometrically local models, which frequently appear in the study of many-body open quantum systems in physics. We make the assumption that the system $\mathcal{S}$ consists of a set of qudits arranged on a lattice $\mathbb{Z}^d$ and the target Lindbladian $\mathcal{L}$ is assumed to be of the form
\[
\mathcal{L} = \sum_{\alpha} \mathcal{L}_\alpha,
\]
where $\mathcal{L}_\alpha$ is a Lindbladian corresponding to a jump operator $L_\alpha$, with $\norm{L_\alpha} \leq 1$, and Hamiltonian $h_\alpha$ with $\norm{h_\alpha} \leq 1$. Both $L_\alpha$ and $h_\alpha$ are assumed to be supported on qudits in $S_\alpha$ which has a diameter $a$, i.e. $\text{diam}(S_\alpha) = \max_{x, y \in S_\alpha} d({x, y}) \leq a$, where $d(x, y)$ is the Manhattan distance between $x, y$.
Such models have a finite velocity at which correlations can propagate across the lattice, formalized by the well-known Lieb-Robinson bounds \cite{poulin2010lieb}.

In the following we analyze the restricted problem of measuring local observables in the dynamics and fixed points of such Lindbladians. The physical motivation behind this choice stems from the fact that in many-body physics settings it is typically of interest to only measure intensive order parameters that are expressible as single (or weighted sums of) local observables. Furthermore, in the settings that we consider, the local observables have a well-defined thermodynamic limit --- the system size $n$ can be taken to infinity and the local observable will converge. The computational problem of interest is to compute this thermodynamic limit of the local observable. Importantly, we will rigorously establish that the run-time of the analog simulator needed to simulate these observables to a given precision will be independent of the system size --- this independence from the system size $n$ is also consistent with the fact that the local observables converge to a thermodynamic limit as $n \to \infty$. 

Consider first the problem of using the analogue quantum simulator to compute a local observable, at time $t$, for a spatially local Lindbladian. Given that a spatially local Lindbladian has a finite Lieb-Robinson velocity [Fig.~\ref{fig:lightcone}(a)]\cite{poulin2010lieb}, and as long as $t$ is independent of the system size $n$, local observables have a well defined thermodynamic limit. Furthermore, also due to the Lieb-Robinson bounds, it could be expected that the choice of $\omega$ needed to obtain a good approximation for a local observable, instead of depending on the total number of jump operators in the Lindbladians as suggested by Proposition~\ref{prop:general_lindbladian}, should depend on the number jump operators within the light-cone until time $t$. However, a closer look at the quantum simulator Lindbladian (Eq.~\ref{eq:qsim_basic}) reveals that its Lieb-Robinson velocity, being dependent on the norm of the interaction terms between different qudits, scales as $\omega$, while the total simulation time scales as $t / \omega^2$ [Fig.~\ref{fig:lightcone}(b)]. Consequently, it would appear that the effective light cone of a local observable on the analogue quantum simulator would scale as $\omega \times t / \omega^2 = t/\omega$, which would diverge as we take $\omega \to 0$ to increase the accuracy of the quantum simulation.

Despite this issue, in the next proposition, we show that the parameter $\omega$ needed to estimate a local observable at $t$ evolving under a spatially local Lindbladian can be chosen to be uniform in the system size (i.e.~number of jump operators). More specifically, we consider local observables (i.e.~observables that are supported on a geometrically local subset of system qudits) and establish that

\begin{proposition}[Geometrically local dynamics --- noiseless simulator runtime]\label{prop:dynamics_noiseless}
Suppose $\mathcal{L}$ is a $d-$dimensional geometrically local Lindbladian and $O$ with $\norm{O}\leq 1$ is a local observable. To achieve an additive error $\varepsilon$ in the expected local observable with the analogue quantum simulator, we need to choose
\[
\omega = \Theta\left(t^{-(d + 1/2)}\sqrt{\varepsilon}\right)
\]
and this corresponds to a simulator evolution time 
\[
t_\textnormal{sim} = t/\omega^2 = O\left(t^{2d + 2} \varepsilon^{-1}\right),
\]
which is uniform in the system size.
\end{proposition}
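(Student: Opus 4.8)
The plan is to reduce the geometrically local problem to an effective instance of Proposition~\ref{prop:general_lindbladian} carrying only $M_{A'} = O(t^d)$ jump operators, by exploiting the finite Lieb--Robinson velocity of the target dynamics, and then to balance the resulting error budget. Working in the Heisenberg picture, I would first evolve the local observable $O$ under the adjoint of the target Lindbladian $\mathcal{L}$ for time $t$. The Lieb--Robinson bound for geometrically local Lindbladians guarantees that $e^{t\mathcal{L}^\dagger}(O)$ agrees, in operator norm and up to error $\varepsilon/3$, with an observable $\tilde{O}$ supported on a ball $A$ around $\mathrm{supp}(O)$ of radius $R = O(vt)$, the exponential tail of the bound contributing only subleading corrections polylogarithmic in $\varepsilon^{-1}$. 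Thus only the jump operators whose support meets a slightly enlarged region $A'$ are relevant; their number is $M_{A'} = O(R^d) = O(t^d)$, and since each $\norm{h_\alpha}\le 1$ the truncated Hamiltonian obeys $\norm{H_{A'}} \le M_{A'} = O(t^d)$.

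The central difficulty is to justify the analogous truncation on the simulator. Naively $\mathcal{L}_\omega$ has interaction strength $\sim\omega$ and hence bare Lieb--Robinson velocity $\sim\omega$; run for the required time $t/\omega^2$ this would generate a light cone of radius $\sim t/\omega$ that diverges as $\omega\to 0$ [Fig.~\ref{fig:lightcone}(b)]. I would resolve this by showing that information cannot actually propagate at the bare velocity, because every transfer across a lattice link must virtually excite an ancilla that is reset at rate $4$: the excited-state amplitude is suppressed by $O(\omega)$, so the effective propagation velocity in simulator time is $\sim\omega^2$ rather than $\sim\omega$, producing a bounded effective light cone of radius $\sim\omega^2\cdot t/\omega^2 = vt$ that matches the target [Fig.~\ref{fig:lightcone}(c)]. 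Concretely, I would establish this through the adiabatic-elimination analysis underlying Proposition~\ref{prop:general_lindbladian}: the reduced simulator state tracks the effective evolution $e^{s\,\omega^2\mathcal{L}}$, with the non-adiabatic correction controlled by the ancilla excitation number, whose bound decays with $\omega$ fast enough to keep this correction confined to the light cone. This permits replacing $\mathcal{L}_\omega$ by its truncation $\mathcal{L}_{\omega,A'}$ retaining only the $M_{A'}$ ancillae and couplings inside $A'$, at the cost of a further $\varepsilon/3$ error in the local observable.

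With the problem localized to $A'$, I would apply Proposition~\ref{prop:general_lindbladian} verbatim, but with $M \to M_{A'} = O(t^d)$ and $\norm{H_\mathrm{sys}} \to \norm{H_{A'}} = O(t^d)$. Its error estimate scales as $\omega^2\big(M_{A'} + M_{A'}\norm{H_{A'}}t + M_{A'}^2 t\big)$, whose two dominant terms are each $O(\omega^2 t^{2d+1})$; demanding this be at most $\varepsilon/3$ gives $\omega = \Theta\big(t^{-(d+1/2)}\sqrt{\varepsilon}\big)$ and hence $t_\mathrm{sim} = t/\omega^2 = O(t^{2d+2}\varepsilon^{-1})$. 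Summing the three contributions---target truncation, simulator truncation, and adiabatic elimination on $A'$---bounds the total error by $\varepsilon$, and every quantity ($A'$, $M_{A'}$, $\omega$) depends only on $t$ and $\varepsilon$, giving the claimed uniformity in the system size $n$.

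The hard part is the simulator-side locality statement of the second paragraph: I must upgrade the heuristic ``$\omega$ versus $\omega^2$'' velocity argument into a quantitative bound, since a direct Lieb--Robinson estimate for $\mathcal{L}_\omega$ is too lossy---it is blind to the dissipative suppression of ancilla excitations. The essential technical work is to combine the adiabatic-elimination error bounds (expressed through the ancilla excitation number) with the target Lieb--Robinson bound, so that the residual non-adiabatic part of the simulator evolution is provably confined to the $\sim vt$ light cone. This is exactly the point at which the decay of the excitation-number bound with $\omega$ must be quantitatively fast enough to beat the $1/\omega^2$ slowdown of the simulator clock.
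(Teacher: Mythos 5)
Your overall strategy---truncate the target Heisenberg observable to a ball of radius $O(vt)$ using the Lieb--Robinson bound, truncate the simulator to the corresponding region, and then invoke Proposition~\ref{prop:general_lindbladian} with $M = O(t^d)$---is a genuinely different route from the paper's, and it does reproduce the advertised scaling if all three steps go through. However, the second step contains a genuine gap that your sketch does not close. To replace $\mathcal{L}_\omega$ by its restriction $\mathcal{L}_{\omega,A'}$ at cost $\varepsilon/3$ in the local observable, a Duhamel interpolation forces you to control the Heisenberg evolution of $O$ under $\mathcal{L}_\omega^\dagger$ for times up to $t/\omega^2$ outside $A'$, i.e.\ you need a bona fide locality statement for the \emph{simulator} channel with an effective light cone of radius $O(t)$ rather than the bare $O(t/\omega)$. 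Your proposed mechanism (``the effective velocity is $\sim\omega^2$ because each hop virtually excites a damped ancilla'') is exactly the physical picture of Fig.~\ref{fig:lightcone}(c), but it is not a theorem, and the paper never proves any such improved Lieb--Robinson bound for $\mathcal{L}_\omega$. Instead, the paper avoids needing one entirely: it writes the observable error as $\omega^{-2}\int_0^t \mathrm{Tr}\big(O(t-s)\,\mathrm{Tr}_{\mathcal A}(\mathcal{R}_\omega(s/\omega^2))\big)\,ds$ (Eq.~\ref{eq:remainder_with_observable}), where $O(t-s)=e^{\mathcal{L}^\dagger(t-s)}(O)$ is evolved only under the \emph{target} Lindbladian, expands $\mathcal{R}_\omega$ via Lemma~\ref{lemma:remainder} into a sum of superoperators each supported on $S_\alpha\cup S_{\alpha'}$, bounds each term by the product of the ancilla-excitation bounds (Lemma~\ref{lemma:tr_sigma_bounds}) and the target-side Lieb--Robinson decay (Lemmas~\ref{lemma:lieb_robinson} and \ref{lemma:2superopLR}), and sums the resulting decaying exponential over the lattice to get $O(t^d)$ and $O(t^{2d})$ (Lemma~\ref{lemma:bounds_remainder_lr}). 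No truncation of the simulator ever occurs, so no simulator-side light cone is required. The ``combination of adiabatic elimination with the target LR bound'' that you defer to is therefore not a subroutine you can call to justify your truncation---it is the entire alternative proof, organized so that the truncation is unnecessary.

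A secondary, smaller issue: in your first truncation you must take the ball radius $R = O(vt + \log(\varepsilon^{-1}))$ to push the Lieb--Robinson tail below $\varepsilon/3$, so $M_{A'} = O\big((t+\log\varepsilon^{-1})^d\big)$ rather than $O(t^d)$; feeding this into Proposition~\ref{prop:general_lindbladian} contaminates $\omega$ and $t_\mathrm{sim}$ with polylogarithmic factors in $\varepsilon^{-1}$ that the stated $\Theta(t^{-(d+1/2)}\sqrt{\varepsilon})$ does not have. The paper's lattice-sum argument (Lemma~\ref{lemma:lr_summation}) avoids this because the sum of $\min(e^{4e\mathcal{Z}\tau - d(S_\alpha,S_O)/a},1)$ over the whole lattice is exactly $O(\tau^d)$ with no $\varepsilon$ dependence.
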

\noindent The key physical insight that allows us to side-step the issue of the diverging light-cone as $\omega \to 0$ is that the light-cone predicted by the Lieb-Robinson bound upper bounds the spread of a local observable for \emph{any} initial state of the ancillae. However, in the analogue quantum simulation experiment, the ancillae are initialized to $\ket{0}$ --- moreover, they are strongly damped by the amplitude damping channel applied on them and thus remain approximately in $\ket{0}$ throughout the evolution of the simulator. At a qualitative level, restricting the quantum states of interest to only slightly excited ancillae results in a much slower growth of the light cone on the quantum simulator than predicted by a direct application of the Lieb-Robinson bounds [Fig.~\ref{fig:lightcone}(c)]. Our key technical contribution in the proof of Proposition~\ref{prop:dynamics_noiseless} is to make this expectation precise by combining the rigorous adiabatic elimination of ancillae developed in the proof of Proposition~\ref{prop:general_lindbladian} together with the Lieb-Robinson bounds for dissipative systems \cite{poulin2010lieb}.

\emph{Steady state under rapid mixing.} Similarly, we can consider the quantum simulation of long-time dynamics of local observables that are rapidly mixing in a geometrically local Lindbladian. Rapidly mixing Lindbladians were introduced in Ref.~\cite{cubitt2015} as the dissipative counterparts to gapped Hamiltonians. It has been established that the fixed points of rapidly mixing Lindbladians have properties similar to those of ground states of gapped Hamiltonians, such as the stability of local observables to local perturbations in the Lindbladian \cite{cubitt2015, lucia2015rapid} and a mutual information area law for the fixed point \cite{brandao2015area}, and have also been proposed as a key tool to rigorously define phases of mixed many-body states \cite{coser2019classification}. 

More specifically, a local observable $O$ in a spatially local Lindbladian $\mathcal{L}$, with a unique fixed point $\sigma$, is considered to be rapidly mixing if it converges exponentially fast to $\text{Tr}(O\sigma)$; i.e.
\begin{align}\label{eq:rapid_mixing_observable}
    \abs{\text{Tr}(Oe^{\mathcal{L}t}(\rho(0)) - \text{Tr}(O\sigma)} \leq k\big(\abs{S_O}, \gamma \big) e^{-\gamma t},
\end{align}
where $S_O$ is the support of observable $O$, $\abs{S_O}$ is the number of lattice sites in $S_O$ (i.e.~its volume), and $k(l, \gamma)$ is $O(\text{poly}(l))$ for a fixed $\gamma$, and, for some $\kappa > 0$, $O(\exp(\gamma^{-\kappa}))$ as $\gamma\to0$ for a fixed $l$. The parameter $\gamma$ controls the rate of convergence of the local observable to the fixed point. Such observables have a well defined thermodynamic limit both for dynamics and in the fixed point \cite{cubitt2015}. For observables and spatially local Lindbladians that satisfy Eq.~\ref{eq:rapid_mixing_observable}, we establish that the $\omega$ needed to estimate the observable, at any time $t$, depends entirely on the precision required in the observable and is uniform in both system size $n$ as well as time $t$.
\begin{proposition}[Rapid mixing dynamics --- noiseless simulator runtime]\label{prop:fp_noiseless}
Suppose $\mathcal{L}$ is a $d-$dimensional geometrically local Lindbladian and $O$ with $\norm{O}\leq 1$ is a local observable supported on $O(1)$ lattice sites satisfying rapid mixing (Eq.~\ref{eq:rapid_mixing_observable}). To achieve an additive error $\varepsilon$ in the expectation value of $O$ at time $t$ with the analogue quantum simulator, we need to choose
\[
\omega = \Theta\left(\gamma^{(d+1/2)(\kappa + 1)}\sqrt{\varepsilon}\right)
\]
which corresponds to a simulator evolution time 
\[
t_\textnormal{sim} = t / \omega^2 = O\left(t\gamma^{-(2d + 1)(\kappa + 1)} \varepsilon^{-1}\right).
\]
\end{proposition}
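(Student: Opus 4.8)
The plan is to leverage rapid mixing (Eq.~\ref{eq:rapid_mixing_observable}) to collapse the explicit time dependence of Proposition~\ref{prop:dynamics_noiseless} down to an effective, $t$-independent horizon set solely by $\gamma$ and $\varepsilon$. Write $o(t) = \textnormal{Tr}(O\rho(t))$ for the target expectation and $\tilde o(t) = \textnormal{Tr}\big(O\,\textnormal{Tr}_\mathcal{A}(\rho_\omega(t/\omega^2))\big)$ for the simulator's estimate at target time $t$. First I would fix the horizon $\tau$ as the time at which the target observable has essentially equilibrated, i.e.\ the smallest $\tau$ with $k(\abs{S_O},\gamma)e^{-\gamma\tau}\leq \varepsilon/3$. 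Solving this and using $\abs{S_O}=O(1)$ together with $k(l,\gamma)=O(\exp(\gamma^{-\kappa}))$ as $\gamma\to0$ gives $\gamma\tau = \log(3k/\varepsilon) = O(\gamma^{-\kappa})$, so that in the small-$\gamma$ regime $\tau = \Theta(\gamma^{-(\kappa+1)})$ (the $\log\varepsilon^{-1}$ contribution being subleading).

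The value of $\omega$ then comes from feeding this horizon into Proposition~\ref{prop:dynamics_noiseless}. Since the finite-time error of that proposition grows with the simulated time, scaling as $\omega^2 t^{2d+1}$, calibrating $\omega$ so that the error equals $\varepsilon/3$ at time $\tau$, namely $\omega = \Theta(\tau^{-(d+1/2)}\sqrt\varepsilon) = \Theta(\gamma^{(d+1/2)(\kappa+1)}\sqrt\varepsilon)$, already reproduces the claimed scaling and guarantees $\abs{\tilde o(t)-o(t)}\leq \varepsilon$ for every $t\leq\tau$. The remaining work is the regime $t>\tau$, which I would handle by telescoping through the horizon:
\begin{align}
\abs{\tilde o(t) - o(t)} &\leq \abs{\tilde o(t) - \tilde o(\tau)} \nonumber \\
&\quad + \abs{\tilde o(\tau) - o(\tau)} + \abs{o(\tau) - o(t)}. \nonumber
\end{align}
The middle term is $\leq\varepsilon/3$ by Proposition~\ref{prop:dynamics_noiseless} evaluated at the fixed time $\tau$, and the last term is $\leq\varepsilon/3$ by applying rapid mixing at $\tau$ and at $t$ (both $\geq\tau$) and the triangle inequality through $\textnormal{Tr}(O\sigma)$.

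The main obstacle is the first term, $\abs{\tilde o(t)-\tilde o(\tau)}$: it asks that the simulator's own trajectory not drift once the target has equilibrated, and it cannot be bounded by Proposition~\ref{prop:dynamics_noiseless} directly because that bound degrades at the large times $t\gg\tau$ we must allow. The route I would take is to show that the ancilla-reduced simulator dynamics itself inherits rapid mixing with a rate comparable to $\gamma$ and a prefactor comparable to $k$. Concretely, the rigorous adiabatic elimination developed for Proposition~\ref{prop:general_lindbladian} expresses the reduced generator as $\mathcal{L}$ plus a geometrically local correction that is small in $\omega$; treating this correction as a local perturbation and invoking the stability of rapidly mixing Lindbladians (Refs.~\cite{cubitt2015,lucia2015rapid}) would show $\tilde o(s)$ converges exponentially to a perturbed fixed-point value $\textnormal{Tr}(O\tilde\sigma)$ with $\tilde\sigma$ close to $\sigma$. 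With a simulator mixing horizon comparable to $\tau$, both $\tilde o(t)$ and $\tilde o(\tau)$ then sit within $\varepsilon/3$ of $\textnormal{Tr}(O\tilde\sigma)$ for $t\geq\tau$ (adjusting the constant defining $\tau$ by an $O(1)$ factor), closing the first term. I expect the delicate point to be that the adiabatic-elimination correction is not a clean static Markovian perturbation but carries controllable memory terms, so the stability argument must be applied to an effective, slightly non-Markovian generator; quantifying that this does not spoil the $\gamma$-scaling of the mixing prefactor is where the real effort lies.
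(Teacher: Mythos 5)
Your setup is sound up to the telescoping step, and your calibration of $\omega$ against the horizon $\tau=\Theta(\gamma^{-(\kappa+1)})$ does reproduce the claimed scaling. But the first term of your telescoping, $\abs{\tilde o(t)-\tilde o(\tau)}$, is a genuine gap, and the route you sketch for closing it is unlikely to work as stated. You need the \emph{simulator's} trajectory to stop drifting after the target equilibrates, and you propose to get this from the stability of rapidly mixing Lindbladians under local perturbations. Three obstacles: (i) the reduced dynamics of the system qudits under $\mathcal{L}_\omega$ is not generated by any Lindbladian --- the remainder $\mathcal{R}_\omega(t)$ carries memory integrals and ancilla correlations, so it is not a "local perturbation of $\mathcal{L}$" in the sense required by Refs.~\cite{cubitt2015,lucia2015rapid}; (ii) those stability theorems require global rapid mixing (uniform over all local observables, with polynomial prefactors in the support), whereas the hypothesis here (Eq.~\ref{eq:rapid_mixing_observable}) is rapid mixing of the single observable $O$ only; (iii) even granting both, you would still have to show the joint system--ancilla Lindbladian $\mathcal{L}_\omega$ mixes at a rate $\sim\omega^2\gamma$ with a controlled prefactor, which is a substantial independent claim, not a corollary of anything established elsewhere in the setup.

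The paper avoids ever needing the simulator to mix. It works with the exact identity $\langle O\rangle_\text{sim}-\langle O\rangle_\text{ideal}=\int_0^{t/\omega^2}\textnormal{Tr}\big(O\,e^{\mathcal{L}(t-\omega^2 s)}\mathcal{R}_\omega(s)\big)ds$ and bounds, term by term, integrals of the form $\int_0^t\textnormal{Tr}\big(O(s)\,\mathcal{K}_\alpha(\cdot)\big)e^{-2(t-s)/\omega^2}ds$ where $O(s)=e^{\mathcal{L}^\dagger s}(O)$ is evolved under the \emph{target} Lindbladian. Each such integral is split at an $\alpha$-dependent cutoff $t_\alpha\propto d(S_\alpha,S_O)$: the short-time piece is controlled by Lieb--Robinson bounds, and the long-time piece by the target's rapid mixing, using the crucial fact that $\textnormal{Tr}(\mathcal{K}_\alpha(X))=0$ so that $O(s)$ can be replaced by $O(s)-\textnormal{Tr}(O\sigma)I$, which decays as $k(\abs{S_O},\gamma)e^{-\gamma s}$. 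Summing the resulting decaying exponentials over the lattice gives a bound uniform in both $n$ and $t$ (Lemma~\ref{lemma:error_term_rapid_mixing}). If you want to salvage your decomposition, you would have to replace the simulator-mixing claim with an argument of this kind; as written, the hardest part of the proposition is exactly the part you have deferred.
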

Since rapidly mixing local observables (Eq.~\ref{eq:rapid_mixing_observable}) are within $\varepsilon$ of their fixed point expectation value $\text{Tr}(O\sigma)$ in time $t = \Theta\left(\gamma^{-(\kappa + 1)}) + \Theta(\gamma^{-1}\log(\varepsilon^{-1})\right)$, an immediate consequence of the run-time bound in Proposition~\ref{prop:fp_noiseless} is that analogue quantum simulators can be used to efficiently simulate such observables. More specifically,
\begin{corollary}[Rapid mixing fixed points --- noiseless simulator runtime]
    For a $d-$dimensional geometrically local Lindbladian, an analogue quantum simulator can compute the fixed point expected value of a rapidly mixing local observable $O$ (Eq.~\ref{eq:rapid_mixing_observable}), with $\norm{O} \leq 1$, to precision $\varepsilon$ in simulator evolution time
    \[
    t_\textnormal{sim} = O\left(\gamma^{-(2d + 2)(\kappa + 1)} \varepsilon^{-1}\log(\varepsilon^{-1})\right).
    \]
\end{corollary}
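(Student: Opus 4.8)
The plan is to express the fixed-point error as a triangle inequality that splits into two independent contributions: the error from halting the true dynamics at a finite time $t$ (controlled by the rapid-mixing hypothesis Eq.~\ref{eq:rapid_mixing_observable}) and the error from the analogue simulator approximating the true dynamics at that same time $t$ (controlled by Proposition~\ref{prop:fp_noiseless}). Writing $\rho(0)$ for the initial state and $\sigma$ for the unique fixed point, I would bound
\begin{align*}
\big|\text{Tr}(O\sigma) - \text{Tr}(O\,\text{Tr}_\mathcal{A}(\rho_\omega(t/\omega^2)))\big|
&\leq \big|\text{Tr}(O\sigma) - \text{Tr}(Oe^{\mathcal{L}t}(\rho(0)))\big| \\
&\quad + \big|\text{Tr}(Oe^{\mathcal{L}t}(\rho(0))) - \text{Tr}(O\,\text{Tr}_\mathcal{A}(\rho_\omega(t/\omega^2)))\big|,
\end{align*}
and then allocate a budget of $\varepsilon/2$ to each of the two terms.

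First I would fix the stopping time $t$ large enough that the rapid-mixing term is at most $\varepsilon/2$. By Eq.~\ref{eq:rapid_mixing_observable} this holds as soon as $k(|S_O|,\gamma)e^{-\gamma t}\leq \varepsilon/2$, i.e. $t \geq \gamma^{-1}\log(2k(|S_O|,\gamma)/\varepsilon)$. Since $O$ is supported on $O(1)$ sites, the polynomial prefactor in $k$ is a constant, whereas the stated $\gamma\to 0$ behavior $k=O(\exp(\gamma^{-\kappa}))$ contributes $\log k = O(\gamma^{-\kappa})$. Hence it suffices to take $t = \Theta(\gamma^{-(\kappa+1)}) + \Theta(\gamma^{-1}\log(\varepsilon^{-1}))$, which is exactly the convergence time quoted immediately before the corollary.

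Next I would invoke Proposition~\ref{prop:fp_noiseless} with error budget $\varepsilon/2$ to control the simulator term, yielding a simulator evolution time $t_\textnormal{sim} = t/\omega^2 = O(t\,\gamma^{-(2d+1)(\kappa+1)}\varepsilon^{-1})$, where the factor of two in the budget is absorbed into the $O(\cdot)$. Substituting the chosen $t$ produces two contributions: a term $\gamma^{-(2d+2)(\kappa+1)}\varepsilon^{-1}$ coming from the $\gamma^{-(\kappa+1)}$ piece of $t$, and a term $\gamma^{-1-(2d+1)(\kappa+1)}\varepsilon^{-1}\log(\varepsilon^{-1})$ coming from the $\gamma^{-1}\log(\varepsilon^{-1})$ piece.

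Finally, the only genuine bookkeeping is to collapse both contributions into the single stated bound. Comparing the exponents of $\gamma^{-1}$, the first term has exponent $(2d+2)(\kappa+1)$ and the second has exponent $1+(2d+1)(\kappa+1)$; their difference is exactly $\kappa>0$, so for $\gamma<1$ the first $\gamma$-power dominates. Bounding the missing logarithm in the first term trivially (using $\log(\varepsilon^{-1})\geq 1$ for small $\varepsilon$) then lets me absorb both terms into $O(\gamma^{-(2d+2)(\kappa+1)}\varepsilon^{-1}\log(\varepsilon^{-1}))$, which is the claimed estimate. I expect no real obstacle here beyond careful accounting; the one step meriting attention is tracking how the $\exp(\gamma^{-\kappa})$ growth of the rapid-mixing prefactor converts, through the logarithm, into the additive $\gamma^{-(\kappa+1)}$ term in $t$, since it is precisely this term that fixes the leading $\gamma$-scaling of $t_\textnormal{sim}$.
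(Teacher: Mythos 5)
Your proposal is correct and follows exactly the route the paper intends: the paper treats this corollary as an immediate consequence of Proposition~\ref{prop:fp_noiseless} combined with the convergence time $t = \Theta(\gamma^{-(\kappa+1)}) + \Theta(\gamma^{-1}\log(\varepsilon^{-1}))$ implied by Eq.~\ref{eq:rapid_mixing_observable}, which is precisely your triangle-inequality split with an $\varepsilon/2$ budget for each term. Your bookkeeping of the two contributions to $t_\textnormal{sim}$ and the observation that the exponent gap is exactly $\kappa>0$ is the correct (and otherwise unwritten) detail.
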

\noindent The proof of this Proposition~\ref{prop:fp_noiseless} builds on the proof of Proposition~\ref{prop:dynamics_noiseless} together with an application of the rapid mixing assumption in Eq.~\ref{eq:rapid_mixing_observable}. To show that $\omega$ can be chosen to be uniform in time, we separately analyze the error incurred by the quantum simulator in the short-time and the long-time regimes. In the short-time regime, the observable error is estimated using Lieb-Robinson bounds with the same approach used to prove Proposition~\ref{prop:dynamics_noiseless}, and in the long-time regime the error is bounded using the rapid mixing property (Eq.~\ref{eq:rapid_mixing_observable}).

\emph{Quantum Advantage}. Propositions \ref{prop:dynamics_noiseless} and \ref{prop:fp_noiseless} show that, for the specific problem of spatially local Lindbladians, significantly better run-time bounds that are uniform in the system size can be obtained when compared to the general setting addressed in Proposition~\ref{prop:general_lindbladian}. However, this immediately raises the question of whether we even expect a quantum advantage over classical algorithms in these restricted settings. In our next proposition, we show that there is indeed complexity-theoretic evidence for quantum advantage with respect to physically relevant problem parameters even in these restricted settings. 
\begin{proposition}[Rapid mixing fixed points --- noiseless quantum advantage]\label{prop:quantum_advantage}
There cannot exist a classical algorithm that can, for every geometrically local 2D Lindbladian and a corresponding rapidly mixing local observable, compute the fixed point expected value of the local observable to additive error $\varepsilon$ in time $\textnormal{poly}(\gamma^{-1}, \varepsilon^{-1})$, unless \textnormal{BQP = BPP}.
\end{proposition}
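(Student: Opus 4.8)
The plan is to establish hardness by encoding an arbitrary BQP computation into the fixed-point expectation value of a rapidly mixing local observable on a 2D geometrically local Lindbladian, so that a fast classical algorithm for the latter would collapse BQP into BPP. Fix a language $L \in \text{BQP}$ and, for input $x$, let $U = U_T \cdots U_1$ be the uniform polynomial-size circuit with $T = \text{poly}(|x|)$ gates on $n = \text{poly}(|x|)$ qubits that decides $L$ with acceptance probability bounded away from $1/2$. First I would recall the Feynman--Kitaev history-state Hamiltonian whose unique ground state is $|\psi_{\text{hist}}\rangle \propto \sum_{s=0}^{T} |s\rangle_{\text{clock}} \otimes U_s\cdots U_1|x,0\cdots0\rangle$, and invoke the construction of Ref.~\cite{aharonov2008adiabatic} to lay the clock and computational registers out on a 2D lattice so that the associated frustration-free parent Hamiltonian is geometrically local with nearest-neighbor terms.

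The second step is to pass to the dissipative setting following Ref.~\cite{verstraete2009quantum}: I would replace each local projector term of the parent Hamiltonian by a local jump operator, obtaining a 2D geometrically local Lindbladian $\mathcal{L} = \sum_\alpha \mathcal{D}_{L_\alpha}$ whose jump operators implement forward and backward moves of the clock together with the corresponding gate applications and ancilla resets, and whose unique fixed point $\sigma = |\psi_{\text{hist}}\rangle\langle\psi_{\text{hist}}|$ is exactly the history state. Restricted to the relevant subspace, the dynamics reduces to a classical random walk on the clock line of length $T$, so its spectral gap inherits the $\Omega(1/T^2) = 1/\text{poly}(n)$ scaling of the Feynman--Kitaev clock.

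Third, I would verify rapid mixing in the sense of Eq.~\ref{eq:rapid_mixing_observable}. Choosing the local observable $O$ to be the projector onto the output qubit being $|1\rangle$ conditioned on the clock reading its final value $T$---which is supported on $O(1)$ neighboring sites and hence local---I would argue that $\text{Tr}(O\sigma) = p_{\text{accept}}/(T+1)$, or a constant-order signal after padding the circuit with $\Theta(T)$ trailing identity gates. Converting the spectral-gap bound into observable-level exponential convergence gives $|\text{Tr}(Oe^{\mathcal{L}t}\rho_0) - \text{Tr}(O\sigma)| \leq \text{poly}(n)\, e^{-\gamma t}$ with $\gamma = \Omega(1/\text{poly}(n))$; since $n = \text{poly}(\gamma^{-1})$ and $|S_O| = O(1)$, the prefactor is polynomial in $\gamma^{-1}$ and thus comfortably within the $O(\exp(\gamma^{-\kappa}))$ allowance of the rapid-mixing definition for any $\kappa > 0$.

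Finally, I would close the reduction: if a classical algorithm computed $\text{Tr}(O\sigma)$ to additive error $\varepsilon$ in time $\text{poly}(\gamma^{-1}, \varepsilon^{-1})$, then setting $\varepsilon$ to a $1/\text{poly}(n)$ threshold (or a constant, with padding) that separates the accepting and rejecting cases, and using $\gamma^{-1} = \text{poly}(n)$, yields a $\text{poly}(n)$-time classical decision procedure for $L$, placing $\text{BQP} \subseteq \text{BPP}$ and hence $\text{BQP} = \text{BPP}$. The main obstacle is the third step: establishing that the dissipative 2D construction is genuinely rapidly mixing with an inverse-polynomial rate in the precise observable sense required, which demands controlling the non-normality of $\mathcal{L}$ when upgrading the spectral-gap estimate of the clock random walk to a trace-norm convergence bound with the correct prefactor.
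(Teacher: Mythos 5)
Your reduction skeleton matches the paper's: encode a BQP circuit into the fixed point of a 2D geometrically local Lindbladian, read off the answer from a single-site observable whose fixed-point expectation is $p_{\text{accept}}/\text{poly}(N)$, and conclude that a $\text{poly}(\gamma^{-1},\varepsilon^{-1})$ classical algorithm would imply BQP $=$ BPP. But the substance of the paper's proof lies precisely in the two steps you defer, and one of them is also wrong as stated. The fixed point of the dissipative encoding is \emph{not} the pure history state $|\psi_{\mathrm{hist}}\rangle\langle\psi_{\mathrm{hist}}|$: with Hermitian jump operators $L_s = U_s\otimes|s\rangle\langle s-1| + \text{h.c.}$ one checks directly that $\mathcal{D}_{L_s}(|\psi_{\mathrm{hist}}\rangle\langle\psi_{\mathrm{hist}}|)\neq 0$; the unique fixed point is the time-dephased mixture $\frac{1}{T+1}\sum_s|\phi_s\rangle\langle\phi_s|\otimes|s\rangle\langle s|$ (this is what Ref.~\cite{verstraete2009quantum} and the paper's Lemma~\ref{prop:logN_conv_bound} establish). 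The observable value happens to coincide, so this alone does not sink the reduction, but it signals that ``replace each projector by a jump operator'' is not a routine translation of the Hamiltonian construction.

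More seriously, ``converting the spectral-gap bound into observable-level exponential convergence'' is not a corollary of the gap: the Lindbladian is non-normal, its eigenvector matrix can be badly conditioned, and the paper spends most of Appendix~\ref{appendix:circuit_to_geom_local_lindbladian_encoding} on exactly this point. In particular, (i) the 2D layout has no clock register---time is encoded in the configuration shape---so one must add correcting jump operators for the exponentially many invalid configurations and prove that the invalid weight decays (Lemma~\ref{lemma:encoding_num_error_bound}, via a cascade of $O(NR)$ coupled differential inequalities), and the gate jumps must be designed to check neighboring qudit types so that they do not interfere with the correction jumps; (ii) on the valid subspace, convergence is obtained through an explicit non-unitary similarity transformation $X$ with $\norm{X}_{1\to1}\leq \exp(O(N))$, which yields a convergence prefactor $c_0(N,R) = \exp(O(NR))$---not $\text{poly}(n)$ as you assert. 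The exponential prefactor is still admissible because the rapid-mixing definition permits $k = O(\exp(\gamma^{-\kappa}))$ (here $\kappa=1/3$ works with $\gamma=\Theta(N^{-3}R^{-3})$), but your claim that the prefactor is polynomial in $\gamma^{-1}$ is unsupported and, for this construction, false. Since you explicitly flag the mixing analysis as the open obstacle, the proposal establishes the reduction only conditionally on the content of the paper's Lemma~\ref{lemma:encoded_lindbladian}, which is where essentially all of the work lies.
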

\noindent Again, since from Eq.~\ref{eq:rapid_mixing_observable} it follows that rapidly mixing local observables reach a precision $\varepsilon$ in time $t = \Theta(\gamma^{-(\kappa+1)}\log(\varepsilon^{-1}))$, Proposition~\ref{prop:quantum_advantage} also implies the hardness of measuring local observables in dynamics, as is made precise by the following corollary.
\begin{corollary}[Geometrically local dynamics --- noiseless quantum advantage]\label{prop:quantum_advantage_dynamics}
    There cannot exist a classical algorithm that can compute every local observable at any given time $t$ in every 2D Lindbladian to additive error $\varepsilon$ in time $\textnormal{poly}(t, \varepsilon^{-1})$ unless \textnormal{BQP = BPP}.
\end{corollary}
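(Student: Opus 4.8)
The plan is to establish this corollary by a polynomial-time reduction from the fixed-point problem of Proposition~\ref{prop:quantum_advantage} to the dynamics problem in the statement. Suppose, toward a contradiction, that there is a classical algorithm $\mathcal{A}$ which, on input any 2D geometrically local Lindbladian $\mathcal{L}$, any local observable $O$ with $\norm{O} \leq 1$, an efficiently describable initial state $\rho(0)$, and a time $t$, outputs an estimate of $\text{Tr}(O e^{\mathcal{L}t}(\rho(0)))$ to additive error $\varepsilon'$ in time $\text{poly}(t, (\varepsilon')^{-1})$. I would show that $\mathcal{A}$ can be converted into a classical algorithm for the fixed-point problem whose run-time is $\text{poly}(\gamma^{-1}, \varepsilon^{-1})$, contradicting Proposition~\ref{prop:quantum_advantage} unless \text{BQP = BPP}.

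The bridge between the two problems is the rapid-mixing bound Eq.~\ref{eq:rapid_mixing_observable}. Given a rapidly mixing instance $(\mathcal{L}, O)$ whose fixed-point value $\text{Tr}(O\sigma)$ we wish to estimate to precision $\varepsilon$, I would first pick an evolution time $t^\ast$ large enough that the dynamical expectation is already within $\varepsilon/2$ of the fixed-point value. Concretely, imposing $k(\abs{S_O}, \gamma)\, e^{-\gamma t^\ast} \leq \varepsilon/2$ and solving for $t^\ast$ gives $t^\ast = \gamma^{-1}\log\!\big(2 k(\abs{S_O},\gamma)/\varepsilon\big)$. Since $O$ is supported on $O(1)$ sites and, by Eq.~\ref{eq:rapid_mixing_observable}, $k(l,\gamma) = O(\exp(\gamma^{-\kappa}))$ as $\gamma \to 0$ for fixed $l$, the logarithm contributes $O(\gamma^{-\kappa})$, so that $t^\ast = O\big(\gamma^{-(\kappa+1)} \log(\varepsilon^{-1})\big)$, which is $\text{poly}(\gamma^{-1}, \log \varepsilon^{-1})$ because $\kappa$ is a fixed constant.

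I would then run $\mathcal{A}$ on the same instance $(\mathcal{L}, O, \rho(0))$ at time $t = t^\ast$ with target precision $\varepsilon' = \varepsilon/2$. Its output $\hat{v}$ satisfies $\abs{\hat{v} - \text{Tr}(O e^{\mathcal{L}t^\ast}(\rho(0)))} \leq \varepsilon/2$, and combining this with the rapid-mixing estimate via the triangle inequality yields $\abs{\hat{v} - \text{Tr}(O\sigma)} \leq \varepsilon$. The total run-time is $\text{poly}(t^\ast, (\varepsilon')^{-1}) = \text{poly}\big(\gamma^{-(\kappa+1)}\log\varepsilon^{-1},\, \varepsilon^{-1}\big) = \text{poly}(\gamma^{-1}, \varepsilon^{-1})$, so $\mathcal{A}$ would solve the fixed-point problem of Proposition~\ref{prop:quantum_advantage} within the forbidden run-time, which establishes the claim by contraposition.

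The main thing to be careful about, rather than a genuine obstacle, is the bookkeeping on the two quantifiers and on the error and time budgets. I must ensure that the family of rapidly mixing Lindbladians invoked to prove Proposition~\ref{prop:quantum_advantage} genuinely satisfies Eq.~\ref{eq:rapid_mixing_observable} with a constant $\kappa$ and an $O(1)$-support observable, so that the derived $t^\ast$ stays $\text{poly}(\gamma^{-1})$; this is precisely the mixing-time analysis already carried out for that proposition, and the initial state $\rho(0)$ used in the dynamics call is the same one for which rapid mixing holds. The only place where the composition of polynomials could conceivably break down is if $\log k(\abs{S_O},\gamma)$ grew faster than any power of $\gamma^{-1}$, but the $O(\exp(\gamma^{-\kappa}))$ scaling in Eq.~\ref{eq:rapid_mixing_observable} rules this out, and everything else is a direct substitution into the hardness result.
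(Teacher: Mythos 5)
Your proposal is correct and takes essentially the same route as the paper: the paper derives this corollary from Proposition~\ref{prop:quantum_advantage} by the identical reduction, noting that a rapidly mixing observable is within $\varepsilon$ of its fixed-point value after time $t = \Theta(\gamma^{-(\kappa+1)}\log(\varepsilon^{-1}))$, so a $\textnormal{poly}(t,\varepsilon^{-1})$ dynamics algorithm would yield a $\textnormal{poly}(\gamma^{-1},\varepsilon^{-1})$ fixed-point algorithm. Your bookkeeping of the time and error budgets matches the paper's (and its explicit proof of the analogous noisy corollary).
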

\noindent Our proof of Proposition~\ref{prop:quantum_advantage} follows the  strategy of encoding any given quantum circuit on $N$ qubits and depth $\text{poly}(N)$ into the unique fixed point of a 2D geometrically local Lindbladian on $n=\text{poly}(N)$ 6-level qudits. Furthermore, we also exhibit a local observable whose expected value in the fixed point determines the probability of a pauli-Z measurement on the first qubit of the encoded circuit resulting in a 1. Since every decision problem in the BQP class can be solved by measuring only one qubit at the circuit output, this encoding establishes an equivalence between the classical hardness of simulating a $\text{poly}(N)$ depth quantum circuit and the fixed point of a geometrically local 2D Lindbladian. With this encoding, we can see that if there indeed existed a classical algorithm to obtain a rapidly mixing observable in the fixed point of a geometrically local Lindbladian to a precision $\varepsilon$ in time polynomial in both $\gamma^{-1}$ and $\varepsilon^{-1}$, then these parameters could be chosen as polynomials of $N$ such that the encoding Lindbladian and local observable would also satisfy Eq.~\ref{eq:rapid_mixing_observable} and thus we would obtain a classical algorithm to simulate any $\text{poly}(N)$ depth quantum circuit, implying BQP = BPP.

Our key technical contribution to the proof of Proposition~\ref{prop:quantum_advantage}, therefore, is an encoding of a quantum circuit on $N$ qubits and of depth $\text{poly}(N)$ into the fixed point of a geometrically local 2D Lindbladian, as well as a rigorous analysis of the convergence of the resulting Lindbladian to its fixed point. This builds on Refs.~\cite{verstraete2009quantum} and \cite{aharonov2008adiabatic}. In particular, in Ref.~\cite{verstraete2009quantum}, the authors demonstrated a strategy to encode a given quantum circuit into the unique fixed point of a 5-local (but not geometrically local) Lindbladian and analyzed the spectrum of the Lindbladian to assess its convergence to the fixed point. We extend this construction to geometrically local 2D Lindbladian by adapting the construction provided in Ref.~\cite{aharonov2008adiabatic}, where the authors encoded a quantum circuit into the unique ground state of a 2D geometrically local Hamiltonian. Furthermore, we provide a detailed convergence analysis of the 2D geometrically local Lindbladian, which is significantly different from the analysis of the gap of the geometrically local Hamiltonian constructed in Ref.~\cite{aharonov2008adiabatic}.

The simulation times of the noiseless quantum simulator for all of these problems and complexity-theoretic limitations of classical algorithms are summarized in Table~\ref{tab:noiseless}.

\subsubsection{Noisy setting}
Next, we address the stability of the quantum simulation protocol to noise in the quantum simulator. The notion of stability in quantum simulation tasks has been considered previously in Ref.~\cite{trivedi2024quantum} --- a physically meaningful and stable quantum simulation task is one in which in the presence of a constant rate of error $\delta$ on the simulator, the error in the observable being measured is only perturbed by an amount $\varepsilon(\delta) \leq O(\delta^c)$ for some $c>0$ dependent on the error rate $\delta$, and does not grow with the number of qudits in the simulator. The observable can thus be computed to a hardware-limited and system-size independent precision, $\varepsilon(\delta)$, on a noisy quantum simulator. A stable quantum simulation task is a special computational task where the quantum simulator avoids an accumulation of errors incurred on all the qudits --- such an accumulation would typically yield an observable error that grows not only with $\delta$ but also with the system size $n$, and would be a worst case scenario for the performance of the quantum simulator. Importantly, for stable many-body simulation tasks where the observables additionally have a well-defined thermodynamic limit, a noisy quantum simulator can simulate the thermodynamic limit to a hardware-limited precision $f(\delta)$.
\begin{table*}
    \begin{tabular}{|L{5.0cm}| L{5.0cm} | L{5.0cm}|}
    \hline 
       \textbf{Problem} & 
       \textbf{Simulator run-time} &
       \textbf{Quantum advantage} \\ \hline 
       
        General Lindbladian and observable (Dynamics)  &
        $O\left(\varepsilon^{-1}{Mt}\right) + O\left(\varepsilon^{-1}M^2 t^2\right)+ O\left(\varepsilon^{-1}M \norm{H_\textnormal{sys}}t^2\right)$  &
       
        \ \ \ \ \ \ \ \ \ \ \ \ \ \ \ \ \ \ ---\\
 \hline 
            
Geometrically Local Lindbladian on $\mathbb{Z}^d$, local observable \ \ \ \ \ \  (Dynamics) &
        \ \ \ \ \ \ \ \ \ \ \ \ \ \ \ \ \ \ \ \ \ \  \ \ \ \ \ \ \ \ \  \ \ \ \ 
        $O\left(t^{2d + 2}\varepsilon^{-1}\right)$ &
        No $\text{poly}(t, \varepsilon^{-1})$ classical algorithm for $d\geq 2$. \\  \hline  

        Geometrically Local Lindbladian on $\mathbb{Z}^d$, rapidly-mixing local observable (Dynamics)  & 
          \ \ \ \ \ \ \ \ \ \ \ \ \ \ \ \ \ \ \ \ \ \  \ \ \ \ \ \ \ \ \  \ \ \ \ 
        $O\left(t\gamma^{-(\kappa + 1)(2d + 1)}\varepsilon^{-1}\right)$ &
        No $\text{poly}(t, \gamma^{-1}, \varepsilon^{-1})$ classical algorithm for $d \geq 2$.  \\  \hline

        Geometrically Local Lindbladian on $\mathbb{Z}^d$, rapidly-mixing local observable (Fixed points) &
          \ \ \ \ \ \ \ \ \ \ \ \ \ \ \ \ \ \ \ \ \ \  \ \ \ \ \ \ \ \ \  \ \ \ \ 
        $O\left(\gamma^{-(\kappa + 1)(2d + 2)}\varepsilon^{-1} \log(\varepsilon^{-1})\right)$  &
        No $\text{poly}(\gamma^{-1}, \varepsilon^{-1})$ classical algorithm for $d\geq 2$. \\
    \hline
    \end{tabular}
    \caption{Summary of the upper bounds on the simulator run-time for the problems considered in Propositions~\ref{prop:general_lindbladian}-\ref{prop:fp_noiseless}, as well as the classical hardness results for these problems following from Proposition~\ref{prop:quantum_advantage}.}
    \label{tab:noiseless}
\end{table*}

We start by considering a concrete but general and realistic error model for the analogue quantum simulator. There can be two sources of errors --- coherent errors on the quantum simulator which lead to the incorrect configuration of Hamiltonian interactions between different qudits, or incoherent errors that arise from the interaction of the simulator qudits with an external environment. In order to model both of these sources of errors, we will assume that the quantum simulator implements the Lindbladian $\mathcal{L}_{\omega, \delta}$ instead of the Lindbladian $\mathcal{L}_{\omega}$ with
\begin{subequations}    \label{eq:lindbladian_omega_delta}
\begin{align}
    \mathcal{L}_{\omega, \delta} = \mathcal{L}_\omega +\delta \mathcal{N},
\end{align}
where $\delta$ is the error rate and $\mathcal{N}$ is itself a spatially local Lindbladian given by
\begin{align}
\mathcal{N} = \sum_{\beta = 1}^{M'} \mathcal{N}_\beta,
\end{align}
\end{subequations}
where each term $\mathcal{N}_\beta$ is a local Lindbladian acting on the system and ancillae and the number of these term $M' \leq O(n)$ with $n$ being the number of qudits. We may consider $\delta$ to be the error rate since it mediates the strength of $\mathcal{N}$, which describes decohering interactions with an environment. We assume the normalization $\norm{\mathcal{N}_\beta}_\diamond \leq 1$. In general, $\mathcal{N}_\beta(X) = -i[v_\beta, X] + \sum_{j}\mathcal{D}_{Q_{j,\beta}}$ --- the Hamiltonian term $v_\beta$ in $\mathcal{N}_\beta$ corresponds to coherent errors in the quantum simulator that perturb the Hamiltonians, and the jump operators $Q_{j, \beta}$ can model different incoherent errors in the Lindbladian.
\begin{table*}[htpb]
    \begin{tabular}{|L{4.0cm}| L{4.0cm} | L{4.0cm}| L{4.5cm} |}
    \hline 
       \textbf{Problem} & 
       \textbf{Noise-limited precision} &
       \textbf{Simulator run-time} &
       \textbf{No $\text{poly}(\delta^{-1})$ algorithm for (in 2D)} \\ \hline             
Geometrically Local Lindbladian on $\mathbb{Z}^d$, local observable (Dynamics) &
        \ \ \ \ \ \ \ \ \ \ \ \ \ \ \ \ \ \ \ \ \ \  \ \ \ \ \ \ \ \ \ 
        $O(\delta^{1/2}t^{2d + 1})$ &
         \ \ \ \ \ \ \ \ \ \ \ \ \ \ \ \ \ \ \ \ \ \  \ \ \ \ \ \ \ \ \ 
        $O(t\delta^{-1/2})$ &
         \ \ \ \ \ \ \ \ \ \ \ \ \ \ \ \ \ \ \ \ \ \  \ \ \ \ \ \ \ \ \ \ \ \ \ \ \ \ \ \ \ \ \ \ \ \ \ \ \ \ \ \ \ \ \ \ \ \ \ \ \
        $t = O(\delta^{\alpha-1/10 })${ for any }$\alpha > 0$. \\ 
        \hline  
        Geometrically Local Lindbladian on $\mathbb{Z}^d$, rapidly-mixing local observable (Dynamics)  &
        \ \ \ \ \ \ \ \ \ \ \ \ \ \ \ \ \ \ \ \ \ \  \ \ \ \ \ \ \ \ \ 
        $O(\delta^{1/2}\gamma^{-(\kappa + 1)(2d + 1)})$ &
        \ \ \ \ \ \ \ \ \ \ \ \ \ \ \ \ \ \ \ \ \ \  \ \ \ \ \ \ \ \ \ 
        $O(t\delta^{-1/2})$ &
                 \ \ \ \ \ \ \ \ \ \ \ \ \ \ \ \ \ \ \ \ \ \  \ \ \ \ \ \ \ \ \ \ \ \ \ \ \ \ \ \ \ \ \ \ \ \ \ \ \ \ \ \ \ \ \ \ \ \ \ \ \                 
         \ \ \ \ \ \ \ \ \ \ \ \ \ \ \ \ \ \ \ 
         ---
          \\  \hline
        Geometrically Local Lindbladian on $\mathbb{Z}^d$, rapidly-mixing local observable (Fixed points) &
         \ \ \ \ \ \ \ \ \ \ \ \ \ \ \ \ \ \ \ \ \ \  \ \ \ \ \ \ \ \ \ 
        $O(\delta^{1/2}\gamma^{-(\kappa + 1)(2d + 1)})$  &
                 \ \ \ \ \ \ \ \ \ \ \ \ \ \ \ \ \ \ \ \ \ \  \ \ \ \ \ \ \ \ \ 
        $\tilde{O}(\gamma^{-(\kappa + 1)} \delta^{-1/2})$  &
               \ \ \ \ \ \ \ \ \ \ \ \ \ \ \ \ \ \ \ \ \ \  \ \ \ \ \ \ \ \ \ \ \ \ \ \ \ \ \ \ \ \ \ \ \ \ \ \ \ \ \ \ \ \ \ \ \ \ \ \ \
       $\gamma^{-1} = O(\delta^{\alpha-1/(10(\kappa + 1))})$ for any $\kappa, \alpha > 0$. \\
    \hline
    \end{tabular}
    \caption{Summary of the results from the analysis of noisy open quantum simulator for the dynamics and fixed points of geometrically local Lindbladians. The table lists the estimates of noise-limited precision and simulator run-time as a function of the noise-rate $\delta$ that follow from Propositions~\ref{prop:dynamics_noisy}-\ref{prop:fp_noisy}, as well as the family of problems as a function of $\delta$ that where there is expected to be no $\text{poly}(\delta^{-1})$ classical algorithm. The $\tilde{O}$ in the last row suppresses polynomial factors of $\log(\gamma^{-1})$ and $\log(\delta^{-1})$.}
    \label{tab:noisy}
\end{table*}

We now revisit the problem of computing local observables in dynamics and fixed points, and analyze how the presence of errors and noise effects the results of the computation. We will show that the analogue quantum simulator is stable for both the problem of local observables in dynamics of geometrically local Lindbladians as well as that of rapidly mixing local observables in fixed points. In the next two propositions, we establish that the quantum simulator is robust to both coherent and incoherent errors --- in particular, we show that in the presence of hardware error at rate $\delta$, we can pick the parameter $\omega$ dependent on $\delta$, and uniform in the system size, to obtain a precision $\varepsilon$ in the observable that also scales as $O(\delta^c)$, for some constant $c>0$, and is uniform in system size. More specifically, for the problem of dynamics of geometrically local Lindbladians, we establish that
\begin{proposition}[Geometrically local dynamics --- noisy simulator precision and runtime]\label{prop:dynamics_noisy}
Suppose $\mathcal{L}$ is a $d-$dimensional geometrically local Lindbladian and $O$ with $\norm{O}\leq 1$ is a local observable. Then, in the presence of noise with noise rate $\delta$,  the expected local observable at time $t$ can be obtained to a precision $\varepsilon = O\left(\delta^{1/2}t^{2d + 1}\right)$, independent of the system size, with the analogue quantum simulator. Furthermore, to obtain this precision, we need to choose $\omega = \Theta(\delta^{1/4})$ which results in a simulator run-time $t_\textnormal{sim} = t/\omega^2 = O(t \delta^{-1/2})$.
\end{proposition}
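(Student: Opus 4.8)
The plan is to split the total observable error into a piece coming from the imperfect noiseless simulation and a piece coming from the noise perturbation $\delta\mathcal{N}$, and to show that choosing $\omega = \Theta(\delta^{1/4})$ keeps both pieces at the level $O(\delta^{1/2}t^{2d+1})$. Writing $t_\textnormal{sim} = t/\omega^2$ and letting $\expect{O}_\omega$, $\expect{O}_{\omega,\delta}$ denote the expectation of $O$ on the reduced system state under $e^{\mathcal{L}_\omega t_\textnormal{sim}}(\rho_0)$ and $e^{\mathcal{L}_{\omega,\delta} t_\textnormal{sim}}(\rho_0)$ respectively, the triangle inequality gives
\begin{align}
\abs{\expect{O}_{\omega,\delta} - \expect{O}_\textnormal{target}} \leq \abs{\expect{O}_\textnormal{target} - \expect{O}_\omega} + \abs{\expect{O}_\omega - \expect{O}_{\omega,\delta}}.
\end{align}
The first term is exactly the noiseless simulation error, which by Proposition~\ref{prop:dynamics_noiseless} is $O(\omega^2 t^{2d+1})$; with $\omega = \Theta(\delta^{1/4})$ this becomes $O(\delta^{1/2}t^{2d+1})$. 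It remains only to bound the genuinely new second term.

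For the noise term I would use a Duhamel expansion, ordered so that the observable is transported by the \emph{noiseless} generator and the noisy propagator acts on the state:
\begin{align}
\expect{O}_{\omega,\delta} - \expect{O}_\omega = \delta\int_0^{t_\textnormal{sim}} \sum_\beta \textnormal{Tr}\!\left(\mathcal{N}_\beta^\dagger\!\big(e^{\mathcal{L}_\omega^\dagger(t_\textnormal{sim} - s)}O\big)\, e^{\mathcal{L}_{\omega,\delta}s}(\rho_0)\right) ds.
\end{align}
The crucial point is that the Heisenberg-evolved observable $O_\tau := e^{\mathcal{L}_\omega^\dagger \tau}O$ is evolved under the noiseless simulator Lindbladian, so the effective-light-cone control from the proof of Proposition~\ref{prop:dynamics_noiseless}---which exploits the heavily damped, low-excitation ancillae to bound the effective Lieb-Robinson velocity independently of $\omega$---applies directly. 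For every $\tau = t_\textnormal{sim}-s \leq t_\textnormal{sim}$ the effective evolution time is $\tau\omega^2 \leq t$, so $O_\tau$ factorizes as $O_\tau^{(R)}\otimes I_{R^c}$ up to a Lieb-Robinson tail, where $R$ is a ball of physical radius $\sim t$ meeting only $O(t^d)$ of the local noise regions $S_\beta$.

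The locality argument then controls the sum over $\beta$ uniformly in $s$: since each $\mathcal{N}_\beta$ is trace-preserving its dual annihilates the identity, $\mathcal{N}_\beta^\dagger(I)=0$, so for $S_\beta$ disjoint from $R$ the integrand vanishes up to the small, system-size-independent Lieb-Robinson tail, while for the $O(t^d)$ terms meeting $R$ each integrand is bounded by $\norm{\mathcal{N}_\beta}_\diamond\norm{O_\tau}_\infty\norm{e^{\mathcal{L}_{\omega,\delta}s}(\rho_0)}_1 \leq 1$, using that $e^{\mathcal{L}_\omega^\dagger\tau}$ is a unital (hence operator-norm contractive) map and that $e^{\mathcal{L}_{\omega,\delta}s}(\rho_0)$ is a normalized state. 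Integrating over $s$ gives
\begin{align}
\abs{\expect{O}_{\omega,\delta} - \expect{O}_\omega} = O\!\left(\delta\, t^{d}\, t_\textnormal{sim}\right) = O\!\left(\delta\, t^{d}\, t/\omega^2\right) = O\!\left(\delta^{1/2}t^{d+1}\right),
\end{align}
where the last equality uses $\omega = \Theta(\delta^{1/4})$. Since $d+1 \leq 2d+1$, this is dominated by the noiseless contribution, yielding the claimed precision $\varepsilon = O(\delta^{1/2}t^{2d+1})$, manifestly independent of system size, with run-time $t_\textnormal{sim} = t/\omega^2 = O(t\delta^{-1/2})$.

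I expect the main obstacle to be making the effective-light-cone statement rigorous for the Duhamel integrand \emph{uniformly} in $s$: one must control the adiabatic-elimination error of $O_\tau = e^{\mathcal{L}_\omega^\dagger\tau}O$ together with the dissipative Lieb-Robinson tail simultaneously, and verify that the excitation-number bound underpinning the $\omega$-independent effective velocity survives for every intermediate time $\tau \leq t_\textnormal{sim}$. A secondary subtlety is that $\mathcal{N}_\beta$ may act on and excite the ancillae; this is sidestepped by the chosen Duhamel ordering, in which the noisy propagator touches only the state $\rho_0$ (where nothing beyond normalization is used), so the ancilla-excitation control is ever invoked only for the noiseless generator $\mathcal{L}_\omega$.
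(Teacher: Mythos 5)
Your triangle-inequality split and the first term are fine: $\abs{\expect{O}_\textnormal{target} - \expect{O}_\omega} \leq O(\omega^2 t^{2d+1})$ does follow from Proposition~\ref{prop:dynamics_noiseless}, and with $\omega = \Theta(\delta^{1/4})$ it gives the claimed rate. The gap is in the second term, and it is not a deferrable technicality --- it is the crux. Your Duhamel ordering transports the observable by the \emph{noiseless simulator} generator, producing $O_\tau = e^{\mathcal{L}_\omega^\dagger \tau}(O)$ with $\tau$ up to $t_\textnormal{sim} = t/\omega^2$, and you then assert that $O_\tau$ is supported (up to a tail) on a ball of physical radius $\sim t$ meeting only $O(t^d)$ noise regions. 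But the only locality statement available for $\mathcal{L}_\omega$ is the dissipative Lieb-Robinson bound with velocity $\sim\omega$, which over time $t/\omega^2$ gives a cone of radius $t/\omega = t\delta^{-1/4}$; counting noise regions in that cone gives $\delta\,(t/\omega)^d\, (t/\omega^2) = \delta^{(2-d)/4}\, t^{d+1}$, which does not vanish for $d=2$ and diverges for $d>2$. The ``effective light cone'' of radius $\sim t$ that you invoke is exactly what the paper's excitation-number bounds buy, but those bounds (Lemmas~\ref{lemma:tr_sigma_bounds} and \ref{lemma:tr_sigma_bounds_noisy}) are Schr\"odinger-picture statements about quantities such as $\tr{\mathcal{A}}{\sigma_\alpha\rho_{\omega,\delta}(s)}$ for the specific initial state with ancillae in $\ket{0}$; they say nothing about the operator $e^{\mathcal{L}_\omega^\dagger\tau}(O)$ in isolation, and as an operator-norm statement your factorization claim is unsupported --- the Heisenberg observable really can spread at velocity $\sim\omega$ over time $t/\omega^2$ in the worst case over states. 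Your chosen ordering, by design, never pairs the Heisenberg evolution with a low-excitation state, so the mechanism that would shrink the cone is never available to you.

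The paper avoids this by never Heisenberg-evolving under $\mathcal{L}_\omega$ at all: it writes $\frac{d}{dt}\tr{\mathcal{A}}{\rho_{\omega,\delta}(t)} = \omega^2\mathcal{L}\,\tr{\mathcal{A}}{\rho_{\omega,\delta}(t)} + \mathcal{R}_{\omega,\delta}(t)$ (Lemma~\ref{lemma:remainder_noisy}), integrates against $e^{\mathcal{L}(t-\omega^2 s)}$, and so only ever needs Lieb-Robinson bounds for the \emph{target} Lindbladian $\mathcal{L}$, whose velocity is manifestly $\omega$-independent. The state-dependent smallness of the ancilla excitations then enters through the remainder terms $\mathcal{Q}^{(j)}_{\alpha,\alpha'}$ and $\mathcal{K}^{(j)}_{\alpha,\beta}$, each of which pairs a target-evolved observable with a quantity controlled by Lemma~\ref{lemma:tr_sigma_bounds_noisy}, and the terms are summed via Lemma~\ref{lemma:bounds_remainder_noisy_lr} to give the $\frac{1}{2}\omega^2\nu(t) + \frac{t}{2}\big(4(\omega^2 + 2\omega\delta\mathcal{Z}' + 4\delta\mathcal{Z}')\nu^2(t) + \frac{\delta}{\omega}(1+\omega)\nu'^2(t) + \frac{2\delta}{\omega^2}\nu'(t)\big)$ bound from which $\omega=\Theta(\delta^{1/4})$ is read off. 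To repair your argument you would either have to reorganize it into this form, or prove a genuinely new state-dependent effective-light-cone theorem for $e^{\mathcal{L}_\omega^\dagger\tau}$, which the paper does not contain and which your proposal does not supply.
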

\noindent We establish a similar stability result for the problem of long-time dynamics of rapidly mixing observables (defined in Eq.~\ref{eq:rapid_mixing_observable}), i.e.~these observables can be obtained on quantum simulators to a precision $\varepsilon$ that scales as $\text{poly}(\delta)$ and is independent of the system size $n$ as well as the time $t$.
\begin{proposition}[Rapid mixing dynamics --- noisy simulator precision and runtime]\label{prop:fp_noisy}
Suppose $\mathcal{L}$ is a $d-$dimensional geometrically local Lindbladian and $O$ with $\norm{O}\leq 1$ is a local observable supported on $O(1)$ lattice sites satisfying rapid mixing (Eq.~\ref{eq:rapid_mixing_observable}). Then, in the presence of noise with noise rate $\delta$,  the expected local observable at any time $t$ can be obtained to a precision $\varepsilon = O(\delta^{1/2}\gamma^{-(\kappa + 1)(2d + 1)})$, independent of $n$ and $t$, with the analogue quantum simulator. Furthermore, to obtain this precision, we need to choose $\omega = \Theta(\delta^{1/4})$ which results in a simulator run-time $t_\textnormal{sim} = t/\omega^2 = O(t \delta^{-1/2})$.
\end{proposition}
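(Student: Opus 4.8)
The plan is to obtain Proposition~\ref{prop:fp_noisy} by fusing the finite-time noisy estimate of Proposition~\ref{prop:dynamics_noisy} with the rapid-mixing argument that already upgrades Proposition~\ref{prop:dynamics_noiseless} to Proposition~\ref{prop:fp_noiseless}. Writing $\rho_{\omega,\delta}(\tau)=e^{\mathcal{L}_{\omega,\delta}\tau}(\rho_0\otimes\ket{0}\!\bra{0}^{\otimes M})$ and $P(s)=\textnormal{Tr}(O\,\tr{\mathcal{A}}{\rho_{\omega,\delta}(s/\omega^2)})$ for the noisy-simulator observable, the object to bound is $E(t)=\abs{\textnormal{Tr}(Oe^{\mathcal{L}t}(\rho_0))-P(t)}$, uniformly in $t$ and in the system size $n$. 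First I would fix a crossover time $t^\star=\Theta(\gamma^{-(\kappa+1)})$, possibly with an additive $\gamma^{-1}\log(\delta^{-1})$ correction, chosen as the mixing time of $O$ so that the tail $k(\abs{S_O},\gamma)e^{-\gamma t^\star}$ in Eq.~\ref{eq:rapid_mixing_observable} falls below the target precision. For $t>t^\star$ a triangle inequality then gives
\begin{equation}
E(t)\le 2k(\abs{S_O},\gamma)e^{-\gamma t^\star}+E_{\mathrm{fin}}(t^\star)+\abs{P(t^\star)-P(t)},
\end{equation}
where $E_{\mathrm{fin}}(t^\star)=\abs{\textnormal{Tr}(Oe^{\mathcal{L}t^\star}(\rho_0))-P(t^\star)}$ is the finite-time error; for $t\le t^\star$ only the middle term, evaluated at $t$, is needed.

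The middle term is controlled directly by Proposition~\ref{prop:dynamics_noisy}, whose bound I treat as established: it splits $E_{\mathrm{fin}}$ into a noiseless adiabatic-elimination piece scaling as $\omega^2 t^{2d+1}$ and a noise-induced piece scaling as $\delta\,\omega^{-2}t^{2d+1}$. The latter comes from a Duhamel expansion of $e^{\mathcal{L}_{\omega,\delta}\tau}-e^{\mathcal{L}_\omega\tau}$ to first order in $\delta\mathcal{N}$; the crucial point inherited from Proposition~\ref{prop:dynamics_noiseless} is that the light cone of the Heisenberg-evolved $O$ is controlled using the strongly damped, low-excitation state of the ancillae, so that only the $\mathrm{poly}(t^\star)$ noise terms $\mathcal{N}_\beta$ inside the effective light cone contribute and the estimate is independent of $n$. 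Evaluated at the cutoff this yields $E_{\mathrm{fin}}(t^\star)=O\big((\omega^2+\delta\omega^{-2})\gamma^{-(\kappa+1)(2d+1)}\big)$.

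For the two outer terms I would use rapid mixing. The first is below precision by construction of $t^\star$. For the third I must show that the \emph{noisy} simulator observable has itself stabilized by simulator time $t^\star/\omega^2$, i.e.~that $\abs{P(t^\star)-P(t)}$ stays within the target precision for all $t>t^\star$. This I would establish by transferring rapid mixing from $\mathcal{L}$ to $\mathcal{L}_{\omega,\delta}$: both the system--ancilla coupling $\omega\sum_\alpha V_\alpha$ and the error term $\delta\mathcal{N}$ are geometrically local perturbations, so stability of rapid mixing under local perturbations \cite{cubitt2015,lucia2015rapid} implies the simulator's local observable relaxes exponentially at a rate comparable to $\gamma$ to a stabilized value that matches $P(t^\star)$ up to the same tail. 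Collecting the three contributions gives $E(t)=O\big((\omega^2+\delta\omega^{-2})\gamma^{-(\kappa+1)(2d+1)}\big)$, and balancing the two $\omega$-dependent pieces fixes $\omega=\Theta(\delta^{1/4})$, producing the claimed precision $\varepsilon=O(\delta^{1/2}\gamma^{-(\kappa+1)(2d+1)})$ and runtime $t_\mathrm{sim}=t/\omega^2=O(t\delta^{-1/2})$, both uniform in $n$ and, for $\varepsilon$, in $t$.

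I expect the main obstacle to be the third term: rigorously showing that the noisy simulator's local observable stabilizes uniformly in $n$. The difficulty is that $\delta\mathcal{N}$ perturbs the fixed point, so one cannot simply quote the mixing of $\mathcal{L}$; one must propagate rapid mixing through both the adiabatic-elimination approximation, controlled by $\omega$, and the extensively many local noise terms, controlled by $\delta$, and verify that the induced shift of the fixed-point expectation of $O$ remains $O(\delta^{1/2}\gamma^{-(\kappa+1)(2d+1)})$ rather than accumulating with system size.
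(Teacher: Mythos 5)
Your decomposition for $t\le t^\star$ is fine, and the middle term $E_{\mathrm{fin}}(t^\star)$ evaluated at $t^\star=\Theta(\gamma^{-(\kappa+1)})$ does reproduce the claimed $(\omega^2+\delta\omega^{-2})\gamma^{-(\kappa+1)(2d+1)}$ scaling. The genuine gap is the third term, $\abs{P(t^\star)-P(t)}$, and your proposed fix does not close it. The stability-of-rapid-mixing results you invoke apply to a weak, geometrically local perturbation of a Lindbladian that itself satisfies a (global) rapid-mixing hypothesis. Neither condition holds here: $\mathcal{L}_{\omega,\delta}$ lives on the enlarged system--ancilla Hilbert space, contains order-one ancilla damping $4\sum_\alpha\mathcal{D}_{\sigma_\alpha}$, and only approximates $\omega^2\mathcal{L}$ after adiabatic elimination over the rescaled time $t/\omega^2$ --- it is not $\mathcal{L}$ plus a small local perturbation. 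Moreover the paper's hypothesis (Eq.~\ref{eq:rapid_mixing_observable}) is rapid mixing of the single observable $O$ under $\mathcal{L}$, not of the Lindbladian, which is weaker than what those theorems require. Finally, even granting a transferred decay rate, you would still need the simulator's stabilized value to agree with $P(t^\star)$ to precision $O(\delta^{1/2}\gamma^{-(\kappa+1)(2d+1)})$ uniformly in $n$, which is essentially the statement being proved; so the argument is circular at exactly the point where new input is needed.

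The paper avoids ever having to show that the noisy simulator stabilizes. It keeps, for every $t$, the exact Duhamel representation $\expect{O}_\text{target}-\expect{O}_\text{sim}=-\omega^{-2}\int_0^t \textnormal{Tr}\big(O(t-s)\,\tr{\mathcal{A}}{\mathcal{R}_{\omega,\delta}(s/\omega^2)}\big)ds$ (Eq.~\ref{eq:observable_remainder_with_noise}), where $O(\tau)=e^{\mathcal{L}^\dagger\tau}(O)$ is evolved under the \emph{target} dynamics and the remainder terms $q_\alpha,\mathcal{Q}^{(j)}_{\alpha,\alpha'},\mathcal{K}^{(j)}_{\alpha,\beta}$ are images of superoperators whose adjoints annihilate the identity. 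Each time integral is then split at a site-dependent cutoff $t_\alpha\propto d(S_\alpha,S_O)$: for short Heisenberg times Lieb--Robinson bounds (Lemmas~\ref{lemma:lieb_robinson} and \ref{lemma:2superopLR}) control the integrand, while for long Heisenberg times one writes $\textnormal{Tr}(O(s)\,\cdot)=\textnormal{Tr}((O(s)-\textnormal{Tr}(O\sigma)I)\,\cdot)$ using tracelessness and bounds it by $k(\abs{S_O},\gamma)e^{-\gamma s}$ via Eq.~\ref{eq:rapid_mixing_observable_v2}. Both pieces depend only on $t_\alpha$, not on $t$, which is what gives the $t$- and $n$-uniform bounds of Lemma~\ref{lemma:error_term_rapid_mixing_noisy}; the proposition then follows by summing these bounds and optimizing $\omega=\Theta(\delta^{1/4})$. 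If you want to salvage your architecture, you would have to replace step (3) by this kind of direct, remainder-level use of the target observable's mixing rather than any mixing property of $\mathcal{L}_{\omega,\delta}$.
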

\noindent For rapidly mixing observables, Eq.~\ref{eq:rapid_mixing_observable} implies that after $t \geq \Theta\left(\gamma^{-(\kappa + 1)}\log(\varepsilon^{-1}\right)$, the expected value of the observable at time $t$ is $\varepsilon$-close to its fixed point expected value. Consequently, from Proposition~\ref{prop:fp_noisy}, it immediately follows that the problem of measuring the fixed point expectation value of such observables is also stable to errors. More specifically, we obtain that
\begin{corollary}[Rapid mixing fixed points --- noisy simulator precision and runtime]
\label{prop:fp_noisy_fixed_point}
Suppose $\mathcal{L}$ is a $d-$dimensional geometrically local Lindbladian and $O$ with $\norm{O}\leq 1$ is a local observable supported on $O(1)$ lattice sites satisfying rapid mixing (Eq.~\ref{eq:rapid_mixing_observable}). Then, in the presence of noise with noise rate $\delta$,  the fixed point expected value of the local observable can be obtained to a precision $\varepsilon = O\left(\delta^{1/2}\gamma^{-(\kappa + 1)(2d + 1)}\right)$, independent of $n$, with the analogue quantum simulator with a simulator run-time $t_\textnormal{sim}  = O\left(\gamma^{-(\kappa + 1)} \delta^{-1/2} \log(\delta^{-1})\right) + O\left(\gamma^{-(\kappa + 1)} \delta^{-1/2} \log(\gamma^{-1})\right)$.
\end{corollary}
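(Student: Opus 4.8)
The plan is to obtain this corollary directly from Proposition~\ref{prop:fp_noisy} by combining the noise-limited, $t$-independent precision it guarantees for the dynamics with the rapid-mixing convergence of the observable to its fixed-point value $\text{Tr}(O\sigma)$. The key feature we exploit is that the precision $\varepsilon = O(\delta^{1/2}\gamma^{-(\kappa+1)(2d+1)})$ in Proposition~\ref{prop:fp_noisy} holds for \emph{every} evolution time $t$ and is uniform in the system size; hence we may run the simulator arbitrarily long to reach the fixed point without any degradation of the achievable precision, paying only in run-time.

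First I would invoke Proposition~\ref{prop:fp_noisy}: fixing $\omega = \Theta(\delta^{1/4})$ (uniform in $n$), the noisy simulator estimates the ideal time-$t$ expectation value $\text{Tr}(O e^{\mathcal{L}t}(\rho(0)))$ to additive error $O(\delta^{1/2}\gamma^{-(\kappa+1)(2d+1)})$, independent of $n$ and $t$, at run-time $t_\text{sim} = t/\omega^2 = O(t\,\delta^{-1/2})$. Next I would use the rapid-mixing hypothesis Eq.~\ref{eq:rapid_mixing_observable} to control the gap between the ideal dynamics and the fixed point, $|\text{Tr}(O e^{\mathcal{L}t}(\rho(0))) - \text{Tr}(O\sigma)| \leq k(|S_O|,\gamma)\,e^{-\gamma t}$; since $O$ is supported on $O(1)$ sites, $k(|S_O|,\gamma) = O(\exp(\gamma^{-\kappa}))$.

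The remaining step is to choose $t$ just large enough that this convergence error does not exceed the noise floor $\varepsilon = \Theta(\delta^{1/2}\gamma^{-(\kappa+1)(2d+1)})$. Solving $k(|S_O|,\gamma)\,e^{-\gamma t} \leq \varepsilon$ and taking logarithms gives $t = O(\gamma^{-(\kappa+1)} + \gamma^{-1}\log(\varepsilon^{-1}))$, where the $\gamma^{-(\kappa+1)}$ term comes from the $\exp(\gamma^{-\kappa})$ prefactor. Substituting $\varepsilon$ so that $\log(\varepsilon^{-1}) = O(\log(\delta^{-1}) + \log(\gamma^{-1}))$ and bounding $\gamma^{-1} \leq \gamma^{-(\kappa+1)}$ yields $t = O(\gamma^{-(\kappa+1)}\log(\delta^{-1}) + \gamma^{-(\kappa+1)}\log(\gamma^{-1}))$. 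A triangle inequality between the simulator's dynamics error and the convergence error then gives a total error that is still $O(\delta^{1/2}\gamma^{-(\kappa+1)(2d+1)})$ and independent of $n$, and multiplying the chosen $t$ by $\delta^{-1/2}$ reproduces the claimed run-time $t_\text{sim}$.

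The step needing the most care is the logarithmic bookkeeping in the last paragraph --- tracking how $\log(\varepsilon^{-1})$ splits into $\log(\delta^{-1})$ and $\log(\gamma^{-1})$ pieces once $\varepsilon$ is written in terms of $\delta$ and $\gamma$, and verifying that $k(|S_O|,\gamma)$ contributes only through the $\gamma^{-(\kappa+1)}$ scaling. I do not expect a genuine analytic obstacle: because the per-time precision of Proposition~\ref{prop:fp_noisy} is $t$-independent, the corollary reduces to balancing the mixing time against a fixed noise floor and applying the triangle inequality. I would, however, take care to note that the uniqueness of $\sigma$ and the applicability of Eq.~\ref{eq:rapid_mixing_observable} to the simulator's initial state $\rho(0)$ are what guarantee that the long-time ideal dynamics actually converges to the target value $\text{Tr}(O\sigma)$.
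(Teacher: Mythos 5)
Your proposal is correct and matches the paper's own argument: the corollary is obtained exactly by combining the $t$-independent noise-limited precision of Proposition~\ref{prop:fp_noisy} with the rapid-mixing bound to choose $t = \Theta(\gamma^{-(\kappa+1)} + \gamma^{-1}\log(\varepsilon^{-1}))$ so that the convergence error falls below the noise floor, then multiplying by $1/\omega^2 = \Theta(\delta^{-1/2})$. Your logarithmic bookkeeping reproduces the stated run-time, so there is nothing to add.
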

\emph{Quantum advantage with errors}. Even for a stable quantum simulation task, the observable of interest cannot be determined to an arbitrary precision on a noisy quantum simulator, but only to a precision $\varepsilon(\delta) \leq O(\delta^c)$ for some $c> 0$. Furthermore, the settings where we have been able to establish stability of the analogue quantum simulator are also settings where the observables being computed have a well-defined thermodynamic limit and thus becomes asymptotically independent of the system size. To compare the performance of the noisy quantum simulator with classical algorithms for such problems, we take the perspective laid out in Ref.~\cite{trivedi2024quantum} ---  Since we are considering problems where the observables have a well defined thermodynamic limit, and this limit is the quantity of interest to be computed, we cannot study the classical and quantum computational time of such observables as a function of system size (which has been implicitly taken to $\infty$). The relevant parameter with respect to which we can measure the computational time is the obtained precision --- since the precision on a noisy quantum simulator, with noise rate $\delta$, is limited to $\varepsilon(\delta)$ we ask if there are families of problems 
\begin{enumerate}
    \item[(i)] that are stable and as $\delta \to 0$, the noise-limited precision $\varepsilon(\delta)$ in the observables also $\to 0$ and
    \item[(ii)] the scaling of the (best) classical algorithm run-time with respect to $\delta$ is at least superpolynomially worse than the scaling of the analogue simulator's run-time with respect to $\delta$.
\end{enumerate}  
Explicitly, if the run-time of the analogue simulator scales as $O(\text{poly}(\delta^{-1}))$ to achieve the hardware-limited precision $\varepsilon(\delta)$ and the run-time of the (best) classical algorithm scales as $\textnormal{superpoly}(\delta^{-1})$ to reach the same precision, then the noisy quantum simulator provides a superpolynomial advantage over the classical algorithms. Stated differently, this notion of quantum advantage implies that as the noise rate $\delta$ is reduced the classical algorithm would find it superpolynomially harder to compute the observable to the noise-limited precision.

Our main result is to show that this notion of quantum advantage holds for certain families of problems of computing local observables in the dynamics or fixed points of geometrically local 2D Lindbladians that are also stable to noise on the quantum simulator. In particular, we consider the following problem for fixed points.
\begin{problem}
    \label{prob:fixed_points}
    For $\alpha, \kappa > 0$, given a family of geometrically local Lindbladians $\mathcal{L}_{(\delta)}$ and rapid mixing local observables $O_{(\delta)}$, indexed by $\delta \to 0$, such that the observable satisfies Eq.~\ref{eq:rapid_mixing_observable} with $\gamma^{-1} \leq O(\delta^{\alpha - 1/(10(\kappa + 1))})$, compute the expected value of $O_{(\delta)}$ in the fixed point of $\mathcal{L}_{(\delta)}$.
\end{problem}
\begin{proposition}[Rapid mixing fixed points --- noisy quantum advantage]\label{prop:quantum_advantage_noisy}
A noisy analogue quantum simulator with noise rate $\delta$ can solve Problem~\ref{prob:fixed_points}, with parameters $\alpha, \kappa$, to a noise-limited precision $O(\delta^{5\alpha (\kappa + 1)})$ (which $\to 0$ as $\delta \to 0$). Furthermore, there cannot exist a randomized classical algorithm with run-time $\text{poly}(\delta^{-1})$ that can solve Problem~\ref{prob:fixed_points} to the same precisions for every given $\alpha, \kappa$ unless \textnormal{BQP = BPP}.
\end{proposition}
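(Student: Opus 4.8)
The plan is to prove the two halves of the statement separately. The positive (simulator) claim follows by directly specializing Corollary~\ref{prop:fp_noisy_fixed_point} to $d=2$ and to the family defined in Problem~\ref{prob:fixed_points}, while the classical hardness follows by a parameter-matching reduction from the noiseless quantum-advantage result, Proposition~\ref{prop:quantum_advantage}.

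For the positive claim, I would take the saturating choice $\gamma^{-1} = \Theta(\delta^{\alpha - 1/(10(\kappa+1))})$ permitted by Problem~\ref{prob:fixed_points} and substitute it into the noise-limited precision of Corollary~\ref{prop:fp_noisy_fixed_point}, which for $d=2$ reads $\varepsilon = O(\delta^{1/2}\gamma^{-5(\kappa+1)})$ (using $2d+1=5$). Since $\gamma^{-5(\kappa+1)} = \Theta(\delta^{5(\kappa+1)\alpha - 1/2})$, the two powers of $\delta$ combine to give $\varepsilon = O(\delta^{5\alpha(\kappa+1)})$, which vanishes as $\delta\to 0$ for all $\alpha,\kappa>0$. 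The same substitution into the run-time bound of the corollary gives $\gamma^{-(\kappa+1)} = \Theta(\delta^{(\kappa+1)\alpha - 1/10})$ and hence $t_\textnormal{sim} = \text{poly}(\delta^{-1})$ (up to logarithms, since $\log\gamma^{-1} = \Theta(\log\delta^{-1})$), confirming that the simulator reaches the claimed precision with only polynomial overhead in $\delta^{-1}$.

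For the hardness claim, I would argue by contradiction: suppose that for some $\alpha,\kappa$ with $c := \alpha - 1/(10(\kappa+1)) < 0$ there were a randomized classical algorithm $\mathcal{A}$ solving Problem~\ref{prob:fixed_points} in time $\text{poly}(\delta^{-1})$ to precision $O(\delta^{5\alpha(\kappa+1)})$. Given an arbitrary 2D geometrically local Lindbladian with a rapidly mixing observable of mixing rate $\gamma$ and a target precision $\varepsilon$ (the general instances of Proposition~\ref{prop:quantum_advantage}), I would feed this instance to $\mathcal{A}$ at noise level $\delta = \Theta\big(\min\{\varepsilon^{1/(5\alpha(\kappa+1))},\, \gamma^{1/|c|}\}\big)$. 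By construction this $\delta$ is small enough that (i) the instance meets the family constraint $\gamma^{-1} \leq O(\delta^{c})$, and (ii) the precision $O(\delta^{5\alpha(\kappa+1)})$ returned by $\mathcal{A}$ is at most $\varepsilon$. Because $\delta^{-1} = O(\varepsilon^{-1/(5\alpha(\kappa+1))}) + O(\gamma^{-1/|c|})$, the run-time $\text{poly}(\delta^{-1})$ of $\mathcal{A}$ is $\text{poly}(\gamma^{-1},\varepsilon^{-1})$, so $\mathcal{A}$ would compute the fixed-point expectation of every such instance to precision $\varepsilon$ in time $\text{poly}(\gamma^{-1},\varepsilon^{-1})$ --- precisely what Proposition~\ref{prop:quantum_advantage} forbids unless \textnormal{BQP = BPP}.

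The main obstacle is the parameter bookkeeping that makes the reduction airtight. I expect the delicate points to be: (a) confirming that the circuit-encoding instances underlying Proposition~\ref{prop:quantum_advantage} genuinely satisfy Eq.~\ref{eq:rapid_mixing_observable} with the particular exponent $\kappa$ fixed in Problem~\ref{prob:fixed_points}, i.e.~that the construction's polynomial mixing fits the prescribed $k(l,\gamma)$ profile; and (b) verifying that the regime in which the reduction is non-vacuous is exactly $c<0$, that is $\alpha < 1/(10(\kappa+1))$, since for $c \geq 0$ the constraint $\gamma^{-1}\leq O(\delta^{c})$ forbids embedding circuits of growing size and the family becomes classically easy. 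The specific constants $5$ and $10$ in the exponents are what make the powers of $\delta$ telescope so that $\text{poly}(\delta^{-1})$ is simultaneously polynomial in $\gamma^{-1}$ and $\varepsilon^{-1}$, and I would check this telescoping explicitly.
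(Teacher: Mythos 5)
Your proposal is correct, and the positive half is exactly the paper's argument: the paper likewise obtains the precision by feeding the saturating constraint $\gamma^{-1}\leq O(\delta^{\alpha-1/(10(\kappa+1))})$ into Corollary~\ref{prop:fp_noisy_fixed_point} with $2d+1=5$, so that $\delta^{1/2}\gamma^{-5(\kappa+1)}=O(\delta^{5\alpha(\kappa+1)})$; your telescoping computation is the right one (and is in fact spelled out more carefully than in the paper's own proof, which contains a slip writing $\delta^{\alpha(\kappa+1)}$ for the precision).

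For the hardness half you take a genuinely different, more modular route. The paper does not invoke Proposition~\ref{prop:quantum_advantage} as a black box; it re-runs the circuit embedding of Lemma~\ref{lemma:encoded_lindbladian} directly inside the $\delta$-indexed family: it fixes $\kappa=1/3$ (the exponent forced by $c_0=\exp(O(NR))$ and $\gamma_0=\Theta(N^{-3}R^{-3})$ in the encoding), fixes $\alpha=1/280$, and chooses the circuit size as an explicit function of the noise rate, $N_\delta=\Theta(\delta^{-1/(42(m+1))})$, checking that both the constraint $\gamma^{-1}\leq O(\delta^{\alpha-3/40})$ and the precision requirement $O((NR)^{-1})$ are met. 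Your version instead maps an arbitrary hard instance $(\gamma,\varepsilon)$ of Proposition~\ref{prop:quantum_advantage} to a noise rate $\delta=\Theta(\min\{\varepsilon^{1/(5\alpha(\kappa+1))},\gamma^{1/|c|}\})$ and inherits the contradiction. This buys brevity and reuses the earlier proposition, but note that it cannot be a purely black-box reduction: a hypothetical algorithm for Problem~\ref{prob:fixed_points} with parameters $(\alpha,\kappa)$ only handles instances whose rapid-mixing profile matches that $\kappa$, and the run-time polynomial may depend on $(\alpha,\kappa)$, so to contradict Proposition~\ref{prop:quantum_advantage} you must restrict to its hard family, where $\kappa=1/3$ is fixed by the encoding and a single admissible $\alpha<1/(10(\kappa+1))$ suffices. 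You flag exactly this in your point (a), and it is precisely the step the paper's proof discharges explicitly, so there is no gap --- only a place where your write-up should open up the proof of Proposition~\ref{prop:quantum_advantage} rather than cite its statement. Your observation (b), that the reduction is non-vacuous only when $\alpha<1/(10(\kappa+1))$ so that $\gamma^{-1}$ may grow as $\delta\to 0$, matches the paper's remark following the proposition.
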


\noindent From a physical standpoint, the constraint  $\gamma^{-1} \leq O(\delta^{\alpha - 1/(10(\kappa + 1))})$, for $\alpha <1/(10(\kappa + 1))$, simply states that a quantum simulator with lower noise can be used to simulate the fixed point expected value of a rapidly mixing local observable that takes longer to reach its fixed point value while still obtaining a hardware-limited precision which decreases polynomially with $\delta$. This proposition, which rigorously proves the notion of noisy quantum advantage laid out in Ref.~\cite{trivedi2024quantum}, follows almost directly from the quantum-circuit-to-2D-Lindbladian encoding developed for the proof of Proposition~\ref{prop:quantum_advantage}. The only additional detail in proving this proposition is to show that, given any quantum circuit on $N$ qubits with depth $T = \text{poly}(N)$, the encoding Lindbladian can be embedded into the family of problems considered in Proposition~\ref{prop:quantum_advantage_noisy} while accounting for the additional constraint on $\gamma$. We show that this is possible simply by choosing $\alpha, \kappa$ depending on the degree of the polynomial of $N$ describing the depth $T$, and then choosing $N$ depending on $\delta$ as $N = \text{poly}(\delta^{-1})$. This implies that if there did exist a $\text{poly}(\delta^{-1})$ classical algorithm to simulate this family of problems for any $\alpha, \kappa > 0$, then it would necessarily also be able to simulate an arbitrary $\text{poly}(N)$ depth quantum circuit, thus implying BQP = BPP. Furthermore, we also establish an implication of Proposition \ref{prop:quantum_advantage_noisy} i.e.~a similar notion of quantum advantage holds for the problem of dynamics of local observables. More specifically,
\begin{problem}
    \label{prob:time_dynamics}
    For $\alpha > 0$, given a family of geometrically local Lindbladians $\mathcal{L}_{(\delta)}$, local observables $O_{(\delta)}$, and and evolution times $t_{(\delta)}$, indexed by $\delta \to 0$, such that $t_{(\delta)} \leq O(\delta^{\alpha - 1/10})$, compute the expected value of $O_{(\delta)}$ after evolving under $\mathcal{L}_{(\delta)}$ for time  $t_{(\delta)}$ when all the qudits are initially in a product state.
\end{problem}
\begin{corollary}[Geometrically local dynamics --- noisy quantum advantage] \label{prop:quantum_advantage_dynamics_noisy}
    A noisy analogue quantum simulator with noise rate $\delta$ can solve Problem~\ref{prob:time_dynamics}, with parameter $\alpha$, to a noise-limited precision $O(\delta^{5\alpha})$ (which $\to 0$ as $\delta \to 0$) in simulator run-time $O(\textnormal{poly}(\delta^{-1}))$ and there cannot exist a $\textnormal{poly}(\delta^{-1})$ randomized classical algorithm to estimate this local observable to the same precision for every given $\alpha > 0$ unless \textnormal{BQP = BPP}.
\end{corollary}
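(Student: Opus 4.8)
The plan is to read off both halves of the statement from results already proven for the dynamics problem and for the fixed-point quantum advantage, specialized to $d = 2$, so that essentially no new analysis is required beyond parameter bookkeeping. For the achievability (quantum) direction I would directly invoke Proposition~\ref{prop:dynamics_noisy}: at $d = 2$ it certifies a system-size-independent, noise-limited precision $\varepsilon = O(\delta^{1/2} t^{2d+1}) = O(\delta^{1/2} t^{5})$ with $\omega = \Theta(\delta^{1/4})$. Substituting the time budget of Problem~\ref{prob:time_dynamics}, $t_{(\delta)} \leq O(\delta^{\alpha - 1/10})$, gives
\[
\varepsilon = O\!\left(\delta^{1/2}\, \delta^{5\alpha - 1/2}\right) = O\!\left(\delta^{5\alpha}\right),
\]
which vanishes as $\delta \to 0$, while the run-time $t_\textnormal{sim} = O(t\,\delta^{-1/2}) = O(\delta^{\alpha - 3/5})$ is $O(\textnormal{poly}(\delta^{-1}))$ for any fixed $\alpha$. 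This settles the simulator half.

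For the classical-hardness direction I would reduce the already-established hardness of the fixed-point Problem~\ref{prob:fixed_points} (Proposition~\ref{prop:quantum_advantage_noisy}) to the dynamics Problem~\ref{prob:time_dynamics} by exploiting rapid mixing. The key observation is that the fixed-point value of a rapidly mixing observable can be extracted from its \emph{dynamics} at a late but bounded time: by Eq.~\ref{eq:rapid_mixing_observable}, after
\[
t = \Theta\!\left(\gamma^{-(\kappa+1)}\right) + \Theta\!\left(\gamma^{-1}\log(\varepsilon^{-1})\right)
\]
the evolved expectation value is within $\varepsilon$ of $\textnormal{Tr}(O\sigma)$. Given any Problem~\ref{prob:fixed_points} instance with parameters $\alpha, \kappa$ and $\gamma^{-1} \leq O(\delta^{\alpha - 1/(10(\kappa+1))})$, the leading term yields $t = O(\delta^{(\kappa+1)\alpha - 1/10})$ (the logarithmic correction is subpolynomial and dominated by the $\gamma^{-(\kappa+1)}$ term, using that the $(\kappa+1)$ cancels in the $1/10$ exponent). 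Writing $\tilde\alpha := (\kappa+1)\alpha$, this is precisely a Problem~\ref{prob:time_dynamics} instance with parameter $\tilde\alpha$ and time budget $t \leq O(\delta^{\tilde\alpha - 1/10})$, whose target precision $O(\delta^{5\tilde\alpha}) = O(\delta^{5\alpha(\kappa+1)})$ coincides with the fixed-point precision guaranteed by Proposition~\ref{prop:quantum_advantage_noisy}. Consequently, a $\textnormal{poly}(\delta^{-1})$ classical algorithm solving Problem~\ref{prob:time_dynamics} for every $\tilde\alpha > 0$ would, via this reduction, solve Problem~\ref{prob:fixed_points} for every $\alpha, \kappa > 0$ in $\textnormal{poly}(\delta^{-1})$ time---computing the dynamics observable and the fixed point to the same order $O(\delta^{5\alpha(\kappa+1)})$---which by Proposition~\ref{prop:quantum_advantage_noisy} forces \textnormal{BQP = BPP}.

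The main thing to get right is the parameter matching in this reduction, rather than any hard estimate. I would verify explicitly that (i) the subpolynomial factor $\gamma^{-1}\log(\varepsilon^{-1})$ in the mixing time, with $\varepsilon = O(\delta^{5\tilde\alpha})$ so that $\log(\varepsilon^{-1}) = O(\log(\delta^{-1}))$, is indeed dominated by the $\gamma^{-(\kappa+1)}$ term in the relevant regime $\alpha < 1/(10(\kappa+1))$, so the time budget $t \leq O(\delta^{\tilde\alpha - 1/10})$ holds without having to perturb $\tilde\alpha$; (ii) the encoding Lindbladian underlying Proposition~\ref{prop:quantum_advantage} has a \emph{unique} fixed point reached from the product initial state mandated by Problem~\ref{prob:time_dynamics}, which follows because rapid mixing (Eq.~\ref{eq:rapid_mixing_observable}) guarantees convergence to $\textnormal{Tr}(O\sigma)$ from any initial state; and (iii) the two precision targets align exactly under $\tilde\alpha = (\kappa+1)\alpha$, so that hardness of the dynamics problem transfers cleanly to the claimed precision $O(\delta^{5\alpha})$. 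The quantum upper bound presents no obstacle; the only subtlety there is noting that for fixed $\alpha$ the exponent $\alpha - 3/5$ is a constant, so the run-time is polynomial in $\delta^{-1}$ irrespective of its sign.
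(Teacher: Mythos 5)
Your proposal is correct and follows essentially the same route as the paper: the quantum side is read off from Proposition~\ref{prop:dynamics_noisy} with $t_{(\delta)} \leq O(\delta^{\alpha-1/10})$, and the hardness side reduces Problem~\ref{prob:fixed_points} to Problem~\ref{prob:time_dynamics} by evolving for $t = \Theta(\gamma^{-(\kappa+1)}\log(\delta^{-1}))$ and relabeling $\tilde\alpha = (\kappa+1)\alpha$, exactly as in the paper's proof. Your extra care about absorbing the logarithmic factor into the time budget is a minor refinement the paper glosses over, but it does not change the argument.
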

% \begin{corollary}[\textcolor{red}{Quantum advantage in noisy simulation of local observables}] \label{prop:quantum_advantage_dynamics_noisy}
%     For a given $\alpha > 0$, consider the family of geometrically local 2D Lindbladians, local observable and evolution time $t$ indexed by $\delta \to 0$ such that $t \leq O(\delta^{\alpha - 1/10})$. Then, the fixed point expectation value of the observable can be estimated by an analogue quantum simulator with noise rate $\delta$ to a noise-limited precision $O(\delta^{5\alpha})$ (which $\to 0$ as $\delta \to 0$) in simulator run-time $O(\textnormal{poly}(\delta^{-1}))$ and there cannot exist a $\textnormal{poly}(\delta^{-1})$ randomized classical algorithm to estimate this local observable to the same precision for any $\alpha > 0$ unless \textnormal{BQP = BPP}.
% \end{corollary}
\noindent We summarize the results pertaining to noisy quantum simulation of geometrically local Lindbladians in Table \ref{tab:noisy}.
% Furthermore we assume that $v_\alpha$ and $Q_{\alpha,j}$ are spatially local and that there are $C_\alpha$, $C'_\alpha$ such that $\abs{\{\alpha' | S_{v_\alpha} \cap S_{\alpha'} \neq \emptyset\}} \leq C_\alpha$ and $\abs{\{\alpha' | S_{Q_{\alpha,j}} \cap S_{\sigma_{\alpha'}} \neq \emptyset \}} \leq C'_\alpha$.

% we show the quantum simulator needs to run for a time that scales linearly with $t$, quadratically with $1/\varepsilon$ and typically increases polynomially with the system size. However, using the open system Lieb-Robinson bounds \cite{poulin}  we improve these bounds for the more specific but physically relevant task of computing local observables in spatially local Lindbladians. For both problems of dynamics, as well as for fixed point of rapidly-mixing Lindbladians \cite{cubit}, we show that the analogue quantum simulator can obtain the local observable to target precision in a run-time that is independent of the size of the system. While these two settings could seem specialized we also argue that the quantum simulator can provide a quantum advantage over classical algorithm for these problems.
% \begin{figure*}[htpb]
%     \centering
%     \includegraphics[scale=0.63]{}
%     \caption{Caption}
%     \label{fig:enter-label}
% \end{figure*}

\subsection{Numerical example}
As an illustrative example of the analogue quantum simulation and the impact of noise on the simulator, we study the analogue quantum simulation of a gaussian fermion model. We choose a gaussian fermion model since it can be numerically simulated efficiently for large system sizes \cite{bravyi2011classical, horstmann2013noise}, allowing us to verify the scalings predicted by Propositions~\ref{prop:general_lindbladian}, \ref{prop:dynamics_noiseless}, \ref{prop:quantum_advantage}, and \ref{prop:dynamics_noisy}. We consider a family of target Lindbladians on $n = 2L + 1$ fermions arranged on a 1D lattice and described by the Hamiltonian
\[
H_\text{sys} = \sum_{x = -L}^{L} K \big(a_x^\dagger a_{x + 1} + \text{h.c.}\big) + \sum_{x = -L}^L J\big(a_x a_{x + 1} + \text{h.c.}\big),
\]
where $a_x$ is the annihilation operator on the fermionic mode at $x$, and we assume periodic boundary conditions and therefore set $a_{x = L + 1} \cong a_{x = -L}$. We associate one 2-fermion jump operator per site $x$, $L_x$, given by
\[
L_x = \lambda_0 a_x + \lambda_1 a_{x + 1}.
\]
The parameters $J, K, \lambda_0$ and $\lambda_1$ specify the model. We will consider the problem of measuring the observable $O$ given by
\[
O = \frac{1}{n} \sum_{x} a_x^\dagger a_x,
\]
which measures the particle density (i.e. the particle number per unit lattice size), in the fixed point of this dissipative system. Figure \ref{fig:just_model} shows this observable as a function of the parameter $J$ --- at $J = 0$, since the Hamiltonian $H_\text{sys}$ is particle number conserving and the jump operators annihilate any fermions on the lattice, the particle density in the fixed point is $0$ and becomes non-zero when $J \neq 0$. Furthermore, as shown in Fig.~\ref{fig:just_model}, this observable also has a well defined thermodynamic limit and converges exponentially to this limit.
\begin{figure}
    \centering
\includegraphics[scale=0.45]{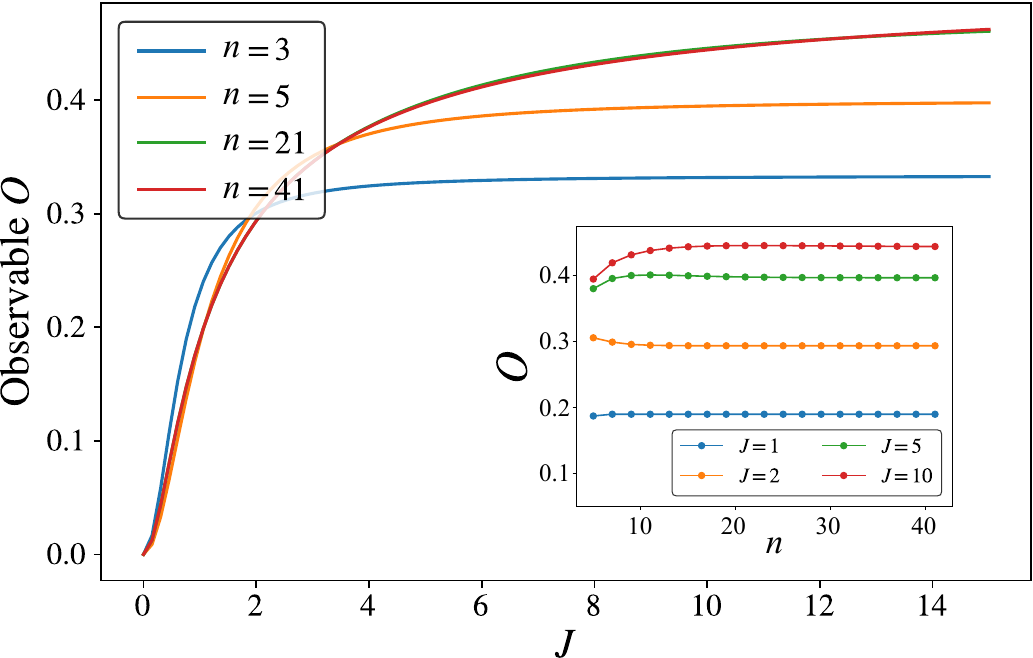}
    \caption{The particle density observable $O$ in the fixed point of the dissipative fermionic system as a function of the coupling parameter $J$, while setting $K = 1, \lambda_0 = 1.1$ and $\lambda = 1$. The inset shows the convergence of the particle-density observable to its thermodynamic limit. Note that, due to the fast convergence of $O$ to the thermodynamic limit, the curves for $n = 21$ and $41$ overlap with eachother.}
    \label{fig:just_model}
\end{figure}

To perform a quantum simulation of this model, as described in the previous subsection, we introduce ancillary fermions with annihilation operators $b_x$ and couple them to the system fermions. The quantum simulator dynamics is given by the Lindbladian
\begin{align}\label{eq:ff_simulator_lindbladian}
\mathcal{L}_\omega(X) = -i\omega^2 [H_\text{sys}, X] + \sum_x \bigg(4\mathcal{D}_{b_x}(X) - i\omega [V_x, X]\bigg),
\end{align}
where $V_x = b_x^\dagger L_x + L_x^\dagger b_x$. The quantum simulator becomes increasingly accurate as $\omega \to 0$ --- this is numerically demonstrated in Fig.~\ref{fig:obs_error_noiseless}a for different system sizes, where we compare the expected value of $O$ in the fixed point of the simulator Lindbladian $\mathcal{L}_\omega$ to its expected value in the fixed point of the target Lindbladian $\mathcal{L}$. Furthermore, the scalings of the observable error incurred on the quantum simulator with the system size $n$ and the parameter $\omega$ are studied in Fig.~\ref{fig:obs_error_noiseless}b and c. We note from Fig.~\ref{fig:obs_error_noiseless}b that the for a fixed $\omega$, the observable error saturates on increasing the system size. This uniformity with system size is consistent with our expectation from Proposition~\ref{prop:fp_noiseless}. Furthermore, consistent with the Proposition~\ref{prop:fp_noiseless}, Fig.~\ref{fig:obs_error_noiseless}c shows that the observable error, in the limit of large system-size, scales polynomially with $\omega$.
\begin{figure*}[htpb]
    \centering
    \includegraphics[scale=0.475]{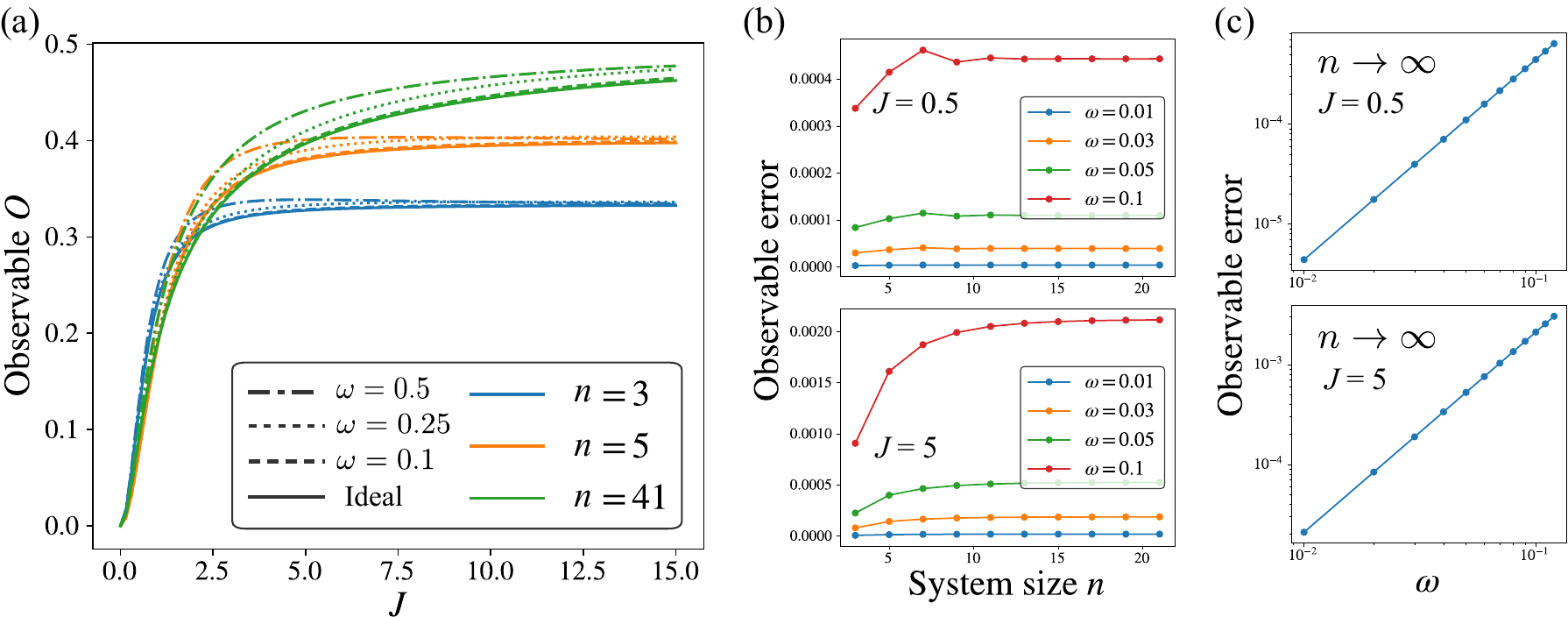}
    \caption{Numerical studies of the performance of the noiseless analogue quantum simulator described by Eq.~\ref{eq:ff_simulator_lindbladian}. (a) The particle density observable as a function of $J$ --- the solid line shows the target (ideal)  observable and dotted-dashed lines show the observable measured on the quantum simulator for different $\omega$. (b) The observable error incurred by the quantum simulator as a function of the system size, indicating that the error becomes independent of $n$ as $n \to \infty$. (c) The observable error for large $n$ as a function of $\omega$, showing that the error scales polynomially with $\omega$ (note that both the axes are shown on log-scale).}
    \label{fig:obs_error_noiseless}
\end{figure*}

Next, we add noise to the quantum simulator and analyze its impact on the performance of the quantum simulator. We consider single-site depolarizing noise acting on the fermions at a rate $\delta$ --- this can be theoretically modelled by assuming that the quantum simulator has a Lindbladian $\mathcal{L}_{\omega, \delta}$
\begin{align}\label{eq:ff_simulator_noise}
\mathcal{L}_{\omega, \delta} = \mathcal{L}_\omega + \delta \sum_{x}\bigg(\text{Tr}_x(\cdot) \otimes \frac{I_x}{2} - \textnormal{id}\bigg).
\end{align}
Such a Lindbladian has hermitian jump operators that are quadratic in the fermionic annihilation/creation operator and can be simulated following the formalism in Ref.~\cite{horstmann2013noise}. In Fig.~\ref{fig:noisy_qsim}, we numerically study the deviation in the observable $O$ computed in the fixed point of $\mathcal{L}_{\omega, \delta}$ from its expected value in the fixed point of $\mathcal{L}$. Fig.~\ref{fig:noisy_qsim}a shows the dependence of the measured observable on the parameter $\omega$ in the presence of noise. For a fixed $\delta$, unlike the noiseless case, reducing $\omega$ no-longer results in an increasingly accurate simulation. We instead observe that there is an optimal $\omega$, that is dependent on $\delta$, at which the noisy quantum simulator best approximates the target observable. Figure~\ref{fig:noisy_qsim}b shows the scaling of the observable error at this optimal point with the parameter $\delta$. Consistent with Proposition~\ref{prop:fp_noisy}, we see that the observable error becomes independent of the system size $n$ as $n\to \infty$, and the error in the large $n$ limit scales polynomially with $\delta$. The choice of $\omega$ that achieves this optimal error is also shown in Fig.~\ref{fig:noisy_qsim}, and we see it too becomes independent of $n$ as $n\to \infty$ and scales polynomially with $\delta$ consistent with Proposition~\ref{prop:fp_noisy}.
\begin{figure*}
    \centering
    \includegraphics[scale=0.6]{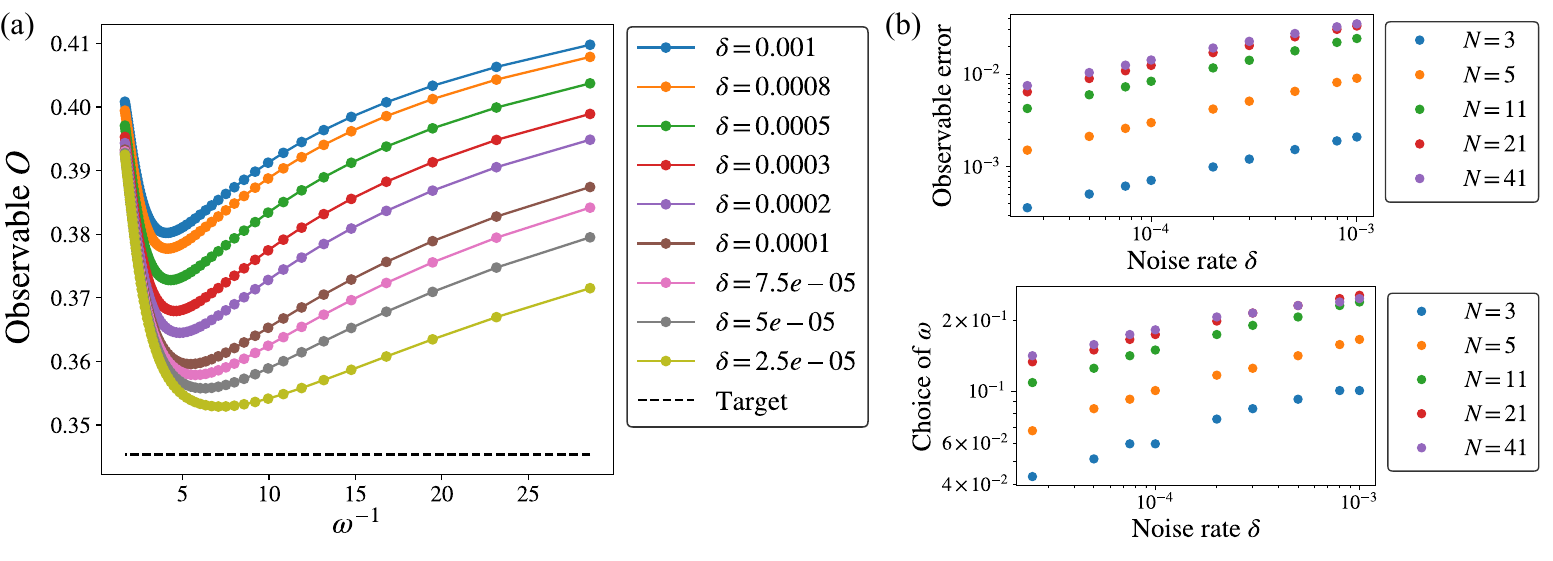}
    \caption{Numerical studies of the performance of a noisy quantum simulator described by Eq.~\ref{eq:ff_simulator_noise}. (a) The dependence of the particle density observable on $\omega$ with the dashed line indicating the target observable. In the presence of noise there is an optimal $\omega$ depending on $\delta$ at which this error is the lowest. (b) The (smallest) observable error as a function of the noise rate $\delta$, as well as the choice of $\omega$ that yields this error.}
    \label{fig:noisy_qsim}
\end{figure*}

\section{Notation and Preliminaries}\label{sec:notation}
Given a Hilbert space $\mathcal{H}$, we will denote by $\text{L}(\mathcal{H})$ the set of bounded linear operators from $\mathcal{H}\to \mathcal{H}$, define $\text{M}(\mathcal{H})$ to be the set of the bounded Hermitian operators from $\mathcal{H}\to \mathcal{H}$ and define $\text{D}_1(\mathcal{H})$ as the set of valid density matrices on $\mathcal{H}$. We will typically use the $\dagger$ superscript to indicate the adjoint, or Hermitian conjugate, of an operator or superoperator. However in some cases, more compact expression can obtained by using the following notation --- for some operator or superoperator $X$, we define $X^{(-)}:=X$ and $X^{(+)}:=X^\dagger$. Furthermore we will use $\bar{+}=-$ and $\bar{-}=+$. For example, $X^{(\bar{u})}:=X^\dagger$ for $u=-$.

While dealing with mixed states and their dynamics, it will often be convenient to adopt the vectorized notation, where we map operators on a (finite-dimensional) Hilbert space to state vectors via $\rho = \sum_{i_l, i_r}\rho_{i_l, i_r}\ket{i_l}\!\bra{i_r} \to \vecket{\rho} = \sum_{i_l, i_r}\rho_{i_l, i_r} \ket{i_l, i_r}$. Superoperators, such as Lindladians or channels, will map to ordinary operators in this picture. Given an operator $X \in \text{M}(\mathcal{H})$, we will define $X_l, X_r \in \text{M}(\mathcal{H}\otimes \mathcal{H})$ by $X_l\vecket{\rho} = (X\otimes I)\vecket{\rho} = \vecket{X\rho}$ and $X_r\vecket{\rho} = (I\otimes X^\text{T})\vecket{\rho} =  \vecket{\rho X}$. $X_l (X_r)$ can also be interpreted as a superoperator which left (right) multiplies its argument with $X$ i.e. $X_l(Y) = XY$ and $X_r(Y) = YX$. A Lindbladian superoperator $\mathcal{L}$ specified by a Hamiltonian $H$ and jump operators $\{L_\alpha\}_{\alpha \in \{1, 2 \dots M\}}$, 
\[
\mathcal{L}(X) = -i[H, X] + \sum_{\alpha = 1}^M \bigg(L_\alpha X L_\alpha^\dagger - \frac{1}{2}\{X, L_\alpha^\dagger L_\alpha\}\bigg),
\]
can be vectorized 
\[\mathcal{L} = -i(H_l - H_r) + \sum_{\alpha = 1}^M \bigg(L_{\alpha, l} L_{\alpha, r}^\dagger -\frac{1}{2} \left(L_{\alpha, l}^\dagger L_{\alpha, l} + L_{\alpha, r} L_{\alpha, r}^\dagger\right)\bigg).
\]
The adjoint of the Lindbladian (with respect to the Hilbert-Schmidt inner product), $\mathcal{L}^\dagger$ will be given by
\[
\mathcal{L}^\dagger(X) = i[H, X] + \sum_{\alpha = 1}^M \bigg(L_\alpha^\dagger X L_\alpha - \frac{1}{2}\{L_\alpha^\dagger L_\alpha, X\}\bigg).
\]
The adjoint of the Lindbladian, by definition, will satisfy $\textnormal{Tr}(A \mathcal{L}(B)) = \textnormal{Tr}(\mathcal{L}^\dagger(A) B)$ and $\mathcal{L}^\dagger(I) = 0$. In the vectorized notation, $\mathcal{L}^\dagger$ will be expressed as
\[
\mathcal{L}^\dagger = i(H_l - H_r) + \sum_{\alpha = 1}^M \bigg(L_{\alpha, l}^\dagger L_{\alpha, r} - \frac{1}{2}\left(L_{\alpha, l}^\dagger L_{\alpha, l} + L_{\alpha, r} L_{\alpha, r}^\dagger\right)\bigg).
\]

\emph{Norms}. $\norm{A}_p$ denotes the Schatten $p$-norm of an operator $A$. We will denote the operator norm, which is also the Schatten-$\infty$ norm, by $\norm{A}$ without an subscript. $\norm{\mathcal{A}}_{p \to q} := \max_{O,\norm{O}_p=1} \norm{\mathcal{A}(O)}_q$ indicates the norm of a superoperator $\mathcal{A}$. We define the completely-bounded norm of a superoperator $\mathcal{A}$ as $\norm{\mathcal{A}}_{cb, p \to q} := \sup_{n\geq 2} \norm{\mathcal{A} \otimes \textnormal{id}_n}_{p \to q}$. The diamond norm is the completely bounded $1\to1$ norm --- for the diamond norm, we will use the standard notation $\norm{\mathcal{A}}_\diamond := \norm{\mathcal{A}}_{cb, 1\to 1}$.

Lattices in this work are lattice graphs. $d(x,y)$ for two lattice sites $x$ and $y$ denotes the Manhattan distance between $x$ and $y$, i.e. the graph path length to reach $y$ from $x$. We write the distance between a set of lattice sites $S$ and a single site $x$ as $d(S,x):=\min_{y \in S}(d(x,y))$. The distance between two sets of lattice sites $S_1$ and $S_2$ is accordingly $d(S_1, S_2) := \min_{x \in S_1, y \in S_2}(d(x,y))$. The diameter of set of lattice sites $S$ is written $\textnormal{diam}(S) := \max_{x\in S,y\in S}(d(x,y))$.

For two real-valued functions $f(x)$ and $g(x)$, we will write $f(x) \leq O(g(x))$ to indicate that there exists $C,x_0 \in \mathbb{R}$ such that $f(x) \leq C g(x)$ for all $x>x_0$. Similarly $f(x) \geq \Omega(g(x))$ indicates that there exists $C,x_0 \in \mathbb{R}$ such that $f(x) \geq C g(x)$ for all $x>x_0$. Finally $f(x) = \Theta(g(x))$ indicates $\Omega(g(x)) \leq f(x) \leq O(g(x))$.

For notational conciseness, we will often use a shorthand for list of indices --- the list $\{m, m + 1, m + 2 \dots n\}$ will be abbreviated as $[m: n]$.\\

\section{Analysis of the noiseless simulator}
\label{sec:analysis_noiseless}
% For notational convenience, while analyzing the quantum simulator, we will work with a rescaled time $\tau = t/\omega^2$, where $t$ is the actual (physical) time for which the quantum simulator has been evolved under Eq.~\ref{eq:qsim_basic}. With respect to the $\tau$ time coordinate, the dynamics of the quantum simulator can be expressed as 
% \[
% \frac{d}{d\tau}\rho_\omega(\tau) = \mathcal{L}_\omega(\tau) \rho_\omega(\tau),
% \]
% where
% \[
% \mathcal{L}_\omega(\tau)(X) = -i[H_\text{sys}, X] - i\frac{1}{\omega}
% \]

% To do so we initialize the entire simulator system $\mathcal{S}\otimes\mathcal{A}$ in the state ${\rho}_\omega(0) = \rho(0) \otimes (\ket{0}\! \bra{0})^{\otimes M}$, and evolve it for time $t$ under ${\mathcal{L}}_\omega$ to obtain ${\rho}_\omega (t) = e^{{\mathcal{L}}_\omega t} {\rho}_\omega(0)$. By adiabatically eliminating the ancillary qubits, we show that the dynamics of the reduced state of subsystem $\mathcal{S}$, $\tr{\mathcal{A}}{\rho_\omega(t)}$, is close to the dynamics generated by the target Lindbladian $\mathcal{L}$.

\subsection{Rigorous adiabatic elimination of the ancilla}
\label{subsec:rigorous_adiabatic_elimination}
We begin with analyzing the analogue open quantum simulator in the absence of any noise. Supposing system qudits $\mathcal{S}$ starts in the state $\rho(0)$ at time $t=0$, we wish to simulate the state $\rho(t) = e^{\mathcal{L} t}\rho(0)$. To begin the analysis, we set up equations of motion for $\tr{\mathcal{A}}{{\rho}_\omega(t)}$ and $\tr{\mathcal{A}}{\sigma_\alpha {\rho}_\omega(t)}$. Following from the definition of ${\mathcal{L}}_\omega$ in Eq.~\ref{eq:qsim_basic}, we have that
\begin{widetext}
\begin{subequations}
\label{eq:dynamics}
\begin{align}
    &\frac{d}{dt}\tr{\mathcal{A}}{{\rho}_\omega(t)} = 
    -i\omega \sum_{\alpha}\sum_{\substack{u\in\{+, -\}}} [L_\alpha^{(u)}, \tr{\mathcal{A}}{\sigma^{(\bar{u})}_\alpha {\rho}_\omega(t)}]  -i\omega^2 [H_\text{sys}, \tr{\mathcal{A}}{{\rho}_\omega(t)}],
    \\
    % &\frac{d}{dt}\tr{\mathcal{A}}{{\rho}_\omega(t)} = 
    % -\frac{1}{\omega} \sum_{\alpha} (i[L_\alpha^\dagger, \tr{\mathcal{A}}{\sigma_\alpha {\rho}_\omega(t)}] + \textnormal{h.c.})  -i [H_\text{sys}, \tr{\mathcal{A}}{{\rho}_\omega(t)}],
    % \\
    &\frac{d}{dt}\tr{\mathcal{A}}{\sigma_\alpha \rho_\omega(t)} = 
    - 2 \tr{\mathcal{A}}{\sigma_\alpha \rho_\omega (t)} - i\omega L_\alpha \tr{\mathcal{A}}{\rho_\omega(t)}
  - i\omega^2[H_\text{sys}, \tr{\mathcal{A}}{\sigma_\alpha\rho_\omega(t)}]+ \omega\sum_{\alpha'} {E_{\alpha,\alpha'}(t)}, \\
     &\frac{d}{dt}\tr{\mathcal{A}}{\sigma_\alpha^\dagger \rho_\omega(t)} = 
    - 2 \tr{\mathcal{A}}{\sigma_\alpha^\dagger \rho_\omega (t)} +  i\omega \tr{\mathcal{A}}{\rho_\omega(t)}L_\alpha^\dagger
  - i\omega^2 [H_\text{sys}, \tr{\mathcal{A}}{\sigma_\alpha^\dagger\rho_\omega(t)}]+ \omega \sum_{\alpha'}  {E_{\alpha,\alpha'}^\dagger(t)},
\end{align}
\end{subequations}
\end{widetext}
where
\begin{equation}
    E_{\alpha, \alpha'}(t) =
    \begin{dcases}
        i \{L_\alpha, \tr{\mathcal{A}}{n_\alpha \rho_\omega (t)}\} &  \ \  \alpha=\alpha'
        \\
        - i \sum_{u \in \{+,-\}} [L_{\alpha'}^{(u)}, \tr{\mathcal{A}}{\sigma_\alpha \sigma_{\alpha'}^{(\bar{u})}\rho_\omega(t)}] & \ \  \alpha \neq \alpha'
    \end{dcases} 
\end{equation}
with $n_\alpha = \sigma_\alpha^\dagger \sigma_\alpha$ being the excitation number operator for the $\alpha^\text{th}$ ancillary qubit. 

% We outline the rest of the proof in Appendix~\ref{appendix:error_effective_liouvillian}, but here we may perform a quick check that adiabatic elimination produces the correct master equation for $\tr{\mathcal{A}}{\rho_{\omega,\delta}(t)}$ in the limit of small $\omega$. We will treat $E_{\alpha,\alpha'}(t)$ as negligible (this is justified rigorously later on) so in the limit of $\omega\to 0$, we expect

% \begin{align}
%     \tr{A}{\sigma_\alpha \rho_\omega(t)}  \approx -\frac{i\omega}{2} L_\alpha \tr{A}{\rho_\omega(t)} \text{ and } \tr{A}{\sigma_\alpha^\dagger \rho_{\omega,\delta}(t)} & \approx \frac{i\omega}{2} \tr{A}{\rho_\omega(t)}L_\alpha^\dagger.
% \end{align}

% Substituting these approximations into Eq.~\ref{eq:dynamics}(a), we obtain that $d\tr{A}{\rho_\omega(t)} / dt \approx \mathcal{L}\tr{A}{\rho_\omega(t)}$, as we would expect from adiabatic elimination.

To develop concrete error bound on the deviation of the quantum simulator and the target dynamics, we will carefully analyze the remainder $\mathcal{R}_\omega(t)$ defined by
\begin{align}\label{eq:remainder_def}
\mathcal{R}_\omega(t):=\frac{d}{dt} \tr{\mathcal{A}}{\rho_\omega(t)}-\omega^2 \mathcal{L} \tr{\mathcal{A}}{\rho_\omega(t)},
\end{align}
which can be physically interpreted as the error in the rate of change of the reduced state of the system qudits $\mathcal{S}$ on the analogue quantum simulator compared to the target Lindbladian. The next lemma provides an expression for $\mathcal{R}_\omega(t)$, which we will repeatedly use throughout this paper while analyzing the analogue quantum simulator. This expression follows directly from Eqs.~\ref{eq:dynamics} --- a detailed proof of this is provided in Appendix~\ref{appendix:analysis_noiseless_proofs}.

\begin{lemma}
    \label{lemma:remainder}
    For any $t > 0$, the remainder $\mathcal{R}_\omega(t)$ satisfies 
    \begin{subequations}
    \label{eq:remainder}
    \begin{align}
    \mathcal{R}_\omega(t) & 
        = \omega^2\sum_{\alpha}  e^{-2t} q_{\alpha} +\omega^4 \smashoperator[l]{\sum_{j \in \{1, 2\}}}\sum_{\alpha} \int_0^t e^{-2(t - s)}\mathcal{Q}^{(j)}_{\alpha, H_\textnormal{sys}}(s)ds
        \nonumber \\
        & \quad 
        + \omega^4\sum_{j \in \{3, 4\}}\sum_{\substack{\alpha, \alpha'}}\int_0^t e^{-2(t - s)}Q_{\alpha, \alpha'}^{(j)}(s)ds.
    \end{align}
where
\begin{align*}
& q_\alpha = -\mathcal{D}_{L_\alpha}(\rho(0)), \\
&\mathcal{Q}_{\alpha, h}^{(1)}(t) = - \frac{1}{\omega}[L_\alpha^\dagger, [h, \tr{\mathcal{A}}{\sigma_\alpha \rho_\omega(t)}]] + \textnormal{h.c.},
\\
&\mathcal{Q}_{\alpha, h}^{(2)}(t) = -\frac{i}{2}[L_\alpha^\dagger, L_\alpha [h,\tr{\mathcal{A}}{\rho_{\omega}(t)}]] + \textnormal{h.c.},\\
&\mathcal{Q}_{\alpha, \alpha'}^{(3)}(t) =
\frac{i}{\omega} \sum_{u \in \{+, -\}} \mathcal{D}_{L_\alpha}\big([L_{\alpha'}^{(u)}, \tr{\mathcal{A}}{\sigma_{\alpha'}^{(\bar{u})} \rho_\omega(t)}]\big),\\
&\text{If }\alpha = \alpha',\\
&\ \mathcal{Q}^{(4)}_{\alpha, \alpha'}(t) =
    \frac{2}{\omega^2}(\mathcal{D}_{L_\alpha^\dagger}-\mathcal{D}_{L_\alpha})(\tr{\mathcal{A}}{n_\alpha \rho_\omega(t)}), \\
&\text{If }\alpha \neq \alpha', \\
&\ \mathcal{Q}_{\alpha, \alpha'}^{(4)}(t) = 
    -\frac{1}{\omega^2}\smashoperator[l]{\sum_{\substack{u, u' \\ \in \{+, -\}}}} [L_\alpha^{(u)}, [L_{\alpha'}^{(u')}, \tr{\mathcal{A}}{\sigma_\alpha^{({\bar{u}})}\sigma_{\alpha'}^{(\bar{u}')}\rho_\omega(t)}]].
\end{align*}
\end{subequations}
\end{lemma}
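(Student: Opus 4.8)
The plan is to derive the claimed expression for $\mathcal{R}_\omega(t)$ directly and \emph{exactly} from Eqs.~\ref{eq:dynamics}, by solving the fast ancilla-coherence equations and feeding their solution back into the equation for the reduced state. Writing $P(t):=\tr{\mathcal{A}}{\rho_\omega(t)}$ and $m_\alpha(t):=\tr{\mathcal{A}}{\sigma_\alpha\rho_\omega(t)}$ (so that $m_\alpha^\dagger(t)=\tr{\mathcal{A}}{\sigma_\alpha^\dagger\rho_\omega(t)}$, using that operators on $\mathcal{A}$ may be cycled inside $\tr{\mathcal{A}}{\cdot}$), I first subtract $\omega^2\mathcal{L}(P(t))$ from Eq.~\ref{eq:dynamics}a. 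The two Hamiltonian commutators $-i\omega^2[H_\text{sys},P(t)]$ cancel, leaving
\[
\mathcal{R}_\omega(t) = -i\omega\sum_\alpha\big([L_\alpha^\dagger, m_\alpha(t)] + [L_\alpha, m_\alpha^\dagger(t)]\big) - \omega^2\sum_\alpha\mathcal{D}_{L_\alpha}(P(t)).
\]
Because the ancillae start in $\ket{0}$, we have $m_\alpha(0)=0$, so Eq.~\ref{eq:dynamics}b integrates with the factor $e^{2s}$ to $m_\alpha(t)=\int_0^t e^{-2(t-s)}g_\alpha(s)\,ds$ with forcing $g_\alpha(s)=-i\omega L_\alpha P(s)-i\omega^2[H_\text{sys},m_\alpha(s)]+\omega\sum_{\alpha'}E_{\alpha,\alpha'}(s)$, and $m_\alpha^\dagger$ is its conjugate. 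Everything then reduces to substituting this integral form back and sorting the result by the three pieces of $g_\alpha$.

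The crucial step is an integration by parts on the leading piece $-i\omega L_\alpha P(s)$. Using $\frac{d}{ds}e^{-2(t-s)}=2e^{-2(t-s)}$,
\[
\int_0^t e^{-2(t-s)}L_\alpha P(s)\,ds = \tfrac12 L_\alpha P(t) - \tfrac12 e^{-2t}L_\alpha P(0) - \tfrac12\int_0^t e^{-2(t-s)}L_\alpha\dot P(s)\,ds,
\]
with $P(0)=\rho(0)$. Inserting the quasi-static value $\tfrac12 L_\alpha P(t)$ (and its conjugate) into the commutators gives $-\tfrac{\omega^2}{2}[L_\alpha^\dagger,L_\alpha P(t)]+\tfrac{\omega^2}{2}[L_\alpha,P(t)L_\alpha^\dagger]$, which the identity $-\tfrac12[L^\dagger,LX]+\tfrac12[L,XL^\dagger]=\mathcal{D}_L(X)$ turns into $\omega^2\mathcal{D}_{L_\alpha}(P(t))$, exactly cancelling the dissipator already present in $\mathcal{R}_\omega$. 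The $s=0$ boundary term collapses, via $[L^\dagger,L\rho]-[L,\rho L^\dagger]=-2\mathcal{D}_L(\rho)$, to $\omega^2 e^{-2t}\sum_\alpha q_\alpha$ with $q_\alpha=-\mathcal{D}_{L_\alpha}(\rho(0))$, producing the first term of the claim.

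It then remains to show the residual contributions assemble into the four $\mathcal{Q}$ families, each at order $\omega^4$. The $\dot P$ remainder above, again via $[L^\dagger,LX]-[L,XL^\dagger]=-2\mathcal{D}_L(X)$, becomes $-\omega^2\int_0^t e^{-2(t-s)}\mathcal{D}_{L_\alpha}(\dot P(s))\,ds$; substituting Eq.~\ref{eq:dynamics}a for $\dot P$ splits it into a Hamiltonian part, which the identity $i\mathcal{D}_L([h,X])=-\tfrac{i}{2}[L^\dagger,L[h,X]]+\text{h.c.}$ identifies with $\omega^4\mathcal{Q}^{(2)}_{\alpha,H_\text{sys}}$, and a coupling part matching $\omega^4\mathcal{Q}^{(3)}_{\alpha,\alpha'}$ once the $u$-sum is recognized. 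The second piece of $g_\alpha$, namely $-i\omega^2[H_\text{sys},m_\alpha(s)]$, gives $\omega^4\mathcal{Q}^{(1)}_{\alpha,H_\text{sys}}$ directly. Finally the $E_{\alpha,\alpha'}$ forcing gives $\omega^4\mathcal{Q}^{(4)}_{\alpha,\alpha'}$: for $\alpha=\alpha'$ the anticommutator form follows from $[L^\dagger,\{L,N\}]-[L,\{L^\dagger,N\}]=2(\mathcal{D}_{L^\dagger}-\mathcal{D}_L)(N)$ with $N=\tr{\mathcal{A}}{n_\alpha\rho_\omega}$, and for $\alpha\neq\alpha'$ the two coupling terms $[L_\alpha^\dagger,\cdot]$ and $[L_\alpha,\cdot]$ supply the $u_\alpha$-sum over the $\alpha$-slot of the double commutator.

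The main obstacle is not conceptual but the exact, approximation-free bookkeeping: one must carry every label $u\in\{+,-\}$ and every Hermitian conjugate through the substitution, and repeatedly recognize the resulting commutator/anticommutator combinations as dissipators using the handful of algebraic identities above. The physically essential point that makes the whole construction close is the integration by parts, which extracts the quasi-static value $\tfrac12 L_\alpha P(t)$ responsible for the target dissipator, isolates the $e^{-2t}$ transient tied to the $\ket{0}$ initialization, and relegates the time derivative $\dot P$ --- hence all genuine corrections --- to order $\omega^4$.
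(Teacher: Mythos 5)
Your proposal is correct and follows essentially the same route as the paper's proof: integrate the ancilla-coherence equation with the $e^{-2(t-s)}$ kernel, integrate by parts on the $-i\omega L_\alpha \tr{\mathcal{A}}{\rho_\omega}$ forcing to extract the quasi-static dissipator that cancels $\omega^2\mathcal{D}_{L_\alpha}(\tr{\mathcal{A}}{\rho_\omega(t)})$ and the $e^{-2t}q_\alpha$ boundary term, and then sort the residual $\dot P$, Hamiltonian, and $E_{\alpha,\alpha'}$ contributions into $\mathcal{Q}^{(1)}$ through $\mathcal{Q}^{(4)}$. The algebraic identities you invoke and the resulting powers of $\omega$ all check out against the paper's bookkeeping.
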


To theoretically analyze the fidelity of the quantum simulation, we thus need to provide an upper bound on $\norm{\mathcal{R}(t)}_1$ which $\to 0$ as $\omega \to 0$. From the expression provided in Lemma~\ref{lemma:remainder}, we see that the terms $\mathcal{Q}_{\alpha, H_\text{sys}}^{(1)}, \mathcal{Q}_{\alpha, H_\text{sys}}^{(2)}, \mathcal{Q}_{\alpha, \alpha'}^{(3)}$ and $\mathcal{Q}_{\alpha, \alpha'}^{(4)}$, which contribute to the remainder, depend on the operators ${\tr{\mathcal{A}}{n_\alpha \rho_\omega(t)}}$, ${\tr{\mathcal{A}}{\sigma_\alpha \rho_\omega(t)}}$, ${\tr{\mathcal{A}}{ \sigma^\dagger_{\alpha'} \sigma_\alpha \rho_\omega(t)}}$ and ${\tr{\mathcal{A}}{\sigma_\alpha \sigma_{\alpha'} \rho_\omega(t)}}$. We expect these operators to be small --- to see this physically, we note that if all the ancillae were exactly in $\ket{0}$ at time $t$, then all of these operators would be exactly 0. In the analogue simulation, the ancillae are only weakly coupled to the system, with the coupling strength $\sim \omega$, while simultaneously being strongly damped. Thus, we could expect the ancillae to never be significantly excited during the simulation, and consequently we expect the operators $\mathcal{Q}_{\alpha, H_\text{sys}}^{(1)}, \mathcal{Q}_{\alpha, H_\text{sys}}^{(2)}, \mathcal{Q}_{\alpha, \alpha'}^{(3)}$ and $\mathcal{Q}_{\alpha, \alpha'}^{(4)}$ to be small. The next lemma translates this physical intuition into a rigorous upper bound,

\begin{lemma}
    \label{lemma:tr_sigma_bounds}
    Suppose $\rho_\omega(t)$ is the joint state of the system and ancilla qubits with the ancilla qubits initially being in state $\ket{0}$, then for all $\alpha, \alpha'$
    \begin{align*}
        &\norm{\tr{\mathcal{A}}{\sigma_{\alpha}\rho_\omega(t)}}_1 \leq \frac{\omega}{2} \text{ and },\\
        &\norm{\tr{\mathcal{A}}{\sigma_{\alpha}^\dagger \sigma_{\alpha'}\rho_\omega(t)}}_1,
        \norm{\tr{\mathcal{A}}{\sigma_{\alpha} \sigma_{\alpha'}\rho_\omega(t)}}_1\leq \frac{\omega^2}{4}.
    \end{align*}
\end{lemma}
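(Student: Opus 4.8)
The plan is to reduce the three estimates to controlling the ancilla populations $\expect{n_\alpha} := \tr{}{n_\alpha\rho_\omega(t)}$ and the two-ancilla populations $\expect{n_\alpha n_{\alpha'}}$ (for $\alpha\neq\alpha'$), and then to convert these into trace-norm bounds via an operator Cauchy--Schwarz inequality. The basic tool I would establish first is
\[
\norm{\tr{\mathcal{A}}{P^\dagger Q\rho_\omega(t)}}_1 \leq \sqrt{\tr{}{P^\dagger P\rho_\omega(t)}}\,\sqrt{\tr{}{Q^\dagger Q\rho_\omega(t)}},
\]
valid for any $P,Q$ supported on the ancillae. I would prove it by writing $\rho_\omega=\rho_\omega^{1/2}\rho_\omega^{1/2}$, using the duality $\norm{\cdot}_1=\sup_{\norm{X}\leq1}\abs{\tr{}{X^\dagger\,\cdot}}$ with $X$ on the system, and applying Hilbert--Schmidt Cauchy--Schwarz to the pair $PX\rho_\omega^{1/2}$ and $Q\rho_\omega^{1/2}$; because $X$ (system) commutes with $P,Q$ (ancillae), the factor $\norm{X}\leq1$ drops out. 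Choosing $(P,Q)=(I,\sigma_\alpha)$, $(\sigma_\alpha,\sigma_{\alpha'})$, and $(I,\sigma_\alpha\sigma_{\alpha'})$ gives $\norm{\tr{\mathcal{A}}{\sigma_\alpha\rho_\omega}}_1\leq\sqrt{\expect{n_\alpha}}$, $\norm{\tr{\mathcal{A}}{\sigma_\alpha^\dagger\sigma_{\alpha'}\rho_\omega}}_1\leq\sqrt{\expect{n_\alpha}\expect{n_{\alpha'}}}$, and $\norm{\tr{\mathcal{A}}{\sigma_\alpha\sigma_{\alpha'}\rho_\omega}}_1\leq\sqrt{\expect{n_\alpha n_{\alpha'}}}$ (the cases $\alpha=\alpha'$ are trivial, since $\sigma_\alpha^2=0$ and $\tr{\mathcal{A}}{n_\alpha\rho_\omega}\geq0$ has trace $\expect{n_\alpha}$). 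It therefore suffices to show $\expect{n_\alpha(t)}\leq\omega^2/4$ and $\expect{n_\alpha n_{\alpha'}(t)}\leq\omega^4/16$.

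Next I would derive a closed differential inequality for $\expect{n_\alpha}$. From $\frac{d}{dt}\expect{n_\alpha}=\tr{}{\mathcal{L}_\omega^\dagger(n_\alpha)\rho_\omega}$ and Eq.~\ref{eq:qsim_basic}, the term $H_\text{sys}$ and all contributions from ancillae $\beta\neq\alpha$ vanish, the amplitude damping yields $-4n_\alpha$, and $[V_\alpha,n_\alpha]=-L_\alpha\sigma_\alpha^\dagger+L_\alpha^\dagger\sigma_\alpha$, so that $\frac{d}{dt}\expect{n_\alpha}=-4\expect{n_\alpha}-2\omega\,\text{Im}\,\tr{}{L_\alpha^\dagger\tr{\mathcal{A}}{\sigma_\alpha\rho_\omega}}$. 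The source term is bounded in magnitude by $2\omega\norm{L_\alpha}\norm{\tr{\mathcal{A}}{\sigma_\alpha\rho_\omega}}_1\leq2\omega\sqrt{\expect{n_\alpha}}$ using the Cauchy--Schwarz bound above and $\norm{L_\alpha}\leq1$. Hence $\frac{d}{dt}\expect{n_\alpha}\leq-4\expect{n_\alpha}+2\omega\sqrt{\expect{n_\alpha}}$ with $\expect{n_\alpha(0)}=0$. The right-hand side vanishes exactly at $\expect{n_\alpha}=\omega^2/4$ and is negative above it, so a standard comparison argument ($\omega^2/4$ is a stable upper barrier that cannot be crossed from below starting at $0$) gives $\expect{n_\alpha(t)}\leq\omega^2/4$, which is the first claimed bound.

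I would then repeat this for $g(t):=\expect{n_\alpha n_{\alpha'}}$, $\alpha\neq\alpha'$. The damping now contributes $-8g$, and the commutator source is $i\omega\,\tr{}{W\rho_\omega}$ with $W=(-L_\alpha\sigma_\alpha^\dagger+L_\alpha^\dagger\sigma_\alpha)n_{\alpha'}+n_\alpha(-L_{\alpha'}\sigma_{\alpha'}^\dagger+L_{\alpha'}^\dagger\sigma_{\alpha'})$. The crucial observation is that each of the four terms carries an extra projector $n_\alpha$ or $n_{\alpha'}$; applying the Cauchy--Schwarz inequality with a factorization that isolates this projector (e.g.\ for $\tr{\mathcal{A}}{\sigma_\alpha n_{\alpha'}\rho_\omega}$ take $P^\dagger P=(I-n_\alpha)n_{\alpha'}$ and $Q^\dagger Q=n_\alpha n_{\alpha'}$) bounds each term by $\sqrt{\expect{n_{\alpha'}}}\sqrt{g}\leq\tfrac{\omega}{2}\sqrt{g}$, using the single-ancilla bound just proved. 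This yields $\frac{d}{dt}g\leq-8g+2\omega^2\sqrt{g}$ with $g(0)=0$, whose stable barrier sits at $g=\omega^4/16$, so $g(t)\leq\omega^4/16$ and thus $\norm{\tr{\mathcal{A}}{\sigma_\alpha\sigma_{\alpha'}\rho_\omega}}_1\leq\omega^2/4$; combined with $\sqrt{\expect{n_\alpha}\expect{n_{\alpha'}}}\leq\omega^2/4$ for $\sigma_\alpha^\dagger\sigma_{\alpha'}$, this completes the proof. The main obstacle is that the population ODEs do not close: $\frac{d}{dt}\expect{n_\alpha}$ depends on the coherence $\tr{\mathcal{A}}{\sigma_\alpha\rho_\omega}$, which Cauchy--Schwarz controls only through $\expect{n_\alpha}$ itself, producing a nonlinear (square-root) self-consistent inequality; feeding the Cauchy--Schwarz bound back into the equation of motion is exactly what converts the coupled hierarchy into scalar comparison inequalities with the correct stable fixed points. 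A secondary subtlety is that the naive two-ancilla factorization gives only the weak bound $\omega/2$; the sharp $\omega^2/4$ requires both the factorization that exposes an $\expect{n_{\alpha'}}^{1/2}$ factor and the prior $\omega^4$-level control of $\expect{n_\alpha n_{\alpha'}}$.
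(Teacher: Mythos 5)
Your proof is correct, but it takes a genuinely different route from the paper's. The paper works with ``input--output'' equations: it conjugates the left-multiplication superoperator $\sigma_{\alpha,l}$ by the simulator channel, integrates the resulting damped Heisenberg-type equation, uses $\sigma_\alpha\rho(0)=0$ to kill the homogeneous part, and bounds the remaining integral term by $\omega\int_0^t e^{-2(t-s)}ds=\omega/2$ via contractivity of the channel in trace norm; the two-operator bounds are obtained the same way with damping rate $4$ and the single-operator bound fed into the integrand. You instead reduce everything to the scalar populations $\expect{n_\alpha}$ and $\expect{n_\alpha n_{\alpha'}}$ through the operator Cauchy--Schwarz inequality $\norm{\tr{\mathcal{A}}{P^\dagger Q\rho}}_1\leq(\tr{}{P^\dagger P\rho})^{1/2}(\tr{}{Q^\dagger Q\rho})^{1/2}$, then close the (a priori non-closed) population equations into nonlinear differential inequalities $\dot y\leq-4y+2\omega\sqrt{y}$ and $\dot g\leq-8g+2\omega^2\sqrt{g}$ and invoke a barrier argument. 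I checked the pieces that carry the weight: the factorizations $P=(I-n_\alpha)n_{\alpha'}$, $Q=\sigma_\alpha n_{\alpha'}$ (and their analogues for the other three source terms) do reproduce $P^\dagger Q=\sigma_\alpha n_{\alpha'}$ with $Q^\dagger Q=n_\alpha n_{\alpha'}$, the commutators $[V_\alpha,n_\alpha]=-L_\alpha\sigma_\alpha^\dagger+L_\alpha^\dagger\sigma_\alpha$ and the damping contributions $-4n_\alpha$, $-8n_\alpha n_{\alpha'}$ are right, and the barrier values $\omega^2/4$ and $\omega^4/16$ give exactly the stated constants. Your route is arguably closer to the physical picture the paper advertises (``an excitation number bound on the ancillae''), and the Cauchy--Schwarz reduction is reusable; the paper's route has the advantage of producing an explicit integral representation of $\sigma_\alpha\rho_\omega(t)$ itself, which is what gets recycled in the noisy setting (Lemma~\ref{lemma:tr_sigma_bounds_noisy}), where your comparison argument would need the extra $\delta\,[\sigma_{\alpha,l},\mathcal{N}]$ source folded into the differential inequalities. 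One presentational nit: make the comparison-principle step explicit (continuity of $\expect{n_\alpha(t)}$ plus negativity of the right-hand side above the barrier), since $f(y)=-4y+2\omega\sqrt y$ is not Lipschitz at $y=0$ and a reader may otherwise worry about the ODE comparison being applied carelessly; as you use it only as an upper barrier entered from below, it is fine.
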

To obtain these upper bounds, we develop a set of input-output equations for the ancilla lowering operator $\sigma_\alpha$. These are analogous to the input-output equations used in quantum optics \cite{carmichael2009open}. We provide a detailed proof of the lemma in Appendix~\ref{appendix:analysis_noiseless_proofs} --- to illustrate the basic idea behind the proof, we explicitly bound $\norm{\sigma_\alpha \rho_\omega(t)}_1$. We begin by noting that
\[
\sigma_\alpha \rho_\omega(t) = \mathcal{E}_\omega(t,0)(\sigma_{\alpha,l}(t)(\rho(0))),
\]
where $\mathcal{E}_\omega(t, s) = e^{\mathcal{L}_\omega (t - s)}$ is the channel generated by the Lindbladian of the quantum simulator and $\sigma_{\alpha, l}(t)$ is a superoperator defined by
\[
\sigma_{\alpha, l}(t) = \mathcal{E}_\omega^{-1}(t, 0)\sigma_{\alpha, l} \mathcal{E}_\omega(t, 0),
\]
where, as defined in Section~\ref{sec:notation}, $\sigma_{\alpha, l}(X) = \sigma_\alpha X$ is a superoperator that left-multiplies its argument by $\sigma_\alpha$.
We can now obtain a ``Heisenberg-like" equation of motion for $\sigma_{\alpha, l}(t)$:
\begin{align*}
    &\frac{d}{dt}\sigma_{\alpha, l}(t) =\mathcal{E}_\omega^{-1}(t, 0) \big(\sigma_{\alpha, l} \mathcal{L}_\omega - \mathcal{L}_{\omega} \sigma_{\alpha, l}\big) \mathcal{E}_{\omega}(t,0), \nonumber \\
    &=-i\omega \mathcal{E}_\omega^{-1}(t, 0)  [\sigma_{\alpha, l}, V_{\alpha, l} - V_{\alpha, r}] \mathcal{E}_\omega(t, 0) + \nonumber\\
    &\qquad \qquad 4 \mathcal{E}_\omega^{-1}(t, 0)  [\sigma_{\alpha, l}, \mathcal{D}_{\sigma_\alpha}] \mathcal{E}_\omega(t, 0), \nonumber\\
    &\numeq{1}-2 \sigma_{\alpha, l}(t) -i\omega \mathcal{E}_{\omega}^{-1}(t, 0) L_{\alpha, l} [\sigma_{\alpha, l}, \sigma_{\alpha, l}^\dagger] \mathcal{E}_{\omega}^{-1}(t, 0),
\end{align*}
where in obtaining (1) we have used the fact that $\sigma_{\alpha, l}$ commutes with $V_{\alpha, r}$ (since it left multiplies while $V_{\alpha, r}$ right multiplies) and with itself, and that $[\sigma_{\alpha, l}, \mathcal{D}_{\sigma_{\alpha}}] = \sigma_{\alpha, l}/2 $. Integrating this equation, we can obtain that
\begin{align*}
&\sigma_{\alpha, l}(t) = e^{-2t}\sigma_{\alpha, l} - \nonumber\\
&\qquad i\omega \int_0^t e^{-2(t - s)}\mathcal{E}_{\omega}^{-1}(s, 0) L_{\alpha, l} [\sigma_{\alpha, l}, \sigma_{\alpha, l}^\dagger]\mathcal{E}_{\omega}(s, 0)ds,
\end{align*}
which can be considered to be an input-output equation for $\sigma_{\alpha}$ since it relates its action on the quantum state at time $t$ to its action on the initial state at $t = 0$. Now, since $\sigma_\alpha$ annihilates the initial state, i.e.~$\sigma_{\alpha}\rho(0) = 0$, we obtain that
\begin{align*}
    &\sigma_\alpha \rho_\omega(t) \nonumber\\
    &=-i\omega\int_0^t e^{-2(t - s)}\mathcal{E}_\omega(t, s)\big(L_{\alpha}[\sigma_{\alpha}, \sigma_{\alpha}^\dagger]\mathcal{E}_\omega(s, 0)(\rho(0))\big)ds.
\end{align*}
We can now bound $\tr{\mathcal{A}}{\sigma_\alpha \rho_\omega(t)}$ --- in particular, using the fact that $\norm{[\sigma_\alpha, \sigma_\alpha^\dagger]} \leq 1$, $\norm{L_\alpha} \leq 1$  and that $\norm{\mathcal{E}_{\omega}(t, s)(X)}_1 \leq \norm{X}_1$ (i.e.~trace norm is contractive under application of  quantum channels), we obtain that
\begin{align*}
    \norm{\tr{\mathcal{A}}{\sigma_\alpha \rho_\omega(t)}}_1 \leq \omega \int_0^t e^{-2(t - s)} \norm{\rho(s)}_1 ds = \frac{\omega}{2},
\end{align*}
which is the bound provided in Lemma~\ref{lemma:tr_sigma_bounds}. A similar procedure can be followed to obtain bounds on $\norm{\tr{\mathcal{A}}{\sigma_\alpha^\dagger \sigma_{\alpha'}\rho_\omega(t)}}_1$ and $\norm{\tr{\mathcal{A}}{\sigma_\alpha \sigma_{\alpha'}\rho_\omega(t)}}_1$.

\begin{repproposition}{prop:general_lindbladian}
 With all the ancillae initialized to the state $\ket{0}$ and for any time $t > 0$, $\norm{\rho(t) - \textnormal{Tr}_\mathcal{A}(\rho_\omega(t/\omega^2))}_1 \leq \varepsilon$ can be obtained with
\[
\omega = \Theta\left(\frac{\varepsilon^{1/2}}{\big(M + 4M \norm{H_\textnormal{sys}} t + 4M^2 t\big)^{1/2}} \right),
\]
which implies that the required simulation time on the simulator $t_\textnormal{sim} = t/\omega^2$ scales as
\[
t_\textnormal{sim} =  \Theta\bigg(\frac{Mt + 4M \norm{H_\textnormal{sys}}t^2 + 4M^2 t^2}{\varepsilon} \bigg).
\]
\end{repproposition}
\begin{proof}
We note that the remainder defined in Eq.~\ref{eq:remainder_def},
\[
\frac{d}{dt}\tr{\mathcal{A}}{\rho_\omega(t)} - \omega^2 \mathcal{L} \tr{\mathcal{A}}{\rho_\omega(t)} = \mathcal{R}_\omega(t), 
\]
can be integrated to obtain
\begin{align}\label{eq:remainder_integral_form}
\text{Tr}_\mathcal{A}\bigg(\rho_\omega \bigg(\frac{t}{\omega^2}\bigg)\bigg) = \rho(t) + \int_0^{t/\omega^2} e^{\mathcal{L}(t - \omega^2 s)}\big(\mathcal{R}_\omega(s)\big) ds,
\end{align}
where $\rho(t) = e^{\mathcal{L}t} (\rho(0))$ is the target state to be simulated. Using the contractivity of the trace norm under the quantum channel $e^{\mathcal{L}t}$, we obtain that
\begin{align}\label{eq:obs_err_nonlocal}
    & \bignorm{\text{Tr}_{\mathcal{A}}\bigg({\rho_\omega\bigg(\frac{t}{\omega^2}\bigg)}\bigg) - \rho(t)}_1 \leq \int_0^{t/\omega^2} \norm{\mathcal{R}_\omega(s)}_1 ds.
\end{align}
We can now use the explicit expression for $\mathcal{R}_\omega$ in Lemma~\ref{lemma:remainder} together with Lemma~\ref{lemma:tr_sigma_bounds} to bound $\norm{\mathcal{R}_\omega(s)}_1$ term by term. Consider first bound on $q_{\alpha}$ --- note that since $\norm{L_\alpha} \leq 1$, $\norm{q_\alpha}_1 \leq 2$. Next, consider bounds on $\norm{\mathcal{Q}_{\alpha, H_\text{sys}}^{(j)}(s)}_1$ for $j \in \{1,2\}$ --- we obtain that
\begin{align*}
&\norm{\mathcal{Q}^{(1)}_{\alpha, H_\text{sys}}(s)}_1 \leq \frac{8}{\omega} \norm{H_\text{sys}} \bignorm{\tr{\mathcal{A}}{\sigma_\alpha^{(\bar{u})}\rho_\omega(s)}}_1  \leq 4\norm{H_\text{sys}},\\
&\norm{\mathcal{Q}^{(2)}_{\alpha, H_\text{sys}}(s)} \leq \norm{[L_\alpha^\dagger, L_\alpha [H_\text{sys},\tr{\mathcal{A}}{\rho_{\omega}(t)}]]} \leq 4 \norm{H_\text{sys}}.
\end{align*}
Similarly, a bound on $\norm{Q^{(3)}_{\alpha,\alpha'}(s)}_1$ can be obtained via
\[
\norm{\mathcal{Q}^{(3)}_{\alpha, \alpha'}(s)}_1 \leq \frac{1}{\omega}\sum_{u \in \{+, -\}} 4 \bignorm{\tr{\mathcal{A}}{\sigma_{\alpha'}^{(\bar{u})} \rho_\omega(s)}}_1  \leq 4.
\]
Consider next $\norm{\mathcal{Q}^{(4)}_{\alpha, \alpha'}(s)}_1$ --- for $\alpha \neq \alpha'$, we obtain that
\[\norm{\mathcal{Q}_{\alpha, \alpha'}^{(4)}(s)}_1 \leq \frac{4}{\omega^2}\sum_{u, u' \in \{+, -\}} \bignorm{\tr{\mathcal{A}}{\sigma_\alpha^{(\bar{u})} \sigma_{\alpha'}^{(\bar{u}')}\rho_\omega(s)}}_1 \leq 4.\] 
For $\alpha = \alpha'$, it similarly follows that
\begin{align*}
\norm{\mathcal{Q}^{(4)}_{\alpha, \alpha}(s)}_1 &= \frac{2}{\omega^2}\norm{(\mathcal{D}_{L_\alpha^\dagger} - \mathcal{D}_{L_\alpha})(\tr{\mathcal{A}}{n_\alpha \rho_\omega(t)}}_1, \nonumber\\
&\leq \frac{8}{\omega^2}\norm{\tr{\mathcal{A}}{n_\alpha \rho_\omega(t)}}_1  \leq 4.
\end{align*}
With these bounds, we obtain that
\begin{align*}
&\norm{\mathcal{R}_\omega(s)}_1  \nonumber\\
&\leq 2M \omega^2 e^{-2s} + \omega^4\int_0^s e^{-2(s - s')} \big( 8M \norm{H_\text{sys}}  + 8M^2 \big) ds' ,\\
& \leq 2M \omega^2 e^{-2s} + 4M\omega^4 s(\norm{H_\text{sys}} + M).
\end{align*}
Using this bound on $\norm{\mathcal{R}_\omega(s)}_1$ together with Eq.~\ref{eq:obs_err_nonlocal}, we obtain that
\begin{align*}
& \bignorm{\text{Tr}_{\mathcal{A}}\bigg({\rho_\omega\bigg(\frac{t}{\omega^2}\bigg)}\bigg) - \rho(t)}_1
\\
& \quad \leq \big(M\omega^2 + 4M\omega^2 \norm{H_\text{sys}}t + 4M^2 \omega^2 t \big),
\end{align*}
from which the theorem statement follows.
\end{proof}

\subsection{Geometrically local models}\label{sec:geometric_locality}
In this subsection, we confine ourselves to geometrically local models, which frequently appear in the study of many-body open quantum systems in physics. We make the assumption that the system $\mathcal{S}$ consists of a set of $n$ qudits arranged on a lattice $\mathbb{Z}^d$ and the target Lindbladian $\mathcal{L}$ is assumed to be of the form
\begin{subequations}\label{eq:geom_local_lind}
\begin{align}
\mathcal{L} = \sum_{\alpha} \mathcal{L}_\alpha,
\end{align}
where $\mathcal{L}_\alpha$ is a Lindbladian correspond to a jump operator $L_{\alpha}$ i.e.
\begin{align}
    \mathcal{L}_\alpha(X) = -i[h_\alpha, X] + \mathcal{D}_{L_\alpha}(X),
\end{align}
\end{subequations}
with $\norm{L_{\alpha}} \leq 1$, and Hamiltonian $h_\alpha$ with $\norm{h_\alpha} \leq 1$. The operators $L_{\alpha}$ and $h_\alpha$ are assumed to be supported on qudits in $S_\alpha$, which has a maximum  diameter of $a$, 
\[
\text{diam}(S_\alpha) = \max_{x, y \in S_\alpha} d({x, y}) \leq a.
\]
Such models are known have a finite velocity at which correlations can propagate across the lattice, formalized by the well-known Lieb-Robinson bounds. While originally derived for geometrically local Hamiltonian (i.e.~closed) systems, Lieb-Robinson bounds have been extended to open quantum systems \cite{poulin2010lieb}. Below, we quote a Lieb-Robinson bound from Ref.~\cite{barthel_kliesch_2012}, which we will use in the remainder of this section. To compactly express this bound, following Ref.~\cite{barthel_kliesch_2012}, it is convenient to define the degree of a subset $S \subseteq \mathbb{Z}^d$
\begin{align*}
\partial_S = \bigabs{\{\alpha | S_\alpha \cap S \neq \emptyset\}} \text{ and }\mathcal{Z} = \max_{\alpha} \partial(S_\alpha). 
\end{align*}
$\partial_S$ is thus the number of terms in the Lindbladian that act on the qudits in $S$, and consequently $\mathcal{Z}$ is an upper bound on the number of terms in the Lindbladian that intersect with any one term in the Lindbladian. For notational convenience, given a set $S \subset \mathbb{Z}^d$ of lattice sites, we will define
\[
\eta_S = \frac{\partial_S}{\mathcal{Z}}.
\]
The Lieb-Robinson bound from Ref.~\cite{barthel_kliesch_2012} is precisely stated below.
\begin{lemma}
\label{lemma:lieb_robinson}
Suppose $\mathcal{K}$ is a superoperator supported on region $S_\mathcal{K}$ and satisfies $\mathcal{K}(I) = 0$, and $O$ is a local observable, with $\norm{O} \leq 1$, supported on region $S_O$, then
\begin{align*}
    \norm{\mathcal{K}  e^{\mathcal{L}^\dagger t}(O)} \leq \eta_{S_O} \norm{\mathcal{K}}_{cb,\infty\to\infty}\exp\bigg(4e\mathcal{Z}t - \frac{d({S_\mathcal{K}, S_O})}{a}\bigg).
\end{align*}
\end{lemma}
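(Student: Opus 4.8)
The plan is to prove this as a standard dissipative Lieb--Robinson bound, expanding the Heisenberg evolution $e^{\mathcal{L}^\dagger t}(O)$ in a time-ordered Dyson series over the local terms $\mathcal{L}_\alpha^\dagger$ and exploiting two independent ``vanishing'' mechanisms to localize the sum. First I would record the elementary facts about a single term: since $\norm{h_\alpha}\leq 1$ and $\norm{L_\alpha}\leq 1$, the adjoint dissipator obeys $\norm{\mathcal{L}_\alpha^\dagger}_{cb,\infty\to\infty}\leq 4$; moreover each $\mathcal{L}_\alpha^\dagger$ is unital and, crucially, annihilates any operator that acts as the identity on $S_\alpha$, so $\mathcal{L}_\alpha^\dagger(X)=0$ whenever $\mathrm{supp}(X)\cap S_\alpha=\emptyset$. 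I would also use that $e^{\mathcal{L}^\dagger \tau}$ is unital and completely positive, hence an operator-norm contraction.

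Next I would write $e^{\mathcal{L}^\dagger t}=\sum_{m\geq 0}\frac{t^m}{m!}(\mathcal{L}^\dagger)^m$ with $(\mathcal{L}^\dagger)^m=\sum_{\alpha_1,\dots,\alpha_m}\mathcal{L}_{\alpha_m}^\dagger\cdots\mathcal{L}_{\alpha_1}^\dagger$, and apply $\mathcal{K}$ to $O$. A summand $\mathcal{K}\,\mathcal{L}_{\alpha_m}^\dagger\cdots\mathcal{L}_{\alpha_1}^\dagger(O)$ is nonzero only if (i) the sets $S_{\alpha_1},\dots,S_{\alpha_m}$ form a cluster rooted at $S_O$ --- $S_{\alpha_1}$ must meet $S_O$, and each $S_{\alpha_j}$ must meet $S_O\cup\bigcup_{i<j}S_{\alpha_i}$, else some factor annihilates the operator it acts on --- and (ii) the accumulated support meets $S_\mathcal{K}$, since otherwise the evolved operator is the identity on $S_\mathcal{K}$ and $\mathcal{K}(I)=0$ kills it. Mechanism (i) is the support-spreading constraint; mechanism (ii) is exactly where the hypothesis $\mathcal{K}(I)=0$ enters. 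A connected cluster of $m$ sets of diameter $\leq a$ rooted at $S_O$ cannot reach $S_\mathcal{K}$ unless $m\geq d(S_\mathcal{K},S_O)/a$, so only orders $m\geq d(S_\mathcal{K},S_O)/a$ survive; this is the geometric origin of the light cone.

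I would then bound each surviving summand by $\norm{\mathcal{K}}_{cb,\infty\to\infty}\,4^m\norm{O}$ (using the single-term norm bound $m$ times, together with contractivity of the CP maps appearing in the recursive form), and count the contributing index sequences: the first factor has at most $\partial_{S_O}$ choices (terms meeting $S_O$) and each subsequent factor at most $\mathcal{Z}$ choices (terms meeting the most recently added local support), giving $\leq \partial_{S_O}\mathcal{Z}^{m-1}=\eta_{S_O}\mathcal{Z}^m$. Assembling, the order-$m$ contribution is at most $\eta_{S_O}\norm{\mathcal{K}}_{cb,\infty\to\infty}\frac{(4\mathcal{Z}t)^m}{m!}$, and summing the tail with the elementary estimate $\sum_{m\geq m_0}\frac{x^m}{m!}\leq e^{-m_0}\sum_{m\geq 0}\frac{(ex)^m}{m!}=e^{ex-m_0}$ (valid since $e^{m-m_0}\geq 1$ for $m\geq m_0$), with $x=4\mathcal{Z}t$ and $m_0=d(S_\mathcal{K},S_O)/a$, yields exactly $\eta_{S_O}\norm{\mathcal{K}}_{cb,\infty\to\infty}\exp(4e\mathcal{Z}t-d(S_\mathcal{K},S_O)/a)$.

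The main obstacle is the combinatorics of the third step: the naive ``meets the accumulated support'' rule overcounts, because a large cluster can be touched by far more than $\mathcal{Z}$ terms, which would produce an $m!$-type growth that cancels the $1/m!$ and destroys the bound. The standard remedy, which is exactly what the cited analysis of Ref.~\cite{barthel_kliesch_2012} carries out, is to recast the expansion as a recursive integral (Duhamel) inequality in which the branching at each step is reset to the neighborhood of a single most-recently-added term, so that the per-step factor is genuinely $\leq \mathcal{Z}$ rather than the degree of the growing cluster; iterating this linear inequality reproduces the $\partial_{S_O}\mathcal{Z}^{m-1}$ count cleanly. I would therefore either invoke their combinatorial lemma directly or reprove it through this recursion, the remainder being the bookkeeping above.
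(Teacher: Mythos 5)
The paper does not actually prove this lemma: it is quoted verbatim from Ref.~\cite{barthel_kliesch_2012} (the text explicitly says ``we quote a Lieb-Robinson bound from Ref.~[barthel\_kliesch\_2012]''), and the appendix merely restates it before using it. So there is no in-paper argument to compare yours against. That said, your sketch is a faithful reconstruction of the standard proof underlying the cited result: you correctly identify the two vanishing mechanisms (support-spreading of the cluster rooted at $S_O$, and the hypothesis $\mathcal{K}(I)=0$ killing all orders $m< d(S_{\mathcal{K}},S_O)/a$), the per-term bound $\norm{\mathcal{L}_\alpha^\dagger}_{cb,\infty\to\infty}\leq 4$, the counting $\partial_{S_O}\mathcal{Z}^{m-1}=\eta_{S_O}\mathcal{Z}^m$ that produces the prefactor $\eta_{S_O}$, and the tail estimate $\sum_{m\geq m_0}x^m/m!\leq e^{ex-m_0}$ that yields exactly the constants $4e\mathcal{Z}t$ and $d(S_\mathcal{K},S_O)/a$ in the exponent. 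You also correctly flag the one genuinely delicate point --- that a naive ``meets the accumulated support'' count gives a branching factor that grows with the cluster and would defeat the $1/m!$ --- and correctly name the remedy (the Duhamel/recursive-inequality form in which the branching is reset to the neighborhood of the most recently added term). Since you defer that combinatorial lemma to the reference rather than reproving it, your proposal is at the same level of rigor as the paper itself, which imports the whole statement; as a self-contained proof it would still need that recursion carried out, but as a justification of the lemma as used in this paper it is sound.
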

\noindent The superoperator $\mathcal{K}$ that will typically arise in our analysis throughout this paper would either be commutator $\mathcal{K}(X) = [K, X]$ with some operator supported on $S_\mathcal{K}$ or the adjoint of a Lindbladian supported on $S_\mathcal{K}$ i.e. $\mathcal{K} = \mathcal{L}_\mathcal{K}^\dagger$ for a Lindbladian superoperator $\mathcal{L}_\mathcal{K}$. Both of these examples can be seen to satisfy $\mathcal{K}(I) = 0$.

As we will outline in the rest of this section, while analyzing quantum simulation of both dynamics and fixed points our strategy will be to use Lieb-Robinson bounds to obtain an upper bound on the remainder calculated in Lemma~\ref{lemma:remainder} that is uniform in the system size $n$ (the total number of qudits in the lattice). However, we note that the remainder expression, unlike the Lieb-Robinson bound given in Lemma~\ref{lemma:lieb_robinson}, has contributions from terms such as $\mathcal{Q}_{\alpha, \alpha'}^{(3)}, \mathcal{Q}_{\alpha, \alpha'}^{(4)}$, that involve the consecutive application of two superoperators supported on two different regions of the lattice. Motivated from this observation, we will need a Lieb-Robinson bound that accounts for the application of two superoperators on a Heisenberg picture observable as opposed to a single superoperator. We provide this in the lemma below, and it follows straightforwardly from the Lieb-Robinson bound in Lemma~\ref{lemma:lieb_robinson}.
\begin{lemma}
\label{lemma:2superopLR}
$\mathcal{K}_X$ and $\mathcal{K}_Y$ are superoperators supported on regions $X$ and $Y$ such that $\mathcal{K}_X(I) = \mathcal{K}_Y(I) = 0$ and $\textnormal{diam}(X),\textnormal{diam}(Y)\leq a$, then for any local observable $O$, with $\norm{O} \leq 1$, supported on region $S_O$
\begin{align*}
    &\norm{\mathcal{K}_X \mathcal{K}_Y e^{\mathcal{L}^\dagger t}(O)} \leq
    e \eta_{S_O}\norm{\mathcal{K}_X}_{cb,\infty\to\infty} \norm{\mathcal{K}_Y}_{cb,\infty\to\infty} \times \\
    &\qquad \qquad \exp\bigg(4e\mathcal{Z}t - \frac{1}{2a}(d({X, S_O}) + d({Y, S_O}))\bigg).
\end{align*}
\end{lemma}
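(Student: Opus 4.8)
The plan is to derive the two-superoperator bound from two applications of the single-superoperator Lieb-Robinson bound of Lemma~\ref{lemma:lieb_robinson}, writing $A := e^{\mathcal{L}^\dagger t}(O)$ for brevity. The starting point is that the outer superoperator can be stripped off at the cost of its norm: combining the elementary estimate $\norm{\mathcal{K}_X(B)} \leq \norm{\mathcal{K}_X}_{\infty\to\infty}\norm{B}$ with $\norm{\mathcal{K}_X}_{\infty\to\infty} \leq \norm{\mathcal{K}_X}_{cb,\infty\to\infty}$ gives $\norm{\mathcal{K}_X\mathcal{K}_Y(A)} \leq \norm{\mathcal{K}_X}_{cb,\infty\to\infty}\,\norm{\mathcal{K}_Y(A)}$. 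Since $\mathcal{K}_Y(I)=0$, Lemma~\ref{lemma:lieb_robinson} applied to $\mathcal{K}_Y$ and the local observable $O$ then controls $\norm{\mathcal{K}_Y(A)} = \norm{\mathcal{K}_Y e^{\mathcal{L}^\dagger t}(O)}$. This ``baseline'' estimate already produces the prefactor $\eta_{S_O}\norm{\mathcal{K}_X}_{cb,\infty\to\infty}\norm{\mathcal{K}_Y}_{cb,\infty\to\infty}$ and the Lieb-Robinson factor $e^{4e\mathcal{Z}t}$, but with decay only in the single distance $d(Y,S_O)$ rather than in the symmetric sum appearing in the claim.

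To upgrade the single distance to $\tfrac{1}{2a}(d(X,S_O)+d(Y,S_O))$ I would split into two cases according to whether $X$ and $Y$ are disjoint. If $X\cap Y=\emptyset$, then $\mathcal{K}_X$ and $\mathcal{K}_Y$ act on disjoint tensor factors and hence commute, so running the baseline estimate on $\mathcal{K}_X\mathcal{K}_Y(A)=\mathcal{K}_Y\mathcal{K}_X(A)$ in the opposite order---now invoking $\mathcal{K}_X(I)=0$ when applying Lemma~\ref{lemma:lieb_robinson} to $\mathcal{K}_X$---yields a second bound on the same quantity that instead decays in $d(X,S_O)$. Taking the geometric mean of the two bounds is legitimate since both upper-bound $\norm{\mathcal{K}_X\mathcal{K}_Y(A)}$, and it produces exactly the symmetric exponent $-\tfrac{1}{2a}(d(X,S_O)+d(Y,S_O))$ with no extra constant. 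If instead $X\cap Y\neq\emptyset$, the commuting trick is unavailable, but any shared site forces the two distances to be comparable: passing through a common point and using $\textnormal{diam}(X),\textnormal{diam}(Y)\leq a$ with the triangle inequality gives $\abs{d(X,S_O)-d(Y,S_O)}\leq a$, whence $d(Y,S_O)\geq \tfrac12(d(X,S_O)+d(Y,S_O))-\tfrac{a}{2}$. Substituting this into the baseline estimate converts the single-distance decay into the desired symmetric decay at the cost of a factor $e^{1/2}\leq e$, which is where the (slightly loose) constant $e$ in the statement originates.

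The main obstacle is precisely the overlapping case: the clean commute-and-average argument breaks down when $\mathcal{K}_X$ and $\mathcal{K}_Y$ share support, and one must instead exploit the geometric observation that two local regions of diameter at most $a$ that intersect are nearly equidistant from $S_O$. Everything else is routine bookkeeping---collecting the common prefactor, noting that the hypotheses $\mathcal{K}_X(I)=\mathcal{K}_Y(I)=0$ are exactly what license each invocation of Lemma~\ref{lemma:lieb_robinson}, and checking that $e$ dominates the worst-case $e^{1/2}$ so that a single clean statement covers both cases.
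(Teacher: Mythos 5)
Your proof is correct, and for the disjoint case it is essentially identical to the paper's: both derive the two single-distance bounds by peeling off the outer superoperator in either order (using commutativity of superoperators with disjoint support) and then take the geometric mean to obtain the symmetric exponent with no extra constant. Where you genuinely diverge is the overlapping case. The paper treats $\mathcal{K}_X\mathcal{K}_Y$ as a \emph{single} superoperator supported on $X\cup Y$, applies Lemma~\ref{lemma:lieb_robinson} to that union, and then lower-bounds $d(X\cup Y, S_O)=\min(d(X,S_O),d(Y,S_O))$ by $\tfrac12(d(X,S_O)+d(Y,S_O))-a$ using the diameter hypothesis; this is what produces the factor $e$. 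You instead never form the union: you keep the ``strip off $\mathcal{K}_X$, apply the Lieb--Robinson bound only to $\mathcal{K}_Y$'' estimate and convert its single-distance decay via the near-equidistance observation $\abs{d(X,S_O)-d(Y,S_O)}\leq a$ for intersecting regions of diameter at most $a$. The two routes rest on the same geometric input (a shared site plus the diameter bound plus the triangle inequality), but yours is slightly tighter --- you lose only $e^{1/2}$ rather than $e$ --- and it avoids having to argue that $\norm{\mathcal{K}_X\mathcal{K}_Y}_{cb,\infty\to\infty}\leq\norm{\mathcal{K}_X}_{cb,\infty\to\infty}\norm{\mathcal{K}_Y}_{cb,\infty\to\infty}$ for the composite. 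Both land comfortably inside the stated constant $e$, so your single unified statement covers both cases as claimed.
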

\subsubsection{Dynamics}
With the two Lieb-Robinson bounds in Lemmas~\ref{lemma:lieb_robinson} and \ref{lemma:2superopLR}, we can now analyze the remainder expression given in Lemma~\ref{lemma:remainder} to obtain bounds that are uniform in the system size $n$ when we are interested in evaluating local observables. In particular, for a local observable $O$, let us consider the error between its ideal expected value
\[
\expect{O}_\text{ideal} := \text{Tr}[Oe^{\mathcal{L}t}(\rho(0))],
\]
and the expected value on the quantum simulator after evolving it for time $t / \omega^2$,
\begin{align}
\expect{O}_\text{sim} = \text{Tr}\bigg[O \rho_\omega\bigg(\frac{t}{\omega^2}\bigg)\bigg] = \text{Tr}\bigg[O\text{Tr}_{\mathcal{A}}\bigg({\rho_\omega\bigg(\frac{t}{\omega^2}\bigg)\bigg)}\bigg]. 
\end{align}
Using the integral form of the remainder definition, Eq.~\ref{eq:remainder_integral_form}, we can express the error between $\expect{O}_{\text{ideal}}$ and $\expect{O}_\text{sim}$ as
\begin{align}\label{eq:remainder_with_observable}
    &\bigabs{\expect{O}_\text{ideal} - \expect{O}_\text{sim}} = \bigabs{\int_0^{t/ \omega^2}\text{Tr}\bigg(O e^{\mathcal{L}(t - \omega^2 s)}\mathcal{R}_\omega(s) \bigg)ds}, \nonumber\\
    &\qquad =\frac{1}{\omega^2}\bigabs{\int_0^t \text{Tr}\bigg(O (t - s)\bigtr{\mathcal{A}}{\mathcal{R}_\omega\bigg(\frac{s}{\omega^2}\bigg)} \bigg)ds},
\end{align}
where $O(t) = e^{\mathcal{L}^\dagger t}(O)$ is the Heisenberg picture evolution of the observable $O$ under the \emph{target} Lindbladian $\mathcal{L}$. Using Lemma~\ref{lemma:remainder}, we express the remainder $\mathcal{R}_\omega(s)$ in terms of the operators $q_\alpha$, $\mathcal{Q}^{(1)}_{\alpha,H_\textnormal{sys}}(s)$, $\mathcal{Q}^{(2)}_{\alpha,H_\textnormal{sys}}(s)$, $\mathcal{Q}^{(3)}_{\alpha,\alpha'}(s)$, and $\mathcal{Q}^{(4)}_{\alpha,\alpha'}(s)$ and analyze the resulting terms individually. The next lemma shows that applying the Lieb-Robinson bounds in Lemmas~\ref{lemma:lieb_robinson} and \ref{lemma:2superopLR} provides upper bounds on these remainder terms that are uniform in the system size.
\begin{lemma}\label{lemma:bounds_remainder_lr}
Suppose $O$ is a local observable with $\norm{O} \leq 1$ supported on $S_O$, and for $\tau>0$, let $O(\tau) = \exp(\mathcal{L}^\dagger \tau)(O)$ where $\mathcal{L}$ is a geometrically local Lindbladian of the form in Eq.~\ref{eq:geom_local_lind}. Then for\ $q_\alpha$ as defined in Lemma~\ref{lemma:remainder}, then there is a non-decreasing piecewise continuous function $\nu$ such that $\nu(t) \leq O\left(t^d\right)$ as $t \to \infty$ and
        \[
        \sum_{\alpha}\bigabs{\tr{}{O(\tau) q_\alpha}} \leq  \nu(\tau),
        \]
and for $j \in \{1, 2, 3, 4\}$
        \begin{align*}
        &\sum_{\alpha, \alpha'}\bigabs{\tr{}{O(\tau)\mathcal{Q}_{\alpha, \alpha'}^{(j)}(s)}} \leq \nu^2(\tau),
        \end{align*}
where, for $j \in \{3, 4\}$, $\mathcal{Q}_{\alpha, \alpha'}^{(j)}$ is defined in Lemma~\ref{lemma:remainder} and for $j\in \{1, 2\}$, we define $\mathcal{Q}_{\alpha, \alpha'}^{(j)} = \mathcal{Q}_{\alpha, h_{\alpha'}}^{(j)}$ where $\mathcal{Q}_{\alpha, h}^{(j)}$ is defined in Lemma~\ref{lemma:remainder}.
\end{lemma}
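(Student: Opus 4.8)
The plan is to estimate each trace by moving the commutators and dissipators appearing in $q_\alpha$ and $\mathcal{Q}^{(j)}_{\alpha,\alpha'}$ off of the reduced-ancilla operators and onto the Heisenberg-evolved observable $O(\tau)$, using cyclicity of the trace in the forms $\text{Tr}(A[B,X]) = \text{Tr}([A,B]X)$ and $\text{Tr}(A\,\mathcal{D}_L(X)) = \text{Tr}(\mathcal{D}_L^\dagger(A)X)$. This leaves the residual operators $\text{Tr}_\mathcal{A}(\sigma_\alpha\rho_\omega(s))$, $\text{Tr}_\mathcal{A}(\sigma_\alpha^{(\bar u)}\sigma_{\alpha'}^{(\bar u')}\rho_\omega(s))$, and $\text{Tr}_\mathcal{A}(n_\alpha\rho_\omega(s))$ standing alone, at which point Lemma~\ref{lemma:tr_sigma_bounds} bounds them in trace norm by $\omega/2$, $\omega^2/4$, and $\omega^2/4$ respectively, uniformly in $s$. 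The key bookkeeping is that the explicit prefactors $\omega^{-1}$ in $\mathcal{Q}^{(1)},\mathcal{Q}^{(3)}$ and $\omega^{-2}$ in $\mathcal{Q}^{(4)}$ are exactly cancelled by these bounds (while $\mathcal{Q}^{(2)}$ carries no inverse power of $\omega$ and is paired with $\text{Tr}_\mathcal{A}(\rho_\omega(s))$, whose trace norm is $1$), so every term reduces to an $\omega$- and $s$-independent quantity equal to the operator norm of one or two local superoperators applied to $O(\tau)$, times an $O(1)$ constant.

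After this reduction the terms split into two types. The single-superoperator terms---$q_\alpha$, producing $\mathcal{D}_{L_\alpha}^\dagger O(\tau)$, and the $\alpha=\alpha'$ part of $\mathcal{Q}^{(4)}$, producing $(\mathcal{D}_{L_\alpha^\dagger}^\dagger - \mathcal{D}_{L_\alpha}^\dagger)O(\tau)$---are controlled by Lemma~\ref{lemma:lieb_robinson}, since these superoperators are supported on $S_\alpha$, annihilate the identity, and have $O(1)$ completely bounded norm. The two-superoperator terms---$\mathcal{Q}^{(1)}$, $\mathcal{Q}^{(3)}$, and the $\alpha\neq\alpha'$ part of $\mathcal{Q}^{(4)}$---produce nested commutators/dissipators on $S_\alpha$ and $S_{\alpha'}$ acting on $O(\tau)$, controlled by Lemma~\ref{lemma:2superopLR}, giving a factor $\exp(4e\mathcal{Z}\tau - \tfrac{1}{2a}(d(S_\alpha,S_O)+d(S_{\alpha'},S_O)))$. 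The term needing extra care is $\mathcal{Q}^{(2)}$: expanding $[[O(\tau),L_\alpha^\dagger]L_\alpha,h_{\alpha'}]$ by the Leibniz rule yields a genuine double-commutator piece (handled by Lemma~\ref{lemma:2superopLR}) plus a piece proportional to $[L_\alpha,h_{\alpha'}]$, which vanishes unless $S_\alpha\cap S_{\alpha'}\neq\emptyset$ and so collapses to a single Lieb-Robinson factor in which $\alpha'$ ranges over at most $\mathcal{Z}$ neighbors of $\alpha$.

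The crux is the light-cone summation that turns these factors into a system-size-independent function growing only as $\tau^d$. I would bound each single-superoperator summand by $\min(C_0, C_1\eta_{S_O}\exp(4e\mathcal{Z}\tau - d(S_\alpha,S_O)/a))$, where $C_0$ is the trivial estimate from the unitality and operator-norm contractivity of $e^{\mathcal{L}^\dagger\tau}$ (a unital completely positive map); the role of $C_0$ is to prevent the Lieb-Robinson exponential, which exceeds $1$ at small distances, from overcounting. Splitting the lattice sum at the crossover radius $r^\ast = \Theta(a\mathcal{Z}\tau)$, the $O(r^{d-1})$ counting of terms at distance $r$ gives $O((r^\ast)^d) = O(\tau^d)$ contributions of size $C_0$ inside the cone, while outside the exponential decay makes the tail $O(\tau^{d-1})$; hence the single sum is $O(\tau^d)$. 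For the two-superoperator terms the same partition is applied in the product index space: the exponent constrains $d(S_\alpha,S_O)+d(S_{\alpha'},S_O)\lesssim a\mathcal{Z}\tau$, and the number of pairs inside this region is $O(\tau^{2d})$ with an $O(\tau^{2d-1})$ tail, so the double sum is $O(\tau^{2d})$.

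To finish, I would define $\nu(\tau)$ to be a sufficiently large constant multiple of $\max(1,\tau^d)$; it is then non-decreasing, piecewise continuous, satisfies $\nu(\tau)\leq O(\tau^d)$, simultaneously dominates all the single sums (including $\sum_\alpha|\text{Tr}(O(\tau)q_\alpha)|$ and the $\alpha=\alpha'$ contributions of $\mathcal{Q}^{(4)}$) by $\nu(\tau)$, and---since $O(\tau^{2d}) = (O(\tau^d))^2$---dominates all the double sums by $\nu(\tau)^2$. The main obstacle I anticipate is the purely geometric/combinatorial one of executing the crossover split cleanly and uniformly in $n$: one must verify that $C_0$ genuinely caps every summand, that the neighbor-counting $O(r^{d-1})$ and pair-counting $O(\tau^{2d})$ estimates hold for the actual lattice $\mathbb{Z}^d$ with interaction range $a$ and degree $\mathcal{Z}$, and that all resulting constants depend only on $d$, $a$, $\mathcal{Z}$, and $\eta_{S_O}$ but not on the total number of qudits.
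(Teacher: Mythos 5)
Your proposal is correct and follows essentially the same route as the paper, which formalizes your two steps as Lemma~\ref{lemma:bounds_remainder_lr_intermediate} (per-term bounds via trace cyclicity, Lemma~\ref{lemma:tr_sigma_bounds} to cancel the inverse powers of $\omega$, and the capped Lieb--Robinson estimates of Lemmas~\ref{lemma:lieb_robinson} and \ref{lemma:2superopLR}) and Lemma~\ref{lemma:lr_summation} (the light-cone lattice summation yielding $O(\tau^d)$ and $O(\tau^{2d})$). The only minor divergence is $\mathcal{Q}^{(2)}$, where your Leibniz expansion is unnecessary: the map $X\mapsto[X,L_\alpha^\dagger]L_\alpha$ already annihilates the identity and is supported on $S_\alpha$, so Lemma~\ref{lemma:2superopLR} applies directly.
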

This lemma follows directly from the application of the Lieb-Robinson bounds. The proof of this lemma is detailed in Appendix~\ref{appendix:geometric_locality_lemma_proofs}, but we provide a sketch here. Consider first $\sum_{\alpha}\abs{\text{Tr}(O(\tau) q_\alpha}$ --- using the explicit expression for $q_\alpha$ from Lemma~\ref{lemma:remainder}, we obtain that
\begin{align}\label{eq:example_lr_bound_1}
    &\sum_{\alpha}\abs{\text{Tr}(O(\tau) q_\alpha} = \sum_{\alpha}\abs{\text{Tr}[O(\tau)\mathcal{L}_\alpha(\rho(0))]}, \nonumber \\
    &\qquad =\sum_{\alpha}\bigabs{\text{Tr}[\mathcal{L}_\alpha^\dagger(O(\tau))\rho(0)]}, \nonumber \\
    &\qquad = \sum_{\alpha}{\norm{\mathcal{L}_\alpha^\dagger(O(\tau))}}.
\end{align}
Now, we can use the Lieb-Robinson bound in Lemma~\ref{lemma:lieb_robinson}, together with $\norm{\mathcal{L}_\alpha^\dagger}_{cb,\infty\to\infty} \leq 4$ to obtain
\begin{align}\label{eq:example_lr_bound_2}
    &\sum_{\alpha}\abs{\text{Tr}(O(\tau) q_\alpha}  \nonumber\\
    &\qquad \leq 4 \sum_{\alpha} \min\bigg(\eta_{S_O}\exp\bigg(4e\mathcal{Z}\tau - \frac{d(S_\alpha, S_O)}{a}\bigg), 1\bigg).
\end{align}
However, the summation in the above equation is simply a summation over an exponential over the lattice $\mathbb{Z}^d$, which will be convergent and independent of the size of system. A careful analysis of this summation, performed in Appendix~\ref{appendix:geometric_locality_lemma_proofs}, yields that, as stated in the lemma, it can be upper bounded by a non-decreasing function which is $O(\tau^d)$. Similarly, as an example, consider $\sum_{\alpha, \alpha'}\text{Tr}(O(\tau) \mathcal{Q}^{(3)}_{\alpha, \alpha'}(s))$ --- by using the explicit expression for $\mathcal{Q}^{(3)}_{\alpha, \alpha'}(s)$ from Lemma~\ref{lemma:remainder}, we obtain that
\begin{align*}
&\sum_{\alpha, \alpha'}\bigabs{\text{Tr}(O(\tau) \mathcal{Q}^{(3)}_{\alpha, \alpha'}(s))} \nonumber\\
&\quad= \frac{1}{\omega}\sum_{\alpha, \alpha'} \sum_{ u \in \{-, +\}}\bigabs{\text{Tr}\big(O(\tau)\mathcal{D}_{L_\alpha}([L_{\alpha'}^{(u)}, \text{Tr}_{\mathcal{A}}(\sigma_{\alpha'}^{(\bar{u})}\rho_\omega(s))])\big)},\nonumber \\
&\quad=\frac{1}{\omega} \sum_{\alpha, \alpha'} \sum_{u \in \{-, +\}} \bigabs{\text{Tr}\big([\mathcal{D}_{L_{\alpha}}^\dagger(O(\tau)), L_{\alpha'}^{(u)}]\text{Tr}_{\mathcal{A}}(\sigma_{\alpha'}^{(\bar{u})}\rho_\omega(s))\big)}, \nonumber\\
&\quad \leq \frac{1}{\omega} \sum_{\alpha, \alpha'} \sum_{u \in \{-, +\}} \norm{[\mathcal{D}_{L_\alpha}^\dagger(O(\tau)), L_{\alpha'}^{(u)}]} \norm{\text{Tr}_{\mathcal{A}}(\sigma_{\alpha'}^{(\bar{u})}\rho_\omega(s))}_1, \nonumber\\
&\quad \leq \frac{1}{2}\sum_{\alpha, \alpha'}\sum_{u \in \{-, +\}}\norm{[\mathcal{D}_{L_\alpha}^\dagger(O(\tau)), L_{\alpha'}^{(u)}]},
\end{align*}
where in the last step we have used Lemma~\ref{lemma:tr_sigma_bounds}. We can note that both $\mathcal{D}_{L_\alpha}^\dagger$ and $[\cdot, L_{\alpha'}^{(u)}]$ are superoperators that are respectively supported on $S_\alpha$ and $S_{\alpha'}$ and  have the identity operator in their kernel. Thus from Lemma~\ref{lemma:2superopLR}, as well as using the fact that $\norm{\mathcal{D}_{L_\alpha}^\dagger}_{cb,\infty\to\infty}, \norm{[L_\alpha^{(u)}, \cdot]}_{cb,\infty\to\infty} \leq 2$ we obtain that
\begin{align*}
&\sum_{\alpha, \alpha'}\bigabs{\text{Tr}(O(\tau) \mathcal{Q}^{(3)}_{\alpha, \alpha'}(s))}\nonumber\\
&\leq 2 \sum_{\alpha, \alpha'} \min\bigg(e\eta_{S_O}\exp\bigg(4e\mathcal{Z}\tau - \nonumber\\
&\qquad \qquad \qquad \qquad \frac{1}{2a}\big(d(S_\alpha, S_O) + d(S_{\alpha'}, S_O)\big) \bigg), 1\bigg).
\end{align*}
This summation, which is approximately a double summation of a two-dimensional exponential function on a lattice, can again be analyzed and upper bounded by a non-decreasing function that grows as $O(\tau^{2d})$. The rest of the Lemma~\ref{lemma:bounds_remainder_lr} can be proved similarly --- the proofs are outlined in detail in Appendix~\ref{appendix:geometric_locality_lemma_proofs}.

\begin{repproposition}{prop:dynamics_noiseless}
Suppose $\mathcal{L}$ is a $d-$dimensional geometrically local Lindbladian and $O$ with $\norm{O}\leq 1$ is a local observable. To achieve an additive error $\varepsilon$ in the expected local observable at time $t$ with the analogue quantum simulator, we need to choose $\omega = \Theta(t^{-(2d + 1)}\sqrt{\varepsilon})$ which corresponds to a simulator evolution time $t_\textnormal{sim} = t / \omega^2 = \Theta(t^{4d + 3} \varepsilon^{-1})$.
\end{repproposition}
\begin{proof}
    The proposition follows by combining the expression of $\mathcal{R}_\omega(s)$ from Lemma~\ref{lemma:remainder} with Eq.~\ref{eq:remainder_integral_form} and then using Lemma~\ref{lemma:bounds_remainder_lr}. Using $H_\text{sys} = \sum_{\alpha} h_\alpha$, we first decompose $\mathcal{Q}_{\alpha, H_\text{sys}}^{(j)} = \sum_{\alpha'}\mathcal{Q}_{\alpha, h_{\alpha'}}^{(j)}$ for $j \in \{1, 2\}$, where $\mathcal{Q}_{\alpha, h}^{(j)}$ is defined in Lemma~\ref{lemma:remainder}. For notational convenience, for $j \in \{1, 2\}$ we will define $\mathcal{Q}_{\alpha, {\alpha'}}^{(j)} = \mathcal{Q}_{\alpha, h_{\alpha'}}^{(j)}$. From Eq.~\ref{lemma:remainder} and Lemma~\ref{lemma:remainder}, it follows that
    \begin{widetext}
    \begin{align}\label{eq:summation_local_observable_remainder}
        \abs{\expect{O}_\text{ideal} - \expect{O}_\text{sim}} &\leq \sum_{\alpha}\bigabs{\int_0^t  {\text{Tr}(O(t - s) q_\alpha)} e^{-2s/\omega^2}ds} + \omega^2\sum_{j = 1}^4 \sum_{\alpha, \alpha'} \bigabs{\int_0^t \int_0^{s/\omega^2} e^{-2(s/\omega^2 - s')} {\text{Tr}(O(t - s)\mathcal{Q}^{(j)}_{\alpha, \alpha'}(s'))}ds'}, \\
        &\numleq{1} \int_0^t \nu(t - s) e^{-2s/\omega^2}ds + 4\omega^2  \int_0^t \int_0^{s/\omega^2}e^{-2(s/\omega^2 - s')} \nu^2(t - s) ds', \nonumber\\
        &\numleq{2} \nu(t) \int_0^t e^{-2s/\omega^2}ds + 4\omega^2 \nu^2(t) \int_0^t \int_0^{s/\omega^2}e^{-2(s/\omega^2 - s')}ds'\leq \frac{\omega^2}{2}\nu(t) + 2\omega^2 t\nu^2(t),\nonumber
    \end{align}
    \end{widetext}
    where in (1) we have used Lemma~\ref{lemma:bounds_remainder_lr} --- in particular, the fact that $\nu(t)$ is a piecewise continuous function and thus can be integrated to give a legitimate upper bound. In (2), we have used the fact that the function $\nu(t)$, introduced in Lemma~\ref{lemma:bounds_remainder_lr}, is an non-decreasing function. Noting, again from Lemma~\ref{lemma:bounds_remainder_lr}, that $\nu(t) \leq O(t^d)$, we obtain that $\abs{\expect{O}_\text{ideal} - \expect{O}_\text{sim}} \leq \omega^2 O(t^{2d + 1})$ --- consequently, to attain a precision $\varepsilon$ in the simulated observable, we need to choose $\omega = \Theta(t^{-(d + 1/2)} \varepsilon^{1/2})$ which corresponds to a quantum simulation time of $t_\text{sim} = \Theta(t^{2d + 1} \varepsilon^{-1})$ 
\end{proof}
\subsubsection{Long-time dynamics and fixed point}
Next, we consider the problem of simulating long-time dynamics and fixed point values of a local observable that is rapidly mixing for given a spatially local Lindbladian. As outlined in Section~\ref{sec:theory_result}, a local observable in a spatially local Lindbladian $\mathcal{L}$, with a unique fixed point $\sigma$, is considered to be rapidly mixing if it satisfies, for any initial state $\rho(0)$,
\begin{align*}
\abs{\text{Tr}(Oe^{\mathcal{L}t}(\rho(0)) - \text{Tr}(O\sigma)} \leq k(\abs{S_O}, \gamma) e^{-\gamma t},
\end{align*}
where $k(l, \gamma)$ is a function that is $\text{poly}(l)$ for a fixed $\gamma$ and $\text{exp}(O(\gamma^{-\kappa}))$ for a fixed $l$. Since the rapid mixing condition is satisfied for any initial state $\rho(0)$, it can equivalently be reformulated in the Heisenberg picture
\begin{align}\label{eq:rapid_mixing_observable_v2}
\norm{O(t) - \text{Tr}(O\sigma) I} \leq k(\abs{S_O}, \gamma) e^{-\gamma t},
\end{align}
where $O(t) = e^{\mathcal{L}^\dagger t}(O)$ is the Heisenberg-picture observable corresponding to $O$. For the problem of simulating such rapidly mixing observables at time $t$ in geometrically local Lindbladians, we show that the $\omega$ required is uniform in both system size and time. This also allows us to use the quantum simulator to approximate the fixed point expectation value of the local observable.

To establish this result, we start from Eq.~\ref{eq:summation_local_observable_remainder} for the observable error, analyze it term by term and provide a bound uniform in $t$. The key lemma that we establish, which is the counterpart of Lemma~\ref{lemma:bounds_remainder_lr} for dynamics, is the following
\begin{lemma}\label{lemma:error_term_rapid_mixing}
    Suppose $O$ is a local observable with $\norm{O} \leq 1$ supported on $S_O$, and for $\tau>0$, let $O(\tau) = \exp(\mathcal{L}^\textnormal{\dagger} \tau)(O)$ where $\mathcal{L}$ is a geometrically local Lindbladian of the form in Eq.~\ref{eq:geom_local_lind}. Furthermore, suppose $O$ is rapidly mixing with respect to $\mathcal{L}$ and satisfies Eq.~\ref{eq:rapid_mixing_observable_v2} with $k(\abs{S_O}, \gamma) \leq O(\exp(\gamma^{-\kappa}))$ as $\gamma\to0$. Then for\ $q_\alpha$ as defined in Lemma~\ref{lemma:remainder},
        \[
        \sum_{\alpha}\bigabs{\int_0^t \tr{}{O(t - s) q_\alpha}e^{-2s/\omega^2}ds} \leq  \omega^2 \lambda^{(1)}(\gamma),
        \]
        where $\lambda^{(1)}(\gamma) \leq O(\gamma^{-d(\kappa + 1)})$ as $\gamma \to 0$; and for $j \in \{1, 2, 3, 4\}$
        \begin{align*}
        &\sum_{\alpha, \alpha'}\bigabs{\int_0^t \int_0^{s/\omega^2} \tr{}{O(t - s)\mathcal{Q}_{\alpha, \alpha'}^{(j)}(s')} e^{-2(s/\omega^2 - s')}ds' ds} \nonumber\\
        &\qquad \qquad \qquad \qquad \qquad \qquad \qquad \qquad \quad \leq  \lambda^{(2)}(\gamma),
        \end{align*}
where $\lambda^{(2)}(\gamma) \leq O(\gamma^{-(2d + 1)(\kappa + 1)})$ as $\gamma\to0$ and for $j \in \{3, 4\}$, $\mathcal{Q}_{\alpha, \alpha'}^{(j)}$ is defined in Lemma~\ref{lemma:remainder}, and for $j\in \{1, 2\}$, we define $\mathcal{Q}_{\alpha, \alpha'}^{(j)} = \mathcal{Q}_{\alpha, h_{\alpha'}}^{(j)}$ where $\mathcal{Q}_{\alpha, h}^{(j)}$ is defined in Lemma~\ref{lemma:remainder}.
\end{lemma}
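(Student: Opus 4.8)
The plan is to upgrade the purely Lieb--Robinson estimates of Lemma~\ref{lemma:bounds_remainder_lr}, which grow polynomially in the elapsed time as $\nu(\tau) = O(\tau^d)$, into bounds that are \emph{uniform in time} (and integrable in time, for the second estimate) by invoking the rapid-mixing hypothesis Eq.~\ref{eq:rapid_mixing_observable_v2} to cut off this growth; this realizes the short-time/long-time split advertised in the main text. The central observation is that every local superoperator appearing in the remainder of Lemma~\ref{lemma:remainder}---the dissipator piece of $q_\alpha$, and each of the $\mathcal{Q}^{(j)}_{\alpha,\alpha'}$ after transferring superoperators onto the Heisenberg observable via the adjoint---annihilates the identity. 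Hence, writing $c = \text{Tr}(O\sigma)$, I may everywhere replace $O(\tau)$ by $O(\tau) - cI$ at no cost, and then bound that difference by $k(|S_O|,\gamma)e^{-\gamma\tau}$ using Eq.~\ref{eq:rapid_mixing_observable_v2}. This furnishes, for each $\alpha$ (resp.\ each pair $\alpha,\alpha'$), a second estimate that decays exponentially in $\tau$ but carries no spatial decay; competing it against the spatially-decaying estimate of Lemmas~\ref{lemma:lieb_robinson} and \ref{lemma:2superopLR} is what produces the uniform bound.

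For the $q_\alpha$ term I would first swap the sum with the (nonnegative) integrand and reduce to controlling $g_1(\tau) := \sum_\alpha |\text{Tr}(O(\tau) q_\alpha)|$, since $\int_0^t g_1(t-s)\, e^{-2s/\omega^2}\,ds \leq (\sup_\tau g_1(\tau))\,\omega^2/2$, which already supplies the advertised $\omega^2$ prefactor. For each $\alpha$ I use the pointwise bound $|\text{Tr}(O(\tau) q_\alpha)| \leq \min(b^{\mathrm{LR}}_\alpha(\tau), b^{\mathrm{RM}}(\tau))$, with $b^{\mathrm{LR}}_\alpha(\tau) = O(\min(\exp(4e\mathcal{Z}\tau - d(S_\alpha,S_O)/a),1))$ from Lemma~\ref{lemma:lieb_robinson} and $b^{\mathrm{RM}}(\tau) = O(k(|S_O|,\gamma)e^{-\gamma\tau})$ from the identity-kernel argument above. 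I then split the lattice sum at a radius $R(\tau) = c_1\tau$ with $c_1$ chosen slightly above the Lieb--Robinson velocity $4e\mathcal{Z}a$: inside the light cone there are $O(\tau^d)$ terms, each at most $\min(1, k e^{-\gamma\tau})$, while the outside sum is $O(\tau^{d-1}e^{-c'\tau}) = O(1)$. The worst case of $\tau^d \min(1, ke^{-\gamma\tau})$ occurs at the crossover time $\tau^* = \gamma^{-1}\ln k \leq O(\gamma^{-(\kappa+1)})$, giving $\sup_\tau g_1(\tau) \leq O((\tau^*)^d) = O(\gamma^{-d(\kappa+1)}) =: \lambda^{(1)}(\gamma)$, as claimed.

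For the $\mathcal{Q}^{(j)}_{\alpha,\alpha'}$ terms the same reduction---bringing the sum inside, bounding the inner $s'$-integral of $e^{-2(s/\omega^2 - s')}$ by $1/2$, and using Lemma~\ref{lemma:tr_sigma_bounds} to discharge the $\omega^{-1},\omega^{-2}$ prefactors of $\mathcal{Q}^{(3)},\mathcal{Q}^{(4)}$---collapses the double time integral to $\tfrac{1}{2}\int_0^\infty G(\tau)\,d\tau$, where $G(\tau) := \sup_{s'}\sum_{\alpha,\alpha'}|\text{Tr}(O(\tau)\mathcal{Q}^{(j)}_{\alpha,\alpha'}(s'))|$. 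Now the two-superoperator bound of Lemma~\ref{lemma:2superopLR} yields a double light cone with $O(\tau^{2d})$ interior pairs, and the identity-kernel replacement gives the per-pair mixing factor $O(ke^{-\gamma\tau})$; splitting at $R(\tau)=c_1\tau$ gives $G(\tau) \leq O(\tau^{2d})\min(1, ke^{-\gamma\tau}) + O(1)$. Unlike the first term I must now \emph{integrate} in $\tau$: $\int_0^{\tau^*}\tau^{2d}\,d\tau = O((\tau^*)^{2d+1})$ and $k\int_{\tau^*}^\infty \tau^{2d}e^{-\gamma\tau}\,d\tau \leq O((\tau^*)^{2d}\gamma^{-1})$ are both $O(\gamma^{-(2d+1)(\kappa+1)}) =: \lambda^{(2)}(\gamma)$; the extra factor $\gamma^{-(\kappa+1)}$ relative to $\lambda^{(1)}$ is precisely the width $\tau^*$ of the time window over which the integrand is not yet suppressed by mixing.

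The main obstacle is getting the powers of $\gamma^{-1}$ sharp in this competition. The naive estimate $\sup_\tau \tau^d k e^{-\gamma\tau} \approx \gamma^{-d}\exp(\gamma^{-\kappa})$ is exponentially too large; the saving observation is that below $\tau^*$ one should discard the (huge) factor $k$ and use $O(\tau^d)\leq O((\tau^*)^d)$, while above $\tau^*$ the product genuinely decays because $ke^{-\gamma\tau^*}=1$. Checking that $\tau^* > d/\gamma$ as $\gamma\to0$ (so that $\tau^*$ lies beyond the peak of $\tau^{d}e^{-\gamma\tau}$), and tracking the bounded-degree counting $|\{\alpha : d(S_\alpha,S_O)\le R\}| = O(R^d)$ through both the single- and double-cone sums---all uniform in the system size $n$---are the remaining bookkeeping steps, which I would relegate to the appendix.
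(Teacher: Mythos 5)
Your proposal is correct and rests on the same two pillars as the paper's proof: (i) every remainder term is the image of a trace-annihilating local superoperator, so $O(\tau)$ may be replaced by $O(\tau)-\textnormal{Tr}(O\sigma)I$ and the rapid-mixing bound $k(\abs{S_O},\gamma)e^{-\gamma\tau}$ invoked; and (ii) this exponentially-decaying-in-time bound is played off against the spatially-decaying Lieb--Robinson bound, with the $\omega^{-1},\omega^{-2}$ prefactors of $\mathcal{Q}^{(3)},\mathcal{Q}^{(4)}$ discharged by Lemma~\ref{lemma:tr_sigma_bounds}. The bookkeeping, however, is organized dually to the paper's. The paper keeps the sum over $\alpha$ (or $\alpha,\alpha'$) outside and splits the \emph{time} integral at an $\alpha$-dependent cutoff $t_\alpha\propto d(S_\alpha,S_O)$ (shifted by $T'_\gamma\sim\gamma^{-1}\log k$ for the doubly-integrated terms), which reduces everything to lattice sums of decaying exponentials handled once and for all by Lemma~\ref{lemma:lr_summation}; this is why the same machinery recycles verbatim in the noisy setting (Lemma~\ref{lemma:error_term_rapid_mixing_noisy}). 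You instead fix the Heisenberg time $\tau$, split the \emph{lattice} sum at a radius $R(\tau)\propto\tau$, and then do one-dimensional calculus in $\tau$ (a supremum for $\lambda^{(1)}$, an integral for $\lambda^{(2)}$). Both evaluate the same competition $\sum_\alpha\min(b^{\mathrm{LR}}_\alpha,b^{\mathrm{RM}})$ and land on the same exponents, with the crossover time $\tau^*=\gamma^{-1}\log k\leq O(\gamma^{-(\kappa+1)})$ playing the role of the paper's $T'_\gamma$; your route is arguably more transparent about where the extra factor $\gamma^{-(\kappa+1)}$ in $\lambda^{(2)}$ comes from, while the paper's buys reusability of a single summation lemma.

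One step needs repair as written. For $\lambda^{(2)}$ you bound $G(\tau)\leq O(\tau^{2d})\min(1,ke^{-\gamma\tau})+O(1)$ and then integrate over $\tau\in[0,\infty)$; the literal $\int_0^\infty O(1)\,d\tau$ diverges. The outside-the-light-cone contribution must be retained in its decaying form $O(\tau^{2d-1}e^{-c'\tau})$ (which you do record before rounding it to $O(1)$), so that its $\tau$-integral is a $\gamma$-independent constant. This is a presentational slip rather than a conceptual gap, but without it the second estimate does not close. Two smaller items to track: the edge case $k(\abs{S_O},\gamma)<1$ (handle via $\max(k,1)$, as the paper does through $T_\gamma$), and the verification that $\tau^*\gg d/\gamma$ as $\gamma\to0$ so that the peak of $\tau^{d}e^{-\gamma(\tau-\tau^*)}$ sits at $\tau=\tau^*$ — you correctly flag both.
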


Importantly, this lemma establishes bounds on the error terms appearing in Eq.~\ref{eq:summation_local_observable_remainder} that are uniform in the time $t$ --- this is a consequence of not only the spatial locality of the target Lindbladian, but also an explicit accounting of the rapid mixing of the target observable (Eq.~\ref{eq:rapid_mixing_observable_v2}). We provide a detailed proof of this lemma in Appendix~\ref{appendix:geometric_locality_lemma_proofs} --- below, we illustrate how we obtain bounds uniform in $t$. The key idea is instead of using the Lieb-Robinson bounds to obtain an upper bound on the observable error for \emph{any} time, we use it for only short times and use the rapid mixing property (Eq.~\ref{eq:rapid_mixing_observable_v2}) to upper bound the observable error for long times. For instance, consider upper bounding 
\begin{align}\label{eq:example_rapid_mixing_obs}
    &\sum_{\alpha}\bigabs{\int_0^t \text{Tr}(O(t - s) q_\alpha) e^{-2s/\omega^2}ds} \nonumber \\
    &\qquad \qquad  =\sum_{\alpha}\bigabs{\int_0^t \text{Tr}(O(s) q_\alpha) e^{-2(t - s)/\omega^2}ds}.
\end{align}
Suppose $t_\alpha$ is a time cutoff for separating long times from short times, which we will choose later and possibly allow it to vary with $\alpha$. We then split the integral with respect to time in Eq.~\ref{eq:example_rapid_mixing_obs} into a short time integral $e_{\alpha}^{\leq}(t, t_\alpha)$ and a long time integral $e_{\alpha}^{\geq}(t, t_\alpha)$ i.e.
\begin{align*}
\bigabs{\int_0^t \text{Tr}(O(s) q_\alpha) e^{-2(t - s)/\omega^2}ds} \leq  e_\alpha^{\leq}(t, t_\alpha) + e_\alpha^{\geq}(t, t_\alpha),
\end{align*}
where
\begin{align*}
    &e_\alpha^{\leq}(t, t_\alpha) = \bigabs{\int_0^{\min(t, t_\alpha)} \text{Tr}(O(s) q_\alpha)e^{-2(t - s)/\omega^2}ds} \text{ and}\\
    &e_\alpha^{\geq}(t, t_\alpha) = \bigabs{\int_{\min(t, t_\alpha)}^t \text{Tr}(O(s) q_\alpha)e^{-2(t - s)/\omega^2}ds}.
\end{align*}
The short-time integral, $e_\alpha^{\leq}(t, t_\alpha)$, can be analyzed similar to the case of dynamics --- more specifically, proceeding similar to the analyses in Eqs.~\ref{eq:example_lr_bound_1} and \ref{eq:example_lr_bound_2}, we obtain that
\begin{align}\label{eq:main_text_short_time_bound}
    &e_\alpha^{\leq}(t, t_\alpha) \nonumber\\
    &\quad \leq 2\omega^2\min\bigg(\eta_{S_O}\exp\bigg(4e\mathcal{Z}\min(t, t_\alpha) - \frac{d(S_\alpha, S_O)}{a}\bigg), 1\bigg),\nonumber\\
    &\quad \numleq{1} 2\omega^2\min\bigg(\eta_{S_O}\exp\bigg(4e\mathcal{Z} t_\alpha - \frac{d(S_\alpha, S_O)}{a}\bigg), 1\bigg).
\end{align}
where in (1) we have used the fact that $\min(\eta_{S_O}\exp(4e\mathcal{Z}s - {d(S_\alpha, S_O)}/{a}), 1)$ is a non-decreasing function of $s$.

Next, for bounding the long-time integral, $e^{\geq}_{\alpha}(t, t_\alpha)$, we explicitly use the rapid-mixing property of $O$ (Eq.~\ref{eq:rapid_mixing_observable_v2}) --- in particular, we note that for 
\begin{align}\label{eq:e_geq_inequality_1}
&\abs{\text{Tr}(O(s) q_\alpha)} \nonumber \\
&\quad =\abs{\text{Tr}((O(s) - \text{Tr}(O \sigma)I) q_\alpha) + \text{Tr}(O\sigma) \text{Tr}(q_\alpha)}, \nonumber \\
&\quad \leq \abs{\text{Tr}((O(s) - \text{Tr}(O\sigma) I) q_\alpha)} + \abs{\text{Tr}(q_\alpha)} \abs{\text{Tr}(O\sigma)}, \nonumber \\
&\quad \numleq{1} \norm{O(s) - \text{Tr}(O\sigma) I} \norm{q_\alpha}_1, \nonumber \\
&\quad \numleq{2} 4k(\abs{S_O}, \gamma) e^{-\gamma s},
\end{align}
where in (1) we have used the expression for $q_\alpha$ from Lemma~\ref{lemma:remainder} to obtain $\text{Tr}(q_\alpha) = \text{Tr}(\mathcal{L}_\alpha(\rho(0))) = 0$ and in (2) we have used the fact that, by assumption, $O$ is a rapid mixing observable and hence satisfies Eq.~\ref{eq:rapid_mixing_observable}. Now, note that if $t \leq t_\alpha$ (or $\min(t, t_\alpha) = t$), 
\[
e^{\geq}_{\alpha}(t, t_\alpha) = 0.
\]
On the other hand, if $t \geq t_\alpha$ (or $\min(t, t_\alpha) = t_\alpha$), 
\begin{align}\label{eq:e_geq_inequality_2}
e^{\geq}_{\alpha}(t, t_\alpha) &\leq \int_{t_\alpha}^t \abs{\text{Tr}(O(s)q_\alpha)}e^{-2(t - s)/\omega^2} ds, \nonumber \\
&\leq 4k(\abs{S_O}, \gamma)\int_{t_\alpha}^t e^{-\gamma s} e^{-2(t - s)/\omega^2}ds, \nonumber \\
&\leq 4k(\abs{S_O}, \gamma) e^{-\gamma t_\alpha} \int_{t_\alpha}^{t} e^{-2(t - s)/\omega^2}ds,\nonumber \\
&\leq 2\omega^2 k(\abs{S_O}, \gamma) e^{-\gamma t_\alpha}.
\end{align}
Furthermore, it is also true that for $t \geq t_\alpha$,
\begin{align}\label{eq:e_geq_inequality_3}
e^{\geq}_{\alpha}(t, t_\alpha)  &\leq \int_{t_\alpha}^t \abs{\text{Tr}(O(s)q_\alpha)}e^{-2(t - s)/\omega^2} ds, \nonumber \\
&\leq \int_{t_\alpha}^t \norm{O} \norm{q_\alpha}_1 e^{-2(t - s)/\omega^2}ds \leq 2\omega^2.
\end{align}
Combining Eqs.~\ref{eq:e_geq_inequality_1}, \ref{eq:e_geq_inequality_2} and \ref{eq:e_geq_inequality_3}, we obtain that for any $t, t_\alpha$,
\begin{align}\label{eq:main_text_long_time_bound}
e_\alpha^{\geq}(t, t_\alpha) \leq 2\omega^2 \min(k(\abs{S_O}, \gamma) e^{-\gamma t_\alpha}, 1).
\end{align}
We note already that the bounds on both the short-time integral (Eq.~\ref{eq:main_text_short_time_bound}) and the long-time integral (Eq.~\ref{eq:main_text_long_time_bound}) are independent of the time $t$, but instead depend on the time cutoff $t_\alpha$ that separate long times from short times. We can now choose $t_\alpha$ in such as a way that $\sum_{\alpha}e^{\leq}_\alpha(t, t_\alpha)$ and $\sum_{\alpha}e^{\geq}_\alpha(t, t_\alpha)$ are uniform in system size. A specific choice for $t_\alpha$ is to choose them to be
\[
t_\alpha = \frac{1}{8e\mathcal{Z}a}d(S_\alpha, S_O),
\]
i.e.~choose the time-cutoff to grow with distance of $S_\alpha$ from the observable. With this choice, we note that
\begin{align*}
&\sum_{\alpha}e^{\leq}_\alpha(t, t_\alpha) \leq 2\omega^2 \sum_\alpha \min\bigg(\eta_{S_O}\exp\bigg(-\frac{d(S_\alpha, S_O)}{2a}\bigg), 1\bigg), \\
&\sum_{\alpha}e^{\geq}_{\alpha}(t, t_\alpha) \nonumber\\
&\qquad \leq 2\omega^2 \sum_\alpha \min\bigg(k(\abs{S_O}, \gamma) \exp\bigg(-\frac{\gamma d(S_\alpha, S_O)}{8e\mathcal{Z}a}\bigg), 1\bigg),
\end{align*}
both of which, since they are sums of a decaying exponential function on a lattice, converge and are independent of the system size. As is shown in Appendix~\ref{appendix:geometric_locality_lemma_proofs}, the other error terms can be analyzed similarly. Furthermore, a more careful analysis of these summations allow us to also calculate their scaling with the parameter $\gamma$ to obtain the exact upper bounds quoted in Lemma~\ref{lemma:error_term_rapid_mixing}.

With Lemma~\ref{lemma:error_term_rapid_mixing}, we can now establish the following proposition.

\begin{repproposition}{prop:fp_noiseless}
Suppose $\mathcal{L}$ is a $d-$dimensional geometrically local Lindbladian and $O$ with $\norm{O}\leq 1$ is a local observable supported on $O(1)$ lattice sites satisfying rapid mixing (Eq.~\ref{eq:rapid_mixing_observable}). To achieve an additive error $\varepsilon$ in the expected local observable at time $t$ with the analogue quantum simulator, we need to choose
\[
\omega = \Theta\left(\gamma^{(d+1/2)(\kappa + 1)}\sqrt{\varepsilon}\right)
\]
which corresponds to a simulator evolution time 
\[
t_\textnormal{sim} = t / \omega^2 = \Theta\left(t\gamma^{-(\kappa + 1)(2d + 1)} \varepsilon^{-1}\right).
\]
\end{repproposition}
\begin{proof}
    This proposition can now be proved by directly combining Lemma~\ref{lemma:error_term_rapid_mixing} and Eq.~\ref{eq:summation_local_observable_remainder} --- in particular, we obtain that for any time $t$,
    \[
    \abs{\expect{O}_\text{ideal} - \expect{O}_\text{sim}} \leq \omega^2\left( \lambda^{(1)}(\gamma) + 4 \lambda^{(2)}(\gamma)\right).
    \]
    In the limit of $\gamma \to 0$, we obtain that $\abs{\expect{O}_\text{ideal} - \expect{O}_\text{sim}} \leq \omega^2 O\left(\gamma^{-(2d + 1)(\kappa + 1)}\right)$. Thus, to obtain a precision $\varepsilon$ in the simulated observable i.e.~to achieve $\abs{\expect{O}_\text{ideal} - \expect{O}_\text{sim}} \leq \varepsilon$, we need to choose $\omega = \Theta\left(\varepsilon^{1/2}\gamma^{(d+1/2)(\kappa + 1)}\right)$.
\end{proof}

\subsection{Quantum advantage over classical algorithms}
As discussed in the previous subsection, for the problems of simulating local observables in constant time dynamics for spatially local Lindbladians as well as in long-time dynamics and fixed points for rapidly mixing Lindbladians, an analogue quantum simulator can be very efficient, even obtaining a run-time that is independent of the system size. It is a natural question to ask if these problems are hard to solve on a classical computer, so that we can expect a quantum advantage with the analogue quantum simulation method discussed so far. In this section, we establish that, subject to the complexity assumption of BQP $\neq$ BPP, we indeed expect superpolynomial advantage with a quantum simulator applied to both of these problems.
\begin{figure*}
\centering
\includegraphics[scale=0.42]{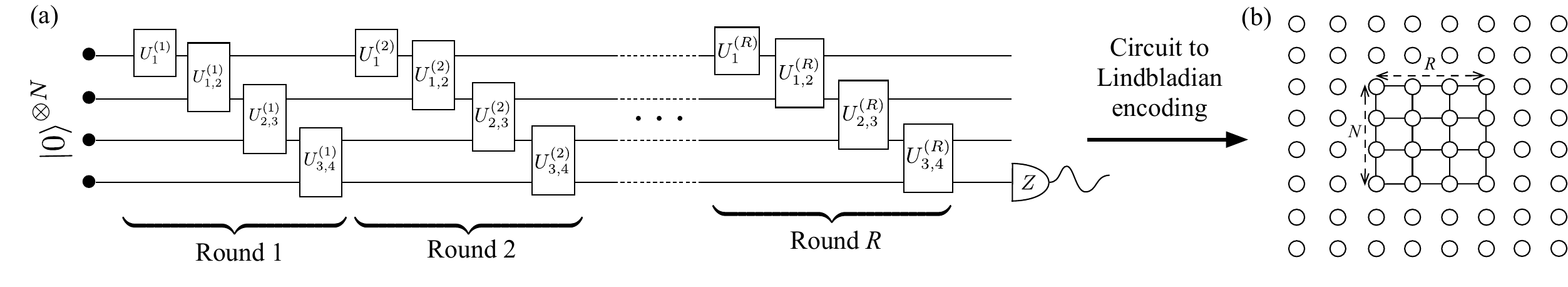}
\caption{Circuit architecture of the circuit being encoded into the geometrically local Lindbladian. In each round, the unitaries are applied in sequence, first on the first qubit, then on qubits 2 and 3, then on qubits 3 and 4 and so on.}
\label{fig:circuit_format}
\end{figure*}

The separation between the runtime of the analogue quantum simulator and classical algorithms is proven based on a technique to encode a quantum circuit on $N$ qubits and of depth $T = \text{poly}(N)$ into the fixed point of a geometrically local Lindbladian in two dimensions. This problem has been previously addressed for 5-local (but not geometrically local) Lindbladians \cite{verstraete2009quantum}. A related line of work encoded a quantum circuit on $N$ qubits and of depth $ \text{poly}(N)$ into the ground state of geometrically local Hamiltonians, starting with Hamiltonians in two dimensions \cite{aharonov2008adiabatic, oliveira2005complexity}, followed by Hamiltonians even in one dimension \cite{aharonov2009power}. We build upon the techniques developed in Ref.~\cite{verstraete2009quantum} for quantum circuit to 5-local Lindbladian encoding, and in Ref.~\cite{aharonov2008adiabatic} for quantum circuit to 2D spatially local Hamiltonian encoding and encode a quantum circuit into the fixed point of a 2D spatially local Lindbladian. Furthermore, one of our key technical contributions is to establish that the resulting Lindbladian has a mixing time of at most $\text{poly}(N)$, where $N$ is the number of qubits in the circuit, and that its depth is $\text{poly}(N)$. More precisely, we consider circuits of architecture shown in Fig.~\ref{fig:circuit_format} and establish that
\begin{lemma}\label{lemma:encoded_lindbladian}
    Suppose we are given a quantum circuit $\mathcal{C}$ on $N$ qubits with architecture shown in Fig.~\ref{fig:circuit_format}(a)  and with $R$ rounds of gates. Then, there exists a two-dimensional spatially local Lindbladian $\mathcal{L}$ on a $N\times R$ grid of 6-level qudits with a unqiue fixed point $\sigma$, as well as a local observable $O$ such that 
    \[
    \textnormal{Tr}(O\sigma) = \frac{1}{2NR }z_{\mathcal{C}},
    \]
    where $z_{\mathcal{C}}$ is the expected value of a pauli-${Z}$ operator on the last qubit at the output of $\mathcal{C}$. Furthermore, for any initial state $\rho(0)$ of the two-dimensional grid of qudits,
    \[
    \norm{e^{\mathcal{L}t}(\rho(0)) - \sigma}_1 \leq c_0(N, R) \exp(-\gamma_0(N, R) t),
    \]
    where $c_0(N, R) = O(N^6 R^2 \exp(O(NR)))$ and $\gamma_0(N, R) = \Theta(N^{-3}R^{-3})$.
\end{lemma}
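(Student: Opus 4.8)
The plan is to construct $\mathcal{L}$ by fusing the dissipative circuit-encoding of Ref.~\cite{verstraete2009quantum} with the geometrically local clock layout of Ref.~\cite{aharonov2008adiabatic}. Following the Feynman--Kitaev paradigm, I would place the $N$ logical qubits along one axis of the $N \times R$ grid and use the orthogonal axis to encode a propagating clock, with each of the $6$ internal levels of a qudit recording both a (possibly empty) logical value and a flag distinguishing the ``unborn'', ``active'', and ``finished'' status of the site, exactly as in the $2$D Hamiltonian construction. The legal configurations are then the history configurations $\ket{\phi_t}$ obtained by running the circuit up to gate $t$, for $t = 0, 1, \dots, T$ with $T = \Theta(NR)$ the total gate count. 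The dissipators of $\mathcal{L}$ are built from local jump operators of two kinds: (i) propagation terms that hop the clock forward and backward by one gate while applying the corresponding gate unitary $U_t$ (or $U_t^\dagger$), which are geometrically local because each gate touches only neighbouring sites and the clock advances only locally; and (ii) ``cooling'' terms that pump any configuration lying outside the legal subspace back into it. By design every jump operator annihilates $\sigma := (T+1)^{-1}\sum_{t} \ket{\phi_t}\!\bra{\phi_t}$, so $\sigma$ is a fixed point.

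Next I would establish uniqueness and identify the observable. The propagation dissipators induce, within the legal subspace and after the computation-unwinding change of basis $W$ mapping $\ket{\phi_t} \mapsto \ket{\phi_0}\otimes\ket{t}$, a symmetric nearest-neighbour random walk on the clock path $\{0,\dots,T\}$; its transition graph is connected and the cooling terms render the legal subspace globally attracting, so the stationary state is unique and equals $\sigma$. For the observable I would take $O$ to be the local projector onto the terminal clock value tensored with the Pauli-$Z$ on the last qubit; since $\sigma$ places weight $(T+1)^{-1}$ on the final slice $\ket{\phi_T}$ and $\langle \phi_T| Z_{\mathrm{last}} |\phi_T\rangle = z_{\mathcal{C}}$, one obtains $\textnormal{Tr}(O\sigma) = z_{\mathcal{C}}/(T+1)$, which matches the stated $z_{\mathcal{C}}/(2NR)$ once the clock register is normalized to carry $2NR$ legal values.

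The crux is the convergence bound, and the strategy is to reduce the spectral gap of $\mathcal{L}$ to the gap of the clock walk. After conjugating by $W$ the gate unitaries disappear and $\mathcal{L}$ restricted to the legal subspace becomes a dissipative hopping generator whose relevant nonzero spectrum coincides with that of a tridiagonal path-walk generator, whose gap is $\Theta(T^{-2})$. I would lift this to a gap for the full $\mathcal{L}$ with a Kitaev-type projection argument controlling the coupling between the legal subspace and the illegal subspace removed by the cooling terms, together with a canonical-paths comparison estimate passing from the idealized path walk to the geometrically local transition graph actually realized on the grid. The congestion incurred by routing a global clock advance through strictly local updates on the $2$D lattice degrades the bound by an extra polynomial factor, yielding $\gamma_0 = \Theta(N^{-3}R^{-3})$ rather than the naive $\Theta((NR)^{-2})$. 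Finally the prefactor $c_0$ follows by bounding $\norm{e^{\mathcal{L}t}(\rho(0)) - \sigma}_1$ in terms of the gap times the condition number of the similarity transform diagonalizing $\mathcal{L}$; because $\sigma$ is far from maximally mixed this condition number is $\exp(O(NR))$, while the remaining polynomial factor $N^6 R^2$ comes from dimension counting over the clock and qudit registers.

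I expect the main obstacle to be exactly this gap lower bound: enforcing geometric locality forces the clock to move only by local updates, so the clean single-line Feynman Hamiltonian of Ref.~\cite{verstraete2009quantum} is replaced by a walk on the more intricate legal-configuration graph of Ref.~\cite{aharonov2008adiabatic}, and simultaneously controlling its gap and the leakage to the illegal subspace with the correct $N^{-3}R^{-3}$ scaling is the delicate part. A secondary subtlety is that $\mathcal{L}$ is non-normal, so the decay rate is governed by the spectral gap only up to the exponential condition-number prefactor, which must be tracked carefully to obtain the stated $c_0$.
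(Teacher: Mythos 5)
Your high-level architecture matches the paper's: a $2$D history-state layout in the spirit of Ref.~\cite{aharonov2008adiabatic} with dissipative propagation terms from Ref.~\cite{verstraete2009quantum}, correction jumps for illegal configurations, and a single-site observable reading off the Pauli-$Z$ on the final slice. However, there are three concrete gaps. First, your jump operators consist only of propagation terms and clock-structure ``cooling'' terms, and neither of these can distinguish between history states built from different inputs: for every initial data string $\ket{\psi}$ the mixture $\frac{1}{T+1}\sum_t \ket{\phi_t^{\psi}}\!\bra{\phi_t^{\psi}}$ would be stationary, so uniqueness of the fixed point fails as described. The paper needs additional initialization dissipators $S_\alpha = \ket{0_\alpha}\!\bra{1_\alpha}\otimes\sum_{t<\alpha}\ket{t}\!\bra{t}$ that reset each data qubit to $\ket{0}$ while the clock is still early; these are essential and absent from your construction. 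Second, you assert that the cooling terms make the legal subspace ``globally attracting'' without argument. This is one of the delicate points: a local correction on one edge can create a new illegal configuration on an adjacent edge, so one must choose the correction rules (always fix the right/bottom qudit) so that errors cascade in a fixed direction, and then prove via a chain of differential inequalities that the total illegal weight decays at an $N,R$-independent rate (the paper gets rate $1/5$ with prefactor $20^{NR+1}$, which is where much of $c_0$ comes from). Without a system-size-independent cooling rate the final convergence rate would be worse than claimed.

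Third, you attribute the degradation from $\Theta(T^{-2})$ to $\Theta(T^{-3})$ to congestion of routing the clock through local updates on the lattice. That is not the mechanism: after the unitary unwinding and a further (non-unitary, explicitly constructed) similarity transform, the legal-subspace generator is block diagonal and its slowest blocks are exactly a tridiagonal path walk of length $T+1=\Theta(NR)$, unchanged by the $2$D layout. The extra factor of $T^{-1}$ arises from the blocks where the walk generator is summed with the diagonal damping coming from the initialization dissipators $S_\alpha$; bounding the gap of that sum requires Kitaev's geometric lemma on the angle between the two null spaces, which costs the additional $1/T$. Relatedly, your plan to bound convergence by ``spectral gap times condition number'' of the full non-normal $\mathcal{L}$ is weaker than what is needed; the paper instead exploits the block-triangular structure $\mathcal{Q}_{\mathcal{S}}\mathcal{L}\mathcal{P}_{\mathcal{S}}=0$ (the legal subspace is exactly invariant) to write a Duhamel formula combining the cooling decay with the explicitly diagonalized legal-subspace dynamics, tracking the $1\to1$ norm of the similarity transform directly.
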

\noindent The circuit is described in terms of rounds instead of layers due to the nature of the circuit-to-2D-lindbladian encoding scheme which we describe in Appendix~\ref{appendix:circuit_to_geom_local_lindbladian_encoding}. However we point out that this circuit architecture (Fig.~\ref{fig:circuit_format}) is not restrictive --- any quantum circuit on $N$ qubits and depth $\text{poly}(N)$ can be expressed in this format with at most $\text{poly}(N)$ rounds. Furthermore, we also note that, while the Lindbladian in Lemma~\ref{lemma:encoded_lindbladian} is on $NR$ qudits, it can trivially be embedded into the center of an infinite grid of qudits where the qudits other than these $NR$ qudits are uncoupled and only experience a single qudit Lindbladian with a unique fixed point [Fig.~\ref{fig:circuit_format}(b)]. The observable $O$ in Lemma~\ref{lemma:encoded_lindbladian} will then have a well-defined thermodynamic limit and $N, R$ will effectively become parameters of the Lindbladian on which the observable $O$ depends instead of being the system size.

For quantum circuits with $R = \text{poly}(N)$ rounds, Lemma~\ref{lemma:encoded_lindbladian} shows that it takes time $t = O(\text{poly}(N)\log(\varepsilon^{-1}))$ for the encoding Lindbladian $\mathcal{L}$ to converge $\varepsilon-$close to its fixed point $\sigma$. Then, measuring the local observable $O$ to a precision of $1 / (2NR) = 1 /\text{poly}(N)$ effectively measures the $Z$ operator on the last qubit at the output of the quantum circuit. Since for any decision problem in the BQP complexity class, measuring the $Z$ operator in the last qubit is the only required measurement for solving the problem, Lemma~\ref{lemma:encoded_lindbladian} indicates that being able to take a local observable in the fixed point of any spatially local two-dimensional Lindbladian, even restricting to the class of Lindbladians that converge to the fixed point in time that scales at most polynomially in the system size, is a sufficient resource to perform an arbitrary quantum computation. The construction of this Lindbladian closely follows the circuit-to-Hamiltonian ground state mapping presented in Ref.~\cite{aharonov2008adiabatic}. We detail this construction, the calculation of its fixed point, as well as analysis of the convergence rate of the Lindbladian to its fixed point in Appendix~\ref{appendix:circuit_to_geom_local_lindbladian_encoding}. Based on this lemma, we obtain the following proposition.
\begin{repproposition}{prop:quantum_advantage}
There cannot exist a classical algorithm that can, for every geometrically local 2D Lindbladian and a corresponding rapidly mixing local observable, compute the fixed point expected value of the local observable to additive error $\varepsilon$ in time $\textnormal{poly}(\gamma^{-1}, \varepsilon^{-1})$, unless \textnormal{BQP = BPP}.
\end{repproposition}
\begin{proof}
    Assume the contrary i.e.~ that there is a randomized classical algorithm that can obtain the fixed point expectation of a rapidly mixing local observable to an additive error $\varepsilon$ in time $\textnormal{poly}(\gamma^{-1}, \varepsilon^{-1})$. In such case, any decision problem $\in$ BQP could be solved with this algorithm. To see this, note that the decision problem parameterized by the problem size $m$ could be solved by measuring one qubit at the output of a  quantum circuit on $N = \textnormal{poly}(m)$ qubits with $R = \textnormal{poly}(m)$ rounds. By Lemma~\ref{lemma:encoded_lindbladian}, the expectation value of the pauli-$\text{Z}$ operator on the output qubit in such a quantum circuit can then be embedded into the fixed point expectation value of a rapidly mixing local observable with $\gamma^{-1} = \text{poly}(m)$. Furthermore, as given by Lemma~\ref{lemma:encoded_lindbladian}, the expected of value of this observable is $1 / (2NR) \leq O(1/\text{poly}(m))$ times smaller than the expected value of the pauli-$\text{Z}$ operator, and thus needs to be computed to an $O(1/\text{poly}(m))$ precision to effectively simulate the encoded circuit. Since this can be done with a randomized $\text{poly}(\gamma^{-1}, \varepsilon^{-1})$ classical algorithm in $\text{poly}(m)$ time, we contradict the complexity assumption of $\text{BQP} \neq \text{BPP}$.  
\end{proof}
\section{Stability to errors}

\subsection{Stability analysis}
Next, we consider the impact of noise on the quantum simulator. We will consider the model of the noisy quantum simulator given in Eq.~\ref{eq:lindbladian_omega_delta} --- the quantum simulator Lindbladian, in the presence of errors, will be given by
\[
\mathcal{L}_{\omega, \delta}  = \mathcal{L}_\omega + \delta \sum_\beta \mathcal{N}_\beta,
\]
where $\mathcal{N}_\beta$ is a Lindbladian acting on the system qudits and/or the ancillae in a geometrically local region $S_\beta'$. We will assume that the subset $S_\beta'$ at most intersects with $\mathcal{Z}'$ other subsets $S'_\beta$ or $S_\alpha$, where $S_\alpha$ are the subsets of system qudits on which the target Lindbladian is defined (i.e.~Eq.~\ref{eq:geom_local_lind}). More specifically, for all $\beta$,
\[
\abs{\{\beta' : S'_\beta \cap S_{\beta'}' \neq \emptyset \}} + \abs{\{\alpha : S'_\beta \cap S_{\alpha} \neq \emptyset \}} \leq \mathcal{Z}'.
\]
In the remainder of this subsection, we will carefully analyze the error incurred in using the noisy quantum simulator. Our goal would be to show that, despite the hardware error rate $\delta \neq 0$, the quantum simulator can still obtain local observables to a precision that depends only on the hardware error rate and not on the system size.

Our analysis of the stability relies strongly on the extension of Lemmas~\ref{lemma:remainder} and \ref{lemma:tr_sigma_bounds} to the noisy setting i.e.~where the quantum simulator dynamics are described by the Lindbladian in Eq.~\ref{eq:lindbladian_omega_delta}. In particular, as is shown in Appendix~\ref{appendix:analysis_noisy}, Lemma~\ref{lemma:remainder} can be modified to obtain an upper bound on the remainder, $\mathcal{R}_{\omega, \delta}(t)$, corresponding to the time evolution of the noisy quantum simulator i.e.
\[
\mathcal{R}_{\omega, \delta}(t) = \frac{d}{dt} \tr{\mathcal{A}}{\rho_{\omega, \delta}(t)} - \omega^2 \mathcal{L} \tr{\mathcal{A}}{\rho_{\omega, \delta}(t)}.
\]
More specifically, we obtain that
\begin{lemma}\label{lemma:remainder_noisy}
If the ancillae are initially in $\ket{0}$ then 
\begin{align*}
\mathcal{R}_{\omega, \delta}(t) = & \mathcal{R}_{\omega}(t) +\delta \sum_{\beta} \mathcal{K}^{(0)}_{\beta}(t) \nonumber \\
&  + \delta \sum_{\alpha, \beta} \sum_{j = 1}^2 \omega^j \int_0^t e^{-2(t-s)}  \mathcal{K}^{(j)}_{\alpha, \beta}(s)  ds,
\end{align*}
where $\mathcal{R}_{\omega}(t)$ is as defined in Lemma~\ref{lemma:remainder} but with $\rho_{\omega} \to \rho_{\omega, \delta}$ and
\begin{align*}
    \mathcal{K}_{\beta}^{(0)}(t) &= \tr{\mathcal{A}}{\mathcal{N}_\beta(\rho_{\omega,\delta}(t)},
    \nonumber \\
    \mathcal{K}_{\alpha,\beta}^{(1)}(t) &= -i[L_\alpha^\dagger, \tr{\mathcal{A}}{\sigma_\alpha\mathcal{N}_\beta(\rho_{\omega,\delta})}] + \textnormal{h.c.},
    \nonumber \\
    \mathcal{K}_{\alpha,\beta}^{(2)}(t) &= \frac{1}{2}[L_\alpha^\dagger,L_\alpha\tr{\mathcal{A}}{\mathcal{N}_\beta(\rho_{\omega,\delta}(t)}] + \textnormal{h.c.}.
\end{align*}
\end{lemma}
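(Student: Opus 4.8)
The plan is to re-run the adiabatic-elimination argument behind Lemma~\ref{lemma:remainder}, but now for the trajectory $\rho_{\omega,\delta}(t)$ generated by $\mathcal{L}_{\omega,\delta} = \mathcal{L}_\omega + \delta\sum_\beta\mathcal{N}_\beta$ (Eq.~\ref{eq:lindbladian_omega_delta}). The key structural observation is that the noisy generator differs from the noiseless one only by the \emph{additive} term $\delta\sum_\beta\mathcal{N}_\beta$. Consequently, each equation of motion in Eq.~\ref{eq:dynamics}, obtained by partial-tracing $\frac{d}{dt}\rho = \mathcal{L}_\omega(\rho)$ (possibly after left-multiplication by $\sigma_\alpha$ or $\sigma_\alpha^\dagger$), picks up exactly one new source term upon replacing $\mathcal{L}_\omega$ by $\mathcal{L}_{\omega,\delta}$: the reduced-state equation gains $\delta\sum_\beta\tr{\mathcal{A}}{\mathcal{N}_\beta(\rho_{\omega,\delta})}$, and the $\sigma_\alpha$-equation gains $\delta\sum_\beta\tr{\mathcal{A}}{\sigma_\alpha\mathcal{N}_\beta(\rho_{\omega,\delta})}$ (with the conjugate source for the $\sigma_\alpha^\dagger$-equation). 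Because every subsequent manipulation in the proof of Lemma~\ref{lemma:remainder} is linear in the state trajectory, the entire noiseless reduction carries over verbatim with $\rho_\omega\to\rho_{\omega,\delta}$, reproducing $\mathcal{R}_\omega(t)$ evaluated on $\rho_{\omega,\delta}$, while the new sources are propagated separately to yield the three $\mathcal{K}$-contributions.

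First I would isolate the $\mathcal{K}^{(0)}$ term. Forming $\mathcal{R}_{\omega,\delta}(t) = \frac{d}{dt}\tr{\mathcal{A}}{\rho_{\omega,\delta}} - \omega^2\mathcal{L}\tr{\mathcal{A}}{\rho_{\omega,\delta}}$ (Eq.~\ref{eq:remainder_def}), the subtracted term $\omega^2\mathcal{L}\tr{\mathcal{A}}{\rho_{\omega,\delta}}$ only accounts for the \emph{reconstructed} target dynamics arising from the adiabatic elimination of the coupling term, so the new reduced-state source survives intact and contributes $\delta\sum_\beta\tr{\mathcal{A}}{\mathcal{N}_\beta(\rho_{\omega,\delta})} = \delta\sum_\beta\mathcal{K}^{(0)}_\beta(t)$ directly, with no convolution. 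This is the origin of the first new term.

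Next I would track the coupling term $-i\omega\sum_\alpha[L_\alpha^\dagger,\tr{\mathcal{A}}{\sigma_\alpha\rho_{\omega,\delta}}]+\textnormal{h.c.}$, into which the solution of the $\sigma_\alpha$-equation is substituted. Integrating that equation against the damping kernel $e^{-2(t-s)}$ (using $\tr{\mathcal{A}}{\sigma_\alpha\rho_{\omega,\delta}(0)}=0$), its new source produces the piece $\delta\int_0^t e^{-2(t-s)}\sum_\beta\tr{\mathcal{A}}{\sigma_\alpha\mathcal{N}_\beta(\rho_{\omega,\delta}(s))}\,ds$; feeding this through the commutator $-i\omega[L_\alpha^\dagger,\cdot]+\textnormal{h.c.}$ gives the $\omega^1$ contribution $\delta\omega\sum_{\alpha,\beta}\int_0^t e^{-2(t-s)}\mathcal{K}^{(1)}_{\alpha,\beta}(s)\,ds$. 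The $\mathcal{K}^{(2)}$ term is the subtle one: in the noiseless proof the convolution $-i\omega\int_0^t e^{-2(t-s)}L_\alpha\tr{\mathcal{A}}{\rho_{\omega,\delta}(s)}\,ds$ is processed by integration by parts into the quasi-steady-state value plus a correction proportional to $\int_0^t e^{-2(t-s)}L_\alpha\frac{d}{ds}\tr{\mathcal{A}}{\rho_{\omega,\delta}(s)}\,ds$. Since $\frac{d}{ds}\tr{\mathcal{A}}{\rho_{\omega,\delta}}$ now contains the source $\delta\sum_\beta\tr{\mathcal{A}}{\mathcal{N}_\beta(\rho_{\omega,\delta})}$, this correction acquires a $\delta$-piece which, after re-substitution into the coupling term (contributing a further factor of $\omega$), yields $\delta\omega^2\sum_{\alpha,\beta}\int_0^t e^{-2(t-s)}\mathcal{K}^{(2)}_{\alpha,\beta}(s)\,ds$, matching the $j=2$ term and its $\omega^2$ prefactor.

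The main obstacle is bookkeeping rather than any new estimate: one must confirm that $\delta\mathcal{N}$ enters the reduction at exactly these two places — the reduced-state equation (giving $\mathcal{K}^{(0)}$) and the $\sigma_\alpha$ source (giving $\mathcal{K}^{(1)}$ directly and $\mathcal{K}^{(2)}$ through the integration-by-parts correction) — and nowhere else, so that the noiseless structure $\mathcal{R}_\omega$ is reproduced exactly with $\rho_{\omega,\delta}$ and no spurious cross terms are generated. Carefully tracking the powers of $\omega$ and the nested damping kernels through the same chain of substitutions used in Lemma~\ref{lemma:remainder} then gives the claimed identity; since $\rho_{\omega,\delta}$ itself carries the full (all-orders) dependence on $\delta$, the resulting expression is exact, not merely first order in $\delta$.
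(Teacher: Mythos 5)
Your proposal is correct and follows essentially the same route as the paper's proof: write the noisy equations of motion (which gain exactly the two source terms you identify), integrate the $\sigma_\alpha$-equation against the damping kernel, and perform the same integration by parts, so that $\mathcal{K}^{(0)}$ appears directly in the remainder, $\mathcal{K}^{(1)}$ comes from the new $\sigma_\alpha$-source fed through $-i\omega[L_\alpha^\dagger,\cdot]+\text{h.c.}$, and $\mathcal{K}^{(2)}$ comes from the $\delta$-piece of $\tfrac{d}{ds}\tr{\mathcal{A}}{\rho_{\omega,\delta}}$ in the integration-by-parts correction. The powers of $\omega$ and the exactness in $\delta$ are tracked exactly as in the paper.
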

\noindent Another key ingredient in the stability analysis is the following modification of Lemma~\ref{lemma:tr_sigma_bounds}, whose proof is presented in Appendix~\ref{appendix:analysis_noisy}. 
\begin{lemma}\label{lemma:tr_sigma_bounds_noisy}
Suppose $\rho_{\omega, \delta}(t)$ is the joint state of the system and the ancilla qubits with the ancilla qubits initially being in the state $\ket{0}$, then for all $\alpha, \alpha'$,
\begin{align*}
    &\norm{\tr{\mathcal{A}}{\sigma_\alpha \rho_{\omega, \delta}(t)}}_1 \leq \frac{\omega}{2} + \mathcal{Z}'\delta  \text{ and },\\
    &\norm{\tr{\mathcal{A}}{\sigma_\alpha \sigma_{\alpha'} \rho_{\omega, \delta}(t)}}_1, \norm{\tr{\mathcal{A}}{\sigma_\alpha^\dagger \sigma_{\alpha'} \rho_{\omega, \delta}(t)}}_1 \leq \nonumber\\
    &\qquad \qquad \qquad \qquad \qquad \qquad \frac{\omega^2}{4}  + { \frac{\omega\mathcal{Z}'\delta}{2}} +    \mathcal{Z}'\delta.
\end{align*}
\end{lemma}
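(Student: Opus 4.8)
The plan is to run the input–output (Heisenberg-picture) argument used for Lemma~\ref{lemma:tr_sigma_bounds}, but now with the noisy generator $\mathcal{L}_{\omega,\delta}=\mathcal{L}_\omega+\delta\mathcal{N}$, tracking the extra source term produced by $\delta\mathcal{N}$. Writing $\mathcal{E}_{\omega,\delta}(t,s)=e^{\mathcal{L}_{\omega,\delta}(t-s)}$ for the noisy propagator and defining exactly as before $\sigma_{\alpha,l}(t)=\mathcal{E}_{\omega,\delta}^{-1}(t,0)\,\sigma_{\alpha,l}\,\mathcal{E}_{\omega,\delta}(t,0)$, one has $\sigma_\alpha\rho_{\omega,\delta}(t)=\mathcal{E}_{\omega,\delta}(t,0)\,\sigma_{\alpha,l}(t)(\rho(0))$. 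Differentiating and splitting $\mathcal{L}_{\omega,\delta}=\mathcal{L}_\omega+\delta\mathcal{N}$, the $\mathcal{L}_\omega$ piece reproduces the noiseless equation of motion verbatim (the algebra of $\sigma_{\alpha,l}$ against the damping and coupling terms, and its commuting with $H_\textnormal{sys}$, is unchanged), so the decay rate is still $-2$; the only new contribution is the single superoperator commutator $\delta\,\mathcal{E}_{\omega,\delta}^{-1}(t,0)(\sigma_{\alpha,l}\mathcal{N}-\mathcal{N}\sigma_{\alpha,l})\mathcal{E}_{\omega,\delta}(t,0)$.

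Integrating with integrating factor $e^{2t}$ and applying to $\rho(0)$, the homogeneous term $e^{-2t}\sigma_\alpha\rho(0)$ vanishes because every ancilla starts in $\ket{0}$, i.e.\ $\sigma_\alpha\rho(0)=0$, leaving
\[
\sigma_\alpha\rho_{\omega,\delta}(t)=-i\omega\!\int_0^t\! e^{-2(t-s)}\mathcal{E}_{\omega,\delta}(t,s)\big(L_\alpha[\sigma_\alpha,\sigma_\alpha^\dagger]\rho_{\omega,\delta}(s)\big)ds+\delta\!\int_0^t\! e^{-2(t-s)}\mathcal{E}_{\omega,\delta}(t,s)\big((\sigma_{\alpha,l}\mathcal{N}-\mathcal{N}\sigma_{\alpha,l})\rho_{\omega,\delta}(s)\big)ds.
\]
The first integral is bounded by $\omega/2$ exactly as in the noiseless proof. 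For the second, the key point is locality: since each $\mathcal{N}_\beta$ is supported on $S_\beta'$ while $\sigma_{\alpha,l}$ merely left-multiplies by the ancilla operator $\sigma_\alpha$, the commutator $\sigma_{\alpha,l}\mathcal{N}_\beta-\mathcal{N}_\beta\sigma_{\alpha,l}$ vanishes whenever $S_\beta'$ does not touch ancilla $\alpha$ (checked directly on the Hamiltonian part and on each dissipator of $\mathcal{N}_\beta$). At most $\mathcal{Z}'$ terms survive, each with $1\!\to\!1$ norm at most $2$ since $\norm{\sigma_\alpha}\le1$ and $\norm{\mathcal{N}_\beta}_\diamond\le1$. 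Trace-norm contractivity of $\mathcal{E}_{\omega,\delta}(t,s)$ and $\int_0^t e^{-2(t-s)}ds\le 1/2$ then give $\norm{\tr{\mathcal{A}}{\sigma_\alpha\rho_{\omega,\delta}(t)}}_1\le\omega/2+\mathcal{Z}'\delta$.

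For the two-operator correlators I would apply the same scheme to the product superoperators $(\sigma_\alpha\sigma_{\alpha'})_l$ and $(\sigma_\alpha^\dagger\sigma_{\alpha'})_l$, handling $\alpha=\alpha'$ separately ($\sigma_\alpha^2=0$ is trivial, and $\sigma_\alpha^\dagger\sigma_\alpha=n_\alpha$ with $n_\alpha\rho(0)=0$). The two-fold damping now yields a decay at rate $4$, so the integrating factor is $e^{4t}$ and $\int_0^t e^{-4(t-s)}ds\le1/4$, and the homogeneous term again dies since $\sigma_\alpha\sigma_{\alpha'}\rho(0)=\sigma_\alpha^\dagger\sigma_{\alpha'}\rho(0)=0$. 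Expanding the coupling commutator by the Leibniz rule, one action of the $\omega$-coupling removes one ancilla, leaving a single lowering operator acting on $\rho_{\omega,\delta}(s)$; bounding that leftover factor by the single-operator result just established contributes $\tfrac{\omega}{2}(\omega/2+\mathcal{Z}'\delta)=\omega^2/4+\omega\mathcal{Z}'\delta/2$, while the $\delta\mathcal{N}$ commutator (again local, $\le\mathcal{Z}'$ surviving terms) contributes $\mathcal{Z}'\delta$ via the crude bounds $\norm{\sigma_\alpha},\norm{\sigma_\alpha^\dagger}\le1$ and $\int_0^t e^{-4(t-s)}ds\le1/4$. Summing reproduces the stated $\omega^2/4+\omega\mathcal{Z}'\delta/2+\mathcal{Z}'\delta$.

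The main obstacle is the bookkeeping in this last step. Unlike the lowering operator, the raising operator $\sigma_\alpha^\dagger$ does not commute cleanly with the amplitude-damping dissipator, so the product equation of motion for $\sigma_\alpha^\dagger\sigma_{\alpha'}$ (and for $n_\alpha$) picks up terms beyond a pure $-4$ decay that must be tracked and shown to be either higher order in $\omega,\delta$ or reabsorbed. One must also verify that the Leibniz decomposition isolates exactly one factor of $\omega$ per removed ancilla, so that the leading correlator is genuinely $O(\omega^2)$ rather than $O(\omega)$, and that the count of noise terms touching a fixed ancilla is uniformly bounded by $\mathcal{Z}'$ independently of system size. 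These are precisely the points where the constants in the statement are fixed, and they are what I would defer to Appendix~\ref{appendix:analysis_noisy}.
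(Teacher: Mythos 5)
Your proposal follows the paper's proof almost exactly: the same input--output (Heisenberg-like) equations with the extra source $\delta\,[\,\cdot\,,\mathcal{N}]$, the same use of locality to keep only the $O(\mathcal{Z}')$ noise terms that touch the relevant ancillae, trace-norm contractivity of the noisy propagator, and feeding the single-operator bound $\omega/2+\mathcal{Z}'\delta$ back into the two-operator integral to produce $\omega^2/4+\omega\mathcal{Z}'\delta/2+\mathcal{Z}'\delta$.

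The one step you flag as an obstacle --- that $\sigma_\alpha^\dagger$ does not commute cleanly with $\mathcal{D}_{\sigma_\alpha}$, so the product might not decay at a pure rate $4$ --- is resolved in the paper (already in the noiseless Lemma~\ref{lemma:tr_sigma_bounds}) by representing the adjoint factor as a \emph{right}-multiplication superoperator: one writes $\tr{\mathcal{A}}{\sigma_\alpha^{(u)}\sigma_{\alpha'}\rho}$ as $\vecbra{\textnormal{Tr}_\mathcal{A}}\sigma_{\alpha,u}^{(u)}\sigma_{\alpha',l}\vecket{\rho}$ with $u=+$ mapped to the $r$ subscript. Since $[\sigma_{\alpha,r}^\dagger,\mathcal{D}_{\sigma_\alpha}]=-\tfrac12\sigma_{\alpha,r}^\dagger$ exactly (the recycling term $\sigma_{\alpha,l}\sigma_{\alpha,r}^\dagger$ commutes with $\sigma_{\alpha,r}^\dagger$), the pair $\sigma_{\alpha,u}^{(u)}(t)\sigma_{\alpha',l}(t)$ decays at rate exactly $4$ with no extra terms to reabsorb, and the homogeneous term still vanishes on the vacuum. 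With that substitution your Leibniz expansion goes through verbatim. Two minor bookkeeping corrections: the paper does not split off $\alpha=\alpha'$ (the general derivation covers it), and for $\alpha\neq\alpha'$ up to $2\mathcal{Z}'$ noise terms fail to commute with the product (each commutator of diamond norm at most $2$), which with the factor $1/4$ from the integral is precisely what yields the $\mathcal{Z}'\delta$ term in the statement.
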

\noindent It should be noted that on setting the hardware noise $\delta$ to 0, the bounds in Lemma~\ref{lemma:tr_sigma_bounds_noisy} reproduce the bounds in the noiseless case obtained in Lemma~\ref{lemma:tr_sigma_bounds}.

\emph{Dynamics}. Next, we can analyze the modified remainder provided in Lemma~\ref{lemma:remainder_noisy} term by term and explicitly use the Lieb-Robinson bounds (Lemmas~\ref{lemma:lieb_robinson} and \ref{lemma:2superopLR}) to obtain bounds that are uniform in the system size. We first establish an extension of Lemma~\ref{lemma:bounds_remainder_lr} to the noisy setting.
\begin{lemma}\label{lemma:bounds_remainder_noisy_lr}
Suppose $O$ is a local observable on the system qudits with $\norm{O} \leq 1$ supported on $S_O$, and for $\tau>0$, let $O(\tau) = \exp(\mathcal{L}^\dagger \tau)(O)$ where $\mathcal{L}$ is a geometrically local target Lindbladian of the form in Eq.~\ref{eq:geom_local_lind}. Then for\ $q_\alpha$ as defined in Lemma~\ref{lemma:remainder}, then there are non-decreasing piecewise continuous function $\nu, \nu'$ such that $\nu(t), \nu'(t) \leq O(t^d)$ as $t \to \infty$ and for $\omega \leq 2$
        \begin{align*}
         &\sum_{\alpha}\bigabs{\tr{}{O(\tau) q_\alpha}} \leq  \nu(\tau) \text{ and}, \nonumber\\
         &\sum_{\alpha, \alpha'}\bigabs{\tr{}{O(\tau)\mathcal{Q}_{\alpha, \alpha'}^{(j)}(s)}} \leq  \bigg(1 +\frac{2\delta \mathcal{Z}'}{\omega} + \frac{4\delta \mathcal{Z}'}{\omega^2}\bigg)\nu^2(\tau).
        \end{align*}
where, for $j \in \{3, 4\}$, $\mathcal{Q}_{\alpha, \alpha'}^{(j)}$ is defined in Lemma~\ref{lemma:remainder} with $\rho_{\omega} \to \rho_{\omega, \delta}$ and for $j\in \{1, 2\}$, we define $\mathcal{Q}_{\alpha, \alpha'}^{(j)} = \mathcal{Q}_{\alpha, h_{\alpha'}}^{(j)}$ where $\mathcal{Q}_{\alpha, h}^{(j)}$ is defined in Lemma~\ref{lemma:remainder} with $\rho_{\omega} \to \rho_{\omega, \delta}$. Furthermore,
\begin{align*}
    &\sum_{\beta}\bigabs{\textnormal{Tr}(O(\tau)\mathcal{K}_{\beta}^{(0)}(s)} \leq \nu'(\tau)  \text{ and }, \\
    &\sum_{\alpha, \beta}\bigabs{\textnormal{Tr}(O(\tau)\mathcal{K}_{\alpha, \beta}^{(j)}(s)} \leq (\nu'(\tau))^2,
\end{align*}
where $\mathcal{K}^{(0)}_\beta$ and $\mathcal{K}^{(j)}_{\alpha, \beta}$ for $j \in \{1,2\}$ are defined in Lemma~\ref{lemma:remainder_noisy}.
\end{lemma}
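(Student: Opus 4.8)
The plan is to handle the two groups of terms in the statement separately, following the split already present in Lemma~\ref{lemma:remainder_noisy}: the terms $q_\alpha$ and $\mathcal{Q}^{(j)}_{\alpha,\alpha'}$ are exactly those of the noiseless remainder, now evaluated on $\rho_{\omega,\delta}$, whereas $\mathcal{K}^{(0)}_\beta$ and $\mathcal{K}^{(j)}_{\alpha,\beta}$ are genuinely new and carry the noise. The guiding idea in both cases is the same as in Lemma~\ref{lemma:bounds_remainder_lr}: push every superoperator that acts near the observable into the Heisenberg picture, control it with the Lieb--Robinson bounds of Lemmas~\ref{lemma:lieb_robinson} and \ref{lemma:2superopLR}, and bound the remaining state-side factors.

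For the first group, the bound $\sum_\alpha\abs{\text{Tr}(O(\tau)q_\alpha)}\le\nu(\tau)$ is \emph{identical} to the one in Lemma~\ref{lemma:bounds_remainder_lr}, since $q_\alpha=-\mathcal{D}_{L_\alpha}(\rho(0))$ depends only on the initial state and not on $\omega$ or $\delta$, so the same function $\nu$ works. For the $\mathcal{Q}^{(j)}_{\alpha,\alpha'}$ terms I would repeat the manipulation of Lemma~\ref{lemma:bounds_remainder_lr} verbatim: transfer the observable-side operators ($\mathcal{D}_{L_\alpha}^\dagger$, the commutators $[L^{(u)}_{\alpha'},\cdot]$, and the factors appearing in $\mathcal{Q}^{(1)},\mathcal{Q}^{(2)}$) onto $O(\tau)$ and control them with Lemmas~\ref{lemma:lieb_robinson} and \ref{lemma:2superopLR}, producing the same convergent lattice sums bounded by $\nu^2(\tau)$. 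The only change is that the state-side factors $\text{Tr}_\mathcal{A}(\sigma_\alpha\rho)$, $\text{Tr}_\mathcal{A}(\sigma_\alpha\sigma_{\alpha'}\rho)$ and $\text{Tr}_\mathcal{A}(n_\alpha\rho)$ are now estimated by the noisy bounds of Lemma~\ref{lemma:tr_sigma_bounds_noisy}. Weighing the prefactors ($1/\omega$ for $\mathcal{Q}^{(1)},\mathcal{Q}^{(3)}$ and $1/\omega^2$ for $\mathcal{Q}^{(4)}$) against the extra $\mathcal{Z}'\delta$ and $\omega\mathcal{Z}'\delta$ pieces in those estimates yields, uniformly over $j$, a multiplicative correction no larger than $1+2\mathcal{Z}'\delta/\omega+4\mathcal{Z}'\delta/\omega^2$; the assumption $\omega\le2$ is what lets me absorb $\tfrac{\omega^2}{4}+\tfrac{\omega\mathcal{Z}'\delta}{2}+\mathcal{Z}'\delta$ into $\tfrac{\omega^2}{4}$ times this factor. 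This reproduces the first two displayed inequalities.

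For the second group, the key observation is that each $\mathcal{N}_\beta^\dagger$ is unital ($\mathcal{N}_\beta^\dagger(I)=0$) with $\norm{\mathcal{N}_\beta^\dagger}_{cb,\infty\to\infty}\le\norm{\mathcal{N}_\beta}_\diamond\le1$ and is supported on $S'_\beta$, so, identifying $S'_\beta$ with its image on the system lattice, it is an admissible superoperator for the Lieb--Robinson bounds. For $\mathcal{K}^{(0)}_\beta=\text{Tr}_\mathcal{A}(\mathcal{N}_\beta(\rho_{\omega,\delta}))$ I would write $\text{Tr}(O(\tau)\mathcal{K}^{(0)}_\beta)=\text{Tr}\big(\mathcal{N}_\beta^\dagger(O(\tau)\otimes I_\mathcal{A})\,\rho_{\omega,\delta}\big)$, bound it by $\norm{\mathcal{N}_\beta^\dagger(O(\tau)\otimes I_\mathcal{A})}$ using $\norm{\rho_{\omega,\delta}}_1=1$, and apply Lemma~\ref{lemma:lieb_robinson} to obtain exponential decay in $d(S'_\beta,S_O)$; the single lattice sum over $\beta$, now governed by the noise degree $\mathcal{Z}'$, converges to a non-decreasing $\nu'(\tau)=O(\tau^d)$. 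For $\mathcal{K}^{(2)}_{\alpha,\beta}$ I would move both the commutator with $L_\alpha^\dagger$ and $\mathcal{N}_\beta$ onto the observable, rewriting $\text{Tr}(O(\tau)\mathcal{K}^{(2)}_{\alpha,\beta})$ as $\tfrac12\text{Tr}\big(\mathcal{N}_\beta^\dagger\mathcal{B}_\alpha(O(\tau))\,\rho_{\omega,\delta}\big)+\text{c.c.}$, where $\mathcal{B}_\alpha$ is a unital superoperator supported on $S_\alpha$ with $\norm{\mathcal{B}_\alpha}_{cb,\infty\to\infty}\le2$; the two-superoperator bound (Lemma~\ref{lemma:2superopLR}) then gives joint decay in $d(S_\alpha,S_O)$ and $d(S'_\beta,S_O)$, so the double sum converges to $(\nu'(\tau))^2$.

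The main obstacle is the cross term $\mathcal{K}^{(1)}_{\alpha,\beta}$. Pushing both localized superoperators onto $O(\tau)$ leaves a stray ancilla lowering operator $\sigma_\alpha$ sandwiched between $\mathcal{N}_\beta^\dagger$ and $\mathcal{A}_\alpha^\dagger(O(\tau))=[O(\tau),L_\alpha^\dagger]$, which is not directly of the two-superoperator form required by Lemma~\ref{lemma:2superopLR}. I would resolve this by splitting the $\beta$-sum according to whether the ancilla $\alpha$ lies in $S'_\beta$. When $\alpha\notin S'_\beta$ the factor $\sigma_\alpha$ commutes through $\mathcal{N}_\beta^\dagger$, leaving $\norm{\mathcal{N}_\beta^\dagger\mathcal{A}_\alpha^\dagger e^{\mathcal{L}^\dagger\tau}(O)}$ in exactly the two-superoperator form, hence the desired joint decay. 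When $\alpha\in S'_\beta$---which by the noise-locality assumption occurs for at most $\mathcal{Z}'$ values of $\beta$ per $\alpha$---I would bound crudely, keeping only the single-superoperator decay in $d(S_\alpha,S_O)$ coming from $\mathcal{A}_\alpha^\dagger$. Both pieces sum to $O(\tau^{2d})$ and are absorbed into $(\nu'(\tau))^2$. Beyond this, the only technical care required throughout is the consistent treatment of superoperators that touch ancilla sites as superoperators on their induced system-lattice support, so that the system-only Lieb--Robinson bounds of Lemmas~\ref{lemma:lieb_robinson} and \ref{lemma:2superopLR} remain applicable.
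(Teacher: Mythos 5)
Your proposal is correct and follows essentially the same route as the paper: reuse the noiseless argument for $q_\alpha$ and the $\mathcal{Q}^{(j)}_{\alpha,\alpha'}$ terms with Lemma~\ref{lemma:tr_sigma_bounds_noisy} substituted for Lemma~\ref{lemma:tr_sigma_bounds}, then handle the noise terms by moving $\mathcal{N}_\beta^\dagger$ (restricted to its system support $\tilde S'_\beta$, which is unital and has $\norm{\cdot}_{cb,\infty\to\infty}\le 1$) onto the Heisenberg observable and summing the resulting one- and two-point Lieb--Robinson envelopes via Lemma~\ref{lemma:lr_summation}. The one place you diverge is $\mathcal{K}^{(1)}_{\alpha,\beta}$: the paper does not case-split on whether the ancilla $\alpha$ lies in $S'_\beta$, but simply bounds $\norm{\tilde{\mathcal N}_\beta^\dagger([L_\alpha^\dagger,O(\tau)]\sigma_\alpha)}$ by treating $X\mapsto[L_\alpha^\dagger,X]\sigma_\alpha$ as a single unital superoperator of cb norm $\le 2$ whose lattice support is $S_\alpha$ (the ancilla being co-located with it), so that the overlapping-support branch of Lemma~\ref{lemma:2superopLR} already covers the case $\alpha\in S'_\beta$; your split is a valid but unnecessary workaround, and the crude single-decay piece it produces is harmlessly absorbed into $(\nu'(\tau))^2$.
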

\noindent In this lemma, the bounds on $\sum_{\alpha}\abs{\tr{}{O(\tau) q_\alpha}}$ and $\sum_{\alpha, \alpha'}\abs{\tr{}{O(\tau)\mathcal{Q}_{\alpha, \alpha'}^{(j)}(s)}} $  can be obtained by closely following the proof of Lemma~\ref{lemma:bounds_remainder_lr}, but with an application of Lemma~\ref{lemma:tr_sigma_bounds_noisy} instead of \ref{lemma:tr_sigma_bounds}. To obtain $\sum_{\beta} \abs{\text{Tr}(O(\tau) \mathcal{K}_\beta^{(0)}(s)}$, we note that since $O(\tau) = \exp(\mathcal{L}^\dagger \tau)(O)$ is an operator that acts entirely on the system qudits,
\begin{align*}
    &\abs{\text{Tr}(O(\tau) \mathcal{K}_\beta^{(0)}(s)} \nonumber\\
    &\qquad= \abs{\text{Tr}((O(\tau) \otimes I_\mathcal{A})\mathcal{N}_\beta(\rho_{\omega, \delta}(s))}, \nonumber \\
    &\qquad= \abs{\text{Tr}(\mathcal{N}_\beta^\dagger(O(\tau) \otimes I_\mathcal{A}) \rho_{\omega, \delta}(s))},\nonumber\\
    &\qquad\leq \norm{\mathcal{N}_\beta^\dagger(O(\tau) \otimes I_\mathcal{A})} = \norm{\tilde{\mathcal{N}}_\beta^\dagger(O(\tau))},
\end{align*}
where $\tilde{\mathcal{N}}_\beta^\dagger$, defined by $\tilde{\mathcal{N}}_\beta^\dagger(X) = \mathcal{N}_\beta(X\otimes I_\mathcal{A})$, is a superoperator acting entirely on the system qudits. Although $\tilde{\mathcal{N}}_\beta^\dagger$ isn't necessarily the adjoint of a Lindbladian superoperator (like $\mathcal{N}_\beta$), it has the identity (on the system qudits) in its kernel i.e. $\tilde{\mathcal{N}}_\beta^\dagger(I) = 0$. Furthermore, since $\mathcal{N}_\beta^\dagger$ is supported on $S'_\beta$, $\tilde{\mathcal{N}}_\beta^\dagger$ is supported only on the system qudits contained in $S'_\beta$, which we denote by $\tilde{S}'_\beta$. Now, with an application of the Lieb-Robinson bounds (Lemma~\ref{lemma:lieb_robinson}), we obtain that
\begin{align*}
     \abs{\text{Tr}(O(\tau) \mathcal{K}_\beta^{(0)}(s)}\leq \min\bigg(\eta_{S_O} \exp\bigg(4e\mathcal{Z}\tau - \frac{d(S_O, \tilde{S}'_\beta)}{a}\bigg), 1\bigg),
\end{align*}
where we have used that $\norm{\tilde{\mathcal{N}}_\beta^\dagger}_{cb,\infty\to\infty} \leq \norm{\mathcal{N}_\beta^\dagger}_{cb,\infty\to\infty} \leq 1$. This bound immediately yields an upper bound on $\sum_\beta \abs{\text{Tr}(O(\tau)\mathcal{K}^{(0)}_\beta(s)}$ which is uniform in the system size, since it reduces it to the summation of a decaying exponential on a lattice. An analysis of the scaling of this upper bound with $\tau$, together with similar bounds on $\sum_{\alpha, \beta}\abs{\text{Tr}(O(\tau) \mathcal{K}_{\alpha, \beta}^{(j)}(s)}$ is provided in Appendix~\ref{appendix:analysis_noisy}.
\begin{repproposition}{prop:dynamics_noisy}
Suppose $\mathcal{L}$ is a $d-$dimensional geometrically local Lindbladian and $O$ with $\norm{O}\leq 1$ is a local observable. Then, in the presence of noise with noise rate $\delta$,  the expected local observable with the analogue quantum simulator can be obtained to a precision $\varepsilon = O(\delta^{1/2}t^{2d + 1})$. Furthermore, to obtain this precision, we need to choose $\omega = \Theta(\delta^{1/4})$ which results in a simulator run-time $t_\textnormal{sim} = t/\omega^2 = \Theta(t \delta^{-1/2})$.
\end{repproposition}
\begin{proof}
Given a local observable, the error between the target expectation value $\expect{O}_\text{target}$ and the expectation value obtained on a noisy simulator, $\expect{O}_\text{sim}$ can be bounded by Eq.~\ref{eq:remainder_with_observable}, with the modified remainder $\mathcal{R}_{\omega, \delta}$ from Lemma~\ref{lemma:remainder_noisy} instead of $\mathcal{R}_{\omega}$:
\begin{widetext}
\begin{align}\label{eq:observable_remainder_with_noise}
&\abs{\expect{O}_\text{target} - \expect{O}_\text{sim}} = \frac{1}{\omega^2}\bigabs{\int_0^t \text{Tr}\bigg(O(t - s) \bigtr{\mathcal{A}}{\mathcal{R}_{\omega, \delta}\bigg(\frac{s}{\omega^2}\bigg)} \bigg)ds}, \nonumber \\
&\quad \leq  \sum_{\alpha} \bigabs{ \int_0^t \text{Tr}(O(t - s) q_\alpha) e^{-2s/\omega^2}ds} + \omega^2\sum_{j = 1}^4\sum_{\alpha, \alpha'} \bigabs{\int_0^t \int_0^{s/\omega^2}e^{-2(s/\omega^2 - s')} \text{Tr}(O(t - s) \mathcal{Q}^{(j)}_{\alpha, \alpha'}(s'))ds'ds} + \nonumber\\
&\quad \qquad \frac{\delta}{\omega^2} \sum_{\beta } \bigabs{\int_0^t \text{Tr}\bigg(O(t - s)\mathcal{K}_\beta^{(0)}\bigg(\frac{s}{\omega^2}\bigg)\bigg)ds} + \frac{\delta}{\omega^2} \sum_{j = 1}^2\sum_{\alpha, \beta} \omega^j \bigabs{\int_0^t \int_0^{s/\omega^2} e^{-2(s/\omega^2 - s')} \tr{}{O(t - s)\mathcal{K}^{(j)}_{\alpha, \beta}(s')}ds' ds}, \\
&\quad \numleq{1} \int_0^t \nu(t - s) e^{-2s/\omega^2}ds + 4\omega^2 \int_0^t \int_0^{s/\omega^2} \bigg(1 +\frac{2\delta \mathcal{Z}'}{\omega} + \frac{4\delta \mathcal{Z}'}{\omega^2}\bigg) e^{-2(s/\omega^2 - s')}\nu^2(t - s) ds' ds + \nonumber\\
&\quad \qquad \frac{\delta}{\omega^2} \int_0^t \nu'(t - s) ds + \frac{\delta}{\omega^2} \sum_{j = 1}^2 \omega^j \int_0^t \int_0^{s/\omega^2}e^{-2(s/\omega^2 - s')} {\nu'}^2(t - s) ds' ds, \nonumber \\
&\quad \numleq{2} \nu(t)\int_0^t e^{-2s/\omega^2}ds + \frac{\delta}{\omega^2}\nu'(t) \int_0^t  ds + \bigg(4(\omega^2 + 2\omega \delta  \mathcal{Z}' + 4\delta \mathcal{Z}')\nu^2(t) + \frac{\delta}{\omega}(1 + \omega){\nu'}^2(t)\bigg) \int_0^t \int_0^{s/\omega^2} e^{-2(s/\omega^2 - s')}ds' ds, \nonumber \\
&\quad \leq \frac{1}{2}\omega^2 \nu(t)  + \frac{t}{2}\bigg(4(\omega^2 + 2\omega \delta \mathcal{Z}' + 4\delta \mathcal{Z}')\nu^2(t) + \frac{\delta}{\omega}(1 + \omega){\nu'}^2(t) + \frac{2\delta}{\omega^2}\nu'(t)\bigg), \nonumber
\end{align}
\end{widetext}
where in (1) we have used Lemma~\ref{lemma:bounds_remainder_noisy_lr} and in (2), we have used the fact that the functions $\nu(t), \nu'(t)$ appearing in Lemma~\ref{lemma:bounds_remainder_noisy_lr} are non-decreasing in $t$. From this bound, which is uniform in the system size $n$, it can be seen that, for a fixed $t$, a choice of $\omega$ scaling as $\Theta(\delta^{1/4})$ minimizes the error $\abs{\expect{O}_\text{target} - \expect{O}_\text{sim}}$ as $\delta \to 0$. Furthermore, with this choice of $\omega$, we obtain an observable error of $O(\delta^{1/2} t^{2d + 1})$, as $\delta \to 0$ and $t\to \infty$, independent of the system size. 
\end{proof}

\emph{Long-time dynamics and fixed point}. Next, we consider rapidly mixing observables (Eq.~\ref{eq:rapid_mixing_observable_v2}) and analyze stability of these observables for long-time dynamics or fixed points. For this, we will establish the following modification of Lemma~\ref{lemma:error_term_rapid_mixing} to the account for noise in the quantum simulator.

\begin{lemma}\label{lemma:error_term_rapid_mixing_noisy}
    Suppose $O$ is a local observable with $\norm{O} \leq 1$ supported on $S_O$, and for $\tau>0$, let $O(\tau) = \exp(\mathcal{L}^\dagger \tau)(O)$ where $\mathcal{L}$ is a geometrically local Lindbladian of the form in Eq.~\ref{eq:geom_local_lind}. Furthermore, suppose $O$ is rapidly mixing with respect to $\mathcal{L}$ and satisfies Eq.~\ref{eq:rapid_mixing_observable_v2} with $k(\abs{S_O}, \gamma) \leq O(\exp(\gamma^{-\kappa}))$. Then for\ $q_\alpha$ as defined in Lemma~\ref{lemma:remainder},
        \[
        \sum_{\alpha}\bigabs{\int_0^t \tr{}{O(t - s) q_\alpha}e^{-2s/\omega^2}ds} \leq  \omega^2 \lambda^{(1)}(\gamma),
        \]
        where $\lambda^{(1)}(\gamma) \leq O(\gamma^{-d(\kappa + 1)})$ as $\gamma \to 0$ and for $j \in \{1,2,3,4\}$
        \begin{align*}
        &\sum_{\alpha, \alpha'}\bigabs{\int_0^t \int_0^{s/\omega^2} \tr{}{O(t - s)\mathcal{Q}_{\alpha, \alpha'}^{(j)}(s')} e^{-2(s/\omega^2 - s')}ds' ds} \nonumber\\
        &\qquad \qquad \qquad \qquad \qquad \leq  \bigg(1 + \frac{2\mathcal{Z}'\delta}{\omega} + \frac{4\mathcal{Z}'\delta}{\omega^2}\bigg)\lambda^{(2)}(\gamma),
        \end{align*}
where $\lambda^{(2)}(\gamma) \leq O(\gamma^{-(2d + 1)(\kappa + 1)})$ as $\gamma \to 0$ and for $j \in \{3, 4\}$, $\mathcal{Q}_{\alpha, \alpha'}^{(j)}$ is defined in Lemma~\ref{lemma:remainder} but with $\rho_\omega \to \rho_{\omega, \delta}$ and for $j\in \{1, 2\}$, we define $\mathcal{Q}_{\alpha, \alpha'}^{(j)} = \mathcal{Q}_{\alpha, h_{\alpha'}}^{(j)}$ where $\mathcal{Q}_{\alpha, h}^{(j)}$ is defined in Lemma~\ref{lemma:remainder} but with $\rho_\omega \to \rho_{\omega, \delta}$. Furthermore,
\begin{align*}
\sum_{\beta}\bigabs{\int_0^t \textnormal{Tr}\bigg(O(t - s)\mathcal{K}_\beta^{(0)}\bigg(\frac{s}{\omega^2}\bigg)\bigg)ds} \leq {\lambda'}^{(1)}(\gamma)
\end{align*}
where ${\lambda'}^{(1)}(\gamma) \leq O(\gamma^{-(d + 1)(\kappa + 1)})$ as $\gamma \to 0$ and for $j \in \{1,2\}$,
\begin{align*}
&\sum_{\alpha, \beta}\bigabs{\int_0^t \int_0^{s/\omega^2} \textnormal{Tr}\big(O(t - s) \mathcal{K}_{\alpha,\beta}^{(j)}(s') \big) e^{-2(s/\omega^2 - s')}ds' ds} \nonumber \\
&\qquad \qquad \qquad \qquad \qquad \leq {\lambda'}^{(2)}(\gamma),
\end{align*}
where ${\lambda'}^{(2)}(\gamma) \leq O(\gamma^{-(2d + 1)(\kappa + 1)})$ as $\gamma \to 0$.
\end{lemma}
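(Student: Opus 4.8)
The plan is to follow the short‑time/long‑time decomposition already used in Lemma~\ref{lemma:error_term_rapid_mixing}, treating the terms inherited from the noiseless remainder ($q_\alpha$ and $\mathcal{Q}^{(j)}_{\alpha,\alpha'}$) and the genuinely new noise terms ($\mathcal{K}^{(0)}_\beta$ and $\mathcal{K}^{(j)}_{\alpha,\beta}$) separately. For the inherited terms I would replay the proof of Lemma~\ref{lemma:error_term_rapid_mixing} essentially verbatim. The bound on $\sum_\alpha\bigabs{\int_0^t\tr{}{O(t-s)q_\alpha}e^{-2s/\omega^2}ds}$ is literally unchanged, since $q_\alpha=-\mathcal{D}_{L_\alpha}(\rho(0))$ depends only on the initial state and not on $\rho_{\omega,\delta}$; hence $\lambda^{(1)}(\gamma)\leq O(\gamma^{-d(\kappa+1)})$ carries over. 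For the $\mathcal{Q}^{(j)}_{\alpha,\alpha'}$ terms the only modification is that every appeal to Lemma~\ref{lemma:tr_sigma_bounds} is replaced by its noisy counterpart Lemma~\ref{lemma:tr_sigma_bounds_noisy}: the extra additive pieces $\mathcal{Z}'\delta$ and $\omega\mathcal{Z}'\delta/2$, once divided through by the $\omega^{-1}$ and $\omega^{-2}$ prefactors sitting in front of $\mathcal{Q}^{(3)}_{\alpha,\alpha'}$ and $\mathcal{Q}^{(4)}_{\alpha,\alpha'}$, produce exactly the overall factor $(1+2\mathcal{Z}'\delta/\omega+4\mathcal{Z}'\delta/\omega^2)$, while the Lieb‑Robinson/rapid‑mixing bookkeeping on $O(t-s)$ is untouched by $\delta$, so $\lambda^{(2)}(\gamma)\leq O(\gamma^{-(2d+1)(\kappa+1)})$ is inherited.

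The crux is the undamped noise term $\mathcal{K}^{(0)}_\beta$. Exactly as in the paragraph preceding Proposition~\ref{prop:dynamics_noisy} I would write $\tr{}{O(\tau)\mathcal{K}^{(0)}_\beta(s')}=\tr{}{\tilde{\mathcal{N}}_\beta^\dagger(O(\tau))\,\rho_{\omega,\delta}(s')}$, where $\tilde{\mathcal{N}}_\beta^\dagger(X):=\mathcal{N}_\beta(X\otimes I_\mathcal{A})$ is supported on the system qudits $\tilde{S}'_\beta\subseteq S'_\beta$ and annihilates the identity, so that $\bigabs{\tr{}{O(\tau)\mathcal{K}^{(0)}_\beta(s')}}\leq\norm{\tilde{\mathcal{N}}_\beta^\dagger(O(\tau))}$ uniformly in $s'$. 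I would bound $\norm{\tilde{\mathcal{N}}_\beta^\dagger(O(\tau))}$ two ways: for short $\tau$ by the single‑superoperator bound (Lemma~\ref{lemma:lieb_robinson}), which decays in $d(\tilde{S}'_\beta,S_O)$, and for long $\tau$ by rapid mixing, subtracting $\text{Tr}(O\sigma)I$ and using $\tilde{\mathcal{N}}_\beta^\dagger(I)=0$ to get decay $k(\abs{S_O},\gamma)e^{-\gamma\tau}$. The hard part is the resulting $\gamma$‑scaling, and it is precisely here that ${\lambda'}^{(1)}$ acquires an extra factor of $\gamma^{-(\kappa+1)}$ relative to $\lambda^{(1)}$: in Lemma~\ref{lemma:remainder_noisy} the term $\mathcal{K}^{(0)}_\beta$ enters \emph{without} the $e^{-2(t-s)}$ ancilla‑damping, so the surviving integral $\int_0^t\norm{\tilde{\mathcal{N}}_\beta^\dagger(O(\tau))}\,d\tau$ runs over the full light cone rather than a window of width $\sim\omega^2$. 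Choosing a distance‑dependent cutoff $\tau_\beta\propto d(\tilde{S}'_\beta,S_O)$ to separate the Lieb‑Robinson and rapid‑mixing regimes, and a rapid‑mixing threshold $\tau^\star=\gamma^{-1}\ln k(\abs{S_O},\gamma)=O(\gamma^{-(\kappa+1)})$ beyond which the rapid‑mixing bound falls below $1$, the terms with $d(\tilde{S}'_\beta,S_O)=O(\tau^\star)$ are controlled only by the trivial bound $1$ over a window of width $O(\tau^\star)$, contributing $O(\gamma^{-(\kappa+1)})$ each; since the number of such $\beta$ is the $\mathbb{Z}^d$ volume of a ball of radius $O(\tau^\star)$, i.e.\ $O(\gamma^{-d(\kappa+1)})$, the sum is $O(\gamma^{-(d+1)(\kappa+1)})$, with the far‑$\beta$ tail (controlled by rapid‑mixing decay in distance) subleading. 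Balancing this undamped temporal window against the spatial lattice sum is the most delicate step.

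Finally, the two‑region terms $\mathcal{K}^{(j)}_{\alpha,\beta}$ do carry the $e^{-2(s/\omega^2-s')}$ damping and so mirror the $\mathcal{Q}^{(3)},\mathcal{Q}^{(4)}$ analysis. For $\mathcal{K}^{(2)}_{\alpha,\beta}$ I would rewrite $\tr{}{O(\tau)\mathcal{K}^{(2)}_{\alpha,\beta}(s')}=\tr{}{\tilde{\mathcal{N}}_\beta^\dagger(\mathcal{B}_\alpha(O(\tau)))\,\rho_{\omega,\delta}(s')}$ with $\mathcal{B}_\alpha(X):=[X,L_\alpha^\dagger]L_\alpha$ supported on $S_\alpha$ and satisfying $\mathcal{B}_\alpha(I)=0$; then Lemma~\ref{lemma:2superopLR} supplies short‑time decay in $d(S_\alpha,S_O)+d(\tilde{S}'_\beta,S_O)$, and rapid mixing (legitimate because both superoperators annihilate $I$) supplies the long‑time decay. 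For $\mathcal{K}^{(1)}_{\alpha,\beta}$, which contains the ancilla operator $\sigma_\alpha$, a naive operator‑norm bound would discard the decay in $\beta$; instead I would use that $\mathcal{N}_\beta^\dagger$ annihilates any operator supported away from $S'_\beta$, so for each $\alpha$ only the at most $\mathcal{Z}'$ indices $\beta$ with $S'_\beta$ meeting $S_\alpha$ or the $\alpha$‑th ancilla contribute. This collapses the $\beta$‑sum to a bounded multiplicity and leaves a single‑region sum of $\norm{[O(\tau),L_\alpha^\dagger]}$, handled by Lemma~\ref{lemma:lieb_robinson} and rapid mixing exactly as the $q_\alpha$ term. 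Carrying the short/long split through both double integrals then yields ${\lambda'}^{(2)}(\gamma)\leq O(\gamma^{-(2d+1)(\kappa+1)})$, identical in form to $\lambda^{(2)}(\gamma)$.
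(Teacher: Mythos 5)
Most of your plan matches the paper's proof: the $q_\alpha$ bound is indeed inherited verbatim, the $\mathcal{Q}^{(j)}_{\alpha,\alpha'}$ terms are handled by swapping Lemma~\ref{lemma:tr_sigma_bounds} for Lemma~\ref{lemma:tr_sigma_bounds_noisy} (and your arithmetic for the factor $1+2\mathcal{Z}'\delta/\omega+4\mathcal{Z}'\delta/\omega^2$ is exactly right), the undamped $\mathcal{K}^{(0)}_\beta$ term is bounded by the short/long split without the $e^{-2(\cdot)/\omega^2}$ window — which is what costs the extra $\gamma^{-(\kappa+1)}$, precisely as you diagnose and precisely what Lemma~\ref{lemma:local_rapid_mixing_superop_bounds}(b) packages — and $\mathcal{K}^{(2)}_{\alpha,\beta}$ goes through the two-superoperator machinery as you describe.

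The gap is in your treatment of $\mathcal{K}^{(1)}_{\alpha,\beta}$. You claim that, for fixed $\alpha$, only the $O(\mathcal{Z}')$ indices $\beta$ with $S'_\beta$ meeting $S_\alpha$ or the $\alpha$-th ancilla contribute, because $\mathcal{N}_\beta^\dagger$ annihilates operators supported away from $S'_\beta$. But the operator actually fed into $\mathcal{N}_\beta^\dagger$ after moving the commutator onto the observable is $\bigl([O(\tau),L_\alpha^\dagger]\otimes I_{\mathcal{A}}\bigr)\sigma_\alpha$, whose support contains all of $\textnormal{supp}(O(\tau))$ — the Heisenberg-evolved observable spreads over the light cone, so every $\beta$ with $S'_\beta$ inside that cone gives a nonzero contribution. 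Your reduction would set these terms to zero when they are merely small, and the number of such $\beta$ grows like the light-cone volume $O(\gamma^{-d(\kappa+1)})$ after the time cutoff, so the error is not harmless. The fix is to treat $\mathcal{K}^{(1)}$ exactly like $\mathcal{K}^{(2)}$: your worry about $\sigma_\alpha$ is unfounded, because you can absorb it into the superoperator $\mathcal{K}_\alpha(X)=-i[L_\alpha^\dagger,\sigma_\alpha X]$, which is supported on $S_\alpha$ together with the $\alpha$-th ancilla (untouched by the target evolution $e^{\mathcal{L}^\dagger\tau}$), satisfies $\textnormal{Tr}(\mathcal{K}_\alpha(X))=0$, and has $\norm{\mathcal{K}_\alpha}_\diamond\leq 2$. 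Pairing it with $\mathcal{J}_\beta=\mathcal{N}_\beta$ and the state $\rho_{\omega,\delta}(s'/\omega^2)$ (trace norm $\leq 1$), the two-superoperator Lieb--Robinson/rapid-mixing bound of Lemma~\ref{lemma:local_rapid_mixing_superop_bounds}(c) gives decay in both $d(S_\alpha,S_O)$ and $d(\tilde S'_\beta,S_O)$, and the double lattice sum yields ${\lambda'}^{(2)}(\gamma)\leq O(\gamma^{-(2d+1)(\kappa+1)})$ — this is exactly what the paper does.
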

\noindent The proof of this lemma closely follows the same strategy as the proof of Lemma~\ref{lemma:error_term_rapid_mixing} for the noiseless case and is detailed in Appendix~\ref{appendix:analysis_noisy}. With this lemma, we can then establish the stability of the quantum simulator while simulating the long-time dynamics or fixed point expectation values of rapidly mixing observables.
\begin{repproposition}{prop:fp_noisy}
Suppose $\mathcal{L}$ is a $d-$dimensional geometrically local Lindbladian and $O$ with $\norm{O}\leq 1$ is a local observable supported on $O(1)$ lattice sites satisfying rapid mixing (Eq.~\ref{eq:rapid_mixing_observable}). Then, in the presence of noise with noise rate $\delta$,  the expected local observable at any time $t$ can be obtained to a precision $\varepsilon = O(\delta^{1/2}\gamma^{-(\kappa + 1)(2d + 1)})$, independent of $n$ and $t$, with the analogue quantum simulator. Furthermore, to obtain this precision, we need to choose $\omega = \Theta(\delta^{1/4})$ which results in a simulator run-time $t_\textnormal{sim} = t/\omega^2 = O(t \delta^{-1/2})$.
\end{repproposition}
\begin{proof}
To bound the observable error, we start from Eq.~\ref{eq:observable_remainder_with_noise} and use Lemma~\ref{lemma:error_term_rapid_mixing_noisy} --- we then obtain
\begin{align*}
&\abs{\expect{O}_\text{target} - \expect{O}_\text{sim}} \leq \nonumber\\
&\qquad \omega^2 \lambda^{(1)}(\gamma) + 4(\omega^2 + 2\mathcal{Z}'\omega\delta + 4\mathcal{Z}'\delta) \lambda^{(2)}(\gamma) + \nonumber \\
&\qquad \frac{\delta}{\omega^2}{\lambda'}^{(1)}(\gamma) + \frac{\delta}{\omega}(1  + \omega) {\lambda'}^{(2)}(\gamma).
\end{align*}
From this bound, which is uniform in the system size $n$, as well as the scalings of $\lambda^{(j)}(\gamma), {\lambda'}^{(j)}(\gamma)$ from Lemma~\ref{lemma:error_term_rapid_mixing_noisy} it can be seen that, for a fixed $\gamma$, a choice of $\omega$ scaling as $\Theta(\delta^{1/4})$ minimizes the error $\abs{\expect{O}_\text{target} - \expect{O}_\text{sim}}$ as $\delta \to 0$. Furthermore, with this choice of $\omega$, we obtain an observable error of $O(\delta^{1/2} \gamma^{-(\kappa+1)(2d + 1)})$, as $\delta \to 0$ and $\gamma\to 0$, independent of the system size. 
\end{proof}

\subsection{Quantum advantage in the presence of noise}
As shown in the previous subsection, even though the dynamics and fixed point of geometrically local Lindbladians could be stable to noise on the analogue quantum simulator, a noisy quantum simulator cannot simulate observables perfectly but only to a noise-limited precision. We now adopt the perspective in Ref.~\cite{trivedi2024quantum}, and ask if there are family of stable Lindbladian problems where a reduction in the noise rate $\delta$ results in classical algorithms requiring a superpolynomially longer time to achieve this noise-limited precision. To show that these problems exist (subject to the complexity assumption of BQP $\neq$ BPP), we will leverage the quantum-circuit to 2D Lindbladian mapping presented in Lemma~\ref{lemma:encoded_lindbladian}.

We first consider the problem of rapidly mixing local observables in the Lindbladian fixed point in 2D. From Corollary~\ref{prop:fp_noisy_fixed_point}, it follows that, given $\delta > 0$, as long as $\gamma^{-1} \leq O(\delta^{\alpha - 1/(10(\kappa + 1))})$ for $\alpha > 0$, the noise-limited precision in the estimated observable $\varepsilon \leq O(\delta^{5\alpha (\kappa + 1)}) \to 0$ as $\delta \to 0$. Physically, this corresponds to the intuitively expected fact that a simulator with lower noise rate $\delta$ can be used to solve problems which have a smaller decay rate $\gamma$ (or equivalently those that take longer to reach their fixed point) without accumulating a large error. Now, using Lemma~\ref{lemma:encoded_lindbladian} we establish that there cannot exist a classical algorithm which for any $\alpha, \kappa > 0$ can compute the rapidly mixing local observable to a precision of $O\left(\delta^{5\alpha(\kappa+1)}\right)$ in time $\text{poly}(\delta^{-1})$ unless BQP=BPP.

\begin{repproposition}{prop:quantum_advantage_noisy}
A noisy analogue quantum simulator with noise rate $\delta$ can solve Problem~\ref{prob:fixed_points}, with parameters $\alpha, \kappa$, to a noise-limited precision $O(\delta^{5\alpha (\kappa + 1)})$ (which $\to 0$ as $\delta \to 0$). Furthermore, there cannot exist a randomized classical algorithm with run-time $\text{poly}(\delta^{-1})$ that can solve Problem~\ref{prob:fixed_points} to the same precisions for every given $\alpha, \kappa$ unless \textnormal{BQP = BPP}.
\end{repproposition}

\begin{proof}
The estimate of the noise-limited precision and the quantum simulator run-time follows immediately from Corollary~\ref{prop:fp_noisy_fixed_point} of Proposition~\ref{prop:fp_noisy}. We now show the classical hardness of the sequence of problems defined in Problem~\ref{prob:fixed_points}. Assume that such a classical algorithm did in fact exist --- then, we can use it to simulate the outcome of measuring an output qubit in an arbitrary poly-depth quantum circuit. To see this, suppose we had a quantum circuit $\mathcal{C}$ with architecture shown in Fig.~\ref{fig:circuit_format} on $N$ qubits with $R \leq O(N^m)$ rounds for some $m > 0$. We can then use Lemma~\ref{lemma:encoded_lindbladian} to produce a corresponding 2D geometrically local Lindbladian $\mathcal{L}$ and a rapidly mixing local observable $O$ which satisfies the rapid-mixing condition (Eq.~\ref{eq:rapid_mixing_observable}) with $\kappa = 1/3$ and $\gamma^{-1} \leq O(N^{3(m + 1)})$. Being able to simulate the local observable $O$ to a precision of $O((NR)^{-1}) = O(N^{-(m + 1)})$ would allow us to estimate, up to an $O(1)$ additive error, the probability of the output qubit (which is arbitrarily chosen to be the first qubit) to be in 1, thus successfully simulating the quantum circuit.

Now, if we indeed had a classical algorithm that did satisfy the conditions in the proposition, then for a small $\delta > 0$, we could use it simulate this Lindbladian for $N$ being chosen as a function, $N_\delta$, of $\delta$ to satisfy the constraint $\gamma^{-1} \leq O(\delta^{\alpha - 1/(10(\kappa + 1))})$ i.e.~
\[
N^{3(m + 1)}_\delta \leq O\left(\delta^{\alpha - 3/40}\right),
\]
and to satisfy the requirement that precision $\varepsilon = O(\delta^{\alpha(\kappa + 1)})$ be at-least $O((NR)^{-1})$, we impose that
\[
\delta^{4\alpha/3} \leq O\left(N_\delta^{-(m + 1)}\right).
\]
We note that both of these requirements can be satisfied by using $\alpha = 1/280$ and $N_\delta = \Theta(\delta^{-1/(42(m + 1))})$.

Thus, if there did exist a $\text{poly}(\delta^{-1})$ classical algorithm to simulate  the rapidly mixing Lindbladian problem, even with the constraint on $\gamma^{-1}$ as stated in the proposition to ensure a vanishing noise-limited error as $\delta \to 0$, for any $\alpha, \kappa > 0$, we could simulate the outcome of measurement of an output qubit in an arbitrary poly-depth quantum circuit, which would imply that BQP=BPP.
\end{proof}

Similarly, we can use Proposition~\ref{prop:quantum_advantage_noisy} to establish a similar result for dynamics of geometrically local Lindbladians.
\begin{repcorollary}{prop:quantum_advantage_dynamics_noisy}
  A noisy analogue quantum simulator with noise rate $\delta$ can solve Problem~\ref{prob:time_dynamics}, with parameter $\alpha$, to a noise-limited precision $O(\delta^{5\alpha})$ (which $\to 0$ as $\delta \to 0$) in simulator run-time $O(\textnormal{poly}(\delta^{-1}))$ and there cannot exist a $\textnormal{poly}(\delta^{-1})$ randomized classical algorithm to estimate this local observable to the same precision for every given $\alpha > 0$ unless \textnormal{BQP = BPP}.
\end{repcorollary}
\begin{proof}
    The estimate of the noise-limited precision and the quantum simulator run-time follows immediately from Proposition~\ref{prop:dynamics_noisy}. To show the classical hardness of the sequence of problems defined in Problem~\ref{prob:time_dynamics}, note that if such a classical algorithm did exist, then it could be used to simulate the sequence of fixed point problems in Proposition~\ref{prop:quantum_advantage_noisy} to the corresponding noise-limited precision corresponding to $\alpha, \kappa > 0$ by simply choosing $t_{(\delta)} = \Theta(\gamma^{-(\kappa + 1)}\log(\delta^{-1})) \leq O(\delta^{\alpha'- 1/10}\log(\delta^{-1}))$ where $\alpha' = (\kappa + 1)\alpha$ and thus implying that BQP = BPP.
\end{proof}

\section{A first experimental proposal: Non-thermality of Lindbladian fixed points}

Conventionally, a large many-body system interacting with an environment is expected to thermalize to the environment's temperature --- consequently, the fixed point of the corresponding Lindbladian is expected to be a Gibb's state at the environment's temperature \cite{davies1976quantum}. However, whether a precise microscopic open system model, described by a many-body Lindbladian, has a thermal fixed point remains much less clear largely due to the difficulty in classically numerically simulating Lindbladian fixed points \cite{cui2015variational, weimer2021simulation}. It has been argued, either using the Keldysch path-integral formalism \cite{diehl2008quantum, buchhold2015nonequilibrium, sieberer2016keldysh} or in exactly solvable models of gaussian fermions \cite{prosen_pizorn_2008}, that local Lindbladians could have non-thermal fixed points --- in particular, their correlation function structure as well as critical exponents cannot be explained by assuming them to be a thermal state. For instance, in 1D, there are Lindbladians which can support a long-range correlated fixed point (i.e.~fixed points with correlation function that decay polynomially with the separation between the sites being correlated) \cite{weimer2021simulation} while it is known that Gibb's states of 1D local Hamiltonians are necessarily short range correlated \cite{araki1969gibbs, bluhm2022exponential}. In 2 or higher dimensions, whether Lindbladian fixed points are thermal (i.e.~are either the Gibb's state of a geometrically local Hamiltonian or are in the same phase \cite{coser2019classification} as the Gibb's state of a geometrically local Hamiltonian) are also less well understood.

Outside of solvable models of gaussian fermions \cite{prosen_pizorn_2008, bravyi2011classical}, to the best of our knowledge, there are not  mathematically rigorous results or efficient numerical tools that precisely delineate under what conditions a many-body Lindbladian will have a thermal fixed point. This question could thus be potentially approached with an analog quantum simulator. However, we would then have to understand if noise in the simulator hampers our ability to answer this question at practically relevant system sizes. In the remainder of this section, we consider the 1D model with non-thermal fixed points proposed in Ref.~\cite{prosen_pizorn_2008} and provide numerical evidence of its noise robustness.

The model that we consider is shown in Fig.~\ref{fig:phase_change_with_noise}(a). We consider a 1D system of $n$ fermionic modes arranged with a  Hamiltonian $H$ given by
% \begin{align}
%     H & = \sum_{j=1}^{n-1} \Bigg(\frac{1+\gamma}{2} (a_j - a_j^\dagger)(a_{j+1} + a_{j+1}^\dagger)
%     \nonumber \\
%     & \qquad \quad - \frac{1-\gamma}{2}(a_j + a_j^\dagger)(a_{j+1} - a_{j+1}^\dagger)\Bigg)
%     \nonumber \\
%     & \quad + \sum_{j=1}^n h (a_j^\dagger a_j - a_j a_j^\dagger),
% \end{align}
\begin{subequations}\label{eq:fermion_chain_model}
\begin{align}
    H = 2h\sum_{x= 1}^n  n_x + \sum_{x = 1}^{n - 1}  \big(a_x a^\dagger_{x + 1}  + \gamma a_x a_{x + 1}  + U n_x n_{x + 1}+ \text{h.c.}\big),
\end{align}
where $a_x$ is the annihilation operator for the fermion at site $x$ and $n_x = a_x^\dagger a_x$ is the corresponding number operator. Additionally, the chain of fermions is dissipated at the boundaries with jump operators $\sqrt{\Gamma_L} a_1, \quad \sqrt{\Gamma_R} a_n$, where $\Gamma_L$ and $\Gamma_R$ are the dissipation rates at the left and right respectively. The full Lindbladian describing the system is thus given by
\begin{align}
\mathcal{L}(X) = -i[H, X] + \Gamma_L \mathcal{D}_{a_1}(X) + \Gamma_R \mathcal{D}_{a_n}(X).
\end{align}
\end{subequations}
At $U = 0$, this model is exactly solvable using the formalism in Ref.~\cite{bravyi2011classical}, while for $U \neq 0$, the model is non-integrable. Furthermore, while we study this problem for fermions, it is equivalent to a 1D nearest neighbour spin model with boundary dissipation via the Jordan Wigner transformation.

At $U = 0$ and when $h \leq h_c = 1 -\gamma^2$, the steady state of this system is long range correlated and thus not thermal. This is shown in the first column in Figure \ref{fig:phase_change_with_noise}(b) --- the steady-state covariance between fermionic occupation numbers, $\text{cov}(n_x n_y) = \expect{n_x n_y} - \expect{n_x} \expect{n_y}$, shows a transition from the state being short-range correlated when $h > h_c$ to being long-range correlated when $h \leq h_c$. Next, we consider adding simulator noise to this model --- we do so by adding jump operators $\sqrt{\delta}a_x, \sqrt{\delta}a_x^\dagger$ at each site $x$, where $\delta$ is the noise rate, and simulate instead the fixed point of the Lindbladian $\mathcal{L}_\delta$:
\[
\mathcal{L}_\delta = \mathcal{L} + \delta \sum_{x = 1}^n \big(\mathcal{D}_{a_x} + \mathcal{D}_{a_x^\dagger}\big).
\]
In the absence of the boundary dissipation (i.e.~$\Gamma_L = \Gamma_R = 0$), this noise would have resulted the system being driven to the maximally mixed state in its steady state. However, in the presence of the boundary dissipation (i.e.~$\Gamma_L, \Gamma_R \neq 0$), we find that the covariance function measured on the simulator can still show signature of a long-range phase at $\delta \sim 0.5\% - 1\%$ at large system sizes ($n \sim 160$) as shown in Fig.~\ref{fig:phase_change_with_noise}(b). Specifically, we observe that the quantum simulator can show significant signatures of long-range correlations when the system is deep inside the long-range correlated phase ($h = 0.4 - 0.6$ in Fig.~\ref{fig:phase_change_with_noise}(b)), while near the phase transition point ($h = h_c$), the presence of noise in the simulator destroys the long-range correlations. Nevertheless, these simulations indicate that the noisy simulator would be able to detect non-thermality in the steady state for system sizes practically relevant for current experiments, but make a noise-rate limited error in detecting the point at which the phase transition occurs.

For $U \neq 0$, the Lindbladian does not remain integrable and the parameter regimes in which the Lindbladian fixed point is non-thermal have not been precisely delineated. Based on the noise analysis for $U = 0$ discussed above, we conjecture that a noisy simulator should be able to detect the existence of a non-thermal phase in the model and approximate, to a noise-rate limited precision, the model parameters ($h, U_0, \gamma$) at which this phase transition occurs. We remark that our theoretical result on the stability of fixed point (Proposition~\ref{prop:fp_noisy}) does not directly apply to this problem and rigorously ensure that a noisy simulator can detect non-thermality in Lindbladian fixed points --- we leave a thorough theoretical and numerical investigation of this problem for future work.

\usetikzlibrary{decorations.pathmorphing}
\begin{figure}
    \centering
    \begin{tikzpicture}
    \definecolor{sitecolor}{HTML}{74A0C8}
    \node[circle, fill=sitecolor, draw=black, line width=1pt, text=black] (c1) {1};    \node[circle, fill=sitecolor, draw=black, line width=1pt, text=black, right of=c1] (c2) {2};
    \node[circle, fill=sitecolor, draw=black, line width=1pt, text=black, right of=c2] (c3){3};
    \node[right of=c3] (dots) {\Large \ldots};
    \node[circle, fill=sitecolor, draw=black, line width=1pt, text=black, right of=dots] (cn) {n};
    \node[left of=c1] (leftbath) {};
    \node[right of=cn] (rightbath) {};
    
    \foreach \xa/\xb in {c1/c2, c2/c3, c3/dots, dots/cn}{
        \draw[<->] (\xa) edge[bend right, line width=1pt] node [right] {} (\xb);
    }
    \draw [->, line join=round, line width=0.5pt,
        decorate, decoration={
            zigzag,
            segment length=4,
            amplitude=1.5,post=lineto,
            post length=2pt,
        }]  (c1) -- (leftbath) node [label={[text=black,yshift=-2pt]:$\sqrt{\Gamma_L}a_1$}] {};
    \draw [->, line join=round, line width=0.5,
        decorate, decoration={
            zigzag,
            segment length=4,
            amplitude=1.5,post=lineto,
            post length=2pt
        }, draw=black]  (cn) -- (rightbath) node [label={[text=black,yshift=-2pt]:$\sqrt{\Gamma_R}a_n$}] {};
    \foreach \x/\xsubscript in {c1/1, c2/2, c3/3, cn/n}{
        \node[above of=\x] (a\x) {$\sqrt{\delta}a_\xsubscript$};
        \draw [->, densely dotted, line join=round, line width=0.5pt,
        decorate, decoration={
            zigzag,
            segment length=4,
            amplitude=1.5,post=lineto,
            post length=2pt
        }]  (\x) -- (a\x) {};
        \node[below of=\x] (b\x) {$\sqrt{\delta}a^\dagger_\xsubscript$};
        \draw [->, densely dotted, line join=round, line width=0.5pt,
        decorate, decoration={
            zigzag,
            segment length=4,
            amplitude=1.5,post=lineto,
            post length=2pt
        }]  (\x) -- (b\x) {};
    }
    % Label panels (a) and (b) of figure, pretty hacky
    \node (diagramlabel) at ([shift=({-1.7cm,1.2cm})]c1) {\bfseries(a)};
    \node (plotlabel) at ([shift=({-1.7cm,-1.5cm})]c1) {\bfseries(b)};
    \end{tikzpicture}
    \includegraphics[scale=0.44]{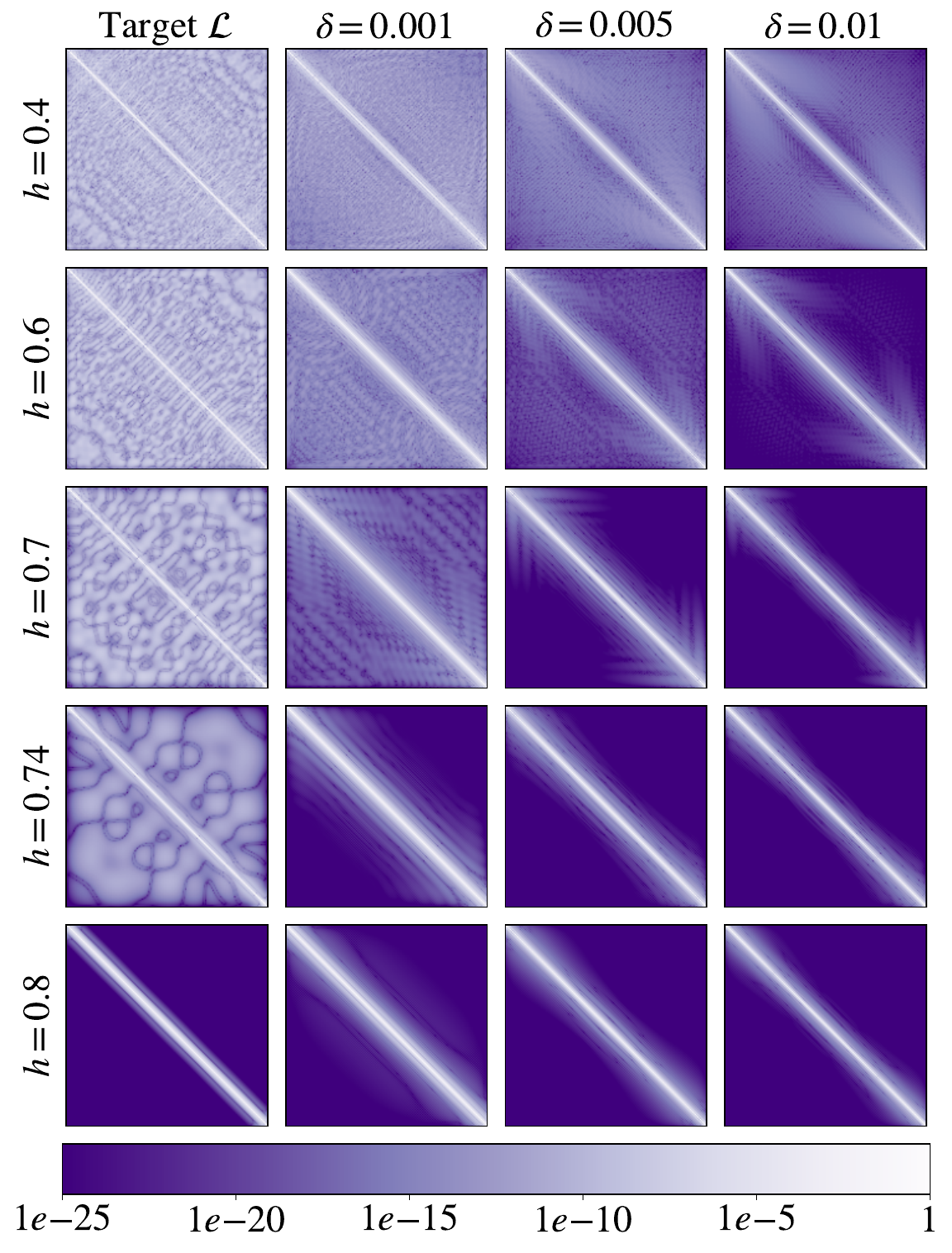}
    \caption{(a) Schematic of the open 1D fermion chain model described by Eq.~\ref{eq:fermion_chain_model}. Solid lines indicate hamiltonian (smooth) and dissipative (squiggly) terms in target model, and dotted lines indicate noise. (b) Numerically computed occupation covariance for pairs of fermions in the steady state of the noisy fermion chain model with $\gamma=0.5, \Gamma_R, \Gamma_L=0.5$. For each pair $x,y$ of fermions, we plot $\abs{\text{cov}(n_xn_y)}$. In the noiseless case ($\delta=0$), there is a phase change from short-range to long-range correlations as $h$ is brought lower than $h_c=1-\gamma^2=0.75$. When depolarizing noise is added ($\delta>0$), the phase transition point is effectively ``delayed" to lower values of $h$.}
    \label{fig:phase_change_with_noise}
\end{figure}

\section{Conclusion}
In conclusion, we have provided an analysis of analogue quantum simulation of physically motivated open quantum system simulation problems. Our analysis developed tools to study both a noiseless quantum simulator as well as a noisy quantum simulator and provided rigorous accuracy guarantees on the observables being measured on the quantum simulator. Furthermore, we also provided complexity theoretic evidence of classical hardness of the physically motivated problems that we considered. Our results provide theoretical evidence for the possibility of using near-term quantum devices for solving physically interesting and classically hard open system simulation problems, while remaining stable to noise. Our paper also introduces several new technical results that could be of independent interest --- the accuracy and stability guarantees that we develop in this paper were based on developing a mathematically rigorous adiabatic elimination analysis that could account for Lieb-Robinson bounds, and the classical hardness results were built on a quantum circuit to 2D Lindbladian encoding together with its convergence analysis.

Another promising direction investigating open system simulation problems on analog simulators can be inspired by Ref.~\cite{liu2024verifiable}, in which a protocol is proposed for verifying quantum advantage on analog hamiltonian simulators and is shown to have some robustness to noise in the simulation. This type of verification protocol is useful in the scenario that the quantum simulator is not implementing the claimed computation - either due to uncalibrated noise or if the operator of the simulator is adversarial to the party attempting to establish the presence of quantum advantage. In our work we make the physics-oriented assumption that the simulator implements the simulation as defined and that errors arise from noise that is well calibrated and known in advance. In future work, we hope to determine if the physically-motivated universal quantum simulation problem we consider can also be efficiently verified in the same sense as Ref.~\cite{liu2024verifiable}.

Our work leaves several important theoretical questions open --- most importantly, can we develop certifiable analogue simulation protocols for the analogue simulation for the fixed point of Lindbladians that are not rapidly mixing, and are these protocols stable to errors? A possible approach to this problem, which we leave for future work, is to identify a set of reasonable assumptions on the spectrum of Lindbladians, that may not be rapid mixing, but which still allow for analytical results related to their stability. An alternative could be studying Lindbladians showing dissipative phase transitions numerically near the phase transition, and understand if (and which) local observables are stable.

Furthermore, another direction would be extending the classical hardness results to 1D geometrically local Lindbladians by encoding a quantum circuit in the fixed point of such a Lindbladian, thereby making the case for quantum advantage in simpler (i.e.~1D) experimental setups. This should in principle be possible by adapting the techniques developed for the corresponding result that encodes a quantum circuit to the ground state of a 1D Hamiltonian \cite{aharonov2009power}. Furthermore, extending classical hardness results to translationally invariant Lindbladians, along the lines of similar results in Hamiltonian ground states \cite{gottesman_irani_2009}, also remain open.
\bibliographystyle{unsrt}
\bibliography{paper_bib}

\onecolumngrid
\appendix
\newpage
\section{Detailed proofs from section \ref{sec:analysis_noiseless}}
\label{appendix:analysis_noiseless_proofs}
\begin{replemma}{lemma:remainder}
For any $t > 0$, the remainder $\mathcal{R}_\omega(t)=\frac{d}{dt} \tr{\mathcal{A}}{\rho_\omega(t)}-\omega^2 \mathcal{L} \tr{\mathcal{A}}{\rho_\omega(t)}$ satisfies 
    \begin{subequations}
    \begin{align}
    \mathcal{R}_\omega(t) & 
        = \omega^2\sum_{\alpha}  e^{-2t} q_{\alpha} +\omega^4 \sum_{j \in \{1, 2\}}\sum_{\alpha} \int_0^t e^{-2(t - s)}\mathcal{Q}^{(j)}_{\alpha, H_\textnormal{sys}}(s)ds
        \nonumber \\
        & \quad 
        + \omega^4\sum_{j \in \{3, 4\}}\sum_{\substack{\alpha, \alpha'}}\int_0^t e^{-2(t - s)}Q_{\alpha, \alpha'}^{(j)}(s)ds.
    \end{align}
where
\begin{align*}
& q_\alpha = -\mathcal{D}_{L_\alpha}(\rho(0)), \\
&\mathcal{Q}_{\alpha, h}^{(1)}(t) = - \frac{1}{\omega}[L_\alpha^\dagger, [h, \tr{\mathcal{A}}{\sigma_\alpha \rho_\omega(t)}]] + \textnormal{h.c.}
\\
&\mathcal{Q}_{\alpha, h}^{(2)}(t) = -\frac{i}{2}[L_\alpha^\dagger, L_\alpha [h,\tr{\mathcal{A}}{\rho_{\omega}(t)}]] + \textnormal{h.c.}.\\
&\mathcal{Q}_{\alpha, \alpha'}^{(3)}(t) =
\frac{i}{\omega} \sum_{u \in \{+, -\}} \mathcal{D}_{L_\alpha}\big([L_{\alpha'}^{(u)}, \tr{\mathcal{A}}{\sigma_{\alpha'}^{(\bar{u})} \rho_\omega(t)}]\big),\\
&\text{If }\alpha = \alpha', \mathcal{Q}^{(4)}_{\alpha, \alpha'}(t) =
    \frac{2}{\omega^2}(\mathcal{D}_{L_\alpha^\dagger}-\mathcal{D}_{L_\alpha})(\tr{\mathcal{A}}{n_\alpha \rho_\omega(t)}), \\
&\text{If }\alpha \neq \alpha',  \mathcal{Q}_{\alpha, \alpha'}^{(4)}(t) = 
    -\frac{1}{\omega^2}\sum_{\substack{u, u' \\ \in \{+, -\}}} [L_\alpha^{(u)}, [L_{\alpha'}^{(u')}, \tr{\mathcal{A}}{\sigma_\alpha^{({\bar{u}})}\sigma_{\alpha'}^{(\bar{u}')}\rho_\omega(t)}]].
\end{align*}
\end{subequations}
\end{replemma}
\begin{proof}
In order to find $\mathcal{R}_\omega (t)$, we start from the equations of motion for $\tr{\mathcal{A}}{{\rho_\omega}(t)}$ and $\tr{\mathcal{A}}{\sigma_\alpha \rho_\omega(t)}$ as given in Eqs.~\ref{eq:dynamics}a,b. We first integrate Eq.~\ref{eq:dynamics}b to obtain
\begin{equation}\label{eq:tr_sigma_alpha_integrated_app}
    \tr{\mathcal{A}}{\sigma_\alpha \rho_\omega(t)} = \int_0^t e^{-2(t - s)} \bigg(
    -i \omega^2 [H_\text{sys}, \tr{\mathcal{A}}{\sigma_\alpha\rho_\omega(s)}] -i \omega L_\alpha \tr{\mathcal{A}}{\rho_\omega(s)} + \omega\sum_{\alpha'} E_{\alpha, \alpha'} (s) \bigg)ds,
\end{equation}
where we have used the fact that $\sigma_\alpha \rho_\omega(0) = 0$, since the ancillae are initialized in $\ket{0}$.
Next we note that, from integration by parts and the expression for $\frac{d}{dt}\tr{\mathcal{A}}{\rho_{\omega}(t)}$ in Eq.~\ref{eq:dynamics}a, it follows that
\begin{align}\label{eq:tr_sigma_alpha_int_part}
    &\int_0^t e^{-2(t - s)} L_\alpha \tr{\mathcal{A}}{\rho_\omega(s)}ds \nonumber\\
    &= \frac{1}{2}L_\alpha \tr{\mathcal{A}}{\rho_\omega(t)} -  \frac{e^{-2t}}{2}L_\alpha \tr{\mathcal{A}}{\rho_\omega(0)} - \frac{1}{2}\int_0^t e^{-2(t - s)}L_\alpha \frac{d}{ds}\tr{\mathcal{A}}{\rho_\omega(s)}ds, \nonumber\\
    & =\frac{1}{2}L_\alpha \tr{\mathcal{A}}{\rho_\omega(t)} -  \frac{e^{-2t}}{2}L_\alpha \tr{\mathcal{A}}{\rho_\omega(0)} - \frac{i\omega^2}{2}\int_0^t e^{-2(t - s)}L_\alpha [H_\text{sys}, \tr{\mathcal{A}}{\rho_\omega(s)}]ds + \frac{i\omega}{2}\sum_{\alpha'}\int_0^t e^{-2(t - s)}F_{\alpha, \alpha'}(s)ds,
\end{align}
where
\begin{equation}
    F_{\alpha, \alpha'}(s) = \sum_{u \in \{+,-\}} L_\alpha [L_{\alpha'}^{(u)}, \tr{\mathcal{A}}{\sigma_{\alpha'}^{(\bar{u})}\rho_\omega(s)}].
\end{equation}
From Eqs.~\ref{eq:tr_sigma_alpha_integrated_app} and \ref{eq:tr_sigma_alpha_int_part}, we now obtain that
\begin{align}
    \label{eq:tr_sigma_alpha_integrated_expanded}
    &\tr{\mathcal{A}}{\sigma_\alpha \rho_\omega(t)} = -\frac{i\omega}{2}L_\alpha \tr{\mathcal{A}}{\rho_\omega(t)} + \frac{i\omega}{2} e^{-2t} L_\alpha \rho(0)
    \nonumber \\
    &\qquad + \int_0^t e^{-2(t - s)}\Bigg( -i\omega^2 [H_\text{sys},\tr{\mathcal{A}}{\sigma_\alpha\rho_\omega(s))}] - \frac{\omega^3}{2} L_\alpha [H_\text{sys}, \text{Tr}_{\mathcal{A}}(\rho_{\omega}(s))] + \sum_{\alpha'} \bigg(\omega{E_{\alpha, \alpha'}(s)} + \frac{\omega^2}{2} {F_{\alpha, \alpha'}(s)}\bigg) \Bigg)ds.
\end{align}
 We now consider the remainder $\mathcal{R}_\omega(t)$ --- using the definition of the remainder (Eq.~\ref{eq:remainder_def}) together with Eq.~\ref{eq:dynamics}a, we obtain that
\begin{align}\label{eq:simplified_remainder}
    \mathcal{R}_\omega(t) & = \frac{d}{dt}\tr{\mathcal{A}}{\rho_\omega(t)} - \omega^2 \mathcal{L}\tr{\mathcal{A}}{\rho_\omega(t)} =  \sum_{\alpha}\bigg(\left(-i\omega[L_\alpha^\dagger, \tr{\mathcal{A}}{\sigma_\alpha {\rho}_\omega(t)}] + \text{h.c}\right)- \omega^2 \mathcal{D}_{L_\alpha} \tr{A}{\rho_\omega(t)}\bigg).
\end{align}
It follows from Eq.~\ref{eq:tr_sigma_alpha_integrated_expanded} that
\begin{align}
    \label{eq:expressions_commutator}
     &-i\omega[L_\alpha^\dagger, \tr{\mathcal{A}}{\sigma_\alpha \rho_\omega(t)}] = -\frac{\omega^2}{2}[L_\alpha^\dagger, L_\alpha \tr{\mathcal{A}}{\rho_\omega(t)}] +\frac{\omega^2}{2}e^{-2t}[L_\alpha^\dagger, L_\alpha \rho(0)] + \int_0^t e^{2(t - s)}\bigg(-\omega^3 [L_\alpha^\dagger, [H_\text{sys}, \tr{\mathcal{A}}{\sigma_\alpha\rho_\omega(t)} ] \nonumber\\
     &\qquad \qquad -i\frac{\omega^4}{2}[L_\alpha^\dagger, L_\alpha[H_\text{sys}, \tr{\mathcal{A}}{\rho_\omega(s)}]] - i \sum_{\alpha'}\bigg(\omega^2[L_\alpha^\dagger, E_{\alpha, \alpha'}(s)] + \frac{\omega^3}{2}[L_\alpha^\dagger, F_{\alpha, \alpha'}(s)]\bigg)\bigg) ds, 
\end{align}
Using Eqs.~\ref{eq:simplified_remainder} and \ref{eq:expressions_commutator}a, b, we note that $-[L_\alpha^\dagger,L\tr{\mathcal{A}}{\rho_{\omega}(s)}]/2 + \text{h.c} = \mathcal{D}_{L_\alpha}$ and so arrive at
\begin{align}
     \mathcal{R}_\omega(t) = \sum_{\alpha} \left(\omega^2 e^{-2t} q_{\alpha} +  \omega^4 \int_0^t e^{-2(t-s)} \left( {\mathcal{Q}^{(1)}_{\alpha,H_\textnormal{sys}}(s)} + \omega^4 \mathcal{Q}^{(2)}_{\alpha,H_\textnormal{sys}}(s) \right) ds \right) + \omega^4\sum_{\alpha, \alpha'} \int_0^t e^{-2(t - s)}\big({\mathcal{Q}^{(3)}_{\alpha, \alpha'}(s)} +  \mathcal{Q}^{(4)}_{\alpha, \alpha'}(s)\big) ds,
\end{align}
where
\begin{align}
    q_\alpha & := \frac{1}{2}[L_\alpha^\dagger, L_\alpha \rho(0) ] + \text{h.c.},
    \nonumber \\
    \mathcal{Q}^{(1)}_{\alpha,H_\textnormal{sys}}(t) & := -\frac{1}{\omega}\left[ L_\alpha^\dagger, [H_\text{sys}, \tr{\mathcal{A}}{\sigma_\alpha\rho_\omega(t)}]\right] + \text{h.c.},
    \nonumber \\
    \mathcal{Q}^{(2)}_{\alpha,H_\textnormal{sys}}(t) & := -\frac{i}{2}[L_\alpha^\dagger, L_\alpha [H_\text{sys},\tr{\mathcal{A}}{\rho_{\omega}(t)}]] + \text{h.c.},
    \nonumber \\
    \mathcal{Q}^{(3)}_{\alpha, \alpha'}(t) & := \frac{-i}{2\omega}[L_\alpha^\dagger, F_{\alpha, \alpha'}(t)] + \text{h.c},
    \nonumber \\
    \mathcal{Q}^{(4)}_{\alpha, \alpha'}(t) & := -\frac{i}{\omega^2} [L_\alpha^\dagger, E_{\alpha, \alpha'}(t)] + \text{h.c.}.
\end{align}
The expressions for $q_\alpha$, $\mathcal{Q}_{\alpha,H_{sys}}^{(1)}(t)$, $\mathcal{Q}_{\alpha,H_{sys}}^{(2)}(t)$, $\mathcal{Q}_{\alpha,\alpha'}^{(3)}(t)$, and $\mathcal{Q}_{\alpha,\alpha'}^{(4)}(t)$ can be seen to be the same as those provided in the lemma statement.
\end{proof}

\begin{replemma}{lemma:tr_sigma_bounds}
    Suppose $\rho_\omega(t)$ is the joint state of the system and the ancilla qubits with the ancilla qubits initially being in state $\ket{0}$, then for all $\alpha, \alpha'$
    \begin{align*}
        &\norm{\sigma_{\alpha}\rho_\omega(t)}_1 \leq \frac{\omega}{2} \text{ and }\norm{\tr{\mathcal{A}}{\sigma_{\alpha}^\dagger \sigma_{\alpha'}\rho_\omega(t)}}_1,
        \norm{\tr{\mathcal{A}}{\sigma_{\alpha} \sigma_{\alpha'}\rho_\omega(t)}}_1\leq \frac{\omega^2}{4}.
    \end{align*}
\end{replemma}
\begin{proof}
    The proof of the bound on $\norm{\sigma_\alpha\rho_{\omega}(t)}_1$ is provided after the statement of Lemma~\ref{lemma:tr_sigma_bounds} in Subsection~\ref{subsec:rigorous_adiabatic_elimination}. Here we bound $\norm{\tr{\mathcal{A}}{\sigma_\alpha^\dagger \sigma_{\alpha'}\rho_{\omega}(t)}}$ and $\norm{\tr{\mathcal{A}}{\sigma_\alpha \sigma_{\alpha'}\rho_{\omega}(t)}}$ for any $\alpha,\alpha'$. It is convenient to define the ``Heisenberg-like" picture of any operator or superoperator $O$ to be $O(t):=\mathcal{E}_{\omega}^{-1}(t,0)O\mathcal{E}_{\omega}(t,0)$. Accordingly we may write in vectorized form that, for $u \in \{-,+\}$,
    \[
    \frac{d}{dt}\left(\sigma_{\alpha,u}^{(u)}(t)\sigma_{\alpha',l}(t)\right) = [\sigma_{\alpha,u}^{(u)}(t) \sigma_{\alpha',l}(t),\mathcal{L_{\omega}}(t)] = -4 \sigma_{\alpha,u}^{(u)}(t) - i\omega L_{\alpha',l}(t)\sigma_{\alpha',l}^z(t) \sigma_{\alpha,u}^{(u)}(t) + u i \omega L_{\alpha,u}^{(u)}(t) \sigma_{\alpha,u}^z(t) \sigma_{\alpha',l}(t),
    \]
    where we interpret the subscripts as $-=l,+=r$ and $\sigma_{\alpha}^z:=[\sigma_\alpha,\sigma_\alpha^\dagger]$ is the pauli-Z operator acting on ancilla qubit $\alpha$. Integrating the above equation, we obtain a relation between $\sigma_{\alpha,u}^{(u)}(t)\sigma_{\alpha,l}(t)$ and $\sigma_{\alpha,u}^{(u)}\sigma_{\alpha,l}$,
    \[
        \sigma_{\alpha,u}^{(u)}(t)\sigma_{\alpha,l}(t) = e^{-4t} \sigma_{\alpha,u}^{(u)} \sigma_{\alpha',l} + i \omega \int_0^t e^{-4(t-t')} \left(-L_{\alpha',l}(t')\sigma_{\alpha',l}^z(t') \sigma_{\alpha,u}^{(u)}(t') + uL_{\alpha,u}^{(u)}(t') \sigma_{\alpha,u}^z(t')\sigma_{\alpha',l}(t')\right) dt'.
    \]
    We express the bound on $\norm{\tr{\mathcal{A}}{\sigma_{\alpha}^{(u)} \sigma_{\alpha'}\rho_{\omega}(t)}}_1$ in vectorized form using the above equation as
    \begin{align}
    & \norm{\vecbra{\text{Tr}_{\mathcal{A}}}\sigma_{\alpha,l}^{(u)} \sigma_{\alpha',l} \vecket{\rho_{\omega}(t)}}_1 = \norm{\vecbra{\text{Tr}_{\mathcal{A}}}\mathcal{E}_{\omega}(t,0) \sigma_{\alpha,u}^{(u)}(t)\sigma_{\alpha,l}(t) \vecket{\rho(0)}}_1
    \nonumber \\
    & \quad \leq \omega \int_0^t e^{-4(t-t')} \left(\norm{\vecbra{\text{Tr}_{\mathcal{A}}}\mathcal{E}_{\omega}(t,0) L_{\alpha',l}(t')\sigma_{\alpha',l}^z(t') \sigma_{\alpha,u}^{(u)}(t') \vecket{\rho(0)}}_1 + \norm{\vecbra{\text{Tr}_{\mathcal{A}}}\mathcal{E}_{\omega}(t,0) L_{\alpha,u}^{(u)}(t') \sigma_{\alpha,u}^z(t')\sigma_{\alpha',l}(t')\vecket{\rho(0)}}_1\right) dt',
    \end{align}
    where we have used the assumption that $\rho(0)$ is initialized with all ancilla in vacuum to apply $\sigma_{\alpha',l}\vecket{\rho(0)}=0$. Using the bound $\norm{\sigma_\alpha \rho_{\omega}(t')}_1\leq\omega/2$ for any $\alpha$ from the first part of Lemma~\ref{lemma:tr_sigma_bounds} in conjunction with the contracting effect of channels on the 1-norm, we have $\norm{\sigma_{\alpha,l}(t)\vecket{\rho(0)}}_1,\norm{\sigma_{\alpha,r}^\dagger(t)\vecket{\rho(0)}}_1\leq\omega/2$ for any $\alpha$ and $t\geq0$. Inserting these bounds into the above equation and applying the assumption $\norm{L_\alpha} \leq 1$, we find
    \begin{align}
        \norm{\tr{\mathcal{A}}{\sigma_\alpha^{(u)}\sigma_\alpha\rho(t)}}_1 \leq \omega \int_0^t e^{-4(t-t')} \left( \frac{\omega}{2} + \frac{\omega}{2}\right)dt' \leq \frac{\omega^2}{4}.
    \end{align}
    The lemma statement is obtained by either setting $u=-$ or $u=+$.
\end{proof}
\section{Detailed proofs of lemmas from section \ref{sec:geometric_locality}}
\label{appendix:geometric_locality_lemma_proofs}
\noindent Throughout this appendix, we will focus on geometrically local Lindbladians introduced in section~\ref{sec:geometric_locality}. These will be Lindbladians of the form 
\[
\mathcal{L} = \sum_\alpha \mathcal{L}_\alpha,
\]
where we will assume that $\mathcal{L}_\alpha(X) = -i[h_\alpha, X] + \mathcal{D}_{L_\alpha}(X)$ for operators $h_\alpha, L_\alpha$ supported on $S_\alpha$. Furthermore, we will assume that diameter $\text{diam}(S_\alpha) \leq a$ for all $\alpha$, and the degree $\partial(S_\alpha) = \{\alpha': S_\alpha \cap S_{\alpha'} = \emptyset\} \leq \mathcal{Z}$ for all $\alpha$. As was mentioned in section~\ref{sec:geometric_locality}, a key ingredient in our analysis is the Lieb-Robinson bound from Ref.~\cite{barthel_kliesch_2012}.
\begin{replemma}{lemma:lieb_robinson}
Suppose $\mathcal{K}$ is a superoperator supported on region $S_\mathcal{K}$ and satisfies $\mathcal{K}(I) = 0$, and $O$ is a local observable, with $\norm{O} \leq 1$, supported on region $S_O$, then
\begin{align*}
    \norm{\mathcal{K}  e^{\mathcal{L}^\dagger t}(O)} \leq \eta_{S_O} \norm{\mathcal{K}}_{cb,\infty\to\infty}\exp\bigg(4e\mathcal{Z}t - \frac{d({S_\mathcal{K}, S_O})}{a}\bigg).
\end{align*}
\end{replemma}
\noindent An immediate consequence of this lemma is the following modified ``2-point" Lieb-Robinson bound.
\begin{replemma}{lemma:2superopLR}
$\mathcal{K}_X$ and $\mathcal{K}_Y$ are superoperators supported on regions $X$ and $Y$ such that $\mathcal{K}_X(I) = \mathcal{K}_Y(I) = 0$ and $\textnormal{diam}(X),\textnormal{diam}(Y)\leq a$, then for any local observable $O$, with $\norm{O} \leq 1$, supported on region $S_O$
\begin{align*}
    &\norm{\mathcal{K}_X \mathcal{K}_Y e^{\mathcal{L}^\dagger t}(O)} \leq
    e \eta_{S_O}\norm{\mathcal{K}_X}_{cb,\infty\to\infty} \norm{\mathcal{K}_Y}_{cb,\infty\to\infty} \exp\bigg(4e\mathcal{Z}t - \frac{1}{2a}(d({X, S_O}) + d({Y, S_O}))\bigg).
\end{align*}
\end{replemma}
\begin{proof} Note that
\begin{align}\label{eq:X_first_then_Y}
    \norm{\mathcal{K}_X \mathcal{K}_Y e^{\mathcal{L}^\dagger t}(O)}  &\leq \norm{\mathcal{K}_X}_{cb,\infty\to\infty} \norm{\mathcal{K}_Y e^{\mathcal{L}^\dagger t}(O)}
 \leq \frac{\partial(S_O)}{\mathcal{Z}} \norm{\mathcal{K}_X}_{cb,\infty\to\infty}\norm{\mathcal{K}_Y}_{cb,\infty\to\infty}\exp\bigg(4e\mathcal{Z}t - \frac{1}{a} d(Y, S_O)\bigg).
\end{align}
Next, consider the case $X \cap Y = \emptyset$, then $\mathcal{K}_X \mathcal{K}_Y = \mathcal{K}_Y \mathcal{K}_X$. In this case, the above equation holds with $X$ and $Y$ swapped i.e.
\begin{align}\label{eq:Y_first_then_X}
    \norm{\mathcal{K}_X \mathcal{K}_Y e^{\mathcal{L}^\dagger t}(O)}    \leq \frac{\partial(S_O)}{\mathcal{Z}}\norm{\mathcal{K}_X}_{cb,\infty\to\infty}\norm{\mathcal{K}_Y}_{cb,\infty\to\infty}\exp\bigg(4e\mathcal{Z}t - \frac{1}{a} d(X, S_O)\bigg).
\end{align}
Multiplying Eqs.~\ref{eq:X_first_then_Y} with \ref{eq:Y_first_then_X}, we obtain that
\begin{align}\label{eq:final_result_non_intersect}
 \norm{\mathcal{K}_X \mathcal{K}_Y e^{\mathcal{L}^\dagger t}(O)} \leq \frac{\partial(S_O)}{\mathcal{Z}}\norm{\mathcal{K}_X}_{cb,\infty\to\infty}\norm{\mathcal{K}_Y}_{cb,\infty\to\infty}\exp\bigg(4e\mathcal{Z}t - \frac{1}{2a} \big(d(X, S_O) + d(Y, S_O)\big)\bigg)
\end{align}
Consider now the case when $X\cap Y \neq \emptyset$. In this case, we can treat $\mathcal{K}_X \mathcal{K}_Y$ as a superoperator supported on $X \cup Y$. From the Lieb-Robinson bounds, it then follows that
\begin{align}\label{eq:lieb_robinsion_union_region}
    \norm{\mathcal{K}_X \mathcal{K}_Y e^{\mathcal{L}^\dagger t}(O)} & \leq \frac{\partial(S_O)}{\mathcal{Z}}\norm{\mathcal{K}_X \mathcal{K}_Y}_{cb,\infty\to\infty} \exp\bigg(4e\mathcal{Z}t- \frac{1}{a}d(X \cup Y, S_O)\bigg).
\end{align}
We note that $\norm{\mathcal{K}_X \mathcal{K}_Y}_{cb,\infty\to\infty} \leq \norm{\mathcal{K}_X }_{cb,\infty\to\infty}\norm{\mathcal{K}_Y}_{cb,\infty\to\infty} $. Next, we note that
\[
    d(X \cup Y, S_O) = \min(d(X,S_O), d(Y,S_O) = \min(d(x_0, S_O), d(y_0,S_O)),
\]
where $x_0 = \textnormal{argmin}_{x\in X}d(x,S_O)$ and $y_0 = \textnormal{argmin}_{y\in Y}d(y,S_O)$. Since $\textnormal{diam}(X),\textnormal{diam}(Y)\leq a$, we have $d(x_0,y_0) \leq a$ and so $\min(d(x_0, S_O), d(y_0,S_O)) \geq d(x_0, S_O) - d(x_0, y_0) = d(X,S_0) - a$. By the same logic we also have $\min(d(x_0, S_O), d(y_0,S_O)) \geq d(Y,S_0) - a$ and consequently we have that
\[
    d(X \cup Y, S_O) \geq \frac{1}{2}(d(X(S_O)) + d(Y(S_O)))-a.
\]
Therefore, from Eq.~\ref{eq:lieb_robinsion_union_region}, we obtain that when $X \cap Y \neq \emptyset$,
\begin{align}\label{eq:final_result_intersect}
\norm{\mathcal{K}_X\mathcal{K}_Y e^{\mathcal{L}^\dagger t}(O)} \leq \frac{e  \partial(S_O)}{\mathcal{Z}} \norm{\mathcal{K}_X}_{cb,\infty\to\infty}\norm{\mathcal{K}_Y}_{cb,\infty\to\infty}\exp\bigg(4e\mathcal{Z}t - \frac{1}{2a}(d(X, S_O) + d(Y, S_O))\bigg).
\end{align}
From Eqs.~\ref{eq:final_result_non_intersect} and \ref{eq:final_result_intersect}, the lemma statement follows.
\end{proof}

An upper bound on the sum of an exponential function over a $d-$dimensional lattice is a useful technical result.
\begin{lemma}\label{lemma:lr_summation}
    Suppose $\tilde{S}_\alpha$, for $\alpha \in \mathbb{N}$, are subsets of $\mathbb{Z}^d$ which satisfy that for some $\tilde{\mathcal{Z}}> 0$, $\partial_{\tilde{S}_\alpha} = \abs{\{\alpha' : \tilde{S}_\alpha \cap \tilde{S}_{\alpha'} \neq \emptyset\}} \leq \tilde{\mathcal{Z}}$ for all $\alpha \in \mathbb{N}$. Suppose $S \subseteq \mathbb{Z}^d$ with $\abs{S} < \infty$ is a fixed subset of $\mathbb{Z}^d$. For any $k \in \{1, 2 \dots \}$, $m \in \{0, 1\}$ and $T, \lambda, x \geq 0$, define
    \[
    \xi^{(m, k)}_{\lambda, x}(T) = \sum_{\alpha_1, \alpha_2 \dots \alpha_k} d_{\alpha_1, \alpha_2 \dots \alpha_k}^m \min\bigg(x\exp\bigg(T - \frac{1}{\lambda} d_{\alpha_1, \alpha_2 \dots \alpha_k}\bigg), 1\bigg),
    \]
    where $d_{\alpha_1, \alpha_2 \dots \alpha_k} = \sum_{j  =1}^k d(S, \tilde{S}_{\alpha_j})$. Then, 
    \[
    \xi^{(m, k)}_{\lambda, x}(T)\leq \max(x, 1)^k k^m \big(\nu^{(0)}(\lambda, T)\big)^{k - m}\big(\nu^{(1)}(\lambda, T)\big)^{m},
    \]
    where,  $\nu^{(m)}(\lambda, T)$ are piecewise continuous non-decreasing functions of $\lambda$ (for a fixed $T$) and $T$ (for a fixed $\lambda$) which only depend on $\tilde{\mathcal{Z}}$ and $d$. Furthermore, 
    \[
    \nu^{(m)}(\lambda, T) \leq O((\lambda T)^{m + d}) \text{ as }\lambda, T \to \infty \text{ and }\nu^{(m)}(\lambda, 0) \leq O(\lambda^{m + d}) \text{ as }\lambda \to \infty.
    \]
\end{lemma}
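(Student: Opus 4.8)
The plan is to reduce the multi-index lattice sum to a product of single-index sums via the elementary inequality $\min(y,1)\le y^{s}$, valid for all $y\ge0$ and $s\in(0,1]$, and then to tune $s$ as a function of $T$ so that the threshold factor $e^{T}$ never produces exponential-in-$T$ growth. The geometric input I would record first is a polynomial volume bound: since $d(S,\tilde S_\alpha)$ is a minimum distance, each $\tilde S_\alpha$ has a point $p_\alpha$ realizing it, and because any family of sets sharing a common lattice site pairwise intersect, the hypothesis $\partial_{\tilde S_\alpha}\le\tilde{\mathcal Z}$ forces at most $\tilde{\mathcal Z}$ of the sets to contain any fixed site. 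Hence $\alpha\mapsto p_\alpha$ is at most $\tilde{\mathcal Z}$-to-one and $N(R):=\abs{\{\alpha:d(S,\tilde S_\alpha)\le R\}}\le\tilde{\mathcal Z}\,\abs{\{p\in\mathbb{Z}^{d}:d(S,p)\le R\}}\le C(\tilde{\mathcal Z},d,S)(R+1)^{d}$ by the polynomial volume growth of $\mathbb{Z}^{d}$. Summation by parts then gives, for every $s\in(0,1]$,
\[
A_{s}:=\sum_{\alpha}e^{-s\,d(S,\tilde S_\alpha)/\lambda}\le O\big((\lambda/s)^{d}\big),\qquad B_{s}:=\sum_{\alpha}d(S,\tilde S_\alpha)\,e^{-s\,d(S,\tilde S_\alpha)/\lambda}\le O\big((\lambda/s)^{d+1}\big),
\]
with constants depending only on $\tilde{\mathcal Z},d$ and the fixed $S$.

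Next I would carry out the factorization. Applying $\min(y,1)\le y^{s}$ with $y=x\exp(T-\tfrac1\lambda\sum_j d(S,\tilde S_{\alpha_j}))$ gives $\xi^{(m,k)}_{\lambda,x}(T)\le (xe^{T})^{s}\sum_{\vec\alpha}\big(\sum_j d(S,\tilde S_{\alpha_j})\big)^{m}\prod_j e^{-s\,d(S,\tilde S_{\alpha_j})/\lambda}$. For $m=0$ this factorizes directly into $A_{s}^{k}$; for $m=1$ I would expand $\big(\sum_j d(S,\tilde S_{\alpha_j})\big)=\sum_{j'}d(S,\tilde S_{\alpha_{j'}})$ and use the symmetry of the summand under relabeling, so that the $k$ resulting terms are equal and a single distinguished index carries the extra distance factor, producing the combinatorial prefactor $k^{m}$ together with $B_{s}A_{s}^{k-1}$. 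In either case $\xi^{(m,k)}_{\lambda,x}(T)\le k^{m}(xe^{T})^{s}B_{s}^{m}A_{s}^{k-m}$ (this is exactly where the restriction $m\in\{0,1\}$ is used, since $m\ge2$ would generate cross terms). I would then bound $x^{s}\le\max(x,1)^{k}$ using $s\le1\le k$, and choose $s_{*}=\min(1,1/T)$ so that $e^{s_{*}T}\le e$; writing $e^{s_{*}T}\le e\le e^{k}=(e)^{k-m}(e)^{m}$ distributes the threshold factor evenly over the $k$ single-index sums to give $\xi^{(m,k)}_{\lambda,x}(T)\le\max(x,1)^{k}k^{m}(eA_{s_{*}})^{k-m}(eB_{s_{*}})^{m}$. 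Taking $\nu^{(0)}(\lambda,T)$ and $\nu^{(1)}(\lambda,T)$ to be explicit non-decreasing, piecewise-continuous polynomial majorants of $eA_{s_{*}(T)}$ and $eB_{s_{*}(T)}$ of degrees $d$ and $d+1$ in $\lambda\max(1,T)$ yields the claimed inequality; the asymptotics $O((\lambda T)^{m+d})$ as $\lambda,T\to\infty$ and $O(\lambda^{m+d})$ at $T=0$ (where $s_{*}=1$) follow from the bounds on $A_{s},B_{s}$, and monotonicity in each argument follows because $s_{*}/\lambda$ is decreasing in both $\lambda$ and $T$.

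The main obstacle is the interplay between the $e^{T}$ threshold and the far-distance tail. A naive split of the sum at total distance $\lambda T$ into $\min=1$ and $\min=$ exponential bounds the tail by pulling $e^{T}$ out as a global constant, which spuriously reproduces an $e^{T}$ growth rather than the correct $O((\lambda T)^{kd+m})$ scaling, because the decay beyond the threshold is needed both to cancel $e^{T}$ and to converge the tail simultaneously, and it cannot do both unless one tracks the $k$-fold convolution of the shell counts. The device that sidesteps this is the tunable exponent: choosing $s_{*}\sim 1/T$ keeps $e^{s_{*}T}=O(1)$ while $s_{*}$ stays large enough that $A_{s_{*}},B_{s_{*}}$ remain polynomial of the correct degree in $\lambda\max(1,T)$, so factorization and the degree count hold at once. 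The remaining effort is bookkeeping — verifying the $x^{s}\le\max(x,1)^{k}$ estimate, the even distribution of $e^{k}$, and that the majorants $\nu^{(m)}$ can be chosen monotone, piecewise continuous, and dependent only on $\tilde{\mathcal Z}$ and $d$.
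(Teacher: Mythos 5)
Your proof is correct, but it takes a genuinely different route from the paper's. The paper first factorizes the multi-index sum using the inequality $\min(Ax_1\cdots x_k,1)\le\prod_i\min(Ax_i,1)$ (valid for $A>1$, $0<x_i\le1$), then converts each single-index sum to a sum over distance shells via the same $\tilde{\mathcal Z}$-to-one counting argument you use, and finally splits that shell sum at $n=\lambda T$: the head is bounded by $\min(\cdot,1)\le1$, and the tail by $\exp(T-n/\lambda)$ with the shift $n\to n+\lfloor\lambda T\rfloor$, so that $e^{T-\lfloor\lambda T\rfloor/\lambda}\le1$ and the remaining decay at rate $1/\lambda$ converges the tail. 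This means the ``naive split'' you dismiss in your obstacle discussion is in fact exactly what the paper does, and it does not spuriously produce $e^{T}$ growth --- the exponential is never pulled out as a global constant; it equals $1$ at the threshold and decays beyond it, simultaneously cancelling $e^{T}$ and converging the tail. Your alternative device --- the interpolation $\min(y,1)\le y^{s}$ with the tuned exponent $s_*=\min(1,1/T)$ so that $e^{s_*T}\le e$ while $A_{s_*},B_{s_*}$ stay polynomial of the right degree in $\lambda\max(1,T)$ --- is equally valid and arguably cleaner: the product over $j$ of exponentials factorizes exactly (no $\min$-product inequality needed), the $m\in\{0,1\}$ restriction is transparently located in the symmetrization step producing the $k^{m}$ prefactor, the resulting majorants are genuinely continuous rather than merely piecewise continuous (the paper's involve $\lfloor\lambda T\rfloor$ and $\lceil\lambda T\rceil$), and monotonicity follows in one line from $s_*/\lambda$ being non-increasing in both arguments. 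The degrees $O((\lambda T)^{d})$ and $O((\lambda T)^{d+1})$ for $\nu^{(0)},\nu^{(1)}$, the $T=0$ asymptotics, and the bookkeeping steps $x^{s_*}\le\max(x,1)^{k}$ and $e\le e^{k-m}e^{m}$ all check out. One cosmetic remark: like the paper's construction, your constants carry a factor of $\abs{S}$, so the $\nu^{(m)}$ depend on the fixed set $S$ as well as on $\tilde{\mathcal Z}$ and $d$; the lemma statement's phrasing glosses over this in both cases.
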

\begin{proof}
    We can reduce the multiple summation to a single summation --- and important fact that we will use here is that for $0 < x_1, x_2 \dots x_k \leq 1$ and $A > 1$,
    \begin{align}\label{eq:min_inequality}
    \min(Ax_1 x_2 \dots x_k, 1) \leq \min(Ax_1, 1) \min(Ax_2, 1) \dots \prod_{i = 1}^k \min(Ax_i, 1). 
    \end{align}
    To see that Eq.~\ref{eq:min_inequality} is true, consider two cases --- if $Ax_1 x_2 \dots x_k \geq 1$, then for any $i \in [1:k]$,
    \[
    Ax_i \geq \frac{1}{x_1 \dots x_{i - 1} x_{i + 1} \dots x_k}\geq 1 \implies \min(Ax_i, 1) = 1
    \]
    and hence satisfies Eq.~\ref{eq:min_inequality}. On the other hand if $Ax_1 x_2 \dots x_k \leq 1$, then we again have that
    \[
    \min(Ax_1 x_2 \dots x_k, 1) = Ax_1 x_2 \dots x_k \leq \prod_{i \in S} Ax_i \text{ for any } S \subseteq [1:k] \implies \min(Ax_1 x_2 \dots x_k, 1) \leq \prod_{i = 1}^k\min(Ax_i, 1). 
    \]
    where we have used that $0 < x_i \leq 1$ and $A > 1$.
    Now, noting that for any $x, y > 0$, $\min(y, x) \leq \max(x, 1) \min(y, 1)$. Using this together with Eq.~\ref{eq:min_inequality}, we obtain that,
    \begin{subequations}\label{eq:single_sum_to_multiple}
    \begin{align}
    \xi^{(0, k)}_{\lambda, x}(T) &= \sum_{\alpha_1, \alpha_2 \dots \alpha_k}\min\bigg(x\exp\bigg(T - \frac{1}{\lambda}d_{\alpha_1, \alpha_2 \dots \alpha_k}\bigg), 1\bigg) \nonumber \\
    &\leq \max(x, 1) \sum_{\alpha_1, \alpha_2 \dots \alpha_k}\prod_{i = 1}^k \min\bigg(\exp\bigg(T -\frac{1}{\lambda} d(\tilde{S}_{\alpha_i}, S)\bigg), 1\bigg), \nonumber\\
    & \leq \max(x, 1)^k \bigg(\sum_\alpha \min\bigg(\exp\bigg(T -\frac{1}{\lambda} d(\tilde{S}_{\alpha}, S)\bigg), 1\bigg)\bigg)^k.
    \end{align}
    and
    \begin{align}
    \xi^{(1, k)}_{\lambda, x}(T) &= \sum_{\alpha_1, \alpha_2 \dots \alpha_k}d_{\alpha_1, \alpha_2 \dots \alpha_k}\min\bigg(\exp\bigg(T - \frac{1}{\lambda}d_{\alpha_1, \alpha_2 \dots \alpha_k}\bigg), x\bigg) \nonumber \\
    &\leq \max(x, 1) \sum_{\alpha_1, \alpha_2 \dots \alpha_k}\bigg(\sum_{i = 1}^k d(\tilde{S}_{\alpha_i}, S)\bigg)\prod_{i = 1}^k \min\bigg(\exp\bigg(T -\frac{1}{\lambda} d(\tilde{S}_{\alpha_i}, S)\bigg), 1\bigg), \nonumber\\
    &= k\max(x, 1) \sum_{\alpha_1, \alpha_2 \dots \alpha_k}d(\tilde{S}_{\alpha_1}, S)\prod_{i = 1}^k \min\bigg(\exp\bigg(T -\frac{1}{\lambda} d(\tilde{S}_{\alpha_i}, S)\bigg), 1\bigg), \nonumber\\
    & \leq k\max(x, 1)^k \bigg(\sum_\alpha \min\bigg(\exp\bigg(T -\frac{1}{\lambda} d(\tilde{S}_{\alpha}, S)\bigg), 1\bigg)\bigg)^{k - 1}\sum_\alpha d(\tilde{S}_{\alpha}, S)\min\bigg(\exp\bigg(T -\frac{1}{\lambda} d(\tilde{S}_{\alpha}, S)\bigg), 1\bigg).
    \end{align}
    \end{subequations}
    Having reduced the multiple summation to a single summation, we re-express the single summation with respect to $\alpha$ with a summation with respect to the grid distances. In particular, we note that
    \begin{align}\label{eq:summation_to_distance}
        \sum_{\alpha} d^m(\tilde{S}_\alpha, S)\min\bigg(\exp\bigg(T - \frac{1}{\lambda} d(\tilde{S}_{\alpha}, S)\bigg), 1 \bigg) = \sum_{n = 0}^\infty n^m\min\bigg(\exp\bigg(T - \frac{1}{\lambda}n\bigg), 1 \bigg) \bigabs{\{\alpha : d(\tilde{S}_\alpha, S) = n\}}.
    \end{align}
    We next upper bound $\abs{\{\alpha: d(\tilde{S}_\alpha, S) = n\}}$ --- note that
    \begin{align}\label{eq:upper_bound_d_Salpha_S}
        \abs{\{\alpha : d(\tilde{S}_\alpha, S) = n\}} \numleq{1} \tilde{\mathcal{Z}}\abs{\{x \in \mathbb{Z}^d : d(x, S) = n\}} \numleq{2} \tilde{\mathcal{Z}}\abs{S}\abs{\{x \in \mathbb{Z}^d : d(x, 0) = n\}},
    \end{align}
    where in (1) we have used the fact that any one site $x \in \mathbb{Z}^d$ can be in at most $\tilde{\mathcal{Z}}$ sets $\tilde{S}_\alpha$ (since otherwise there would be more than $\tilde{\mathcal{Z}}$ sets $\tilde{S}_\alpha$ that would intersect each other), and in (2) we have simply upper bounded the number of sites at a distance of $n$ from $S$ by the number of sites at a distance of $n$ from any site in $S$.

    A bound on $\abs{\{x \in \mathbb{Z}^d : d(x, 0) = n\}}$ can be obtained by noting that, since $d(x, y) = \sum_{i = 1}^d \abs{x_i - y_i}$ is the Manhattan distance on the lattice, 
    \begin{align}\label{eq:upper_bound_circum_zd}
    \abs{\{x \in \mathbb{Z}^d : d(x, 0) = n\}} &= \bigabs{\bigg\{x \in \mathbb{Z}^d : \sum_{i = 1}^d \abs{x_i} = n\bigg\}}\leq 2^d \bigabs{\bigg\{x\in \mathbb{N}_0^d : \sum_{i = 1}^d x_i = n \bigg\}}, \nonumber \\
    &= 2^d {n +d - 1\choose d - 1} \leq \frac{2^d}{(d - 1)!}(n + d - 1)^{d - 1}.
    \end{align}
    Combining Eqs.~\ref{eq:upper_bound_d_Salpha_S} and \ref{eq:upper_bound_circum_zd}, we obtain that
    \[
    \abs{\{\alpha:d(\tilde{S}_\alpha, S) = n\}} \leq \frac{2^d\tilde{\mathcal{Z}}\abs{S}}{(d - 1)!}(n + d - 1)^{d - 1}.
    \]
    Returning to Eq.~\ref{eq:summation_to_distance}, we obtain that
    \begin{align}\label{eq:final_sum_alpha_to_n}
        \sum_{\alpha}d^m(\tilde{S}_\alpha, S)\min\bigg(\exp\bigg(T - \frac{1}{\lambda} d(\tilde{S}_{\alpha}, S)\bigg), 1 \bigg) &\leq \frac{2^d \tilde{\mathcal{Z}}\abs{S}}{(d - 1)!} \sum_{n = 0}^\infty \min\bigg(\exp\bigg(T - \frac{1}{\lambda}n\bigg), 1 \bigg)  n^m (n + d - 1)^{d - 1},
        % &\leq \frac{2^{d + k} \mathcal{Z}\abs{S}}{(d - 1)!} \sum_{n = 0}^\infty (n^k + (d - 1)^k) \min\bigg(\exp\bigg(T - \frac{1}{\lambda}n\bigg), 1\bigg)
    \end{align}
    Now, we note that
    \begin{align}\label{eq:summation_n_to_explicit_nu}
        &\sum_{n = 0}^\infty \min\bigg(\exp\bigg(T -\frac{n}{\lambda}\bigg), 1\bigg)n^m(n + d - 1)^k \nonumber\\
        &\qquad\numleq{1} \sum_{n = 0}^{\lceil \lambda T \rceil } n^m (n + d - 1)^{d - 1} + \sum_{n = \lfloor \lambda T \rfloor}^{\infty} n^{m} (n + d - 1)^{d - 1} \exp\bigg(T - \frac{n}{\lambda}\bigg), \nonumber\\
        &\qquad\leq \sum_{n = 0}^{\lceil \lambda T \rceil }n^m (n + d - 1)^{d - 1} + \sum_{n = 0 }^{\infty} (n + \lfloor \lambda T \rfloor)^m(n + \lfloor \lambda T \rfloor + d - 1)^{d - 1}\exp\bigg(T - \frac{\lfloor \lambda T \rfloor}{\lambda} - \frac{n}{\lambda}\bigg), \nonumber\\
        &\qquad\numleq{2} \sum_{n = 0}^{\lceil \lambda T \rceil}n^m(n + d - 1)^{d - 1} + \sum_{n = 0}^\infty (n +\lfloor \lambda T \rfloor)^m (n + \lfloor \lambda T \rfloor + d - 1)^{d - 1} \exp(-n/\lambda), \nonumber \\
        &\qquad\numleq{3} Y_m(\lceil \lambda T \rceil) +2^{d-2} \sum_{\sigma, \sigma' \in \{0, 1\}}  \lfloor \lambda T\rfloor^{(1 - \sigma)m}(d - 1 + \lfloor \lambda T \rfloor)^{(1 - \sigma')(d - 1)} X_{\sigma m + \sigma'(d - 1)}(\lambda),
    \end{align}
    where for $p \in \{0, 1, 2 \dots \}$ and $m \in \{0, 1\}$
    \begin{align*}
        X_p(\lambda) = \sum_{n = 0}^\infty n^p \exp(-n / \lambda) \text{ and }Y_m(z) = \sum_{n = 0}^{\lceil z  \rceil}n^m(n + d - 1)^{d - 1}.
    \end{align*}
    In (1), we have split the summation over $n$ into two summation --- one summation involving only $n \leq \lceil \lambda T \rceil$, where we have used $\min(\exp(T - n/\lambda), 1) \leq 1$, and the other summation involving $n \geq \lfloor \lambda T \rfloor$, where we have used $\min(\exp(T - n/\lambda), 1) \leq \exp(T - n/\lambda)$. In (2), we have used the fact that $\lambda T \leq \lfloor \lambda T \rfloor$, and thus $\exp(T - \lfloor \lambda T \rfloor / \lambda) \leq 1$. In (3), we have used the fact that for $a, b > 0$, $(a + b)^d \leq 2^{d - 1}(a^d + b^d)$. We will now pick the functions $\nu^{(m)}(\lambda, T)$ from the lemma statement to be,
    \[
    \nu^{(m)}(\lambda, T) =\frac{2^d \tilde{\mathcal{Z}}\abs{S}}{(d - 1)!}  \bigg(Y_m(\lceil \lambda T \rceil) +2^{d-2} \sum_{\sigma, \sigma' \in \{0, 1\}}  \lfloor \lambda T\rfloor^{(1 - \sigma)m}(d - 1 + \lfloor \lambda T \rfloor)^{(1 - \sigma')(d - 1)} X_{\sigma m + \sigma'(d - 1)}(\lambda)\bigg).
    \]
    From Eqs.~\ref{eq:single_sum_to_multiple}, \ref{eq:final_sum_alpha_to_n} and \ref{eq:summation_n_to_explicit_nu}, it follows that the lemma statement holds with this choice of $\nu^{(m)}(\lambda, T)$. Furthermore, since the dependence of $\nu^{(m)}(\lambda, T)$ on $T$ is entirely through $\lfloor \lambda T\rfloor$ and $\lceil \lambda T \rceil$, it follows that it is a piecewise continuous function of $T$ for a fixed $\lambda$. It is also clear that $\nu^{(m)}(\lambda, T)$ is non-decreasing in $T$ for a fixed $\lambda$. Finally, to understand the asymptotic behavior of $\nu^{(m)}(\lambda, T)$ as $\lambda, T \to \infty$, we note that $X_p(\lambda) \leq O(\lambda^{p + 1})$ and $Y_m(\lceil \lambda T \rceil) \leq O((\lambda T)^{m  +d})$. Therefore, we have that
    \begin{align*}
    \nu^{(m)}(\lambda, T) &\leq O((\lambda T)^{m  +d}) + \sum_{\sigma, \sigma' \in \{0, 1\}}O((\lambda T)^{(1 - \sigma)m + (1 - \sigma')(d -1)}) O(\lambda^{(\sigma m + \sigma' (d - 1) + 1)}), \nonumber \\
    &\leq O((\lambda T)^{m  +d}) + \sum_{\sigma, \sigma' \in \{0, 1\}}O((\lambda T)^{(1 - \sigma)m + (1 - \sigma')(d -1)}) O((\lambda T)^{(\sigma m + \sigma' (d - 1) + 1)}), \nonumber \\
    &\leq O((\lambda T)^{m + d}).
    \end{align*}
    Furthermore, for $T = 0$, 
    \[
    \nu^{(m)}(\lambda, 0) =\frac{2^d \tilde{\mathcal{Z}}\abs{S}}{(d - 1)!}\bigg(Y_m(0) + 2^{d - 2} (d - 1)^{d - 1} X_{m}(\lambda) + 2^{d- 2} X_{m + d - 1}(\lambda)\bigg) \leq O(\lambda^{m + d}) \text{ as }\lambda \to \infty. 
    \]
This concludes the proof of the lemma.
\end{proof}
\subsection{Proof of Lemma~\ref{lemma:bounds_remainder_lr}}
\noindent In this appendix, we will establish Lemma~\ref{lemma:bounds_remainder_lr}. We first establish the following
\begin{lemma}\label{lemma:bounds_remainder_lr_intermediate}
    Suppose $O$ is a local observable with $\norm{O}\leq 1$ supported on $S_O$, and for $\tau>0$, let $O(\tau) = \exp(\mathcal{L}^\textnormal{\dagger} \tau)(O)$ where $\mathcal{L}$ is a geometrically local Lindbladian of the form in Eq.~\ref{eq:geom_local_lind}. Then for $q_\alpha(s), \mathcal{Q}_{\alpha,\alpha'}^{(j)}(s)$, with $\mathcal{Q}_{\alpha, \alpha'}^{(1)} =\mathcal{Q}_{\alpha, h_{\alpha'}}^{(1)}$ and $\mathcal{Q}_{\alpha, \alpha'}^{(2)} =\mathcal{Q}_{\alpha, h_{\alpha'}}^{(2)}$, as defined in Lemma~\ref{lemma:remainder},
        \[
        \bigabs{\tr{}{O(\tau) q_\alpha(s)}} \leq 2  \min\bigg(\eta_{S_O} \exp\bigg(4e\mathcal{Z}\tau - \frac{d(S_\alpha, S_O)}{a}\bigg), 1\bigg) \text{ for any }\alpha,
        \]
        and
        \[
        \abs{\tr{}{O(\tau)\mathcal{Q}^{(j)}_{\alpha, \alpha'}(s)}} \leq 4\min\bigg(e \eta_{S_O}\exp\bigg(4e\mathcal{Z}\tau - \frac{1}{2a}\big( d(S_\alpha, S_O) + d(S_{\alpha'}, S_O)\big)\bigg), 1 \bigg) \text{ for any }\alpha, \alpha', j.
        \]
\end{lemma}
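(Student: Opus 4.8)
The plan is to prove each inequality by pairing a Lieb--Robinson estimate with an elementary operator-norm bound, the two together producing the $\min(\cdot,1)$. The unifying observation is that every quantity of the form $\text{Tr}(O(\tau)\,\cdot\,)$ appearing here can, after repositioning superoperators via the cyclicity identity $\text{Tr}(A[B,C])=\text{Tr}([A,B]C)$ and the Hilbert--Schmidt adjoint $\text{Tr}(A\,\mathcal{D}_L(C))=\text{Tr}(\mathcal{D}_L^\dagger(A)C)$, be rewritten as the trace of a product $\mathcal{K}(O(\tau))\,C$, where $\mathcal{K}$ is a composition of one or two \emph{local} superoperators each annihilating the identity, and $C$ is either $\rho(0)$, a reduced density matrix, or one of the ancilla-weighted operators controlled by Lemma~\ref{lemma:tr_sigma_bounds}. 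Crucially, the explicit powers of $\omega$ in the definitions of the $\mathcal{Q}^{(j)}$ are designed to cancel against the $\omega$-suppression of those operators, leaving an $\omega$-independent bound.

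For the first inequality I would write $q_\alpha=-\mathcal{D}_{L_\alpha}(\rho(0))$ and move the dissipator onto $O(\tau)$, giving $\abs{\text{Tr}(O(\tau)q_\alpha)}\le \norm{\mathcal{D}_{L_\alpha}^\dagger(O(\tau))}\,\norm{\rho(0)}_1=\norm{\mathcal{D}_{L_\alpha}^\dagger(O(\tau))}$. Since $\mathcal{D}_{L_\alpha}^\dagger$ is supported on $S_\alpha$, satisfies $\mathcal{D}_{L_\alpha}^\dagger(I)=0$, and has $\norm{\mathcal{D}_{L_\alpha}^\dagger}_{cb,\infty\to\infty}\le 2$, Lemma~\ref{lemma:lieb_robinson} yields $2\eta_{S_O}\exp(4e\mathcal{Z}\tau-d(S_\alpha,S_O)/a)$; the trivial estimate $\abs{\text{Tr}(O(\tau)q_\alpha)}\le\norm{q_\alpha}_1\norm{O(\tau)}\le 2$ (using that $e^{\mathcal{L}^\dagger\tau}$ is unital and contractive in operator norm, and $\norm{\mathcal{D}_{L_\alpha}(\rho(0))}_1\le 2$) supplies the ``$1$'' inside the minimum.

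For the second inequality I would treat $j\in\{1,2,3,4\}$ case by case. The case $j=3$ reproduces the computation sketched after Lemma~\ref{lemma:bounds_remainder_lr}: using $\norm{\text{Tr}_\mathcal{A}(\sigma_{\alpha'}^{(\bar u)}\rho_\omega)}_1\le\omega/2$ and Lemma~\ref{lemma:2superopLR} with $\mathcal{K}_{S_\alpha}=\mathcal{D}_{L_\alpha}^\dagger$ and $\mathcal{K}_{S_{\alpha'}}=[L_{\alpha'}^{(u)},\cdot]$ (both of cb-norm $\le 2$ and killing $I$) gives $4e\eta_{S_O}\exp(\cdots)$. The cases $j=1,2$ proceed identically once one recognises that $[L_\alpha^\dagger,[h_{\alpha'},\cdot]]$ and $[L_\alpha^\dagger,L_\alpha(\cdot)]$ become, after relocation, $[h_{\alpha'},[L_\alpha^\dagger,O(\tau)]]$ and $[h_{\alpha'},[O(\tau),L_\alpha^\dagger]L_\alpha]$ respectively---each a product of an $S_\alpha$- and an $S_{\alpha'}$-superoperator of cb-norm $\le 2$ annihilating $I$---paired with $\norm{\text{Tr}_\mathcal{A}(\sigma_\alpha\rho_\omega)}_1\le\omega/2$ (for $j=1$) or $\norm{\text{Tr}_\mathcal{A}(\rho_\omega)}_1=1$ (for $j=2$). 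For $j=4$ I would split on $\alpha\ne\alpha'$, where the doubly-nested commutator pairs with $\norm{\text{Tr}_\mathcal{A}(\sigma_\alpha^{(\bar u)}\sigma_{\alpha'}^{(\bar u')}\rho_\omega)}_1\le\omega^2/4$ and again uses Lemma~\ref{lemma:2superopLR}, and on $\alpha=\alpha'$, where $\mathcal{D}_{L_\alpha^\dagger}-\mathcal{D}_{L_\alpha}$ is a \emph{single} $S_\alpha$-superoperator of cb-norm $\le 4$ paired with $\norm{\text{Tr}_\mathcal{A}(n_\alpha\rho_\omega)}_1\le\omega^2/4$, so that Lemma~\ref{lemma:lieb_robinson} produces $2\eta_{S_O}\exp(4e\mathcal{Z}\tau-d(S_\alpha,S_O)/a)$, dominated by the stated bound since $d(S_\alpha,S_O)=\tfrac12(d(S_\alpha,S_O)+d(S_{\alpha'},S_O))$ and $2\le 4e$. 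In every case the sum over $u,u'$ (at most four terms) together with the $\omega$-cancellations lands on the prefactor $4$, and the parallel trivial bound replacing each Lieb--Robinson factor by $\norm{O(\tau)}\le 1$ gives $\le 4$, furnishing the minimum.

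The main obstacle is organizational rather than conceptual: for each $j$ one must carry out the commutator/adjoint gymnastics carefully enough to verify that the residual superoperators acting on $O(\tau)$ are (i) supported on $S_\alpha$ and $S_{\alpha'}$ with diameter $\le a$, (ii) annihilating of the identity, and (iii) of completely-bounded $\infty\to\infty$ norm at most $2$ (or $4$ for the difference of dissipators)---these being exactly the hypotheses of Lemmas~\ref{lemma:lieb_robinson} and \ref{lemma:2superopLR}. Property (iii) reduces to the elementary estimates $\norm{[A,\cdot]}_{cb,\infty\to\infty}\le 2\norm{A}$ and $\norm{\mathcal{D}_L^\dagger}_{cb,\infty\to\infty}\le 2\norm{L}^2$ under $\norm{L_\alpha},\norm{h_\alpha}\le 1$; the one slightly delicate point is that the map $Y\mapsto[Y,L_\alpha^\dagger]L_\alpha$ arising in the $j=2$ rearrangement is not itself a dissipator adjoint, yet still annihilates the identity and has cb-norm $\le 2$, which must be checked directly.
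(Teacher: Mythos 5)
Your proposal is correct and follows essentially the same route as the paper's proof: move the local superoperators onto $O(\tau)$ via adjoints and commutator cyclicity, cancel the explicit $\omega$ powers against the ancilla-excitation bounds of Lemma~\ref{lemma:tr_sigma_bounds}, and then apply Lemma~\ref{lemma:lieb_robinson} (single superoperator, including the $j=4$, $\alpha=\alpha'$ case) or Lemma~\ref{lemma:2superopLR} (two superoperators) together with the trivial operator-norm estimate to obtain the minimum. The case analysis, the identification of the residual superoperators and their cb-norms, and the handling of the diagonal $j=4$ term all match the paper's argument.
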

\begin{proof}
    Using the Lieb-Robinson bounds (Lemma~\ref{lemma:lieb_robinson}) together with the fact that $\norm{\mathcal{D}_{L_\alpha}}_{\infty \to \infty} \leq 2$, we obtain that
    \begin{align*}
    \abs{\tr{}{O(\tau) q_\alpha}} &= \abs{\tr{}{O(\tau) \mathcal{D}_{L_\alpha}(\rho(0))}} \leq \norm{\mathcal{D}_{L_\alpha}^\dagger (O(\tau))}\norm{\rho(0)}_1, \nonumber\\
    &\leq 2 \min\bigg(\eta_{S_O}\exp\bigg(4e\mathcal{Z}\tau - \frac{d(S_\alpha, S_O)}{a} \bigg), 1\bigg).
    \end{align*}
    Next, consider $\text{Tr}(O(\tau) \mathcal{Q}_{\alpha, \alpha'}^{(1)}(s))$. For any $\alpha, \alpha'$ we obtain that
    \begin{align*}
    \abs{\tr{}{O(\tau) \mathcal{Q}^{(1)}_{\alpha, \alpha'}(s)}} &\leq \frac{2}{\omega} \bigabs{\tr{}{O(\tau) [L_\alpha^\dagger, [h_{\alpha'}, \tr{\mathcal{A}}{(\sigma_\alpha \rho_\omega(s)}]]}}, \nonumber\\
    &\leq \frac{2}{\omega} \bignorm{[h_{\alpha'}, [L_\alpha^\dagger, O(\tau)]]} \bignorm{\tr{\mathcal{A}}{(\sigma_\alpha\rho_\omega(s)}}_1, \\
    &\leq \bignorm{[h_{\alpha'}, [L_\alpha^\dagger, O(\tau)]]},
    \end{align*}
    where, in the last step, we have used Lemma~\ref{lemma:tr_sigma_bounds}. Next, we can use Lemma~\ref{lemma:2superopLR} together with the fact that $\norm{[h_{\alpha'}, \cdot]}_{cb, \infty \to \infty},\norm{[L_{\alpha}^{(\pm)}, \cdot]}_{cb, \infty \to \infty}  \leq 2$ to obtain
    \begin{align*}
    &\abs{\tr{}{O(\tau)\mathcal{Q}^{(1)}_{\alpha, h_{\alpha'}}(s)}}  \leq 4 \norm{O}\min\bigg(e \eta_{S_O}\exp\bigg(4e\mathcal{Z}\tau - \frac{1}{2a}\big( d(S_\alpha, S_O) + d(S_{\alpha'}, S_O)\big)\bigg), 1 \bigg).
    \end{align*}

    Next, for any $\alpha, \alpha'$, we obtain that,
    \begin{align*}
        \bigabs{\tr{}{O(\tau)\mathcal{Q}^{(2)}_{\alpha, {\alpha'}}(s)}}
        & \leq \bigabs{\tr{}{O(\tau) [L_\alpha^\dagger, L_\alpha [h_{\alpha'}, \rho_\omega(s)]]}},
        \nonumber \\
        & \leq \norm{[h_{\alpha'},[L_\alpha^\dagger, O(\tau)]L_\alpha]} \norm{\tr{A}{\rho_\omega(s)}}_1,
        \nonumber \\
        & \leq 4 \norm{O} \min\left(e\eta_{S_O} \exp\left(4e\mathcal{Z}\tau - \frac{1}{2a}( d(S_\alpha,S_O) + d(S_{\alpha'},S_O) )\right), 1 \right).
    \end{align*}

    Next, we bound $\abs{\tr{\mathcal{A}}{\mathcal{Q}^{(3)}_{\alpha, \alpha'}}}$. For any $\alpha, \alpha'$, we obtain that
    \begin{align*}
    \bigabs{\tr{}{O(\tau)\mathcal{Q}^{(3)}_{\alpha, \alpha'}(s)}} &\leq \frac{1}{\omega} \sum_{u \in \{+, -\}} \abs{\tr{}{O(\tau)\mathcal{D}_{L_\alpha}\left([L_{\alpha'}^{(u)}, \tr{\mathcal{A}}{\sigma_{\alpha'}^{(\bar u)}\rho_\omega(s)}]\right)}}, \nonumber\\
    &\leq \frac{1}{\omega} \sum_{u \in \{+, -\}} \bignorm{[L_{\alpha'}^{(u)}, \mathcal{D}_{L_\alpha}^\dagger(O(\tau))]} \bignorm{\tr{\mathcal{A}}{\sigma_{\alpha'}^{(\bar u)} \rho_\omega(s)}}_1, \nonumber \\
    &\leq \frac{1}{2} \sum_{u \in \{+, -\}}\bignorm{[L_{\alpha'}^{(u)}, \mathcal{D}_{L_\alpha}^\dagger(O(\tau))]},
    \end{align*}
    where, in the last step, we have used Lemma~\ref{lemma:tr_sigma_bounds}. Next, we use Lemma~\ref{lemma:2superopLR} together with $\norm{[L_{\alpha}^{(u)}, \cdot]}_{cb,\infty\to\infty} \leq 2$ and $\norm{\mathcal{D}_{L_\alpha}^\dagger}_{cb,\infty\to\infty} \leq 2$ to obtain
    \[
    \bigabs{\tr{}{O(\tau)\mathcal{Q}^{(3)}_{\alpha, \alpha'}(s)}} \leq 4 \norm{O}\min\bigg(e\eta_{S_O} \exp\bigg(4e\mathcal{Z}\tau - \frac{1}{2a}\big(d(S_\alpha, S_O) + d(S_{\alpha'}, S_O)\big)\bigg), 1\bigg). 
    \]
    
    Next we bound $\abs{\tr{}{O(\tau)\mathcal{Q}^{(4)}_{\alpha,\alpha'}(s)}}_1$. For $\alpha \neq \alpha'$
    \begin{align}
    \abs{\tr{}{O(\tau) \mathcal{Q}^{(4)}_{\alpha, \alpha'}(s)}} &\leq \frac{1}{\omega^2}\abs{ \tr{}{O(\tau) [L_\alpha^{(u)}, [L_{\alpha'}^{(u')}, \tr{\mathcal{A}}{\sigma_\alpha^{(\bar{u})} \sigma_{\alpha'}^{(\bar{u}')} \rho_\omega(s)}]]}}, \nonumber\\
    &\leq \frac{1}{\omega^2} \sum_{u, u' \in \{-, +\}} \bignorm{[L_{\alpha'}^{({u}')}, [L_\alpha^{(u)}, O(\tau)]]}\bignorm{\tr{\mathcal{A}}{\sigma_\alpha^{(\bar{u})} \sigma_{\alpha'}^{(\bar{u}')} \rho_\omega(s)}}_1, \nonumber\\
    &\leq \frac{1}{4}\sum_{u, u' \in \{-, +\}} \bignorm{[L_{\alpha'}^{({u}')}, [L_\alpha^{(u)}, O(\tau)]]}\nonumber,
    \end{align}
    where in the last step we have used Lemma~\ref{lemma:tr_sigma_bounds}. Furthermore, from Lemma~\ref{lemma:2superopLR} and the fact that $\norm{[L_\alpha^{(u)}, \cdot]}_{cb,\infty\to\infty} \leq 2$ it follows that
    \begin{align}\label{eq:remainder_lr_bound_alphas_unequal}
    \bigabs{\tr{}{O(\tau) \mathcal{Q}^{(4)}_{\alpha, \alpha'}(s)}} \leq 4\norm{O} \min\bigg(e \eta_{S_O}\exp\bigg(4e\mathcal{Z}\tau - \frac{1}{2a}\big( d(S_\alpha, S_O) + d(S_{\alpha'}, S_O)\big)\bigg), 1 \bigg).
    \end{align}
    Similarly, for $\alpha = \alpha'$, 
    \begin{align*}
        \bigabs{\tr{}{O(\tau) \mathcal{Q}^{(4)}_{\alpha, \alpha}(s)}} &\leq \frac{2}{\omega^2} \bigabs{\tr{}{O(\tau) (\mathcal{D}_{L_\alpha} - \mathcal{D}_{L_\alpha^\dagger})(\tr{\mathcal{A}}{n_\alpha \rho_\omega(s)})}}, \nonumber \\
        &\leq \frac{2}{\omega^2} \norm{(\mathcal{D}_{L_\alpha} - \mathcal{D}_{L_\alpha^\dagger})^\dagger \big(O(\tau)\big)} \norm{\tr{\mathcal{A}}{n_\alpha \rho_\omega(s)}}_1, \nonumber\\
        &\leq \frac{1}{2}\norm{(\mathcal{D}_{L_\alpha} - \mathcal{D}_{L_\alpha^\dagger})^\dagger \big(O(\tau)\big)},
    \end{align*}
    where, again, in the last step, we have used Lemma~\ref{lemma:tr_sigma_bounds}. Next, we can use Lemma~\ref{lemma:lieb_robinson} together with the fact that $\norm{(\mathcal{D}_{L_\alpha} - \mathcal{D}_{L_\alpha^\dagger})^\dagger}_{cb,\infty\to\infty} \leq 4$, we obtain that
    \begin{align}\label{eq:remainder_lr_bound_alphas_equal}
    \abs{\tr{}{O(\tau)\mathcal{Q}^{(4)}_{\alpha, \alpha}(s)}} &\leq 2 \norm{O} \min\bigg(\eta_{S_O}\exp\bigg(4e\mathcal{Z}\tau - \frac{1}{a}d(S_\alpha, S_O)\bigg), 1\bigg).
    \end{align}
    Eqs.~\ref{eq:remainder_lr_bound_alphas_equal} and \ref{eq:remainder_lr_bound_alphas_unequal} together establish establish a bound on $\abs{\tr{}{O(\tau)\mathcal{Q}^{(4)}_{\alpha, \alpha'}(s)}}$ for any $\alpha,\alpha'$ and is consistent with the lemma statement.
\end{proof}
\begin{replemma}{lemma:bounds_remainder_lr}
Suppose $O$ is a local observable with $\norm{O} \leq 1$ supported on $S_O$, and for $\tau>0$, let $O(\tau) = \exp(\mathcal{L}^\textnormal{\dagger} \tau)(O)$ where $\mathcal{L}$ is a geometrically local Lindbladian of the form in Eq.~\ref{eq:geom_local_lind}. Then for\ $q_\alpha$ as defined in Lemma~\ref{lemma:remainder}, then there is a non-decreasing piecewise continuous function $\nu$ such that $\nu(t) \leq O(t^d)$ as $t \to \infty$ and
        \[
        \sum_{\alpha}\bigabs{\tr{}{O(\tau) q_\alpha}} \leq  \nu(\tau),
        \]
and for $j \in \{1, 2, 3, 4\}$
        \begin{align*}
        &\sum_{\alpha, \alpha'}\bigabs{\tr{}{O(\tau)\mathcal{Q}_{\alpha, \alpha'}^{(j)}(s)}} \leq \nu^2(\tau),
        \end{align*}
where, for $j \in \{3, 4\}$, $\mathcal{Q}_{\alpha, \alpha'}^{(j)}$ is defined in Lemma~\ref{lemma:remainder} and for $j\in \{1, 2\}$, we define $\mathcal{Q}_{\alpha, \alpha'}^{(j)} = \mathcal{Q}_{\alpha, h_{\alpha'}}^{(j)}$ where $\mathcal{Q}_{\alpha, h}^{(j)}$ is defined in Lemma~\ref{lemma:remainder}.
\end{replemma}
\begin{proof}
    We will prove this lemma using Lemmas \ref{lemma:lr_summation} and \ref{lemma:bounds_remainder_lr_intermediate}. Consider first
    \begin{align}\label{eq:function_bound_single_sum}
        \sum_{\alpha}\abs{\text{Tr}(O(\tau) q_\alpha)}
        & \leq 2 \sum_{\alpha} \min\bigg(\eta_{S_O} \exp\bigg(4e\mathcal{Z}\tau - \frac{d(S_\alpha, S_O)}{a}\bigg), 1\bigg), \nonumber\\
        & = 2\xi^{(0, 1)}_{a, \eta_{S_O}}(4e\mathcal{Z}\tau) \leq 2 \max(\eta_{S_O}, 1)\nu^{(0)}(a, 4e\mathcal{Z}\tau) \leq 2e\max(\eta_{S_O}, 1) \nu^{(0)}(2a, 4e\mathcal{Z}\tau),
    \end{align}
    where in the last step, we have used the fact that as per Lemma~\ref{lemma:lr_summation} $\nu^{(0)}(\lambda, T)$ is a non-decreasing function of $\lambda$ for a fixed $T$.
    Similarly, for $j \in \{1, 2, 3, 4\}$,
    \begin{align}\label{eq:function_bound_double_sum}
        &\sum_{\alpha_1, \alpha_2}\abs{\text{Tr}(O(\tau) \mathcal{Q}_{\alpha, \alpha'}^{(j)}(s)}, \nonumber\\
        &\qquad \leq 4\sum_{\alpha, \alpha'} \min\bigg(e\eta_{S_O} \exp\bigg(4e\mathcal{Z}\tau - \frac{d(S_\alpha, S_O) + d(S_{\alpha'}, S_O)}{2a}\bigg), 1\bigg),\nonumber\\
        &\qquad = 4 \xi_{2a, e\eta_{S_O}}^{(0, 2)}(2a, 4e\mathcal{Z}\tau) \leq 4 \big(\max(e\eta_{S_O}, 1)\big)^2 \left(\nu^{(0)}(2a, 4e\mathcal{Z}\tau)\right)^2 \leq 4e^2 \big(\max(\eta_{S_O}, 1)\big)^2 \big(\nu^{(0)}(2a, 4e\mathcal{Z}\tau)\big)^2.
     \end{align}
     From Eqs.~\ref{eq:function_bound_single_sum} and \ref{eq:function_bound_double_sum}, it follows that choosing $\nu(\tau) = 2e \max(\eta_{S_O}, 1) \nu^{(0)}(2a, 4e\mathcal{Z}\tau)$ satisfies the lemma statement. It can also be noted that, from the asymptotics of $\nu^{(0)}(\lambda, T) $ in Lemma~\ref{lemma:lr_summation}, $\nu(\tau) \leq O(\tau^d)$ and that since $\nu^{(0)}(\lambda, T)$, for a fixed $\lambda$, is a non-decreasing and piecewise continuous function of $T$, $\nu(\tau)$ is also a piecewise continuous non-decreasing function of $\tau$.
\end{proof}

\subsection{Proof of Lemma~\ref{lemma:error_term_rapid_mixing}}
\noindent In this appendix, we will outline a proof of Lemma~\ref{lemma:error_term_rapid_mixing}. A convenient technical lemma to use for this proof is below.
\begin{lemma}
\label{lemma:local_rapid_mixing_superop_bounds}
 Suppose $O$ is a spatially local observable and let $O(\tau) = \exp(\mathcal{L}^\textnormal{\dagger}\tau)(O)$ be its Heisenberg-picture evolution with respect to a geometrically local Lindbladian $\mathcal{L}$ of the form of Eq.~\ref{eq:geom_local_lind} and with fixed point $\sigma$. Suppose that $O$ obeys the local rapid mixing condition (Eq.~\ref{eq:rapid_mixing_observable_v2}). Furthermore, suppose $\sigma_\omega(s)$ is a time-dependent operator which satisfies $\norm{\sigma_\omega(s)}_1 \leq f(\omega)$ for some $f(\omega) > 0$. Furthermore, let $\mathcal{K}_\alpha, \mathcal{J}_{\alpha'}$ be superoperators supported on $\tilde{S}_\alpha$ and $\tilde{S}_{\alpha'}$ respectively which satisfy $\mathcal{K}_\alpha^\textnormal{\dagger}(I) = \mathcal{J}_{\alpha'}^\textnormal{\dagger}(I) = 0$ and $\norm{\mathcal{K}_\alpha}_{\diamond}, \norm{\mathcal{J}_\alpha}_{\diamond} \leq 2$ and  $\forall \alpha: \abs{\{\alpha' : \tilde{S}_\alpha \cap \tilde{S}_{\alpha'} \neq \emptyset \}}\leq \tilde{\mathcal{Z}}$ for some $\mathcal{Z}' > 0$. Then,
 \begin{enumerate}
     \item [(a)] For any $t > 0$,
     \begin{align}\label{eq:general_rapid_mixing_bound_single_with_decay}
    \sum_{\alpha} \bigabs{\int_0^t e^{-2s/\omega^2} \tr{}{O(t - s) \mathcal{K}_\alpha(\sigma_\omega(s))} ds}
    \leq \omega^2 f(\omega) \zeta^{(1)}(\gamma),
    \end{align}
    where if $k(\abs{S_O}, \gamma)$, defined in Eq.~\ref{eq:rapid_mixing_observable_v2}, is $O(\exp(\gamma^{-\kappa}))$ as $\gamma \to 0$, then $\zeta^{(1)}(\gamma) \leq O(\gamma^{-d (\kappa + 1)})$ as $\gamma \to 0$.

    \item [(b)] For any $t > 0$,
    \begin{align}\label{eq:general_rapid_mixing_bound_single_without_decay}
    \sum_{\alpha} \bigabs{\int_0^t  \tr{}{O(t - s) \mathcal{K}_\alpha(\sigma_\omega(s))} ds}
    \leq f(\omega) \zeta^{(2)}(\gamma),
    \end{align}
    where if $k(\abs{S_O}, \gamma)$, defined in Eq.~\ref{eq:rapid_mixing_observable_v2}, is $O(\exp(\gamma^{-\kappa}))$ as $\gamma \to 0$, then $\zeta^{(2)}(\gamma) \leq O(\gamma^{-(d + 1) (\kappa + 1)})$ as $\gamma \to 0$.
    \item [(c)] For any $t > 0$,
    \begin{align}\label{eq:general_rapid_mixing_bound_double}
        \sum_{\alpha, \alpha'} \bigabs{\int_0^t \int_0^s e^{-2(s - s')/\omega^2}\tr{}{O(t - s) \mathcal{K}_\alpha \mathcal{J}_{\alpha'}(\sigma_\omega(s'))} ds' ds } \leq \omega^2 f(\omega)\zeta^{(3)}(\gamma),
    \end{align}
    where if $k(\abs{S_O}, \gamma)$, defined in Eq.~\ref{eq:rapid_mixing_observable_v2}, is $O(\exp(\gamma^{-\kappa}))$ as $\gamma \to 0$, then $\zeta^{(3)}(\gamma) \leq O(\gamma^{-(\kappa + 1)(2d + 1)})$.
 \end{enumerate}
 In Eqs.~\ref{eq:general_rapid_mixing_bound_single_with_decay}-\ref{eq:general_rapid_mixing_bound_double}, the functions $\zeta^{(1)}(\gamma)$, $\zeta^{(2)}(\gamma)$ and $\zeta^{(3)}(\gamma)$ depend on $\mathcal{Z}$, $a$, and $\tilde{\mathcal{Z}}$.
\end{lemma}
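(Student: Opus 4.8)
The plan is to reduce all three quantities to time integrals of $\norm{\mathcal{K}_\alpha^\dagger(O(\tau))}$ (and, in part (c), $\norm{\mathcal{J}_{\alpha'}^\dagger\mathcal{K}_\alpha^\dagger(O(\tau))}$) over the Heisenberg time $\tau$, and then to control each such norm in two complementary regimes — a Lieb--Robinson bound for small $\tau$ and the rapid-mixing bound for large $\tau$ — before summing over the lattice. First I would use the adjoint identity $\tr{}{O(\tau)\mathcal{K}_\alpha(X)} = \tr{}{\mathcal{K}_\alpha^\dagger(O(\tau)) X}$ together with $\norm{\sigma_\omega(s)}_1 \leq f(\omega)$ and the cb-norm duality $\norm{\mathcal{K}_\alpha^\dagger}_{cb,\infty\to\infty} = \norm{\mathcal{K}_\alpha}_\diamond \leq 2$, so that the integrand is bounded by $f(\omega)\norm{\mathcal{K}_\alpha^\dagger(O(\tau))}$. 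For the Lieb--Robinson estimate I apply Lemma~\ref{lemma:lieb_robinson} to $\mathcal{K}_\alpha^\dagger$ (which is supported on $\tilde S_\alpha$ and satisfies $\mathcal{K}_\alpha^\dagger(I)=0$), obtaining $\norm{\mathcal{K}_\alpha^\dagger(O(\tau))} \leq 2\min(\eta_{S_O}\exp(4e\mathcal{Z}\tau - d(\tilde S_\alpha, S_O)/a),1)$. For the rapid-mixing estimate I would again use $\mathcal{K}_\alpha^\dagger(I)=0$ to write $\mathcal{K}_\alpha^\dagger(O(\tau)) = \mathcal{K}_\alpha^\dagger(O(\tau) - \tr{}{O\sigma}I)$ and then invoke Eq.~\ref{eq:rapid_mixing_observable_v2}, giving $\norm{\mathcal{K}_\alpha^\dagger(O(\tau))} \leq 2\min(k(\abs{S_O},\gamma)e^{-\gamma\tau},1)$. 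For part (c) the same two ideas apply to the composed superoperator through Lemma~\ref{lemma:2superopLR}, producing the two-point exponent $\tfrac{1}{2a}(d(\tilde S_\alpha,S_O)+d(\tilde S_{\alpha'},S_O))$ and an extra constant-in-$\gamma$ prefactor.

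Next I would change variables $\tau = t-s$ so that the damping factor becomes $e^{-2(t-\tau)/\omega^2}$, matching the structure used in Eq.~\ref{eq:main_text_short_time_bound}, and split each $\tau$-integral at a distance-dependent cutoff $t_\alpha = d(\tilde S_\alpha, S_O)/(8e\mathcal{Z}a)$ (respectively $t_{\alpha,\alpha'} = (d(\tilde S_\alpha,S_O)+d(\tilde S_{\alpha'},S_O))/(16e\mathcal{Z}a)$ in part (c)). Below the cutoff I use the Lieb--Robinson estimate, and above it the rapid-mixing estimate; the cutoff is chosen so that evaluating the Lieb--Robinson exponent at $\tau=t_\alpha$ leaves a net decay $\exp(-d(\tilde S_\alpha,S_O)/(2a))$. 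In part (a) and the inner $s'$-integral of (c), the damped integral contributes a factor $\int_0^t e^{-2(t-\tau)/\omega^2}d\tau \le \omega^2/2$ in both regimes, producing the $\omega^2 f(\omega)$ prefactor; in part (b) and the outer integral of (c) there is no damping, so the short-time piece contributes a factor $\sim t_\alpha$ and the long-time piece contributes $\int_{t_\alpha}^\infty \min(k(\abs{S_O},\gamma)e^{-\gamma\tau},1)d\tau$, which is precisely where the extra mixing-time factor $\sim \gamma^{-(\kappa+1)}$ — and hence the extra $(\kappa+1)$ power in $\zeta^{(2)}$ and $\zeta^{(3)}$ relative to $\zeta^{(1)}$ — enters.

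Finally I would carry out the sum over $\alpha$ (and $\alpha'$) using Lemma~\ref{lemma:lr_summation}, applied with the sets $\tilde S_\alpha$ and degree bound $\tilde{\mathcal{Z}}$. The short-time contributions are $\gamma$-independent sums of a decaying exponential over the lattice and reduce to $\xi^{(m,k)}$ with $T=0$, giving $O(1)$. The crucial manipulation is in the long-time contribution: rather than pulling $k(\abs{S_O},\gamma)$ out as a multiplicative prefactor, I fold it into the exponent by writing $\min(k\,e^{-\gamma d/(8e\mathcal{Z}a)},1) = \min(\exp(\ln k - d/\lambda),1)$ with $\lambda = 8e\mathcal{Z}a/\gamma$, so that the lattice sum becomes $\xi^{(0,1)}_{\lambda,1}(\ln k)$ (and its two-index analogue in (c)). Lemma~\ref{lemma:lr_summation} then yields $\nu^{(0)}(\lambda,\ln k)\le O((\lambda\ln k)^d)$, and substituting $\lambda\sim\gamma^{-1}$ and $\ln k \sim \gamma^{-\kappa}$ gives $O(\gamma^{-d(\kappa+1)})$; propagating the additional $t_\alpha\sim\gamma^{-(\kappa+1)}$ (or $\gamma^{-1}$) factors from the undamped integrals then produces the stated scalings of $\zeta^{(1)}$, $\zeta^{(2)}$, and $\zeta^{(3)}$.

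The hard part will be exactly this last point. Because the rapid-mixing amplitude $k(\abs{S_O},\gamma)$ blows up like $\exp(\gamma^{-\kappa})$ as $\gamma\to0$, treating it as a constant prefactor would destroy the polynomial-in-$\gamma^{-1}$ bound; one must instead absorb $\ln k$ into the effective Lieb--Robinson ``time budget'' $T$ of Lemma~\ref{lemma:lr_summation}, so that it merely enlarges the radius of the lattice region contributing to the sum (from $\sim\gamma^{-1}$ to $\sim\gamma^{-(\kappa+1)}$) rather than multiplying the entire sum. Keeping the exponents aligned across the damped and undamped integrals, and checking that the residual $\gamma^{-1}$ and $\ln k$ factors combine into exactly $\gamma^{-(\kappa+1)}$ without overshooting the claimed powers, is the delicate bookkeeping that the detailed appendix proof must verify.
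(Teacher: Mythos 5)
Your proposal is correct and follows essentially the same route as the paper's appendix proof: the adjoint/H\"older reduction to $\norm{\mathcal{K}_\alpha^\dagger(O(\tau))}$, a distance-dependent time cutoff separating a Lieb--Robinson regime from a rapid-mixing regime (using $\mathcal{K}_\alpha^\dagger(I)=0$ to subtract the fixed-point contribution), the damped integral supplying the $\omega^2$ factor and the undamped one the extra $\gamma^{-(\kappa+1)}$, and Lemma~\ref{lemma:lr_summation} with $\log k(\abs{S_O},\gamma)$ absorbed into the effective time budget $T$. The only difference is cosmetic bookkeeping: in parts (b) and (c) the paper offsets the cutoff by $T'_\gamma\sim\gamma^{-1}\log(\gamma^{-1}k)$ so the long-time prefactor is normalized to one and the $\gamma$-dependence appears in the short-time sum, whereas you keep a purely distance-proportional cutoff and fold $\log k$ into the long-time lattice sum; both choices yield the stated scalings of $\zeta^{(1)}$, $\zeta^{(2)}$, $\zeta^{(3)}$.
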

\begin{proof}

(a) The idea behind this proof is similar to the example detailed in the main text i.e.~to separate out the integral in the summation with into a short-time and long-time integral. The short-time integral is upper bounded using the Lieb-Robinson bounds, and the long-time integral is bounded using the local rapid-mixing property (Eq.~\ref{eq:rapid_mixing_observable_v2}). We begin by noting that
\begin{align}
    \bigabs{\int_0^t e^{-2s/\omega^2} \tr{}{O(t - s) \mathcal{K}_\alpha(\sigma_\omega(s))} ds}  = \bigabs{\int_0^{t}e^{-2(t - s)/\omega^2} \tr{}{O(s) \mathcal{K}_\alpha(\sigma_\omega(t - s))} ds}\leq e_{\alpha}^{\leq}(t, t_\alpha) + e_{\alpha}^{\geq}(t, t_\alpha),
\end{align}
where
\begin{align*}
    &e_{\alpha}^{\leq}(t, t_\alpha) = \bigabs{\int_0^{\min(t, t_\alpha)} e^{-2(t - s)/\omega^2}\tr{}{O(s) \mathcal{K}_\alpha(\sigma_\omega(t - s))}ds},\\
    &e_{\alpha}^{\leq}(t, t_\alpha) = \bigabs{\int_{\min(t, t_\alpha)}^t e^{-2(t - s)/\omega^2}\tr{}{O(s) \mathcal{K}_\alpha(\sigma_\omega(t - s))}ds},
\end{align*}
and $t_\alpha>0$ is to be chosen later. We bound $e_{\alpha}^{\leq}(t,t_\alpha)$ using Lieb-Robinson bounds. First we note that by Lemma~\ref{lemma:lieb_robinson},
\begin{align}\label{eq:lr_bound_short_time_integrand}
\bigabs{\tr{}{O(s)\mathcal{K}_\alpha(\sigma_\omega(t-s))}} \leq \norm{\mathcal{K}^\dagger_\alpha O(s)} \norm{\sigma_\omega(t-s)}_1 \leq 2f(\omega) \min\left( \eta_{S_O} \exp\left(4e\mathcal{Z}s-\frac{d(S_O,\tilde{S}_\alpha)}{a}\right),1\right).
\end{align}
This allows us to bound $e^\leq_\alpha(t,t_\alpha)$:
\begin{align}
    e^\leq_\alpha(t,t_\alpha) & \leq 2f(\omega)\eta_{S_O} \int_0^{\min(t,t_\alpha)} e^{-2(t-s)/\omega^2} \min\bigg(\eta_{S_O} \exp \left( 4e \mathcal{Z}s - \frac{d(S_O,\tilde{S}_\alpha)}{a}\right), 1\bigg) ds,
    \nonumber \\
    % & \leq 2f(\omega) \eta_{S_O} \exp \left( 4e \mathcal{Z}t_\alpha - d(S_O,S_\alpha)/a\right) \int_0^{\min(t,t_\alpha)} e^{-2(t-s)/\omega^2} ds,
    % \nonumber \\
    & \leq 2f(\omega) \min\bigg(\eta_{S_O} \exp \left( 4e \mathcal{Z}\min(t, t_\alpha) - \frac{d(S_O,\tilde{S}_\alpha)}{a}\right), 1\bigg) \int_0^{\min(t, t_\alpha)}e^{-2(t - s)/\omega^2}ds, \nonumber \\
    &\leq \omega^2 f(\omega) \min\bigg(\eta_{S_O}\exp \left( 4e \mathcal{Z}t_\alpha - \frac{d(S_O,\tilde{S}_\alpha)}{a}\right), 1\bigg)
\end{align}
We bound $e^\geq_\alpha(t,t_\alpha)$ using the local rapid mixing condition. Note that
\begin{align}\label{eq:bound_single_sum_integrand}
    & \bigabs{\tr{}{O(s)\mathcal{K}_\alpha(\sigma_\omega(t-s))}} \nonumber
    \\ 
    & \qquad \leq \bigabs{\tr{}{O\sigma}\tr{}{\mathcal{K}_\alpha(\sigma_\omega(t-s))}} + \bigabs{\tr{}{(O(s)-\tr{}{O\sigma}I)\mathcal{K}_\alpha(\sigma_\omega(t-s))}},\nonumber
    \\
    & \qquad \numeq{1} \norm{O(s)-\tr{}{O\sigma}I} \norm{\mathcal{K}_\alpha}_\diamond \norm{\sigma_\omega(t-s)}_1, \nonumber 
    \\
    &  \qquad \numeq{2} 2f(\omega) k(\abs{S_O}, \gamma)e^{-\gamma s},
\end{align}
where in (1) we have the fact that for any operator $X$, $\text{Tr}(\mathcal{K_\alpha}(X)) = \text{Tr}(\mathcal{K}_\alpha^\dagger(I)X) = 0$ and in (2), we have used Eq.~\ref{eq:rapid_mixing_observable_v2}. For $t \leq t_\alpha$, since $\min(t, t_\alpha) = t$, it follows immediately that
\begin{align}\label{eq:long_time_sindex_eval1}
e_\alpha^{\geq}(t, t_\alpha) = 0.
\end{align}
For $t \geq t_\alpha$, we can obtain two upper bounds on $e_\alpha^{\geq}(t, t_\alpha)$. The first upper bound is using Eq.~\ref{eq:bound_single_sum_integrand} which yields,
\begin{align}\label{eq:long_time_sindex_eval2}
e^{\geq}_\alpha({t, t_\alpha}) \leq 2 f(\omega) k(\abs{S_O}, \gamma) \int_{t_\alpha}^t e^{-\gamma s} e^{-2(t - s)/\omega^2}ds \leq \omega^2 f(\omega) k(\abs{S_O}, \gamma) e^{-\gamma t_\alpha}.
\end{align}
Alternatively, we can also bound
\begin{align}\label{eq:long_time_sindex_eval3}
e^{\geq}_\alpha(t, t_\alpha) \leq \int_{t_\alpha}^t e^{-2(t - s)/\omega^2}\abs{\text{Tr}(O(s) \mathcal{K}_\alpha(\sigma_\omega(t - s)))}ds \leq \int_{t_\alpha}^t e^{-2(t - s)/\omega^2} \norm{O}\norm{\mathcal{K}_\alpha}_\diamond \norm{\sigma_\omega(t - s)}_1 ds \leq \omega^2 f(\omega).
\end{align}
From the bounds in Eqs.~\ref{eq:long_time_sindex_eval1}, \ref{eq:long_time_sindex_eval2} and \ref{eq:long_time_sindex_eval3}, we obtain that,
\[
e^{\geq}_\alpha(t, t_\alpha) \leq \omega^2 f(\omega) \min\big(k(\abs{S_O}, \gamma)e^{-\gamma t_\alpha}, 1\big) \text{ for any }t, t_\alpha.
\]
Combining the bounds for the short-time and long-time integrals, we obtain that
\begin{align}
    &\sum_{\alpha}\bigabs{\int_0^t e^{-2s/\omega^2}\text{Tr}(O(t - s) \mathcal{K}_\alpha(\sigma_\omega(s)))ds}  \nonumber\\
    &\qquad \leq \omega^2 f(\omega) \sum_{\alpha} \bigg(\min\bigg(\eta_{S_O} \exp\bigg(4e\mathcal{Z}t_\alpha - \frac{d(S_O, \tilde{S}_\alpha)}{a}\bigg), 1\bigg) + \min\big(k(\abs{S_O}, \gamma)e^{-\gamma t_\alpha}, 1)\bigg), \nonumber \\
    &\qquad \leq \omega^2 f(\omega) \sum_{\alpha} \bigg(\min\bigg(\eta_{S_O} \exp\bigg(4e\mathcal{Z}t_\alpha - \frac{d(S_O, \tilde{S}_\alpha)}{a}\bigg), 1\bigg) + \min\big(\exp\big(T_{\gamma} -\gamma t_\alpha\big), 1)\bigg),
\end{align}
where $T_\gamma = \log(\max(k(\abs{S_O}, \gamma), 1))$ --- note also that $T_\gamma > 0$. We now pick $t_\alpha = d(S_O,\tilde{S}_\alpha)/(8e\mathcal{Z}a)$ and apply Lemma~\ref{lemma:lr_summation} to obtain
\begin{align}
    \sum_{\alpha}\bigabs{\int_0^t e^{-2s/\omega^2}\text{Tr}(O(t - s) \mathcal{K}_\alpha(\sigma_\omega(s)))ds}&\leq  \omega^2 f(\omega) \big(\xi_{2a, \eta_{S_O}}^{(0, 1)}(0) + \xi^{(0, 1)}_{{8e\mathcal{Z}a \gamma^{-1}}, 1}(T_\gamma)\big), \nonumber\\
    &\leq \omega^2 f(\omega) \underbrace{\big(\max(\eta_{S_O}, 1) \nu^{(0)}(2a, 0) + \nu^{(0)}(8e\mathcal{Z}a \gamma^{-1}, T_\gamma)\big)}_{\zeta^{(0)}(\gamma)}.
\end{align}
Now, if $k(\abs{S_O}, \gamma) \leq O(\exp(\gamma^{-\kappa}))$ as $\gamma \to 0$, then $T_\gamma \leq O(\gamma^{-\kappa})$. Therefore, as $\gamma \to 0$
\[
\zeta^{(0)}(\gamma) = \max(\eta_{S_O}, 1) \nu^{(0)}(2a, 0) + \nu^{(0)}(8e\mathcal{Z}a \gamma^{-1}, T_\gamma) \leq O((\gamma^{-1}T_\gamma)^d)\leq O(\gamma^{-d(\kappa + 1)}),
\]
which proves part (a) of the lemma. Note that $\zeta^{(0)}(\gamma)$ also depends on $\tilde{\mathcal{Z}}$ through $\nu^{(0)}(\lambda, T)$.\\
\noindent (b) We proceed similarly to part (a), and note that
\[
\bigabs{\int_0^t \text{Tr}(O(t - s)\mathcal{K}_\alpha(\sigma_\omega(s))ds} = \bigabs{\int_0^t \text{Tr}(O(s)\mathcal{K}_\alpha(\sigma_\omega(t - s))ds} \leq e_{\alpha}^{\leq}(t, t_\alpha) + e_{\alpha}^{\geq}(t, t_\alpha),
\]
where 
\begin{align*}
    &e_{\alpha}^{\leq}(t, t_\alpha) = \bigabs{\int_0^{\min(t, t_\alpha)} \tr{}{O(s) \mathcal{K}_\alpha(\sigma_\omega(t - s))}ds},\\
    &e_{\alpha}^{\geq}(t, t_\alpha) = \bigabs{\int_{\min(t, t_\alpha)}^t \tr{}{O(s) \mathcal{K}_\alpha(\sigma_\omega(t - s))}ds},
\end{align*}
where $t_\alpha > 0$ is to be chosen later. We proceed similarly to part (a). To bound $e_{\alpha}^{\leq}(t, t_\alpha)$, we use Eq.~\ref{eq:lr_bound_short_time_integrand} to obtain 
\begin{align}
    e_\alpha^{\leq}(t, t_\alpha) & \leq 2 f(\omega) \int_0^{\min(t,t_\alpha)} \min\left( \eta_{S_O} \exp \left( 4e\mathcal{Z}s - \frac{d(S_O,\tilde{S}_\alpha)}{a}\right),1\right) ds,
    \nonumber \\
    & \leq 2 f(\omega) \min\left( \eta_{S_O} \exp \left( 4e\mathcal{Z}\min(t,t_\alpha) - \frac{d(S_O,\tilde{S}_\alpha)}{a}\right),1\right) \int_0^{\min(t,t_\alpha)}ds,
    \nonumber \\
    & \leq 2 f(\omega) t_\alpha \min\left( \eta_{S_O} \exp \left( 4e\mathcal{Z}t_\alpha - \frac{d(S_O,\tilde{S}_\alpha)}{a}\right),1\right).
\end{align}
Next we bound $e_{\alpha}^{\geq}(t, t_\alpha)$. If $t \leq t_\alpha$, then $\min(t,t_\alpha)=t$ and we have
\begin{align}
    e^\geq_\alpha(t,t_\alpha)=0.
    \label{eq:longtime_plainint_eval1}
\end{align}
If instead $t \geq t_\alpha$, we can use Eq.~\ref{eq:bound_single_sum_integrand} to write
\begin{align}
    e^{\geq}_\alpha(t,t_\alpha) \leq 2f(\omega)k(\abs{S_O},\gamma) \int_{t_\alpha}^t e^{-\gamma s} ds \leq 2f(\omega)k(\abs{S_O},\gamma)\gamma^{-1}e^{-\gamma t_\alpha}.
    \label{eq:longtime_plainint_eval2}.
\end{align}
Combining Eqs.~\ref{eq:longtime_plainint_eval1} and \ref{eq:longtime_plainint_eval2}, we obtain that
\begin{align}
    e^{\geq}_\alpha(t,t_\alpha) \leq 2f(\omega) \gamma^{-1}k(\abs{S_O},\gamma) e^{-\gamma t_\alpha} \text{ for any } t, t_\alpha. 
\end{align}
We now pick
\[
    t_{\alpha} = T'_\gamma + \frac{d(S_O,\tilde{S}_\alpha)}{8e\mathcal{Z}a},
\]
where $T'_\gamma=\gamma^{-1}\max(\log(\gamma^{-1}k(\abs{S_O},\gamma)))$. We accordingly obtain the following expressions for the short-time and long-time integrals:
\begin{align}
    \sum_\alpha e^{\leq}_{\alpha}(t,t_\alpha) & \leq f(\omega) \left( 2T'_\gamma \xi^{(0,1)}_{2a, \eta_{S_O}}(4e\mathcal{Z}T'_\gamma) + \frac{1}{4e\mathcal{Z}a} \xi^{(1,1)}_{2a, \eta_{S_O}}(4e\mathcal{Z}T'_\gamma)\right),
    \nonumber \\
    \sum_\alpha e^{\geq}_\alpha(t,t_\alpha) & \leq 2f(\omega) \xi^{(0,1)}_{8e\mathcal{Z}a\gamma^{-1},1}(0),
\end{align}
where we have used $\xi^{(m,k)}_{\lambda,x}(t)$ from Lemma~\ref{lemma:lr_summation}. Using Lemma~\ref{lemma:lr_summation} we obtain that,
\[
    \sum_\alpha \bigabs{\int_0^t \tr{}{O(t-s)\mathcal{K}_\alpha(\sigma_\omega(s))}ds} \leq \sum_\alpha e^{\leq}_{\alpha}(t,t_\alpha) + e^\geq_\alpha(t,t_\alpha) \leq f(\omega) \zeta^{(2)}(\gamma),
\]
where
\[
    \zeta^{(2)}(\gamma) = \max(\eta_{S_O},1)\left( 2T'_\gamma\nu^{(0)}(2a,4e\mathcal{Z}T'_\gamma) + \frac{1}{4e\mathcal{Z}}\nu^{(1)}(2a, 4e\mathcal{Z}T'_\gamma) + 2\nu^{(0)}(8e\mathcal{Z}a\gamma^{-1}, 0)\right).
\]
Note that $\zeta^{(2)}(\gamma)$ also depends on $\tilde{\mathcal{Z}}$ through its dependence on $\nu^{(m)}(\lambda, T)$. If $k(\abs{S_O}, \gamma) \leq O(\exp(\gamma^{-\kappa}))$ as $\gamma \to 0$, then $T'_\gamma \leq O(\gamma^{-(\kappa + 1)})$ and so, using the asymptotics of $\nu^{(m)}(\lambda, T)$ from Lemma~\ref{lemma:lr_summation}, we obtain that
\[
\zeta^{(2)}(\gamma) \leq O(\gamma^{-(\kappa+1)}) \times O(\gamma^{-d(\kappa+1)}) + O(\gamma^{-(d+1)(\kappa+1)}) + O(\gamma^{-d}) \leq O(\gamma^{-(d+1)(\kappa+1)}),
\]
which establishes the lemma.

\noindent (c) We begin by rewriting
\begin{align}\label{eq:split_integral_term}
 &\bigabs{\int_0^t \int_0^s e^{-2(s - s')/\omega^2}\tr{}{O(t - s) \mathcal{K}_\alpha \mathcal{J}_{\alpha'}(\sigma_\omega(s'))} ds' ds }  =  \bigabs{\int_0^t \int_s^t e^{-2(s' - s)/\omega^2}\tr{}{O(s) \mathcal{K}_\alpha \mathcal{J}_{\alpha'}(\sigma_\omega(t - s'))} ds' ds },
\end{align}
where we have made the change of variables $s \to t - s$ and $s' \to t - s'$. Next, we define the short-time and long-time integrals as follows:
\begin{subequations}
\begin{align*}
    \bigabs{\int_0^t \int_s^t e^{-2(s' - s)/\omega^2}\tr{}{O(s) \mathcal{K}_\alpha \mathcal{J}_{\alpha'}(\sigma_\omega(t - s'))} ds' ds}  \leq e^{\leq}_{\alpha, \alpha'}(t, t_{\alpha,\alpha'}) + e^{\geq}_{\alpha, \alpha'}(t, t_{\alpha,\alpha'}),
\end{align*}
where
    \begin{align*}
    &e^{\leq}_{\alpha, \alpha'}(t, t_{\alpha, \alpha'}) = \bigabs{\int_0^{\min(t, t_{\alpha, \alpha'})} \int_s^t e^{-2(s' - s)/\omega^2}\tr{}{O(s) \mathcal{K}_\alpha \mathcal{J}_{\alpha'}(\sigma_\omega(t - s'))} ds' ds}, \nonumber \\
    &e^{\geq}_{\alpha, \alpha'}(t, t_{\alpha, \alpha'}) = \bigabs{\int_{\min(t, t_{\alpha, \alpha'})}^t \int_s^t e^{-2(s' - s)/\omega^2}\tr{}{O(s) \mathcal{K}_\alpha \mathcal{J}_{\alpha'}(\sigma_\omega(t - s'))} ds' ds},
    \nonumber
\end{align*}
\end{subequations}
and $t_{\alpha, \alpha'} > 0$ will be chosen later. Consider first $e_{\alpha, \alpha'}^{\leq}(t, t_{\alpha, \alpha'})$ --- this can be upper bounded using the Lieb-Robinson bounds. Applying Lemma~\ref{lemma:2superopLR} and using $\norm{\mathcal{K}_\alpha^\dagger}_{cb,\infty\to\infty}, \norm{\mathcal{J}_\alpha^\dagger}_{cb,\infty\to\infty} \leq 2$, we see that
\begin{align}
    \bigabs{\tr{}{O(s)\mathcal{K}_\alpha \mathcal{J}_{\alpha'}(\sigma_\omega(t-s'))}} &\leq \bignorm{\mathcal{J}_{\alpha'}^\dagger \mathcal{K}_\alpha^\dagger(O(s))} \bignorm{\sigma_\omega(t - s')}_1,
    \nonumber \\
    & \leq 4 f(\omega) \min\left(e\eta_{S_O} \exp\left(4e\mathcal{Z}s - \frac{1}{2a}(d(S_O, \tilde{S}_\alpha) + d(S_O, \tilde{S}_{\alpha'})\right), 1\right),
    \nonumber
\end{align}
leading to the bound on $e^{\leq}_{\alpha, \alpha'}(t, t_{\alpha, \alpha'})$:
\begin{align}
    e^{\leq}_{\alpha, \alpha'}(t, t_{\alpha, \alpha'}) & \leq 4f(\omega)\int_0^{\min(t, t_{\alpha, \alpha'})} \int_s^t e^{-2(s' - s)/\omega^2}\min\left(e\eta_{S_O} \exp\left(4e\mathcal{Z}s - \frac{1}{2a}(d(S_O, \tilde{S}_\alpha) + d(S_O, \tilde{S}_{\alpha'}))\right), 1 \right) ds' ds,
    \nonumber \\
    & \leq 4f(\omega)\min\bigg(e\eta_{S_O} \exp\left(4e\mathcal{Z}t_{\alpha,\alpha'} - \frac{1}{2a}(d(S_O, \tilde{S}_\alpha) + d(S_O, \tilde{S}_{\alpha'}))\right), 1\bigg) \int_0^{\min(t, t_{\alpha, \alpha'})} \int_s^t e^{-2(s' - s)/\omega^2}ds'ds,
    \nonumber \\
    & \leq 2\omega^2f(\omega) t_{\alpha,\alpha'} \min\bigg(e\eta_{S_O}\exp\left(4e\mathcal{Z}t_{\alpha,\alpha'} - \frac{1}{2a}(d(S_O, \tilde{S}_\alpha) + d(S_O, \tilde{S}_{\alpha'}))\right), 1\bigg).
\end{align}
Next we consider $e_{\alpha, \alpha'}^{\geq}(t, t_{\alpha, \alpha'})$ --- again, to upper bound this term, we use the local rapid mixing property (Eq.~\ref{eq:rapid_mixing_observable_v2}). For any $0 \leq s, s' \leq t$, we have
\begin{align}\label{eq:long_time_integrad_dind_bound}
    \bigabs{\tr{}{O(s)\mathcal{K}_\alpha\mathcal{J}_{\alpha'}(\sigma_\omega(t-s'))}} & \leq \bigabs{\tr{}{O\sigma}\tr{}{\mathcal{K}_\alpha\mathcal{J}_{\alpha'}(\sigma_\omega(t-s')}} + \bigabs{\tr{}{(O(s)-\tr{}{O\sigma}I)\mathcal{K}_\alpha \mathcal{J}_{\alpha'}(\sigma_\omega(t-s'))}},
    \nonumber \\
    & \numleq{1} \norm{O(s)-\tr{}{O\sigma}I} \norm{\mathcal{K}_\alpha}_\diamond \norm{\mathcal{J}_{\alpha'}}_\diamond \norm{\sigma_\omega(t-s')}_1,
    \nonumber \\
    & \numleq 4f(\omega)k(\abs{S_O}, \gamma)e^{-\gamma s}.
\end{align}
where in (1) we have used the fact that $\mathcal{J}_{\alpha'}^\dagger(I)=0$ and therefore for any $X$, $\text{Tr}(\mathcal{J}_{\alpha'}(X)) = \text{Tr}(\mathcal{J}_{\alpha'}^\dagger(I)X) = 0$. In (2), we have used Eq.~\ref{eq:rapid_mixing_observable_v2}. Now, for $t < t_{\alpha, \alpha'}$, $\min(t, t_{\alpha, \alpha'}) = t$ and therefore
\begin{align}\label{eq:long_time_dindex_eval1}
e^{\geq}_{\alpha, \alpha'}(t, t_{\alpha, \alpha'}) = 0.
\end{align}
Next, for $t \geq t_{\alpha, \alpha'}$, using Eq.~\ref{eq:long_time_integrad_dind_bound}, we obtain that
\begin{align}\label{eq:long_time_dindex_eval2}
    e^{\geq}_{\alpha, \alpha'}(t, t_{\alpha, \alpha'}) &\leq  4f(\omega)k(\abs{S_O}, \gamma) \int_{ t_{\alpha, \alpha'}}^t \int_s^t e^{-2(s' - s)/\omega^2} e^{-\gamma s} ds' ds,
    \nonumber \\
    &\leq 2\omega^2 f(\omega)k(\abs{S_O}, \gamma) \int_{t_{\alpha, \alpha'}}^t e^{-\gamma s} ds \nonumber \\
    & \leq 2\gamma^{-1}\omega^2f(\omega)k(\abs{S_O}, \gamma) e^{-\gamma t_{\alpha,\alpha'}}.   
\end{align}
% Alternatively,
% \begin{align}\label{eq:long_time_dindex_eval3}
% \bigabs{e^{\geq}_{\alpha, \alpha'}(t, t_{\alpha, \alpha'})} &\leq \int_{t_{\alpha, \alpha'}}^t \int_s^t e^{-2(t - s)/\omega^2}\abs{\text{Tr}(O(s) \mathcal{K}_\alpha \mathcal{J}_{\alpha'}(\sigma_\omega(t - s')))}ds' ds, \nonumber \\
% &\leq \int_{t_{\alpha, \alpha'}}^t \int_s^t e^{-2(t - s)/\omega^2}\norm{O}\norm{\mathcal{K}_\alpha}_{cb,\infty\to\infty}\norm{\mathcal{J}_{\alpha'}}_{cb,\infty\to\infty} \norm{\sigma_\omega(t - s')}_1 ds'ds, \nonumber \\
% &\leq 4f(\omega)
% \end{align}
From the bounds in Eq.~\ref{eq:long_time_dindex_eval1} and \ref{eq:long_time_dindex_eval2}, we obtain that
\[
e^{\geq}_{\alpha, \alpha'}(t, t_{\alpha, \alpha'}) \leq 2\gamma^{-1}\omega^2 f(\omega) k(\abs{S_O}, \gamma)e^{-\gamma t_{\alpha, \alpha'}}  \text{ for any }t, t_{\alpha, \alpha'}.
\]
We now pick
\[
t_{\alpha, \alpha'} =  {T}'_\gamma + \frac{1}{16e\mathcal{Z}a}\bigg(d(S_O, \tilde{S}_\alpha) + d(S_O, \tilde{S}_{\alpha'})\bigg)
\]
where $T_\gamma' = \gamma^{-1}\max\big(\log(\gamma^{-1}k(\abs{S_O}, \gamma), 1)\big)$. With this choice, we obtain the following expressions for the long-time and short-time integrals:
\begin{align*}
    \sum_{\alpha,\alpha'} e^\leq_{\alpha,\alpha'}(t,t_{\alpha,\alpha'})
    & \leq \omega^2 f(\omega)\bigg(2T_{\gamma}'\xi_{4a, e\eta_{S_O}}^{(0, 2)}\big(4e\mathcal{Z}{T_\gamma'}\big) + \frac{1}{8e\mathcal{Z}a}\xi_{4a, e\eta_{S_O}}^{(1, 2)}\big(4e\mathcal{Z}T_\gamma'\big)\bigg),
    \nonumber \\
    \sum_{\alpha,\alpha'} e^\geq_{\alpha,\alpha'}(t,t_{\alpha,\alpha'})
    & \leq {2\omega^2f(\omega)} \xi^{(0, 2)}_{ 16e\gamma^{-1}\mathcal{Z}a, 1}(0),
\end{align*}
where we have used $\xi^{m, k}_{\lambda, x}(t)$ defined in Lemma~\ref{lemma:lr_summation}. Using Lemma~\ref{lemma:lr_summation}, we obtain that
\begin{align*}
\sum_{\alpha, \alpha'} \bigabs{\int_0^t \int_0^s e^{-2(s - s')/\omega^2}\tr{}{O(t - s) \mathcal{K}_\alpha \mathcal{J}_{\alpha'}(\sigma_\omega(s'))} ds' ds }  \leq \sum_{\alpha, \alpha'}\big(e^{\leq}_{\alpha, \alpha'}(t, t_{\alpha, \alpha'}) + e^{\geq}_{\alpha, \alpha'}(t, t_{\alpha, \alpha'})\big) &\leq \omega^2 f(\omega)\zeta^{(3)}(\gamma),
\end{align*}
where
\begin{align*}
    \zeta^{(3)}(\gamma) =  \max(e\eta_{S_O},1)^2\left(2T'_\gamma\left(\nu^{(0)}(4a, 4e\mathcal{Z}T'_\gamma)\right)^2 + \frac{1}{4e\mathcal{Z}a}\nu^{(0)}(4a, 4e\mathcal{Z}T'_\gamma)\nu^{(1)}(4a, 4e\mathcal{Z}T'_\gamma)\right)+2\left(\nu^{(0)}(16e\gamma^{-1}\mathcal{Z})\right)^2.
\end{align*}
Note that $\zeta^{(3)}(\gamma)$ also depends on $\tilde{\mathcal{Z}}$ through its dependence on $\nu^{(m)}(\gamma)$. If $k(\abs{S_O}, \gamma) \leq O(\exp(\gamma^{-\kappa}))$ as $\gamma \to 0$, $T_\gamma \leq O(\gamma^{-(\kappa + 1)})$ and using the asymptotics of $\nu^{(m)}(\lambda, T)$ from Lemma~\ref{lemma:lr_summation}, we obtain that
\[
\zeta^{(3)}(\gamma) \leq O\left(\gamma^{-(\kappa+1)}\right)\times  O\left(\gamma^{-2d(\kappa+1)}\right) +  O\left(\gamma^{-d(\kappa+1)}\right)\times O\left(\gamma^{-(d+1)(\kappa+1)}\right) +  O\left(\gamma^{-2d}\right) \leq  O\left(\gamma^{-(2d + 1)(\kappa + 1)}\right),
\]
which establishes the lemma.
% \begin{align*}
%     &\sum_{\alpha, \alpha'} \bigabs{\int_0^t \int_0^s e^{-2(s - s')/\omega^2} \tr{}{O(t - s) \mathcal{K}_\alpha \mathcal{J}_{\alpha'}(\sigma_\omega(s'))}} ds' ds \nonumber\\
%     &\qquad \qquad \leq \omega^2 f(\omega)\bigg[\frac{\eta_{S_O}}{2a\mathcal{Z}}\sum_{\alpha, \alpha'} d(S_O, S_\alpha) \exp\bigg(-\frac{1}{4a}(d(S_O, S_\alpha)+d(S_O, S_{\alpha'}))\bigg)  + \nonumber\\
%     &\qquad \qquad \qquad \qquad \qquad \frac{2k}{\gamma}(\text{vol}(S_O)) \bigg(\sum_{\alpha} \exp\bigg(-\frac{\gamma}{4ea\mathcal{Z}}d(S_O, S_\alpha)\bigg)\bigg)^2  \bigg], \nonumber \\
%     &\qquad \qquad \leq O(\omega^2 f(\omega)),
% \end{align*}
% where, in the last step, we have used Lemma~\ref{lemma:exponential_summation_lattice}. \\
\end{proof}
\begin{replemma}{lemma:error_term_rapid_mixing}
    Suppose $O$ is a local observable with $\norm{O} \leq 1$ supported on $S_O$, and for $\tau>0$, let $O(\tau) = \exp(\mathcal{L}^\textnormal{\dagger} \tau)(O)$ where $\mathcal{L}$ is a geometrically local Lindbladian of the form in Eq.~\ref{eq:geom_local_lind}. Furthermore, suppose $O$ is rapidly mixing with respect to $\mathcal{L}$ and satisfies Eq.~\ref{eq:rapid_mixing_observable_v2} with $k(\abs{S_O}, \gamma) \leq O(\exp(\gamma^{-\kappa}))$ as $\gamma\to0$. Then for\ $q_\alpha$ as defined in Lemma~\ref{lemma:remainder},
        \[
        \sum_{\alpha}\bigabs{\int_0^t \tr{}{O(t - s) q_\alpha}e^{-2s/\omega^2}ds} \leq  \omega^2 \lambda^{(1)}(\gamma),
        \]
        where $\lambda^{(1)}(\gamma) \leq O(\gamma^{-d(\kappa + 1)})$ as $\gamma \to 0$; and for $j \in \{1, 2, 3, 4\}$
        \begin{align*}
        &\sum_{\alpha, \alpha'}\bigabs{\int_0^t \int_0^{s/\omega^2} \tr{}{O(t - s)\mathcal{Q}_{\alpha, \alpha'}^{(j)}(s')} e^{-2(s/\omega^2 - s')}ds' ds} \leq \lambda^{(2)}(\gamma),
        \end{align*}
where $\lambda^{(2)}(\gamma) \leq O(\gamma^{-(2d + 1)(\kappa + 1)})$ as $\gamma\to0$ and for $j \in \{3, 4\}$, $\mathcal{Q}_{\alpha, \alpha'}^{(j)}$ is defined in Lemma~\ref{lemma:remainder}, and for $j\in \{1, 2\}$, we define $\mathcal{Q}_{\alpha, \alpha'}^{(j)} = \mathcal{Q}_{\alpha, h_{\alpha'}}^{(j)}$ where $\mathcal{Q}_{\alpha, h}^{(j)}$ is defined in Lemma~\ref{lemma:remainder}.
\end{replemma}
\begin{proof}
    Noting from Lemma~\ref{lemma:remainder} that $q_\alpha = \mathcal{D}_{L_\alpha}(\rho(0))$, and that $\norm{\rho(0)}_1 = 1$, from the application of Lemma~\ref{lemma:local_rapid_mixing_superop_bounds}(a), we obtain that
    \[
    \sum_{\alpha}\bigabs{\int_0^t \text{Tr}(O(t - s) q_\alpha) e^{-2s/\omega^2}ds} \leq \omega^2 \zeta^{(1)}(\gamma),
    \]
    and thus we may choose $\lambda^{(1)}(\gamma)=\zeta^{(1)}(\gamma)$. By Lemma~\ref{lemma:local_rapid_mixing_superop_bounds}, so long as $k(\abs{S_O},\gamma) \leq O\left(\exp(\gamma^{-\kappa})\right)$ as $\gamma \to 0$, then $\zeta^{(1)}(\gamma)\leq O\left(\gamma^{-d(\kappa+1)}\right)$ as $\gamma \to 0$.
    Next, we note that
    \begin{align}\label{eq:change_of_variable}
    \sum_{\alpha, \alpha'}\bigabs{\int_0^t \int_0^{s/\omega^2} \tr{}{O(t - s)\mathcal{Q}_{\alpha, \alpha'}^{(j)}(s')} e^{-2(s/\omega^2 - s')}ds' ds} = \frac{1}{\omega^2}\sum_{\alpha, \alpha'}\bigabs{\int_0^t \int_0^{s} \text{Tr}\bigg({O(t - s)\mathcal{Q}_{\alpha, \alpha'}^{(j)}\bigg(\frac{s'}{\omega^2}\bigg)\bigg)} e^{-2(s - s')/\omega^2}ds' ds}
    \end{align}
    We can now consider $j \in \{1, 2, 3, 4\}$ separately --- for $j = 1$, we have from Lemma~\ref{lemma:remainder} that $\mathcal{Q}_{\alpha, \alpha'}^{(1)}(s'/\omega^2) = \mathcal{K}_{\alpha}\mathcal{J}_{\alpha'}(\sigma_\omega(s')) + \textnormal{h.c.}$ with $\mathcal{K}_\alpha(X) = [L_\alpha^\dagger, X]$, $\mathcal{J}_{\alpha'}(X) = [h_{\alpha'}, X]$ and $\sigma_\omega(s') = -\omega^{-1}\tr{\mathcal{A}}{\sigma_\alpha \rho_\omega(s'/\omega^2)}$. We note that $\norm{\mathcal{K}_\alpha}_{\diamond}, \norm{\mathcal{J}_{\alpha'}}_\diamond \leq 2$ and that, by Lemma~\ref{lemma:tr_sigma_bounds}, $\norm{\sigma_\omega(s')}_1 \leq 1/2$. Thus, from Eq.~\ref{eq:change_of_variable} and Lemma~\ref{lemma:local_rapid_mixing_superop_bounds}(c), we obtain that
    \begin{equation}
    \sum_{\alpha, \alpha'}\bigabs{\int_0^t \int_0^{s/\omega^2} \tr{}{O(t - s)\mathcal{Q}_{\alpha, \alpha'}^{(1)}(s')} e^{-2(s/\omega^2 - s')}ds' ds} \leq \zeta^{(3)}(\gamma).
    \label{eq:Q1_lrm_bound}
    \end{equation}
    Similarly, for $j = 2$, we have from Lemma~\ref{lemma:remainder} that $\mathcal{Q}_{\alpha, \alpha'}^{(2)}(s'/\omega^2) = \mathcal{K}_\alpha \mathcal{J}_{\alpha'}(\sigma_\omega(s')) + \textnormal{h.c.}$ where $\mathcal{K}_\alpha(X) = [L_\alpha^\dagger, L_\alpha X]$, $\mathcal{J}_{\alpha'}(X) = [h_{\alpha'}, X]$ and $\sigma_\omega(s') = -i\text{Tr}_\mathcal{A}(\rho_\omega(s'/\omega^2))/2$. We note that $\norm{\mathcal{K}_\alpha}_{\diamond}, \norm{\mathcal{J}_{\alpha'}}_\diamond \leq 2$ and that $\norm{\sigma_\omega(s')}_1 \leq 1/2$. Thus, from Eq.~\ref{eq:change_of_variable} and Lemma~\ref{lemma:local_rapid_mixing_superop_bounds}, we obtain that
    \begin{equation}
        \sum_{\alpha, \alpha'}\bigabs{\int_0^t \int_0^{s/\omega^2} \tr{}{O(t - s)\mathcal{Q}_{\alpha, \alpha'}^{(2)}(s')} e^{-2(s/\omega^2 - s')}ds' ds} \leq \zeta^{(3)}(\gamma).
        \label{eq:Q2_lrm_bound}
    \end{equation}
    For $j = 3$, we have from Lemma~\ref{lemma:remainder} that $\mathcal{Q}_{\alpha, \alpha'}^{(3)}(s'/\omega^2) = \sum_{u \in \{-, +\}}\mathcal{K}_\alpha^{(u)} \mathcal{J}_{\alpha'}^{(u)}(\sigma_\omega^{(u)}(s'))$ where $\mathcal{K}^{(u)}_\alpha = \mathcal{D}_{L_\alpha}$, $\mathcal{J}^{(u)}_{\alpha'}(X) = [L_{\alpha'}^{(u)}, X]$ and $\sigma_\omega^{(u)}(s') = i\omega^{-1}\text{Tr}_\mathcal{A}(\sigma_{\alpha'}^{(\bar{u})}\rho_\omega(s'/\omega^2))$. We note that $\norm{\mathcal{K}^{(u)}_\alpha}_{\diamond}, \norm{\mathcal{J}^{(u)}_{\alpha'}}_\diamond \leq 2$ and, using Lemma~\ref{lemma:tr_sigma_bounds}, $\norm{\sigma^{(\bar{u})}_\omega(s')}_1 \leq 1/2$. Thus, from Eq.~\ref{eq:change_of_variable} and Lemma~\ref{lemma:local_rapid_mixing_superop_bounds}, we obtain that
    \begin{align}
    \sum_{\alpha, \alpha'}\bigabs{\int_0^t \int_0^{s/\omega^2} \tr{}{O(t - s)\mathcal{Q}_{\alpha, \alpha'}^{(3)}(s')} e^{-2(s/\omega^2 - s')}ds' ds} \leq \zeta^{(3)}(\gamma).
    \label{eq:Q3_lrm_bound}
    \end{align}
    Finally, for $j = 4$, we have from Lemma~\ref{lemma:remainder} that $\mathcal{Q}_{\alpha, \alpha'}^{(4)}(s'/\omega^2) = \sum_{u, u' \in \{-, +\}}\mathcal{K}_\alpha \mathcal{J}_{\alpha'}^{(u,u')}(\sigma_{\omega}^{(u,u')}(s')) + \textnormal{h.c.}$ according to the following definitions: We define $\mathcal{K}_\alpha(X)=[L^\dagger_\alpha,X]$.  $J_{\alpha'}^{(u,u')}(X)=L^{(u)}X$ if $u'=-$  and $J_{\alpha'}^{(u,u')}(X)=XL^{(u)}$ if $u'=+$. Finally, $\sigma_{\omega}^{(u,u')}(s')=\omega^{-2}\tr{\mathcal{A}}{\sigma_{\alpha}^{(\bar{u})}\sigma_{\alpha}\rho_{\omega,\delta}(s'/\omega^2)}$ if $\alpha=\alpha'$ and $\sigma_{\omega}^{(u,u')}(s')=u'\omega^{-2}\tr{\mathcal{A}}{\sigma_{\alpha'}^{(\bar{u})}\sigma_{\alpha}\rho_{\omega,\delta}(s'/\omega^2)}$ if $\alpha\neq\alpha'$. We note that for any $u,u'$, $\norm{\mathcal{K}_\alpha}_\diamond, \norm{\mathcal{J_{\alpha'}^{(u,u')}}}_\diamond\leq 2$ and, using Lemma~\ref{lemma:tr_sigma_bounds}, $\norm{\sigma_\omega^{(u,u')}(s')}_1 \leq 1/4$. Thus, applying Eq.~\ref{eq:change_of_variable} and Lemma~\ref{lemma:local_rapid_mixing_superop_bounds}(c), we obtain that
    \begin{equation}
        \sum_{\alpha, \alpha'}\bigabs{\int_0^t \int_0^{s/\omega^2} \tr{}{O(t - s)\mathcal{Q}_{\alpha, \alpha'}^{(4)}(s')} e^{-2(s/\omega^2 - s')}ds' ds} \leq 2\zeta^{(3)}(\gamma).
        \label{eq:Q4_lrm_bound}
    \end{equation}
    By Lemma~\ref{lemma:local_rapid_mixing_superop_bounds}, so long as $k(\abs{S_O},\gamma) \leq O\left(\exp(\gamma^{-\kappa})\right)$ as $\gamma \to 0$, then $\zeta^{(3)}(\gamma)\leq O\left(\gamma^{-(2d+1)(\kappa+1)}\right)$ as $\gamma \to 0$. Considering Eqs.~\ref{eq:Q1_lrm_bound}-\ref{eq:Q4_lrm_bound}, we may choose $\lambda^{(2)}(\gamma)=2\zeta^{(3)}(\gamma)$ to satisfy the lemma statement.
\end{proof}
\section{Circuit-to-geometrically local Lindbladian encoding}
\label{appendix:circuit_to_geom_local_lindbladian_encoding}
In this appendix, we construct an encoding of a quantum circuit into a 2D geometrically local Lindbladian with a unique fixed point. We prove a useful result about its convergence to fixed point. Our construction follows from combining the dissipative quantum computing gadgets in Ref.~\cite{verstraete2009quantum} with the quantum circuit to Hamiltonian encoding in Ref.~\cite{aharonov2008adiabatic}. We remark that Ref.~\cite{verstraete2009quantum} already suggested, in its supplementary, that it should be possible to perform dissipative quantum computation with just a 2D nearest-neighbour Lindbladian but did not provide an explicit construction. We fill that gap in this appendix.

\subsection{Review of the $O(\log N)$-local}\label{sec:log_N_local}
As a review of the basic construction and proof technique involved in encoding a quantum circuit into the fixed point of an Lindbladian, we first analyze the 5-local construction in Ref.~\cite{verstraete2009quantum}. While Ref.~\cite{verstraete2009quantum} laid out the circuit to Lindbladian encoding and analyzed the eigenvalue spectrum of the resulting Lindbladian, they did not explicitly bound the error between the state obtained by evolution under the Lindbladian and the fixed point. This subsection also serves to fill this gap.
\begin{figure}
\includegraphics[scale=0.5]{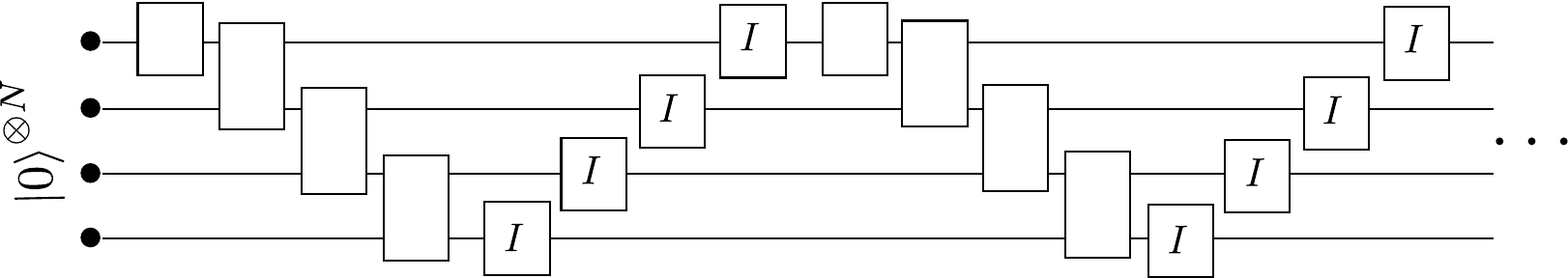}
\caption{Architecture of the circuit being encoded into the Lindbladian assumed in section \ref{sec:log_N_local}. The identity gates are treated as separate time-steps. Any poly$(N)$ depth quantum circuit on $N$ qubits can be brought into this architecture with at-most a $\text{poly}(N)$ overhead in the circuit depth.}
\label{fig:log_N_local}
\end{figure}

We will begin by first reviewing the circuit to Lindbladian encoding that is a slight modification of the one shown in Ref.~\cite{verstraete2009quantum} --- consider a quantum circuit on $N$ qubits and of depth $T = \text{poly}(N)$. In preparation for the analysis of the geometrically local Lindbladian, we will assume the  unitaries $U_1, U_2 \dots U_T$ applied in each time-step of the circuit are in the architecture as shown in Fig.~\ref{fig:log_N_local}. We will construct a master equation on the Hilbert space $(\mathbb{C}^{2})^{\otimes N} \otimes \mathbb{C}^{T + 1}$ of $N$ data qubits and a $(T + 1)-$level system, which will be the clock qudit. We consider a Lindbladian $\mathcal{L}_\text{ref}$ with two sets of jump operators, \emph{first}, $\{S_\alpha\}_{\alpha \in \{1, 2 \dots N\}}$ encoding the initialization of the data qubits to $\ket{0}$ and \emph{second}, $\{L_t\}_{t\in\{1, 2 \dots T\}}$ to encode the quantum gates:
\begin{subequations}\label{eq:logN_lind}
\begin{align}
&\mathcal{L}_\text{ref} = \sum_{\alpha = 1}^N \mathcal{D}_{S_\alpha}  + \sum_{s = 1}^T \mathcal{D}_{L_s}, \\
        &S_\alpha = \ket{0_\alpha}\!\bra{1_\alpha} \otimes \sum_{t = 0}^{\alpha - 1}\ket{t}\!\bra{t} \text{ for }\alpha \in [1:N], \\
        &L_s = U_s\otimes\ket{s}\!\bra{s - 1 } + \text{h.c.} \text{ if } s \in [1:T].
\end{align}
\end{subequations}
As we will demonstrate below, the unique fixed point of this master equation is given by the state
\begin{align}\label{eq:logN_lind_fp}
\sigma = \frac{1}{T + 1}\sum_{s = 0}^T \ket{\phi_s}\!\bra{\phi_s} \otimes \ket{s}\!\bra{s},
\end{align}
where $\ket{\phi_s} = U_s U_{s - 1} \dots U_1 \ket{0}^{\otimes N}$, with $\ket{\phi_0} = \ket{0}^{\otimes N}$ Furthermore, we will also establish an upper bound on $\norm{e^{\mathcal{L}_\text{ref}t} - \text{Tr}(\cdot) \sigma}_{1\to 1}$ to assess the time taken to converge to the fixed point. We remark that our analysis closely follows Ref.~\cite{verstraete2009quantum}, where they calculated the eigenvalues $\mathcal{L}_\text{ref}$. However, since the eigenvector matrix of a Lindbladian superoperator is not necessarily unitary and can be badly conditioned, just the eigenvalues are not enough bound the error $\norm{e^{\mathcal{L}_\text{ref}t} - \text{Tr}(\cdot) \sigma}_{1\to 1}$. Below, we build on the analysis in Ref.~\cite{verstraete2009quantum} and provide a concrete bound on $\norm{e^{\mathcal{L}_\text{ref}t} - \text{Tr}(\cdot) \sigma}_{1\to 1}$.

To proceed with our analysis, we need the following technical lemma.
\begin{lemma}\label{lemma:tridiagonal_matrices_analysis}
    Suppose $M \in \mathbb{R}^{k \times k}$ is the tridiagonal matrix given by
    \[
    M \cong \sum_{i = 0}^{k - 2}\big(\ket{i}\!\bra{i + 1} + \ket{i + 1}\!\bra{i} - \ket{i}\!\bra{i} - \ket{i + 1}\!\bra{i + 1}\big),
    \]
    then
    \[
    \norm{\exp(Mt) - \ket{\xi}\!\bra{\xi}}_\textnormal{op} \leq \exp\bigg(-4t\sin^2\bigg(\frac{\pi}{2k}\bigg)\bigg) = \exp\bigg(-\Omega\bigg(\frac{t}{k^2}\bigg)\bigg) \text{ as } k \to \infty,
    \]
    where $\ket{\xi} =k^{-1/2}\sum_{n = 0}^{k - 1}\ket{n}$. Furthermore, if $d \in \mathbb{R}^k$ is a vector with only non-negative elements and with smallest non-zero element $d_0$ then 
    \[
    \norm{\exp((M - \textnormal{diag}(d))t)}_\textnormal{op} \leq \exp\bigg(-\frac{t}{k}\min\bigg(d_0, 4\sin^2\bigg(\frac{\pi}{2k}\bigg)\bigg)\bigg) = \exp\bigg(-\Omega\bigg(\frac{t}{k^3}\bigg)\bigg) \text{ as } k \to \infty.
    \]
\end{lemma}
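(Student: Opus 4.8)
The plan is to recognize $M$ as the negative graph Laplacian of the path graph $P_k$ on vertices $\{0,1,\dots,k-1\}$. Indeed, the off-diagonal entries $M_{i,i+1}=M_{i+1,i}=1$ form the path adjacency matrix, while the diagonal entry at site $j$ equals $-\deg(j)$, so $M=-L$ with $L$ the path Laplacian. Since $M$ is real symmetric it is orthogonally diagonalizable, and the path Laplacian has the standard eigenvalues $4\sin^2(n\pi/(2k))$, $n=0,\dots,k-1$; hence $M$ has eigenvalues $\lambda_n=-4\sin^2(n\pi/(2k))$. Because $P_k$ is connected, the zero eigenvalue ($n=0$) is simple, its normalized eigenvector is $\ket{\xi}=k^{-1/2}\sum_n\ket{n}$, and the corresponding spectral projector is exactly $\ket{\xi}\!\bra{\xi}$.

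For the first bound I would write the spectral decomposition $\exp(Mt)=\sum_{n=0}^{k-1}e^{\lambda_n t}\ket{v_n}\!\bra{v_n}$, so that $\exp(Mt)-\ket{\xi}\!\bra{\xi}=\sum_{n\geq1}e^{\lambda_n t}\ket{v_n}\!\bra{v_n}$ is a sum of orthogonal projectors scaled by $e^{\lambda_n t}$. Its operator norm is therefore the largest of these factors, namely $e^{\lambda_1 t}=\exp(-4t\sin^2(\pi/(2k)))$, since every nonzero $\lambda_n$ is negative and $\lambda_1$ is the least negative. The asymptotic $\exp(-\Omega(t/k^2))$ then follows from $\sin(\pi/(2k))=\Theta(1/k)$.

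For the second bound I would again use that $M-\text{diag}(d)$ is real symmetric, so $\norm{\exp((M-\text{diag}(d))t)}_\textnormal{op}=\exp(\lambda_{\max}(M-\text{diag}(d))\,t)$, and it suffices to lower bound $\lambda_{\min}(L+D)=-\lambda_{\max}(M-\text{diag}(d))$ with $D=\text{diag}(d)$. The approach is a rank-one positive-semidefinite relaxation: setting $\mu=4\sin^2(\pi/(2k))$, the spectral gap above the zero mode gives $L\geq\mu(I-\ket{\xi}\!\bra{\xi})$ in the positive-semidefinite order, and choosing an index $j_0$ with $d_{j_0}\geq d_0>0$ gives $D\geq d_0\ket{\delta_{j_0}}\!\bra{\delta_{j_0}}$. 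Thus $L+D\geq R:=\mu I-\mu\ket{\xi}\!\bra{\xi}+d_0\ket{\delta_{j_0}}\!\bra{\delta_{j_0}}$. On the orthogonal complement of $\mathrm{span}\{\ket{\xi},\ket{\delta_{j_0}}\}$ the operator $R$ acts as $\mu I$, while on this two-dimensional subspace it is an explicit $2\times2$ matrix whose two eigenvalues $\lambda_\pm$ satisfy $\lambda_++\lambda_-=\mu+d_0$ and $\lambda_+\lambda_-=d_0\mu/k$ (using $\langle\xi|\delta_{j_0}\rangle=1/\sqrt{k}$). Since $\lambda_-\geq0$ forces $\lambda_+\leq\mu+d_0$, the smaller eigenvalue obeys $\lambda_-\geq (d_0\mu/k)/(d_0+\mu)\geq\frac{1}{2k}\min(d_0,\mu)$, which dominates against the value $\mu$ on the complement. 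Hence $\lambda_{\min}(L+D)\geq\Omega(k^{-1}\min(d_0,\mu))$, giving the claimed bound up to the stated constant, and $\exp(-\Omega(t/k^3))$ since $\mu=\Theta(1/k^2)$.

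The main obstacle is exactly this quantitative lower bound on $\lambda_{\min}(L+D)$, which encodes the genuine $1/k^3$ rate as a competition between the near-uniform slow mode of $L$ (small Dirichlet energy, but forced $\Omega(1/k)$ overlap with the damped site $j_0$ and hence $\Omega(d_0/k)$ cost through $D$) and any attempt to push probability away from $j_0$ (costing Dirichlet energy $\Omega(\mu/k)$). The $2\times2$ reduction is what renders this tradeoff rigorous; the only care required is to confirm that the rank-one positive-semidefinite relaxation loses at most a constant factor relative to the true smallest eigenvalue.
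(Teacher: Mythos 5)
Your proof of the first bound is essentially the paper's: both identify $M$ as minus the path-graph Laplacian, use the explicit eigenbasis with eigenvalues $-4\sin^2(n\pi/2k)$, and read off $\norm{\exp(Mt)-\ket{\xi}\!\bra{\xi}}_\textnormal{op}=e^{\lambda_1 t}$ from the spectral decomposition.

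For the second bound you take a genuinely different route. The paper invokes Lemma~14.4 of Kitaev--Shen--Vyalyi applied to $A_1=\textnormal{diag}(d)$ and $A_2=-M$: with $c_0=\min(d_0,4\sin^2(\pi/2k))$ and the angle $\theta$ between the two null spaces satisfying $\cos\theta\leq (k-1)/k$, it concludes $\lambda_{\min}(A_1+A_2)\geq 2c_0\sin^2(\theta/2)\geq c_0/k$. You instead lower-bound each summand by a low-rank PSD operator, $L\succeq\mu(I-\ket{\xi}\!\bra{\xi})$ with $\mu=4\sin^2(\pi/2k)$ and $D\succeq d_0\ket{\delta_{j_0}}\!\bra{\delta_{j_0}}$, and diagonalize the resulting operator exactly; your $2\times 2$ computation on $\mathrm{span}\{\ket{\xi},\ket{\delta_{j_0}}\}$ is correct ($\lambda_++\lambda_-=\mu+d_0$, $\lambda_+\lambda_-=d_0\mu/k$), giving $\lambda_{\min}(L+D)\geq \frac{d_0\mu}{k(d_0+\mu)}\geq\frac{1}{2k}\min(d_0,\mu)$. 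This is more elementary and self-contained (no external geometric lemma), at the cost of a factor of $2$ in the exponent relative to the constant displayed in the lemma: you prove $\exp\big(-\frac{t}{2k}\min(d_0,4\sin^2(\pi/2k))\big)$ rather than $\exp\big(-\frac{t}{k}\min(\cdot)\big)$. Since the lemma is consumed downstream only through the asymptotic $\exp(-\Omega(t/k^3))$ (the precise constant in $a_0(T)$ is immaterial), this loss is harmless, but to reproduce the inequality as literally stated you would need either to sharpen the two-dimensional estimate or to restate the bound with the extra factor of $1/2$; you correctly flag this yourself.
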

\begin{proof}
    We can analytically diagonalize $M$ --- consider the vector $\ket{v_n} \in \mathbb{R}^k$, for $n \in \{0, 1, 2\dots k - 1\}$ given by
    \[
    \ket{v_n} = \bigg(\frac{2}{k}\bigg)^{1/2} \sum_{m = 0}^{k - 1} \cos\bigg(\frac{\pi n}{k}\bigg(m + \frac{1}{2}\bigg)\bigg)\ket{m},
    \]
    $\ket{v_n}$ is an eigenvector of $M$ with eigenvalue $\lambda_n = -4\sin^2(n\pi / 2k)$. Note that $\ket{v_0} = \ket{\xi}$ with $\lambda_0 = 0$. It then follows immediately from the diagonalization of $M$ that
    \[
    \norm{\exp(Mt) - \ket{\xi}\!\bra{\xi}}_\text{op} =\bignorm{\sum_{m = 1}^{k - 1}e^{\lambda_m t}\ket{v_m}\! \bra{v_m}}_\text{op} = \exp({\lambda_1 t}) = \exp\bigg(-4t\sin^2\bigg(\frac{\pi}{2k}\bigg)\bigg) 
    \]
    To bound $\norm{\exp((M - \textnormal{diag}(d))t)}$, we use Ref.~\cite{kitaev2002classical} Lemma 14.4 which is as follows: Suppose $A_1, A_2 \succeq 0$ are non-negative operators with the minimum positive eigenvalue being at least $c_0$. Suppose $\mathcal{N}_1, \mathcal{N}_2$ are the null spaces of $A_1, A_2$ respectively and $\theta = \cos^{-1}\big(\max_{\ket{n_1} \in \mathcal{N}_1} \max_{\ket{n_2} \in \mathcal{N}_2} \abs{\bra{n_1}n_2\rangle} / \norm{n_1} \norm{n_2} \big)$ be the angle between $\mathcal{N}_1, \mathcal{N}_2$, then $\lambda_\text{min}(A_1 + A_2) \geq 2c_0 \sin^2(\theta / 2)$. We now apply this with $A_1 =\textnormal{diag}(d)$ and $A_2 = -M$. Note that this implies that $c_0 = \min(d_0, 4 \sin^2(\pi / 2k))$ and $\cos \theta \leq (k - 1) / k$. Thus, we obtain that 
    \[
    \lambda_\text{max}(M - \textnormal{diag}(d)) = -\lambda_\text{min}(-M + \textnormal{diag}(d)) \leq -\frac{1}{k}\min\bigg(d_0,  4\sin^2\bigg(\frac{\pi}{2k}\bigg)\bigg).
    \]
    Now, using the fact that $\norm{\exp((M - \textnormal{diag}(d))t)} \leq \exp(\lambda_\text{max}(M - \textnormal{diag}(d)) t)$, the lemma statement follows.
\end{proof}
\begin{lemma}[Expanding on Ref.~\cite{verstraete2009quantum}] \label{prop:logN_conv_bound} The state $\sigma$ described in Eq.~\ref{eq:logN_lind_fp} is the unique fixed point of the lindbladian in Eq.~\ref{eq:logN_lind}, and for sufficiently large $T, N$,
\[
\norm{e^{\mathcal{L}t} - \textnormal{Tr}(\cdot)\sigma}_{1 \to 1} \leq c_0(T, N) \exp(-a_0(T) t),
\]
where
\begin{align*}
    c_0(T, N) = 64(T + 1)^{1/2}N^4 2^{9N/2} \text{ and }a_0(T) = \frac{4}{T + 1}\sin^2\bigg(\frac{\pi}{2(T + 1)}\bigg)
\end{align*}
\end{lemma}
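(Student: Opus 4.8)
The plan is to establish two things separately: that $\sigma$ is the unique fixed point, and that the convergence to it is exponential with the stated rate and prefactor. For the fixed point, I would first verify directly that $\mathcal{L}_\text{ref}(\sigma) = 0$ by checking that each dissipator annihilates $\sigma$. The initialization jump operators $S_\alpha$ act trivially because in the support of $\sigma$ the clock values $s \geq \alpha$ already have the $\alpha$-th data qubit projected correctly (or, more precisely, one computes that $S_\alpha \sigma S_\alpha^\dagger = \tfrac12\{S_\alpha^\dagger S_\alpha, \sigma\}$ term by term). For the $L_s$ terms one uses the structure $\ket{\phi_s} = U_s\ket{\phi_{s-1}}$ so that the hopping on the clock register is exactly balanced. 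Uniqueness is the more delicate part: I would argue that the Lindbladian has a unique steady state by showing the associated graph (with vertices the clock states $0,\dots,T$ plus the data-qubit initialization structure) is connected and that there are no decoherence-free subspaces or oscillating coherences, so that the steady state is unique. This can be reduced to checking that the only operators commuting with all jump operators and the (absent) Hamiltonian are multiples of the identity.

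The core quantitative work is bounding $\norm{e^{\mathcal{L}_\text{ref}t} - \textnormal{Tr}(\cdot)\sigma}_{1\to1}$, and here I would exploit the block structure of the vectorized Lindbladian. The key observation, which is exactly why Lemma~\ref{lemma:tridiagonal_matrices_analysis} was set up, is that after vectorization and an appropriate unitary rotation absorbing the gates $U_s$, the dynamics on the diagonal (population) sector and on each off-diagonal (coherence) sector decouples into independent tridiagonal generators of the form $M$ or $M - \textnormal{diag}(d)$ acting on the clock register of dimension $T+1$. The populations relax under a generator conjugate to the symmetric tridiagonal $M$ on $T+1$ sites, whose slowest nonzero eigenvalue is $-4\sin^2(\pi/2(T+1))$; this is precisely the rate $a_0(T)$ after dividing out the normalization. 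The coherences between distinct clock sectors pick up additional on-site decay from the initialization dissipators, giving a generator of the $M - \textnormal{diag}(d)$ type that decays at least as fast. The plan is therefore: (i) rotate out the unitaries so the generator becomes the ``reference'' hopping Lindbladian; (ii) decompose the vectorized space into sectors; (iii) apply Lemma~\ref{lemma:tridiagonal_matrices_analysis} to each sector to get an operator-norm decay bound $\exp(-a_0(T)t)$.

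The main obstacle is converting the operator-norm (Schatten-$\infty$) decay bounds coming from Lemma~\ref{lemma:tridiagonal_matrices_analysis} into the $1\to1$ superoperator-norm bound demanded by the statement, while tracking the prefactor $c_0(T,N)$. Because the eigenvector matrix of a non-normal Lindbladian can be badly conditioned — a point the text explicitly flags — one cannot simply read off the decay from eigenvalues; one must control the similarity transformation relating the physical trace-norm dynamics to the diagonalizable tridiagonal form. The dimension-counting factors $N^4$, $2^{9N/2}$, and $(T+1)^{1/2}$ in $c_0(T,N)$ will arise from (a) the number of sectors in the vectorized space, which scales with the data-qubit Hilbert-space dimension $2^N$ and hence contributes the exponential-in-$N$ factor when converting between $\norm{\cdot}_\text{op}$ and $\norm{\cdot}_1$ via $\norm{X}_1 \leq \sqrt{\dim}\,\norm{X}_2$-type inequalities, and (b) polynomial overhead from the number of clock levels. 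I would budget the bulk of the effort for carefully bounding this norm-conversion constant so that the claimed $c_0(T,N)$ emerges, since the eigenvalue decay rate $a_0(T)$ follows almost immediately once the sector decomposition is in place.
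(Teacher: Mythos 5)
Your overall architecture matches the paper's proof: vectorize, strip the gates with the unitary $V=\sum_s U_s\cdots U_0\otimes\ket{s}\!\bra{s}$ (which works precisely because of the staircase circuit architecture), block-diagonalize over data-qubit sectors with a second similarity transformation, reduce each block to a tridiagonal clock generator handled by Lemma~\ref{lemma:tridiagonal_matrices_analysis}, and pay a dimension- and condition-number-dependent prefactor when converting back to the $1\to1$ norm. Two remarks on the route: the second transformation in the paper is a \emph{non-unitary} involution $X$ built from single-qubit maps $X_i$ and clock projectors $\Pi_j$, and bounding $\norm{X}_{1\to1}\leq 8N^2 4^N$ is exactly the "condition number" cost you budget for; and uniqueness of the fixed point does not need a separate commutant/connectivity argument --- it falls out of the explicit diagonalization, since only the $x=y=0$ block restricted to the diagonal clock sector has a zero eigenvalue.

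There is, however, one concrete quantitative gap. You assert that the sectors governed by $M-\mathrm{diag}(d)$ "decay at least as fast" as the pure population sector governed by $M$. This is backwards, and if followed literally it would give a rate $4\sin^2(\pi/2(T+1))=\Theta(T^{-2})$ instead of the stated $a_0(T)=\frac{4}{T+1}\sin^2(\pi/2(T+1))=\Theta(T^{-3})$. Adding a nonnegative diagonal to $-M$ does not simply speed things up: the kernel of $-M$ (the uniform clock vector $\ket{\xi}$) is not an eigenvector of $\mathrm{diag}(d)$, and the smallest eigenvalue of the sum of two noncommuting positive semidefinite operators is controlled only by the angle between their kernels. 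The paper invokes Kitaev's Lemma 14.4 to get $\lambda_{\min}(-M+\mathrm{diag}(d))\geq \frac{1}{T+1}\min(d_0,4\sin^2(\pi/2(T+1)))$, and it is exactly this step that produces the extra factor of $\frac{1}{T+1}$ in $a_0(T)$ --- the bottleneck sectors are the clock-diagonal blocks with $(x,y)\neq(0,0)$, not the $x=y=0$ block. Your phrase "after dividing out the normalization" gestures at this factor but does not supply the argument; without the geometric lemma (or an equivalent) the claimed rate is not established. Relatedly, in the paper the $M-\mathrm{diag}(d)$ structure arises within the \emph{diagonal} clock sector for excited data-qubit indices, while the genuine clock coherences ($\mathcal{S}^{(1)}_t$, $\mathcal{S}^{(2)}_{t,s}$) decay at an $O(1)$ rate and are never the bottleneck.
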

\begin{proof}
In this proof, we will work with the vectorized notation introduced in section \ref{sec:notation}. In both vectorized and unvectorized states or operators, we will consistently write the operator acting on the space of the data qubits to the left of the tensor product $\otimes$, and the operator acting on the clock qudit on the right of the tensor product $\otimes$. Following Ref.~\cite{verstraete2009quantum}, as a first step, we introduce the unitary
\[
V = \sum_{s = 0}^{T} U_s U_{s - 1} \dots U_0 \otimes \ket{s}\! \bra{s},
\]
with $U_0 = I$. Note that $\forall \alpha \in [1:N]:$
\begin{align}
V^\dagger S_\alpha V &= \sum_{t = 0}^{\alpha - 1} V^\dagger \big(\ket{0_\alpha}\!\bra{1_\alpha} \otimes \ket{t}\!\bra{t}\big) V, \nonumber\\
&=\sum_{t = 0}^{\alpha - 1} U_0^\dagger U_1^\dagger \dots U_t^\dagger (\ket{0_\alpha}\!\bra{1_\alpha})U_t U_{t - 1}\dots U_0 \otimes \ket{t}\!\bra{t}, \nonumber\\
&\numeq{1}\sum_{t = 0}^{\alpha - 1}\ket{0_\alpha}\!\bra{1_\alpha}\otimes \ket{t}\!\bra{t},
\end{align}
where in (1), we have used the fact that the unitaries $ U_1 \dots U_{\alpha - 1}$ act only on qubits $1, 2\dots \alpha - 1$ due to the assumption that the circuit is of the form shown in Fig. . Furthermore, $\forall s \in [1:T]:$
\begin{align}
\tilde{L}_s &= V^\dagger L_s V = V^\dagger \big(U_s \otimes \ket{s}\!\bra{s - 1}\big)V + \text{h.c.} = \ket{s}\!\bra{s - 1} + \text{h.c.}.
\end{align}
Therefore, we obtain that
\begin{align}\label{eq:unitary_trans_lind}
\tilde{\mathcal{L}}_\text{ref} := V_l^\dagger V_r^T \mathcal{L}_\text{ref} V_l V_r^* = \sum_{\alpha = 1}^N \mathcal{D}_{S_\alpha} + \sum_{s = 1}^T \mathcal{D}_{\tilde{L}_s}.
\end{align}
Next, we diagonalize the Lindbladian in the space of the data qubits. For this, we consider the following (non-unitary) transformation 
\[
X = \sum_{j = 1}^{N + 1 } X_j X_{j + 1} \dots X_N \otimes \Pi_j,
\]
where for $j = N + 1$, $X_j X_{j + 1} \dots X_N := I$ and
\begin{align*}
& X_i = P_{i, l}^{0} P_{i, r}^{0} + \sigma_{i, l}\sigma_{i, r} - P_{i, l}^{1} P_{i, r}^{1} + P_{i, l}^0 P_{i, r}^1 + P_{i, l}^1 P_{i, r}^0 , \text{ and }\\
&\Pi_1 = \sum_{s = 0}^T \ket{s_l s_r}\!\bra{s_l s_r}, \Pi_j = \sum_{s = 0}^{j - 1}\big(\ket{j_l s_r}\!\bra{j_l s_r} + \ket{ s_l j_r}\!\bra{s_l j_r}\big) \text{ and }\Pi_{N + 1} = I - \sum_{j = 1}^N \Pi_j,
\end{align*}
with $P^\alpha_{i, \lambda} = \ket{\alpha_{i, \lambda}}\!\bra{\alpha_{i, \lambda}}$, for $\alpha \in \{0, 1\}, \lambda \in \{l, r\}$. It can be checked that $X_i$ satisfies $X_i \mathcal{D}_{\sigma_i} X_i = -\big(P_{i, r}^1  + P_{i, l}^1\big)/2$ and that $X_i = X_i^{-1}$ (and therefore $X = X^{-1}$). Next, we compute $X\mathcal{D}_{S_\alpha}X$ and $X\mathcal{D}_{\tilde{L}_s}X$ --- note that $\mathcal{D}_{S_\alpha}$ is explicitly given by
    \[
    \mathcal{D}_{S_\alpha} = \mathcal{D}_{\sigma_\alpha} \otimes \sum_{s, s'=0}^{\alpha - 1}\ket{s_l s'_r}\!\bra{s_l s'_r} - \frac{1}{2}\bigg(P_{\alpha, l}^1 \otimes \sum_{s = 0}^{\alpha - 1}\sum_{s' = \alpha}^T\ket{s_l s'_r}\!\bra{s_l s'_r} + P_{\alpha, r}^1 \otimes \sum_{s = 0}^{\alpha - 1}\sum_{s' = \alpha}^T\ket{s_l' s_r}\!\bra{s_l' s_r}\bigg)
    \]
    To compute $X \mathcal{D}_{S_\alpha} X$, we observe that for $\alpha \in [1:N]$
    \begin{subequations}\label{eq:projector_property_wrt_diagonal}
        \begin{align}
    &\Pi_j \bigg(\sum_{s, s' = 0}^{\alpha - 1}\ket{s_l s_r'}\!\bra{s_l s_r'}\bigg)\Pi_{j'} = 
        0 \text{ if } j\neq j' \text{ or } j = j' > \alpha, 
\\
    &\Pi_j \bigg(\sum_{s = 0}^{\alpha - 1} \sum_{s'=\alpha}^T \ket{s_l s_r'}\!\bra{s_l s_r'}\bigg) \Pi_{j'} = \Pi_j \bigg(\sum_{s = 0}^{\alpha - 1} \sum_{s'=\alpha}^T \ket{s'_l s_r}\!\bra{s'_l s_r}\bigg) \Pi_{j'} = 0 \text{ if }j \neq j' \text{ or } j = j' \leq \alpha
    \end{align}
    \end{subequations}
    Therefore, we obtain that, for all $\alpha \in [1:N]$
    \begin{align*}
    X \mathcal{D}_{S_\alpha}X^{-1} &= X  \bigg(\mathcal{D}_{\sigma_\alpha} \otimes \sum_{s, s'=0}^{\alpha - 1}\ket{s_l s'_r}\!\bra{s_l s'_r}\bigg) X  - \frac{1}{2}X\bigg(P_{\alpha, l}^1 \otimes \sum_{s = 0}^{\alpha - 1}\sum_{s' = \alpha}^T\ket{s_l s'_r}\!\bra{s_l s'_r}\bigg)X - \frac{1}{2} X\bigg(P_{\alpha, r}^1 \otimes \sum_{s = 0}^{\alpha - 1}\sum_{s' = \alpha}^T\ket{s_l' s_r}\!\bra{s_l' s_r}\bigg)X, \nonumber \\
    &=X_\alpha \mathcal{D}_{\sigma_\alpha}X_\alpha \otimes \sum_{s, s' = 0}^{\alpha - 1}\ket{s_l s_r'}\!\bra{s_l s_r'} - \frac{1}{2}\bigg(P_{\alpha, l}^{1} \otimes \sum_{s = 0}^{\alpha - 1}\sum_{s' = \alpha}^T \ket{s_l s_r'}\!\bra{s_l s_r'} + P_{\alpha, r}^{1}\otimes \sum_{s = 0}^{\alpha - 1}\sum_{s' = \alpha}^{T} \ket{s_l' s_r}\!\bra{s_l' s_r}\bigg), \\
    &=-\frac{1}{2}\big(P_{\alpha, r}^{(1)} + P_{\alpha, l}^{(1)}\big) \otimes \sum_{s, s' = 0}^{\alpha - 1}\ket{s_l s'_r}\!\bra{s_l s'_r} - \frac{1}{2}\bigg(P_{\alpha, l}^{1} \otimes \sum_{s = 0}^{\alpha - 1}\sum_{s' = \alpha}^T \ket{s_l s_r'}\!\bra{s_l s_r'} + P_{\alpha, r}^{1}\otimes \sum_{s = 0}^{\alpha - 1}\sum_{s' = \alpha}^{T} \ket{s_l' s_r}\!\bra{s_l' s_r}\bigg), \nonumber \\
    &=-\frac{1}{2}\bigg(P_{\alpha, l}^{(1)}\otimes \sum_{s = 0}^{\alpha - 1} \ket{s_l}\!\bra{s_l} + P_{\alpha, r}^{(1)} \otimes\sum_{s = 0}^{\alpha - 1} \ket{s_r}\!\bra{s_r}\bigg). 
    \end{align*}
    Furthermore, it can also be easily seen that for any $s \in [1:T]$ and $j \in [1:N + 1]$, $[\Pi_j, \mathcal{D}_{\tilde{L}_s]}] = 0$, and $\mathcal{D}_{\tilde{L}_s}$ acts as identity on the space of the data qubits. Therefore, we obtain that $X \mathcal{D}_{\tilde{L}_s} X^{-1} = \mathcal{D}_{\tilde{L}_s}$.

    Thus, by performing the transformation given by $X$, the vectorized Lindbladian is now diagonal on the space of the data qubits. In particular, we can express
    \[
    X\tilde{\mathcal{L}}_\text{ref}X^{-1} = \sum_{x, y \in \{0, 1\}^N} \ket{x_l y_r}\!\bra{x_l y_r} \otimes \mathcal{A}_{x, y},
    \]
    where for $x, y \in \{0, 1\}^N$, $\mathcal{A}_{x, y} \in \text{L}(\mathbb{C}^{T + 1} \otimes \mathbb{C}^{T + 1})$ given by
    \[
    \mathcal{A}_{x, y} = -\frac{1}{2}\sum_{s = 0}^{T}f_s(x) \ket{s_l}\!\bra{s_l} - \frac{1}{2}\sum_{s = 0}^{T}f_s(y) \ket{s_r}\!\bra{s_r} + \sum_{s = 1}^T \mathcal{D}_{\ket{s - 1}\!\bra{s} + \text{h.c.}},
    \]
    where $f_s(x) =  x_{s + 1} + x_{s + 2} \dots x_N$ if $0\leq s \leq N - 1$ and $f_s(x) = 0$ if $s \geq N $. Furthermore, $\mathcal{A}_{x, y}$ is block diagonal --- to see this, we perform the following decomposition
    \[
    \mathbb{C}^{T + 1}\otimes \mathbb{C}^{T + 1} = \mathcal{S}^{(0)}\cup \bigg(\bigcup_{t \in [1:T]} \mathcal{S}^{(1)}_t\bigg) \cup \bigg(\bigcup_{\substack{t, s \in [0:T] \\ \abs{t - s}\geq 2}} S^{(2)}_{t, s}\bigg),
    \]
    where
    \begin{align*}
    &\mathcal{S}^{(0)} = \text{span}(\{\ket{t, t} \text{ for } t\in [0:T]\}) \cong \mathbb{C}^{T + 1}, \\
    &\mathcal{S}^{(1)}_t = \text{span}\big(\{\ket{t, t - 1}, \ket{t - 1, t} \}\big) \cong \mathbb{C}^2 \text{ for }t \in [1:T],\\
    &\mathcal{S}^{(2)}_{t, s} = \text{span}\big(\{\ket{t, s}\}\big) \cong \mathbb{C}^1 \text{ for }t, s \in [0:T] \text{ with }\abs{t - s} \geq 2.
    \end{align*}
    Then, we can note that $\forall x, y \in \{0, 1\}^N$, $\mathcal{A}_{x, y}$ is block diagonal on the subspaces $\mathcal{S}^{(0)}, \mathcal{S}_t^{(1)}, \mathcal{S}_{t, s}^{(2)}$. Furthermore,
    \begin{subequations}
    \begin{align}
    &\mathcal{A}_{x, y}\big|_{\mathcal{S}^{(0)}} \cong -\frac{1}{2}\sum_{s = 0}^{N - 1}\big( f_s(x) + f_s(y)\big) \ket{s}\!\bra{s} + \sum_{s= 1}^T \bigg(\ket{s - 1}\!\bra{s} + \ket{s - 1}\!\bra{s} - \ket{s}\!\bra{s} - \ket{s - 1}\!\bra{s - 1}\bigg), \\
    &\mathcal{A}_{x, y}\big|_{\mathcal{S}^{(1)}_t} \cong -\bigg(2 - \frac{1}{2}(\delta_{t, 1} + \delta_{t, T})\bigg)I + \ket{1}\!\bra{0} + \ket{0}\!\bra{1} - \frac{1}{2}\bigg(\big(f_t(x) + f_{t-1}(x)\big)\ket{0}\!\bra{0} +\big(f_t(y) + f_{t-1}(y)\big) \ket{1}\!\bra{1}\bigg), \\
    &\mathcal{A}_{x, y}\big|_{\mathcal{S}^{(2)}_{t, s}} \cong -\frac{1}{2}\big(4 - \delta_{t, 0} - \delta_{t, T} - \delta_{s, 0} - \delta_{s, T}\big) - \frac{1}{2}\big(f_t(x) + f_s(y)\big).
     \end{align}
    \end{subequations}
We now immediately obtain the eigenvalues of $\mathcal{L}_\text{ref}$, which are just the eigenvalues of the block diagonal operator $\mathcal{A}_{x, y}$. It can be noted that $\mathcal{A}_{x, y}|_{\mathcal{S}_t^{(1)}} \preceq -I/2$ and $\mathcal{A}_{x, y}|_{\mathcal{S}_{t, s}^{(2)}}\leq -1 $ for any $x, y, s, t$. Finally, it also follows from Lemma~\ref{lemma:tridiagonal_matrices_analysis} that only $\mathcal{A}_{x = 0, y = 0}|_{\mathcal{S}^{(0)}}$ has a zero eigenvalue, with the eigenvector $\vecket{\xi} =(T + 1)^{-1/2} (\ket{0_l0_r} + \ket{1_l 1_r} +\dots \ket{T_l T_r})$. This allows us to obtain the fixed point of $\mathcal{L}_\text{ref}$ (where an additional factor of $(T + 1)^{-1/2}$ is introduced to make the fixed point a normalized density matrix),
\begin{align*}
\vecket{\sigma} &= \frac{1}{\sqrt{T + 1}}VX\ket{0_l 0_r}^{\otimes N}\otimes \vecket{\xi} =\frac{1}{\sqrt{T + 1}} V_l V_r^*\ket{0_l 0_r}^{\otimes N} \otimes \vecket{\xi},
\end{align*}
from which it follows that
\begin{align*}
 \sigma &= \frac{1}{T + 1}V \bigg((\ket{0}\!\bra{0})^{\otimes N} \otimes  \sum_{t = 0}^T\ket{t}\!\bra{t}\bigg)V^\dagger =\frac{1}{T + 1}\sum_{t = 0}^T \ket{\phi_t}\!\bra{\phi_t} \otimes \ket{t}\!\bra{t}.
\end{align*}
Next, we bound $\norm{e^{\mathcal{L}_\text{ref}t} - \textnormal{Tr}(\cdot) \sigma}_{1 \to 1}$. Out of the two transformations used to diagonalize $\mathcal{L}_\text{ref}$ in the Hilbert space of the data qubits, the unitary $V$ does not effect this norm i.e.~
\begin{align}\label{eq:error_does_not_change_under_unitary}
    \norm{e^{\mathcal{L}_\text{ref}t} - \textnormal{Tr}(\cdot) \sigma}_{1 \to 1} = \norm{e^{\tilde{\mathcal{L}}t} - \textnormal{Tr}(\cdot) \tilde{\sigma}}_{1 \to 1},
\end{align}
where $\tilde{L}$ is defined in Eq.~\ref{eq:unitary_trans_lind} and $\tilde{\sigma} = (T + 1)^{-1}(\ket{0}\!\bra{0})^{\otimes N} \otimes \sum_{t= 0}^T\ket{t}\!\bra{t} $. Furthermore, we can note that, in the vectorized picture, $\textnormal{Tr}(\cdot)\tilde{\sigma} = \vecket{\tilde{\sigma}}\vecbra{I}$, where $\vecket{\tilde{\sigma}} = (T + 1)^{-1}\ket{0_l 0_r}^{\otimes N} \otimes \sum_{t = 0}^T \ket{t_l t_r} = (T + 1)^{-1/2}\ket{0_l 0_r}^{\otimes N} \otimes \vecket{\xi}$. Also note that $X\vecket{\tilde{\sigma}} = \vecket{\tilde{\sigma}}$ and $\vecbra{I}X = \bra{0_l 0_r}^{\otimes N}\otimes \sum_{t = 0}^T \bra{t_l t_r} = (T + 1)^{1/2} \bra{0_l 0_r}^{\otimes N} \otimes \vecbra{\xi}$.

\begin{align}\label{eq:error_in_terms_of_X}
    \norm{e^{\tilde{\mathcal{L}}_\text{ref}t} - \text{Tr}(\cdot)\tilde{\sigma}}_{1\to 1} &= \bignorm{X\bigg[\sum_{x, y \in \{0, 1\}^N} \ket{x_l y_r}\!\bra{x_l y_r} \otimes e^{\mathcal{A}_{x, y}t} \bigg]X - \vecket{\tilde{\sigma}}\vecbra{I}}_{1\to 1}, \nonumber\\
    &=\bignorm{X\bigg[\sum_{x, y \in \{0, 1\}^N} \ket{x_l y_r}\!\bra{x_l y_r}\otimes e^{\mathcal{A}_{x, y}t}  - \big(\ket{0_l 0_r}\!\bra{0_l 0_r}\big)^{\otimes N} \otimes \vecket{\xi}\!\vecbra{\xi}\bigg]X}_{1 \to 1}, \nonumber \\
    &\leq \norm{X}_{1\to 1}^2 \bignorm{\sum_{x, y \in \{0, 1\}^N} \ket{x_l y_r}\!\bra{x_l y_r}\otimes e^{\mathcal{A}_{x, y}t}  - \big(\ket{0_l 0_r}\!\bra{0_l 0_r}\big)^{\otimes N} \otimes \vecket{\xi}\!\vecbra{\xi}}_{1\to 1}, \nonumber \\
    &\leq (2^N (T + 1))^{1/2}\norm{X}_{1\to 1}^2 \bignorm{\sum_{x, y \in \{0, 1\}^N} \ket{x_l y_r}\!\bra{x_l y_r}\otimes e^{\mathcal{A}_{x, y}t}  - \big(\ket{0_l 0_r}\!\bra{0_l 0_r}\big)^{\otimes N} \otimes \vecket{\xi}\!\vecbra{\xi}}_{2\to 2}, \nonumber \\
    &\leq (2^N (T + 1))^{1/2} \norm{X}_{1 \to 1}^2 \sum_{x, y \in \{0, 1\}^N}\bignorm{\ket{x_l y_r}\!\bra{x_l y_r}\otimes \mathcal{Q}_{x, y}(t)}_{2\to 2}, \nonumber\\
    & = (2^N (T + 1))^{1/2} \norm{X}_{1 \to 1}^2 \sum_{x, y \in \{0, 1\}^N}\bignorm{ \mathcal{Q}_{x, y}(t)}_{2\to 2}
\end{align}
where 
\begin{align*}
    \mathcal{Q}_{x, y}(t) = \begin{cases}e^{\mathcal{A}_{0^N, 0^N}t} - \vecket{\xi}\vecbra{\xi} & \text{ if }x = 0^N \text{ and }y = 0^N, \\
    e^{\mathcal{A}_{x, y}t} & \text{ otherwise}.
    \end{cases}
\end{align*}
Note that, for any superoperator $\mathcal{E}$, $\norm{\mathcal{E}}_{2\to 2}$ coincides with its operator norm when interpreted as an operator in the vectorized picture. This fact together with Lemma~\ref{lemma:tridiagonal_matrices_analysis} allows us, for sufficiently large $T$, to provide the upper bound
\[
\norm{\mathcal{Q}_{x, y}}_{2 \to 2} \leq \exp\bigg(-\frac{4t}{T + 1}\sin^2\bigg(\frac{\pi}{2(T + 1)}\bigg)\bigg) \ \forall x, y \in \{0, 1\}^N.
\]
Therefore, we obtain that
\[
\norm{e^{\tilde{\mathcal{L}}_\text{ref}t} - \textnormal{Tr}(\cdot)\tilde{\sigma}}_{1\to 1} \leq (T  +1)^{1/2}2^{5N/2}\norm{X}_{1\to 1}^2 \exp\bigg(-\frac{4t}{T + 1}\sin^2\bigg(\frac{\pi}{2(T + 1)}\bigg)\bigg)
\]
It remains to bound $\norm{X}_{1\to 1}$ (recall that $X$ is interpreted as a superoperator) --- we start with the simple upper bound
\begin{align}\label{eq:upper_bound_X}
\norm{X}_{1\to 1} \leq \norm{I \otimes \Pi_{N + 1}}_\diamond + \sum_{j = 1}^N \bigg(\prod_{i = j}^{N} \norm{X_i}_\diamond\bigg) \norm{I \otimes \Pi_j}_{1 \to 1} \leq 1 + \sum_{j = 1}^N \bigg(1 + \prod_{i = j}^{N} \norm{X_i}_\diamond\bigg) \norm{I \otimes \Pi_j}_{1 \to 1}
\end{align}
The diamond norm $\norm{X_i}_{\diamond}$ can be computed explicitly. For this, we note that, as a single qubit superoperator, $X_i(\Omega) = \Omega + \bra{0_i}\Omega\ket{0_i} \otimes (\ket{0_i}\!\bra{0_i} - 2\ket{1_i}\!\bra{1_i})$. Therefore, for $\Omega$ on a possibly larger space, $\norm{X_i(\Omega)}_1 \leq \norm{\Omega}_1 + \norm{\bra{0_i}\Omega\ket{0_i} \otimes (\ket{0_i}\!\bra{0_i} - 2\ket{1_i}\!\bra{1_i})}_1 \leq \norm{\Omega}_1 + 3 \norm{\bra{0_i}\Omega\ket{0_i}}_1 \leq 4 \norm{\Omega}_1$, which yields $\norm{X}_\diamond \leq 4$. Consequently, we obtain from Eq.~\ref{eq:upper_bound_X} that
\begin{align}\label{eq:upper_bound_X_new}
\norm{X}_{1\to 1}  \leq 1 + \sum_{j = 1}^N \big(1 + 4^{N - j + 1}\big) \norm{ \Pi_j}_{\diamond}.
\end{align}
Next, we provide upper bounds on $\norm{\text{I}\otimes \Pi_j}$. Note first that $\Pi_1$, as a superoperator on $\mathbb{C}^{T + 1}$, is a channel since it admits the Kraus representation
\[
\Pi_1(X) = \sum_{s = 0}^{T} \ket{s}\!\bra{s} X \ket{s}\!\bra{s}.
\]
Therefore, we obtain that $\norm{\Pi_1}_{\diamond}\leq 1$. Next, we note that $\Pi_j$ for $j \in [2:N]$ can be expressed as
\[
\Pi_j(X) = \sum_{s = 1}^{j - 1}\bigg(\ket{j}\!\bra{j} X \ket{s}\!\bra{s} + \ket{s}\!\bra{s} X \ket{j}\!\bra{j}\bigg),
\]
and consequently we obtain that
\[
\norm{(\Pi_j \otimes I)(X)}_1 \leq \sum_{s = 1}^{j - 1}\big( \norm{\ket{j}\!\bra{j} X \ket{s}\!\bra{s}}_1 + \norm{\ket{s}\!\bra{s} X \ket{j}\!\bra{j}}_1\big) \leq 2(j - 1) \implies \norm{\Pi_j}_{\diamond} \leq 2(j - 1).
\]
Using these bounds for $\norm{\Pi_j}_\diamond$ together with Eq.~\ref{eq:upper_bound_X_new}, we then obtain that
\begin{align}\label{eq:bound_on_X_final}
\norm{X}_{1\to 1} \leq 1 + \sum_{j = 1}^N \big(1 + 4^{N - j + 1}\big) 2(j - 1) \leq 1 + 2N^2 ( 1 + 4^N) \leq 8N^2 4^N.
\end{align}
Combining Eqs.~\ref{eq:error_does_not_change_under_unitary}, \ref{eq:error_in_terms_of_X} and \ref{eq:bound_on_X_final}, we obtain the lemma statement.
\end{proof}
\noindent From Lemma~\ref{prop:logN_conv_bound}, it follows that starting from any initial state $\rho(0)$, $\rho(t) = e^{\mathcal{L}_\text{ref}t}\rho(0)$ is an $\varepsilon$ approximation to the fixed point $\sigma$ given in Eq.~\ref{eq:logN_lind_fp} after time $t = \Theta(T^3\log(\varepsilon^{-1})) + \Theta(T^3 \log T) + \Theta(T^3 \log N) + \Theta(T^3 N)$. Then, the fixed point can be used to effectively perform the quantum computation by first measuring the clock qudit on the computational basis and post-selecting on the measurement outcome being $\ket{T}$ --- if successful, the data qubits will then be in the state $\ket{\phi_T}$ that would have been obtained at the output of the encoded quantum circuit. Since the probability of the clock qudit being in $\ket{T}$ is $1/(T + 1)$, this can be repeated $O(T) \leq O(\text{poly}(N))$ to have successfully prepare $\ket{\phi_T}$ on the data qubits.

\subsection{2D local Lindbladian}
\emph{Construction}. Consider a quantum circuit on $N$ qubits, with depth $\textnormal{poly}(N)$. As described in Ref.~\cite{aharonov2008adiabatic}, we use the fact that any quantum circuit (Fig.~\ref{fig:encoding_circuit_format}a) can be brought into a specific topology shown in Fig.~\ref{fig:encoding_circuit_format}b, where a quantum gate first acts on qubits 1, 2, then on 2, 3, then on 3, 4 and so on,  using SWAP gates and with at most a polynomial overhead in the depth. We assume that the quantum circuit, in expressed in this format, has depth $NR$, where $R$ is $\text{poly}(N)$, where each round is composed of 1 single qubit and $N - 1$ two qubit gates. We will encode this circuit into a 2D dissipative system with only spatially local jump operators --- we consider $NR$ qudits with $d = 6$ arranged in a 2D grid with $N$ rows and $R$ columns.
\begin{figure}
\centering
    \includegraphics[scale=0.475]{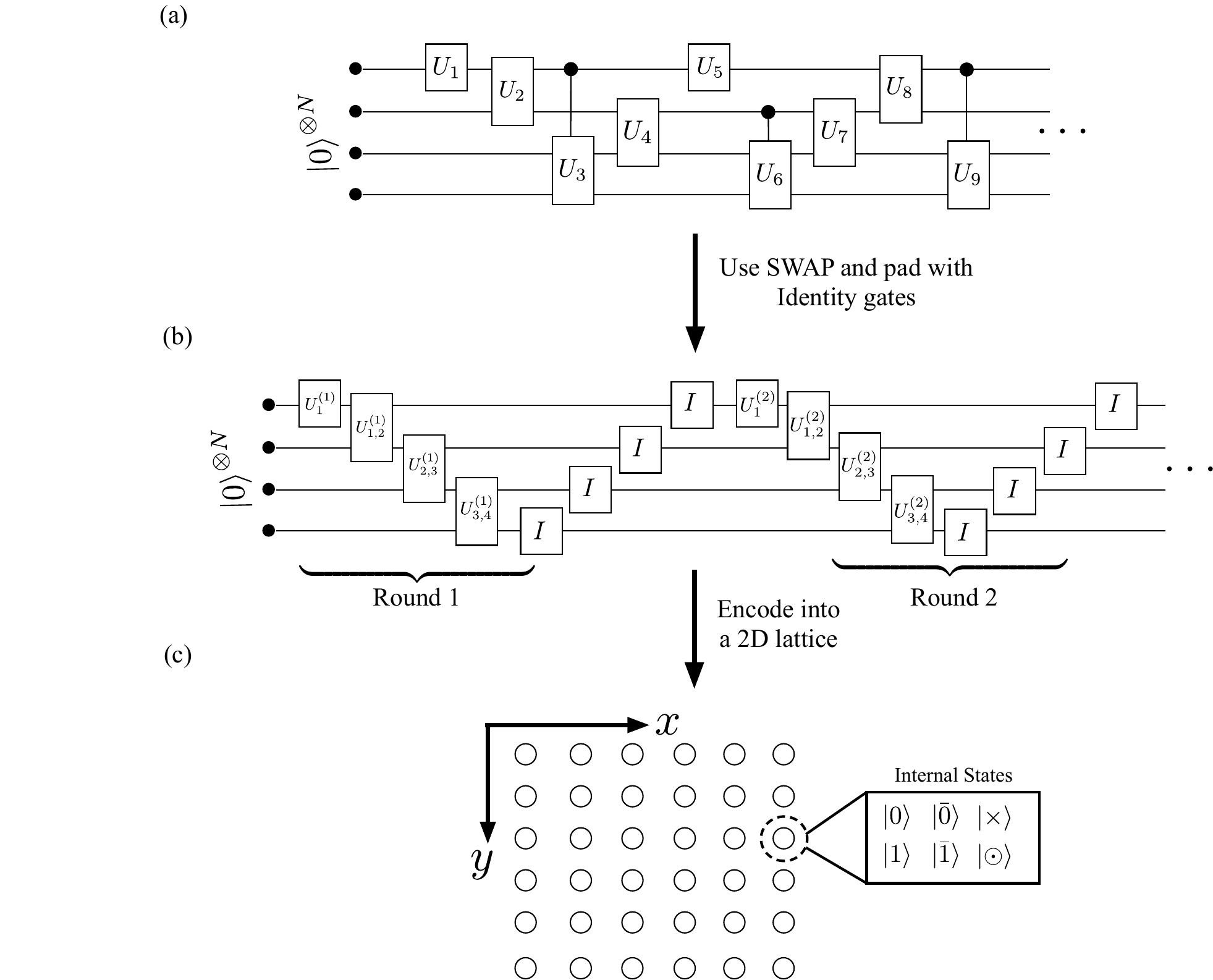}
    \caption{Schematic depiction of the circuit to geometrically local Lindbladian encoding. Given an arbitrary quantum circuit on $N$ qubits and depth $\text{poly}(N)$ (a), it can always be mapped to a quantum circuit on $N$ qubits with architecture shown in (b) where the quantum gates are applied on the qubits sequentially and in several rounds. The total number of rounds needed for a circuit that was originally of depth $\text{poly}(N)$ is at most $\text{poly}(N)$. Finally, this circuit is then encoded into a geometrically local Lindbladian on 2D grid of 6-level qudits as shown in (c). }
    \label{fig:encoding_circuit_format}
\end{figure}
The qudit internal states will be denoted by $\{\ket{0}, \ket{1}, \ket{\bar{0}}, \ket{\bar{1}}, \ket{\times}, \ket{\odot}\}$. We make the following remarks about notation used in this subsection:
\begin{enumerate}
\item[(a)] For $z \in \{0, 1, \bar{0}, \bar{1}, \times, \odot\}$, $\ket{z_{x, y}}$ will mean that the qudit at the $x^\text{th}$ column and $y^\text{th}$ row is in the state $z$.
\item[(b)]We will also define the operators $\pi = \ket{0}\!\bra{0} + \ket{1}\!\bra{1}$ and $\bar{\pi} = \ket{\bar{0}}\!\bra{\bar{0}} + \ket{\bar{1}}\!\bra{\bar{1}}$, and $\pi_{x, y}$ or $\bar{\pi}_{x, y}$ denote the operators $\pi$ or $\bar{\pi}$ respectively when applied to the qudit in the $x^\text{th}$ row and $y^\text{th}$ column. 
\item[(c)] We will refer to the qudit being in $\ket{\times}, \ket{\odot}$, barred $0/1$ and unbarred $0/1$ as being of distinct types. Thus, an operation that changes the `type' of a qudit could be an operation that converts a qudit from $\times$ to $\odot$ or from barred $0/1$ to $\times$ etc. Likewise, an operation that simply rotates a qudit within the space of barred (or unbarred) 0/1 states will not change its type. 
\item[(d)]We will often use operators defined on qudits that are horizontally adjacent, vertically adjacent or arranged on a plaquette. It will be convenient and visually clearer to express tensor products of operators as a matrix of operators associated with the coordinate of the lower leftmost vertex. For example,
\begin{align*}
& \begin{bmatrix}
 D & C \\
 A & B
 \end{bmatrix}_{x, y} := A_{x, y - 1} \otimes B_{x, y} \otimes C_{x - 1, y} \otimes D_{x - 1, y - 1} \text{  } 
  \begin{tikzpicture}[baseline={([yshift=-.5ex]current bounding box.center)},vertex/.style={anchor=base,
    circle,fill=black!25,minimum size=18pt,inner sep=2pt}]
    \node[vertex] (G1) at (0,0)   {$A$};
    \node[vertex] (G2) at (1,0)   {$B$};
    \node[vertex] (G3) at (1,1)   {$C$};
    \node[vertex] (G4) at (0,1)   {$D$};
    \node[] at (1.75, -0.25){$(x, y)$};
    \node[] at (-1.0, -0.25){$(x - 1, y )$};
    \node[] at (-1.25, 1.5){$(x - 1, y - 1)$};
    \node[] at (2.0, 1.5){$(x , y - 1)$};
    \draw (G1) -- (G2) -- (G3) -- (G4) -- (G1) -- cycle;
  \end{tikzpicture}, \nonumber \\
  & \begin{bmatrix}
 A & B
 \end{bmatrix}_{x, y} := A_{x, y - 1} \otimes B_{x, y} \text{  } \qquad \qquad \qquad   \qquad  \qquad \ 
  \begin{tikzpicture}[baseline={([yshift=-.5ex]current bounding box.center)},vertex/.style={anchor=base,
    circle,fill=black!25,minimum size=18pt,inner sep=2pt}]
    \node[vertex] (G1) at (0,0)   {$A$};
    \node[vertex] (G2) at (1,0)   {$B$};
    \node[] at (1.75, -0.25){$(x, y)$};
    \node[] at (-1.0, -0.25){$(x - 1, y)$};
    \draw (G1) -- (G2);
  \end{tikzpicture}, \nonumber \\
    & \begin{bmatrix}
 A \\
 B
 \end{bmatrix}_{x, y} := A_{x - 1, y} \otimes B_{x, y} \text{  } \qquad \qquad \qquad   \qquad  \qquad \quad \  \
  \begin{tikzpicture}[baseline={([yshift=-.5ex]current bounding box.center)},vertex/.style={anchor=base,
    circle,fill=black!25,minimum size=18pt,inner sep=2pt}]
    \node[vertex] (G1) at (0, 1)   {$A$};
    \node[vertex] (G2) at (0, 0)   {$B$};
    \node[] at (-1, 1.5){$(x, y - 1)$};
    \node[] at (-0.75, -0.25){$(x, y )$};
    \draw (G1) -- (G2);
  \end{tikzpicture}, \nonumber \\
      & \begin{bmatrix}
 A 
 \end{bmatrix}_{x, y} := A_{x, y} \text{  } \qquad \qquad \qquad   \qquad  \qquad \qquad \qquad \qquad \  \ \
  \begin{tikzpicture}[baseline={([yshift=-.5ex]current bounding box.center)},vertex/.style={anchor=base,
    circle,fill=black!25,minimum size=18pt,inner sep=2pt}]
    \node[vertex] (G2) at (0, 0)   {$A$};
    \node[] at (-0.75, -0.25){$(x, y )$};
  \end{tikzpicture}.
\end{align*}
Often, for convenience, we will omit elements in the matrix --- the omitted elements should be interpreted as identity operators on the corresponding qudits. For example, 
\begin{align*}
\begin{bmatrix}
 & A & \\
 B & C & D \\
 & E & 
\end{bmatrix}_{x, y} = A_{x - 1, y - 2} \otimes B_{x - 2, y - 1} \otimes C_{x - 1, y - 1} \otimes D_{x, y-1} \otimes E_{x-1, y} \text{ or }\begin{tikzpicture}[baseline={([yshift=-.5ex]current bounding box.center)},vertex/.style={anchor=base,
    circle,fill=black!25,minimum size=18pt,inner sep=2pt}]
    \draw[dashed] (0, -1) -- (0, 0) -- (0, 1);
    \draw[dashed] (0, 1.1) -- (1, 1.1) -- (2, 1.1);
    \draw[dashed] (0, -1.0) -- (1, -1.0) -- (2, -1.0);
    \draw[dashed] (0, 1.1) -- (1, 1.1) -- (2, 1.1);
    \draw[dashed] (2, 1.1) -- (2, -1.0);
    \node[vertex] (G1) at (0,0)   {$B$};
    \node[vertex] (G2) at (1,0)   {$C$};
    \node[vertex] (G3) at (1,1)   {$A$};
    \node[vertex] (G4) at (2,0)   {$D$};
    \node[vertex] (G5) at (1,-1)   {$E$};
    \node[] at (2.1, -1.25){$(x, y)$};
    \node[] at (-0.1, -1.25){$(x - 2, y)$};
    \node[] at (-0.3, 1.4){$(x - 2, y - 2)$};
    \node[] at (2.1, 1.4){$(x, y-2)$};
    \draw (G3) -- (G2) -- (G5);
    \draw (G1) -- (G2) -- (G4);
  \end{tikzpicture},
\end{align*}
\end{enumerate}

First, we map the quantum circuit on $N$ qubits into a larger quantum circuit on the $NR$ qudits with the property that we only ever apply a gate between two (horizontally or vertically) neighbouring qudits and every neighbouring pair of qudits experience a gate only once (note that any one qudit might experience multiple two-qudit gates, but they will always be with a different second qudit). Similar to the situation with Hamiltonians, this will avoid the need of an explicit clock qudit. Also similar to the situation with Hamiltonian encodings, we point out that removing the clock qudit to allow for a geometrically local Lindbladian would come at the cost of a much larger Hilbert space than that of the encoded qubits --- consequently, only a small subspace of this large Hilbert space will be used for the encoded computation. We will provide a precise definition of this subspace later --- in the following qualitative description of the how to construct the encoding Lindbladian, we will simply refer to as the ``space of valid states".
\begin{figure}
\centering
    \includegraphics[scale=0.42]{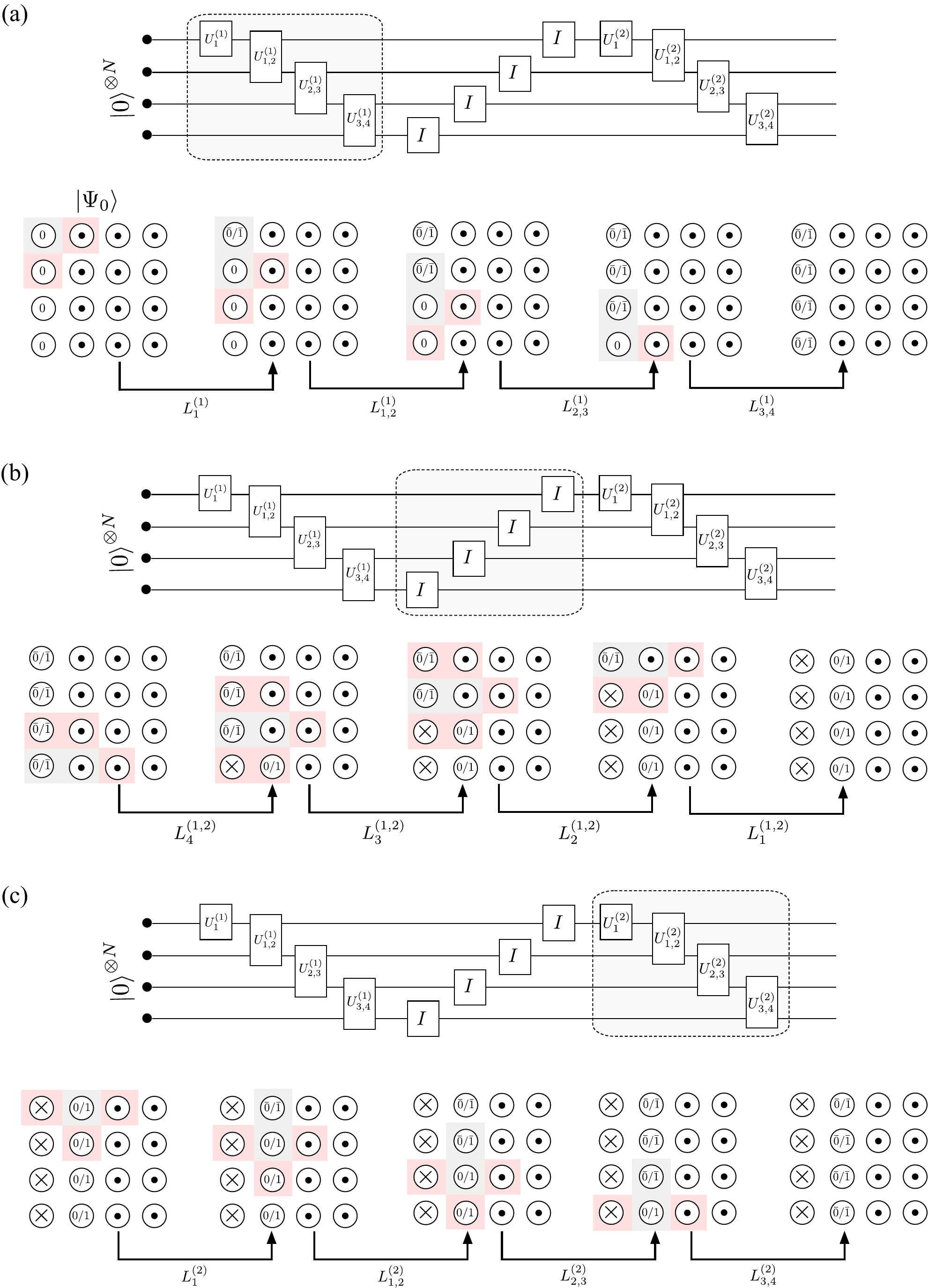}
    \caption{Schematic depiction of the jump operators used to implement the circuit. (a) In the first round, the qudits are initially in the state shown in $\ket{\Psi_0}$. The gates are then applied through the jump operators $L_1^{(1)}, L_{1, 2}^{(1)}, L_{2, 3}^{(1)} \dots $ which change the state of the qudits in the first column and also convert them from unbarred to barred states. (b) In the second round, the qudits are swapped from the first column to the second column by the jump operators $L_1^{(1, 2)}, L_2^{(1, 2)}\dots$ --- simultaneously, the qudits in the first column are converted to $\times$ and the qudits in the second column are converted to unbarred states. (c) Next, we apply the jump $L_1^{(2)}, L_{1, 2}^{(2)}, L_{2, 3}^{(2)}\dots $ which execute the second round of gates. The gray-shaded qudits are the qudits whose state can be changed by the applied operation, and the pink-shaded qudits are qudits whose state does not change on the application of the jump operator but are checked to be in the valid state by each jump operator. In particular, a gray shaded qudit whose type is being changed is always surrounded by a gray or pink-shaded qudit in our construction.}
    \label{fig:encoding_circuit_format}
\end{figure}

Initially, all the qudits other than those in the first column will be in $\ket{\odot}$, all the qudits in the first column will be in $\ket{0}$:
\begin{align}\label{eq:init_state_2D_grid}
    \ket{\Psi} = \bigotimes_{x = 1}^{N} \bigg(\ket{0_{x, 1}}\otimes \bigotimes_{y = 2}^{R}\ket{\odot_{x, y}}\bigg).
\end{align}
We then apply the first round of unitaries, $ U_1^{(1)}, U_{12}^{(1)}, U_{23}^{(1)} \dots U_{N - 1, N}^{(1)}$ and simultaneously transform the qudit state from $\ket{a} \to \ket{\bar{a}}$ for $a \in \{0, 1\}$. This is accomplished by applying the operators:
\begin{align}\label{eq:unitary_round_1}
 &L^{(1)}_{1} = \sum_{a, b = 0}^1 \big( U^{(1)}_{1}\big)_{a, b}\begin{bmatrix}\ket{\bar{a}}\!\bra{b}  & \ket{\odot}\!\bra{\odot} \\
 \pi & \end{bmatrix}_{2, 2}  + \text{h.c.}, \\
 &L^{(1)}_{y, y + 1} = \begin{dcases}\sum_{a, a', b, b' = 0}^1 \big(U^{(1)}_{y, y + 1}\big)_{a, a'; b, b'}\begin{bmatrix}
 \ket{\bar{a}}\!\bra{\bar{b}} & \\
 \ket{\bar{a}'}\!\bra{b} & \ket{\odot}\!\bra{\odot} \\
 \pi & 
 \end{bmatrix}_{2, y + 2} + \text{h.c.} & \text{ if } y\in [1: N - 2], \\
 \sum_{a, a', b, b' = 0}^1  \big(U^{(1)}_{y, y + 1}\big)_{a, a'; b, b'}\begin{bmatrix}
 \ket{\bar{a}}\!\bra{\bar{b}} & \\
 \ket{\bar{a}'}\!\bra{b} & \ket{\odot}\!\bra{\odot}
 \end{bmatrix}_{2, y + 1}+ \text{h.c.} & \text{ if } y = N - 1
 \end{dcases}
\end{align}
Note that these operators perform two tasks --- (1) they apply the unitaries $U^{(1)}_1, U^{(1)}_{1, 2} \dots U^{(1)}_{N - 1, N}$ and (2) they also check if the qudit whose type is being changed is horizontally and vertically adjacent to a qudit in the correct ``type". After application of the operators $L_1^{(1)}, L_{1, 2}^{(1)} \dots L_{N - 1, N}^{(1)}$, all the qudits in the first row will be in the states $\ket{\bar{0}}, \ket{\bar{1}}$. We also note that the order in which the operators $L_1^{(1)}, L_{1, 2}^{(1)} \dots L_{n - 1, n}^{(1)}$ are applied does not matter if we apply sufficiently large number of them --- this is because the operator $L_{x, x + 1}^{(1)}$ only applies $U^{(1)}_{x, x + 1}$ if the $i^\text{th}$ qudit in the first column is in an barred state, and the $(x + 1)^\text{th}$ qudit is in an unbarred state, else it annihilates the state (i.e.~maps the state to 0). Furthermore, only if this unitary is successfully applied would the $(x + 1)^\text{th}$ qudit will be mapped to a barred state, and consequently the only next operator that will act without annihilating the state is $V^{(1)}_{x + 1, x+2}$. Consequently, the choice of operators in Eq.~\ref{eq:unitary_round_1} naturally enforces the ordering of the unitaries in circuit being encoded. Another important point about the operators constructed in Eq.~\ref{eq:unitary_round_1} is that the operators also check if the qudit whose type is being changed is horizontally and vertically adjacent to a valid type --- in particular, this qudit should have a barred state vertically above it, an unbarred state vertically below it and a $\ket{\odot}$ state horizontally to its right. While this check might seem vacuous if the initial state is valid, it becomes important if the initial state is invalid and needs to be brought into the valid subspace via additional jump operators (which we describe below).

Next, we apply operators that swap the barred states from the first column to unbarred states in the second column, and simultaneously map the states in the first column to $\ket{\times}$. This is accomplished by applying the operators
\begin{align}\label{eq:operator_round_1_swap}
L_y^{(1, 2)} = \begin{cases}\sum_{a \in \{0, 1\}}\begin{bmatrix}
\bar{\pi} & \ket{\odot}\!\bra{\odot} & \\
\ket{\times}\!\bra{\bar{a}} & \ket{a}\!\bra{\odot} & \ket{\odot}\!\bra{\odot} \\
\ket{\times}\!\bra{\times} & \pi & 
\end{bmatrix}_{3, y + 1} & \text{ if }y\in [2 : N - 1], \\
\sum_{a \in \{0, 1\}}\begin{bmatrix}
\ket{\times}\!\bra{\bar{a}} & \ket{a}\!\bra{\odot} & \ket{\odot}\!\bra{\odot} \\
\ket{\times}\!\bra{\times} & \pi & 
\end{bmatrix}_{3, y + 2} & \text{ if }y = 1, \\
\sum_{a \in \{0, 1\}}\begin{bmatrix}
\bar{\pi} & \ket{\odot}\!\bra{\odot} & \\
\ket{\times}\!\bra{\bar{a}} & \ket{a}\!\bra{\odot} & \ket{\odot}\!\bra{\odot} 
\end{bmatrix}_{3, y} & \text{ if }y = N, 
\end{cases}
\end{align}
In the next step, we again apply a round of operators similar to Eq.~\ref{eq:unitary_round_1} on the second column, followed by operators similar to Eq.~\ref{eq:operator_round_1_swap} on the second and third columns. More specifically, we apply the jump operators,
\begin{subequations}\label{eq:jump_op_computation}
    \begin{align}
    &L^{(x)}_1 = \sum_{a, b = 0}^1 \big(U^{(x)}_1\big)_{a, b}\begin{bmatrix} \ket{\times}\!\bra{\times}& \ket{\bar{a}}\!\bra{b} & \ket{\odot}\!\bra{\odot} \\ &
    \pi &
    \end{bmatrix}_{x + 1, 2}, L^{(x)}_{y, y + 1} = \sum_{a, a', b, b' =0}^1 \big(U^{(x)}_{y, y + 1})_{a, a'; b, b'}
    \begin{bmatrix}
    & \ket{\bar{a}}\!\bra{\bar{b}} & \\
    \ket{\times}\!\bra{\times} & \ket{\bar{a}'}\!\bra{b'} & \ket{\odot}\!\bra{\odot} \\
    & \pi &
    \end{bmatrix}_{x + 1, y + 2}, \\
    &L^{(x, x + 1)}_{y} = \sum_{a \in \{0, 1\}}\begin{bmatrix}
& \bar{\pi} & \ket{\odot}\!\bra{\odot} & \\
\ket{\times}\!\bra{\times} & \ket{\times}\!\bra{\bar{a}} & \ket{a}\!\bra{\odot} & \ket{\odot}\!\bra{\odot} \\
& \ket{\times}\!\bra{\times} & \pi & 
\end{bmatrix}_{x + 2, y + 1}
\end{align}
\end{subequations}
where any single qudit operator in these expressions which is defined outside the lattice should be treated as identity. For simplicity, we will often denote the set containing all of these jump operators by $\{L_t : t\in \mathcal{T}\}$, with the set $\mathcal{T}$ indexing different jump operators.

We define $\mathcal{S}$ as the set of states which are of the form shown in Fig.~\ref{fig:valid_states} with the barred or unbarred 0/1 qubits being in a possibly entangled state. $\mathcal{S}$ contains the states that are obtained from the application of the jump operators $L^{(x)}_{y,  y+1}$ and $L^{(x, x + 1)}_y$ to the initial state $\ket{\Psi_0}$ and will be referred to the space of ``valid" configurations. Furthermore, $\mathcal{S}\cong (\mathbb{C}^2)^{\otimes{N}}\otimes \mathbb{C}^{\otimes {2NR + 1}}$ i.e.~every state in $\mathcal{S}$ is a linear combination of states of the form $\ket{\psi}\otimes \ket{\gamma_t}$, where $\ket{\gamma_t}$ for $t \in \{0, 1, 2 \dots 2NR\}$ indicates the ``shape" of the state (i.e.~which qudits are barred 0/1, unbarred 0/1, $\odot$ or $\times$) and $\ket{\psi}$ is a $N-$qubit entangled state. We can also think of the states $\{\ket{\gamma_t}\}_{t\in [0:(2nR +1)]}$ as the state of a clock, even though we do not have any explicit qubits designated to be the clock. Similar to the 5-local case, we will need to penalize invalid configurations. As is shown Ref.~\cite{aharonov2008adiabatic}, the invalid configurations in the 2D encoding can be checked 2-locally. In particular, the show the following lemma.
\begin{lemma}[Claim 4.2 of Ref.~\cite{aharonov2008adiabatic}]
A computational basis state $\ket{x}\in \{\times, \odot, 0, 1, \bar{0}, \bar{1}\}$ is a valid state (i.e.~$\ket{x}\in \mathcal{S}$) if none of the following two-qudit horizontally or vertically adjacent configurations are present in $\ket{x}$ ($a$ or $b$ denotes unbarred $0$ or $1$ state and $\bar{a}$ or $\bar{b}$ denotes barred $0$ or $1$ state):
\begin{enumerate}
    \item $(\odot, a)$, $(\odot, \bar{a}) $, since in all valid states, a qudit in the state $\odot$ is to the right of all the other qudits.
    \item $(a, \times)$, $(\bar{a}, \times)$, since in all valid states, a qudit in the state $\times$ is to the left of all the other qudits.
    \item $(\odot, \times)$, $(\times, \odot)$ since in all valid states, qubits in $\times$ and qubits in $\odot$ are separated by a barred or unbarred qudit.
    \item $(a, b)$, $(\bar{a}, b)$, $(a, \bar{b})$ or $(\bar{a}, \bar{b})$ since there is only one qudit in a barred or unbarred 0/1 state per row.
    \item $\big( {\odot \atop \bar{a}}\big), \big({\times \atop \bar{a}}\big), \big({b \atop \bar{a}}\big)$ since only a barred 0/1 qudit can be vertically above a barred 0/1 qudit.
    \item $\big( {a \atop \odot }\big)$, $\big( {a \atop \times }\big)$ since only an unbarred 0/1 qudit can be vertically below an unbarred 0/1 qudit.
    \item $\big({\odot \atop \times}\big)$, $\big( {\times \atop \odot}\big)$ since qudits in $\odot$ state and $\times$ state cannot be vertically adjacent.
    \item $\big({\bar{a} \atop \odot} \big)$, $\big({\times \atop a }\big)$ since there is no $\odot$ qudit below a barred 0/1 qudit and no $\times$ qudit above an unbarred 0/1 qudit.
    \item $\odot$ in the first column of qudits and $\times$ in the last column of qudits.
\end{enumerate}
\end{lemma}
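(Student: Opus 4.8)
The plan is to establish the stated characterization as a biconditional, of which the substantive half — the one used to penalize illegal configurations — is the \emph{if} direction: a computational-basis state free of all nine local patterns must lie in $\mathcal{S}$. I would first dispatch the easy converse. By construction the valid shapes $\ket{\gamma_t}$, $t\in\{0,1,\dots,2NR\}$, are exactly the configurations generated from the initial state $\ket{\Psi}$ of Eq.~\ref{eq:init_state_2D_grid} by the jump operators $L^{(x)}_{y,y+1}$ and $L^{(x,x+1)}_y$ of Eq.~\ref{eq:jump_op_computation}. Each $\ket{\gamma_t}$ has a completely explicit shape — a block of $\times$ columns on the left, a block of $\odot$ columns on the right, and a single monotone ``staircase'' of barred/unbarred $0/1$ qudits separating them — so verifying that none of the nine adjacency patterns occurs in any $\ket{\gamma_t}$ is a finite inspection.

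For the forward (if) direction, I would assume $\ket{x}$ contains none of the forbidden pairs and deduce its global shape in stages. First, analyze a single row using the horizontal constraints (items 1--4): item 3 forbids $\times$ adjacent to $\odot$, items 1--2 force every $\odot$ to have only $\odot$ to its right and every $\times$ to have only $\times$ to its left, and item 4 forbids two horizontally adjacent $0/1$-type qudits. Together these force each row to read $\times\cdots\times\,z\,\odot\cdots\odot$ where $z$ is either absent or a single $0/1$ qudit; an all-$\times$ or all-$\odot$ row is then excluded by the boundary condition (item 9, no $\odot$ in the first column and no $\times$ in the last), so in fact every row contains \emph{exactly one} barred or unbarred $0/1$ qudit.

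Second, I would use the vertical constraints (items 5--8) to stitch the rows into a monotone staircase. Item 7 forbids vertically adjacent $\times$/$\odot$; items 5 and 8 force the barred qudits to occupy a contiguous top block of their column, with only $\times$ or another barred $0/1$ permitted below a barred qudit; items 6 and 8 do the symmetric job for unbarred qudits and $\odot$. Reading down column by column, these imply that the position of the single $0/1$ qudit varies monotonically between adjacent rows and that the barred/unbarred label switches at most once along the staircase — i.e. the already-processed (barred, $\times$-trailed) part of the sweep is an initial segment consistent with a single round pointer. This pins the configuration to a unique clock value $t$ and shows $\ket{x}\in\mathcal{S}$.

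The main obstacle is precisely this second stage: converting the purely local adjacency rules into the global statement that the admissible region is exactly the staircase family $\{\ket{\gamma_t}\}$, with the barred/unbarred split occurring at a single consistent pointer compatible with the round-and-swap schedule of Eqs.~\ref{eq:unitary_round_1} and~\ref{eq:operator_round_1_swap}. I expect to carry this out by induction on the rows (equivalently, on the column index of the staircase), at each step using the adjacency rules to show that the only legal extension of an admissible partial configuration advances the clock by one tick; the bookkeeping over the six qudit types and the barred/unbarred transition is where the case analysis is heaviest. Since this is exactly the local checkability established as Claim~4.2 of Ref.~\cite{aharonov2008adiabatic} for the analogous Hamiltonian construction, and our six-state alphabet $\{0,1,\bar 0,\bar 1,\times,\odot\}$ and jump-operator schedule reproduce the same set of legal adjacencies, I would import that combinatorial argument essentially verbatim, verifying only this correspondence of allowed local patterns.
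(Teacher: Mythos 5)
The paper does not prove this lemma at all: it is imported verbatim as Claim 4.2 of Ref.~\cite{aharonov2008adiabatic}, so there is no in-paper argument to compare against. Your sketch --- horizontal rules (items 1--4) forcing each row into the form $\times\cdots\times\,z\,\odot\cdots\odot$ with exactly one $0/1$ qudit once item 9 and item 3 kill the degenerate rows, then vertical rules (items 5--8) forcing the barred qudits to occupy a contiguous block of rows sitting above the unbarred block with the two occupied columns differing by at most one (item 7 excluding a larger offset) --- is a correct reconstruction of that reference's combinatorial argument, and your plan to import the remaining case bookkeeping from there is exactly what the paper itself does.
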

For each of the invalid configurations from 1 to 9, which we now construct a jump operator which maps the invalid qudits' state to valid qudit states by just changing the state of one of the qudits. More importantly, for a set of two neighbouring vertical qudits that are in an invalid configuration, we will make the choice to change the bottom qudit to make the configuration valid and for a set of two neighbouring horizontal qudits that are in an invalid configuration, we will make the choice to change the left qudit to make the configuration valid. This can be accomplished by adding the jump operators shown in the table below corresponding to each of the two-qudit invalid configurations.
\begin{table}[htpb]
    \centering
    \begin{tabular}{| c | c || c | c |}
    \hline
        \text{Configuration}  & \text{Jump operator(s)} & \text{Configuration} & \text{Jump operator(s)} \\
        \hline &&&\\[-0.75em]
          $(\odot, a)$&  $\ket{\odot, \odot}\!\bra{\odot, a}$ & $\big( {\odot \atop \bar{a}}\big)$ & $\frac{1}{\sqrt{3}}\bigket{\odot \atop z}\!\bigbra{\odot \atop \bar{a}}, \ z \in \{\odot, 0, 1\}$   \\
          \hline  &&&\\[-0.75em]
          $(\odot, \bar{a})$ & $\ket{\odot, \odot}\!\bra{\odot, \bar{a}}$ & $\big( {\times \atop \bar{a}}\big)$ & $\bigket{\times \atop \times}\!\bigbra{\times \atop \bar{a}}$ \\
          \hline &&&\\[-0.75em]
          $(a, \times)$ & $\ket{a, \odot}\!\bra{a, \times}$ & $\big( {b \atop \bar{a}}\big)$ & $\frac{1}{\sqrt{2}}\bigket{b \atop z}\!\bigbra{b \atop \bar{a}}, \ z\in \{0,1\}$  \\
          \hline &&&\\[-0.75em]
          $(\bar{a}, \times)$ & $\ket{\bar{a}, \odot}\!\bra{\bar{a}, \times}$ & $\big( {a \atop \odot}\big)$ & $\frac{1}{\sqrt{2}}\bigket{{a \atop z }}\!\bra{{a \atop \odot}}, \ z\in\{0, 1\}$  \\
          \hline &&&\\[-0.75em]
          $(\odot, \times)$ & $\ket{\odot, \odot}\!\bra{\odot, \times}$ & $\big( {a \atop \times}\big)$ & $\frac{1}{\sqrt{2}}\bigket{{a \atop z}}\!\bigbra{{a \atop \times}}, \ z\in\{0, 1\}$  \\
          \hline  &&&\\[-0.75em]
          $(\times, \odot)$ & $\frac{1}{\sqrt{5}}\ket{\times, z}\!\bra{\times, \odot}, \ z \in\{\times, 0, 1, \bar{0}, \bar{1}\}$ & $\big( {\odot \atop \times}\big)$ & $
          \frac{1}{\sqrt{3}}\bigket{\odot \atop z}\!\bigbra{\odot \atop \times}, \ z \in \{\odot, 0, 1\}$  \\
          \hline &&&\\[-0.75em]
          $(a, b)$ & $\ket{a, \odot}\!\bra{a, b}$ & $\big( {\times \atop \odot}\big)$ & $\bigket{{\times \atop \times}}\!\bigbra{\times \atop \odot}$  \\
          \hline &&&\\[-0.75em]
          $(\bar{a}, b)$ & $\ket{\bar{a}, \odot}\!\bra{\bar{a}, b}$ & $\big( {\bar{a} \atop \odot}\big)$ & $\frac{1}{\sqrt{5}}\bigket{\bar{a}\atop z}\!\bigbra{\bar{a} \atop \odot}, z \in \{\times, 0, 1, \bar{0}, \bar{1}\}$  \\
          \hline &&&\\[-0.75em]
          $(a, \bar{b})$ & $\ket{a, \odot}\bra{a, \bar{b}}$ & $\big( {\times \atop a}\big)$ & $\bigket{{\times \atop \times}}\!\bigbra{{\times \atop {a}}}$  \\
          \hline &&&\\[-0.75em]
          $(\bar{a}, \bar{b})$ & $\ket{\bar{a}, \odot}\!\bra{\bar{a}, \bar{b}}$ & \ & \ \\
          \hline &&&\\[-0.75em]
          $(\odot_{1, 1})$ & $\frac{1}{2}\ket{z_{1, 1}}\bra{\odot_{1, 1}} \ z \in \{0, 1, \bar{0}, \bar{1}\}$ & $(\times_{R, 1})$ & $\frac{1}{2}\ket{z_{x, R}}\!\bra{\times_{1, R}}  \text{ for } z\in \{0, 1, \bar{0}, \bar{1}\}$  \\
          \hline
          
    \end{tabular}
    \caption{Jump operators corresponding to different invalid configurations. Each of the horizontal two-qudit jump operator corrects the invalid configuration by changing the right-most qudit, and the vertical two-qudit jump operator corrects the invalid configuration by changing the bottom-most qudit.}
    \label{tab:jump_operator}
\end{table}
We will denote the set of jump operators penalizing these incorrect configurations by $\{{P}_j : j \in \mathcal{J}\}$ --- this set will include the horizontal and vertical two-qudit jump operators at every point on the lattice, as well as the single qudit jump operators on the first and last column of the lattice. 

\begin{figure}[b]
    \centering
    \includegraphics[scale=0.5]{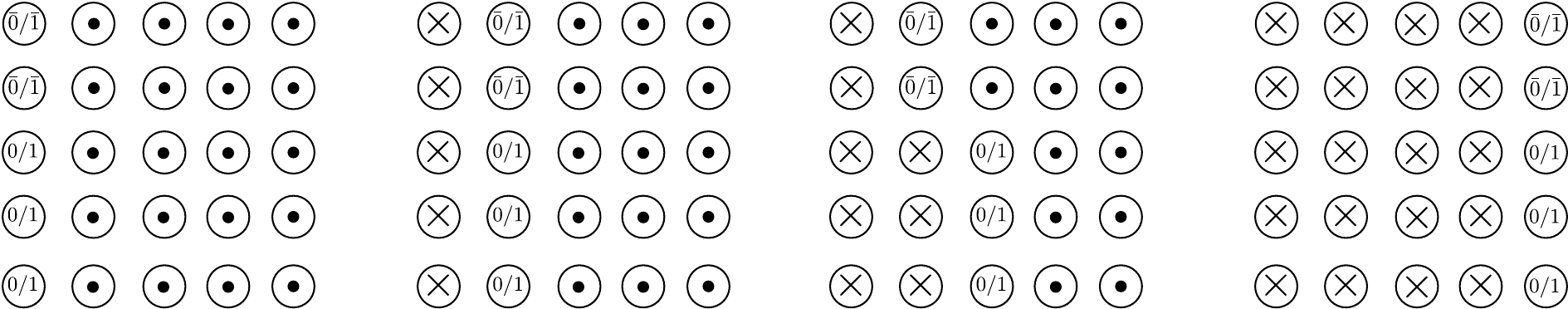}
    \caption{Schematic depiction of typical valid states i.e.~states that lie in the subspace $\mathcal{S}$ and are reached on applying the jump operators $\{L_t: t\in \mathcal{T}\}$ on the initial state $\ket{\Psi_0}$.}
    \label{fig:valid_states}
\end{figure}
Finally, to initialize the qubits to $\ket{\Psi_0}$, we add jump operators $\{S_\alpha\}_{\alpha \in [1:N]}$ which act only on the first column of qudits and map them to unbarred 0 from unbarred 1 i.e $S_\alpha = \ket{0_{1, \alpha}}\!\bra{1_{1, \alpha}}$.

\emph{Analysis}. We begin by establishing that the operators $\{P_j\}_{j \in \mathcal{J}}$ enforce that, at long times, the state of the qudits is in the valid subspace. For this, we consider the operator $F$, which counts the number of invalid configurations in the state of the qudits:
\[
F = \sum_{j \in \mathcal{J}}P_j^\dagger P_j,
\]
and analyze its dynamics in the Heisenberg picture. We will use the following technical lemma.
\begin{lemma}\label{lemma:bound_sequence_inequalities}
Suppose $x_1(t), x_2(t), x_3(t) \dots x_n(t) \geq 0$, with $x_i(0) \leq 1 \ \forall i \in [1:n]$, satisfy the inequalities
\[
\frac{d}{dt}x_1(t) \leq -\alpha_1 x_1(t) \text{ and }\frac{d}{dt}x_i(t) \leq -\alpha_i x_i(t) + \sum_{j = 1}^{i - 1} \beta_{i, j}x_j(t),
\] 
where $\alpha_i \geq \alpha>0$, $\beta_{i, j} \geq 0, \sum_{j} \beta_{i, j} \leq \beta$ for some $\alpha, \beta > 0$ and $\forall i$, then
\[
x_i(t) \leq (\max(1, 2\beta/\alpha) )^n e^{-\alpha t/2}\text{ for all } t\geq 0, i \in [1:n].
\]
\end{lemma}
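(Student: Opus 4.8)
The plan is to prove this by induction on $i$, using the standard Gr\"onwall-type technique of multiplying by an integrating factor and bounding the resulting integral. The key structural observation is that the inequalities are \emph{triangular}: the rate of change of $x_i(t)$ is driven by a decay term $-\alpha_i x_i(t)$ together with source terms involving only the earlier variables $x_1(t), \dots, x_{i-1}(t)$. Thus each $x_i$ can be bounded once all earlier variables are already controlled, and the bound should propagate with a multiplicative factor per level, accounting for the $(\max(1, 2\beta/\alpha))^n$ prefactor.

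First I would handle the base case $i=1$. From $\frac{d}{dt}x_1(t) \leq -\alpha_1 x_1(t) \leq -\alpha x_1(t)$ together with $x_1(0) \leq 1$ and $x_1(t) \geq 0$, Gr\"onwall's inequality gives $x_1(t) \leq e^{-\alpha t} \leq e^{-\alpha t /2}$, which is consistent with the claimed bound since $(\max(1, 2\beta/\alpha))^n \geq 1$. For the inductive step, I would assume the bound $x_j(t) \leq C^n e^{-\alpha t/2}$ holds for all $j < i$, where $C = \max(1, 2\beta/\alpha)$. Applying the integrating factor $e^{\alpha_i t}$ to the differential inequality for $x_i$ and integrating from $0$ to $t$, I get
\[
x_i(t) \leq e^{-\alpha_i t} x_i(0) + \int_0^t e^{-\alpha_i(t - s)} \sum_{j=1}^{i-1}\beta_{i,j} x_j(s)\, ds.
\]
Using $\alpha_i \geq \alpha$, $x_i(0) \leq 1$, the inductive hypothesis, and $\sum_j \beta_{i,j} \leq \beta$, the integral is bounded by $\beta C^n \int_0^t e^{-\alpha(t-s)} e^{-\alpha s /2}\, ds \leq \beta C^n \cdot \frac{2}{\alpha} e^{-\alpha t/2}$ (after bounding $e^{-\alpha(t-s)} \leq e^{-\alpha(t-s)/2}$ and evaluating the elementary integral). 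Combining the two contributions yields $x_i(t) \leq e^{-\alpha t/2}(1 + \frac{2\beta}{\alpha}C^n)$.

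The main technical care is in making the induction \emph{close properly}: I want the final bound to be $C^n e^{-\alpha t/2}$, not something that grows by a factor at each level in an uncontrolled way. The point is that I should prove the slightly cleaner statement that $x_i(t) \leq C^i e^{-\alpha t/2}$ by induction, which then implies the stated $C^n$ bound since $i \leq n$ and $C \geq 1$. With this sharper inductive hypothesis $x_j(t) \leq C^j e^{-\alpha t/2}$ for $j < i$, the integral term becomes $\leq \frac{2\beta}{\alpha} C^{i-1} e^{-\alpha t/2} \leq C \cdot C^{i-1} e^{-\alpha t/2} = C^i e^{-\alpha t/2}$ (using $\frac{2\beta}{\alpha} \leq C$), and the boundary term $e^{-\alpha t/2} \leq C^i e^{-\alpha t/2}$, so the two together give $2 C^i e^{-\alpha t/2}$; the factor of $2$ is the one subtlety to absorb, which can be handled either by a slightly more careful split of the exponentials or by strengthening $C$ to $\max(1, 2\beta/\alpha)$ and tracking constants so the geometric growth dominates. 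The anticipated obstacle is precisely bookkeeping this constant so that the per-level multiplicative blow-up is exactly captured by the base $C$ and the exponent $n$, rather than accumulating spurious factors of $2$; I would resolve it by choosing the inductive statement to carry enough slack (e.g. proving $x_i(t) \leq C^i e^{-\alpha t/2}$ with $C$ chosen large enough that both the source and boundary contributions fit), which makes the induction self-sustaining.
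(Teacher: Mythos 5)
Your overall strategy (pass to the integral form via an integrating factor and recurse over $i$) is the same one the paper uses, and it does yield \emph{an} exponential bound; but as set up your induction cannot produce the constant $(\max(1,2\beta/\alpha))^n$ that the lemma actually claims, and you have flagged this issue without resolving it. Concretely: with the hypothesis $x_j(t)\le C^{j}e^{-\alpha t/2}$ and $C=\max(1,2\beta/\alpha)$, the inductive step gives $x_i(t)\le e^{-\alpha t/2}+\tfrac{2\beta}{\alpha}C^{i-1}e^{-\alpha t/2}$, and the recursion $A_i\le 1+\tfrac{2\beta}{\alpha}A_{i-1}$ has solution $A_i\le\sum_{k=0}^{i-1}(2\beta/\alpha)^k$. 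This is \emph{not} bounded by $(\max(1,2\beta/\alpha))^n$ in general: for $2\beta/\alpha=1$ it gives $n$ versus the claimed $1$, and for $1<2\beta/\alpha<2$ it exceeds $(2\beta/\alpha)^n$ by a factor of order $(2\beta/\alpha-1)^{-1}$. Your proposed repair of ``choosing $C$ large enough'' closes the induction with $C=1+2\beta/\alpha$, but then you have proved a different (weaker) lemma with constant $(1+2\beta/\alpha)^n$, which can be $2^n$ times larger than the stated one. There is also a slip in the justification of the convolution bound: weakening the kernel to $e^{-\alpha(t-s)/2}$ gives $\int_0^t e^{-\alpha(t-s)/2}e^{-\alpha s/2}\,ds=t\,e^{-\alpha t/2}$, not $\tfrac{2}{\alpha}e^{-\alpha t/2}$; the correct bound $\tfrac{2}{\alpha}e^{-\alpha t/2}$ comes from evaluating $e^{-\alpha t}\int_0^t e^{\alpha s/2}\,ds$ without weakening the kernel.

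The paper avoids the compounding ``$+1$'' precisely by \emph{not} sacrificing half the decay rate at each level. It keeps the intermediate bound in the form $x_i(t)\le\big(\sum_{k=0}^{i-1}(\beta t)^k/k!\big)e^{-\alpha t}$ (obtained by recursively integrating $\tilde x_i(t)\le e^{-\alpha t}+\beta\int_0^t\tilde x_{i-1}(s)e^{-\alpha(t-s)}ds$ for the running maxima $\tilde x_i=\max_{j\le i}x_j$), so the boundary contributions appear additively as the lower-order coefficients of a truncated exponential rather than multiplicatively at each level. Only at the very end is the polynomial traded for $e^{\alpha t/2}$ via $(\beta t)^k\le(\max(1,2\beta/\alpha))^n(\alpha t/2)^k$, which is where the exact prefactor $(\max(1,2\beta/\alpha))^n$ comes from. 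If you adopt that polynomial intermediate hypothesis your induction goes through and reproduces the stated constant. I will note that for the one place this lemma is used (Lemma on the error-number bound, with $\alpha=2/5$, $\beta=4$, so $2\beta/\alpha=20$), your weaker constant $(1+2\beta/\alpha)^n=21^n$ would still suffice since only the $\exp(O(NR))$ scaling matters downstream --- but as a proof of the lemma as stated, the argument has a gap in the constant tracking.
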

\begin{proof}
We immediately obtain that $x_1(t) \leq x_1(0) e^{-\alpha_1 t} \leq e^{-\alpha t}$. We can rewrite the inequalities for $x_2(t), x_3(t) \dots x_n(t)$ as integral inequalities:
\[
x_i(t) \leq e^{-\alpha_i t} + \sum_{j = 1}^{i - 1}\beta_{i, j} \int_0^t x_j(s) e^{-\alpha_i(t - s)} ds \leq e^{-\alpha t} + \sum_{j = 1}^{i - 1}\beta_{i, j} \int_0^t x_j(s) e^{-\alpha (t - s)} ds.
\]
Define $\tilde{x}_i(t) = \max_{j \in [1:i]} x_j(t)$ --- we then obtain that $\tilde{x}_1(t) \leq e^{-\alpha t}$ and for $i \geq 2$, 
\[
\tilde{x}_i(t) \leq e^{-\alpha t} + \sum_{j = 1}^{i - 1}\beta_{i, j} \int_0^t \tilde{x}_{i - 1}(s) e^{-\alpha (t - s)} ds \leq e^{-\alpha t} + \beta \int_0^t \tilde{x}_{i - 1}(s) e^{-\alpha (t - s)} ds.
\]
We can integrate these equations recursively to obtain that $\forall i \in [1:n]$,
\[
x_i(t) \leq \tilde{x}_i(t) \leq \bigg(1 + \beta t + \frac{\beta^2 t^2}{2!} + \frac{\beta^3 t^3}{3!} + \dots \frac{\beta^{i - 1}t^{i - 1}}{(i - 1)!} \bigg)e^{-\alpha t},
\]
Noting that for $t \geq 0$ and $i \in [1:n]$, $(\beta t)^i = (\alpha t / 2)^i (2\beta / \alpha)^i \leq (\max(1, 2\beta/\alpha) )^n (\alpha t/2)^i$, we obtain that
\[
x_i(t) \leq (\max(1, 2\beta/\alpha) )^n \bigg(1 + \frac{\alpha t}{2} + \frac{\alpha^2 t^2}{2! 2^2} + \frac{\alpha^3 t^3}{3! 2^3} \dots \frac{\alpha^{i - 1}t^{i - 1}}{(i - 1)! 2^{i - 1}}\bigg)e^{-\alpha t} \leq  (\max(1, 2\beta/\alpha) )^n e^{-\alpha t/2},
\]
which establishes the lemma statement.
\end{proof}

\begin{lemma}\label{lemma:encoding_num_error_bound}
    For any initial state $\rho(0)$ of the 2D grid of qudits, $\expect{F(t)} = \textnormal{Tr}(F e^{\mathcal{L}t} (\rho(0))) \leq (NR + 1) 20^{NR + 1}e^{-t/5}$.
\end{lemma}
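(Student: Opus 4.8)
The plan is to work directly with the expectation $\langle F(t)\rangle$ and to derive a triangular system of coupled linear differential inequalities to which Lemma~\ref{lemma:bound_sequence_inequalities} applies. First I would observe that, for each invalid two-qudit (or boundary) configuration at a fixed location, the associated family of penalty jumps $\{P_{j,z}\}$ is normalized precisely so that $\sum_z P_{j,z}^\dagger P_{j,z}=\Pi_j$, the projector onto that invalid configuration (e.g.\ the $1/\sqrt5$ weights for $(\times,\odot)\to(\times,z)$). Hence $F=\sum_j P_j^\dagger P_j=\sum_j\Pi_j$ literally counts invalid configurations and $\langle F(t)\rangle=\sum_j\mathrm{Tr}(\Pi_j e^{\mathcal{L}t}\rho(0))$. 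Writing $\tfrac{d}{dt}\langle\Pi_j(t)\rangle=\mathrm{Tr}(\mathcal{L}^\dagger(\Pi_j)\rho(t))$, the self-correction contribution of the penalty dissipators is clean: since each $P_{j,z}$ maps the invalid configuration into the valid subspace, $\Pi_j P_{j,z}=0$, so $\sum_z \mathcal{D}_{P_{j,z}}^\dagger(\Pi_j)=-\tfrac12\{\sum_z P_{j,z}^\dagger P_{j,z},\Pi_j\}=-\Pi_j$, i.e.\ every detected configuration is damped at bare rate $1$.

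Next I would organize the remaining terms of $\mathcal{L}^\dagger(\Pi_j)$ into a one-directional cascade. I would assign to every lattice site a level via any total order that is strictly increasing as one moves rightward and downward on the $N\times R$ grid, and group the projectors $\Pi_j$ by the site whose state the corresponding correction changes (horizontal violations are repaired on the right qudit, vertical ones on the bottom qudit, per Table~\ref{tab:jump_operator}). Three ingredients must be checked: (i) the initialization jumps $S_\alpha=\ket{0_{1,\alpha}}\!\bra{1_{1,\alpha}}$ only rotate within the unbarred subspace and hence preserve the \emph{type} pattern on which every $\Pi_j$ depends, so $\mathcal{L}_{\mathrm{init}}^\dagger(\Pi_j)=0$; (ii) firing a penalty jump changes a single qudit and, because corrections are oriented down-and-right, any freshly created violation is either repaired at the same site (absorbed into the self-damping) or at a strictly higher-level site; (iii) the computational and swap jumps $\{L_t\}$ advance the encoded clock rigidly from left to right and, acting through their built-in validity checks on the surrounding qudits, likewise can only spawn violations at equal or higher levels. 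Collecting these, the level-summed quantities $x_i(t)=\sum_{j\in\text{level }i}\langle\Pi_j(t)\rangle$ obey $\tfrac{d}{dt}x_i\le-\alpha x_i+\sum_{j<i}\beta_{ij}x_j$ with a uniform self-rate $\alpha=2/5$ (the bare rate $1$ degraded by the bounded self-creation) and $\sum_j\beta_{ij}\le\beta=4$ (finite because geometric locality caps the number of jumps touching any site and each has $O(1)$ norm), the levels being chosen so that $x_i(0)\le1$.

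Finally I would invoke Lemma~\ref{lemma:bound_sequence_inequalities} with these constants and $n=NR+1$ levels to obtain $x_i(t)\le(\max(1,2\beta/\alpha))^{NR+1}e^{-\alpha t/2}=20^{NR+1}e^{-t/5}$ for every $i$, and then sum over the $NR+1$ levels, $\langle F(t)\rangle=\sum_i x_i(t)\le(NR+1)\,20^{NR+1}e^{-t/5}$, which is exactly the claimed bound. The main obstacle is steps (ii)--(iii): making the one-directional cascade rigorous, i.e.\ proving that \emph{no} jump operator---penalty, computational, or swap---can turn a valid neighbourhood into a violation at a \emph{lower} level, and that the overlapping self-creation is small enough to leave $\alpha=2/5>0$. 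This is precisely where the two-qudit characterization of the valid subspace $\mathcal{S}$ from Ref.~\cite{aharonov2008adiabatic} must be used, checking the finitely many configuration transitions case by case; the clock-propagation and boundary jumps demand the most care, since they change qudit types across edges and so are the terms most likely to feed a lower level were the level ordering chosen incorrectly.
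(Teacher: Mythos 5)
Your plan follows the same route as the paper's proof: decompose $F$ into edge-local pieces, derive a triangular system of differential inequalities ordered by a left-to-right, top-to-bottom sweep of the grid, and feed it into Lemma~\ref{lemma:bound_sequence_inequalities} with $\alpha=2/5$, $\beta=4$ and $n=NR+1$. The skeleton, the constants, and the role of the exact commutation $[F^{(H/V)}_{x,y},L_t]=[F^{(H/V)}_{x,y},S_\alpha]=0$ all match what the paper does.

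The genuine gap is in your step (ii), and it is the heart of the proof rather than a routine case check. The dangerous terms are the \emph{same-plaquette} cross terms $\textnormal{Tr}\big(\mathcal{L}^{(V)\dagger}_{x,y}(F^{(H)}_{x,y})\rho(t)\big)$ and $\textnormal{Tr}\big(\mathcal{L}^{(H)\dagger}_{x,y}(F^{(V)}_{x,y})\rho(t)\big)$: a vertical correction at $(x,y)$ rewrites the bottom qudit of the vertical edge, which is also the right qudit of the horizontal edge, so it can \emph{create} a horizontal violation at the very same level. The generic estimate (the analogue of the paper's Eq.~\ref{eq:upper_bound_simple_argument}) only gives $\textnormal{Tr}\big(\mathcal{L}^{(V)\dagger}_{x,y}(F^{(H)}_{x,y})\rho(t)\big)\le\langle F^{(V)}_{x,y}(t)\rangle$, which, combined with the unit self-damping of $F_{x,y}=F^{(H)}_{x,y}+F^{(V)}_{x,y}$, leaves a net decay rate of $0$ and destroys the exponential bound. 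The paper's fix is the diagonal-constraint inequality (Eq.~\ref{eq:penalty_diagonal}): if the two edges entering the corner $(x-1,y-1)$ carry no violation, then the joint configuration of the diagonally opposite sites $(x-1,y)$ and $(x,y-1)$ is restricted, and an explicit cancellation against $F^{(H)}_{x,y}F^{(V)}_{x,y}$ (Eqs.~\ref{eq:upper_bound_cross_term}a,b) shows the same-level feedback is at most $\tfrac{1}{3}\langle F^{(V)}_{x,y}\rangle$ and $\tfrac{3}{5}\langle F^{(H)}_{x,y}\rangle$ plus strictly lower-level terms, which is what leaves the residual rate $2/5$. Your proposal asserts that the self-creation is ``small enough'' but contains no mechanism for why it is strictly less than $1$; without the diagonal-constraint idea the level ordering alone does not deliver a positive $\alpha$. (A smaller imprecision: you allow the computational and swap jumps $L_t$ to ``spawn violations at equal level''; any such equal-level contribution would further erode $\alpha$, so you in fact need the exact commutation $[F^{(H/V)}_{x,y},L_t]=0$, which holds because the $L_t$ carry built-in validity checks on all neighbours of the qudit whose type they change.)
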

\begin{proof}
It will  be convenient to define the sets $\mathcal{J}_{x, y}^{(H)} = \{j \in \mathcal{J} : P_j \text{ is a 2-qudit operator supported on } (x - 1, y), (x, y)\} $ and $\mathcal{J}_{x, y}^{(V)} = \{j \in \mathcal{J} : P_j \text{ is a 2-qudit operator supported on }(x, y - 1), (x, y)\}$.
We first re-express $F$ as 
\[
F =  \pi^{\odot}_{1, 1} + \pi^{\times}_{R, 1} + \sum_{\substack{x\in[1:R] \\ y \in [1:N]}} F_{x, y},
\]
where $\pi^{\odot}_{x, y} = \ket{\odot_{x, y}}\!\bra{\odot_{x, y}}$, $\pi^\times_{x, y} = \ket{\times_{x, y}}\!\bra{\times_{x, y}}$ and
\begin{align}
&F_{x, y} = \delta_{x \neq 1} F_{x, y}^{(H)} +  \delta_{y \neq 1} F_{x, y}^{(V)} \text{ where }\nonumber \\ 
&\text{for }x\neq 1:  F^{(H)}_{x, y} = \sum_{j \in \mathcal{J}_{x, y}^{(H)}} P_j^\dagger P_j =  \begin{bmatrix}  \pi + \bar{\pi} + \ket{\odot}\!\bra{\odot} & \pi + \bar{\pi} + \ket{\times}\!\bra{\times}\end{bmatrix}_{x, y} + \begin{bmatrix} \ket{\times}\!\bra{\times} & \ket{\odot}\! \bra{\odot}\end{bmatrix}_{x, y} \text{ and } \nonumber \\
&\text{for } y\neq 1: F^{(V)}_{x, y } = \sum_{j \in \mathcal{J}_{x, y}^{(V)}} P_j^\dagger P_j = \begin{bmatrix} \ket{\odot}\!\bra{\odot} + \ket{\times}\!\bra{\times} + {\pi} \\ \bar{\pi} \end{bmatrix}_{x, y} + \begin{bmatrix} \ket{\times}\!\bra{\times} + \pi + \bar{\pi} \\ \ket{\odot}\!\bra{\odot} \end{bmatrix}_{x, y} + \begin{bmatrix} \ket{\times}\!\bra{\times} \\ \pi \end{bmatrix}_{x, y} + \begin{bmatrix} \pi + \ket{\odot}\!\bra{\odot} \\ \ket{\times}\!\bra{\times}\end{bmatrix}_{x, y}.
\end{align}
Here, we define $\delta_{a\neq b} = 1 - \delta_{a, b}$. We also define
\begin{align}
    \mathcal{L}_{x, y}^{(H)} = \sum_{j \in \mathcal{J}_{x, y}^{(H)}}\mathcal{D}_{P_j}  \text{ and }\mathcal{L}_{x, y}^{(V)} = \sum_{j \in \mathcal{J}_{x, y}^{(V)}}\mathcal{D}_{P_j}.
\end{align}
As laid out in Table~\ref{tab:jump_operator}, since for any horizontally (vertically) invalid two-qudit configuration, the horizontal (vertical) two-qudit jump operators $P_j$ correct it by changing the right (bottom) qudit, it follows that 
\begin{subequations}\label{eq:penality_operator_penalty}
\begin{align}
    &\text{If } (x', y') \notin \{(x - 1, y), (x, y)\}, \text{ then } \forall \ j \in \mathcal{J}_{x', y'}^{(H)}: [P_j, F^{(H)}_{x, y}] = 0 \implies \mathcal{L}^{(H)\dagger}_{x', y'}(F^{(H)}_{x, y}) = 0, \\
    &\text{If } (x', y') \notin \{(x, y - 1), (x, y) \} \text{ then } \forall \ j \in \mathcal{J}_{x', y'}^{(H)} : [P_j, F^{(V)}_{x, y}] = 0 \implies \mathcal{L}^{(H)\dagger}_{x', y'}(F^{(V)}_{x, y}) = 0, \\
    &\text{If } (x', y') \notin\{(x, y - 1), (x, y)\} \text{ then } \forall \ j \in \mathcal{J}_{x', y'}^{(V)} : [P_j, F^{(V)}_{x, y}] = 0 \implies \mathcal{L}^{(V)\dagger}_{x', y'}(F^{(V)}_{x, y}) = 0, \\
    &\text{If } (x', y') \notin \{(x - 1, y), (x, y)\}, \text{ then } \forall \ j \in \mathcal{J}_{x', y'}^{(V)}: [P_j, F^{(H)}_{x, y}] = 0 \implies \mathcal{L}^{(V)\dagger}_{x', y'}(F^{(H)}_{x, y}) = 0.
\end{align}
\end{subequations}
\begin{figure}
\includegraphics[scale=0.475]{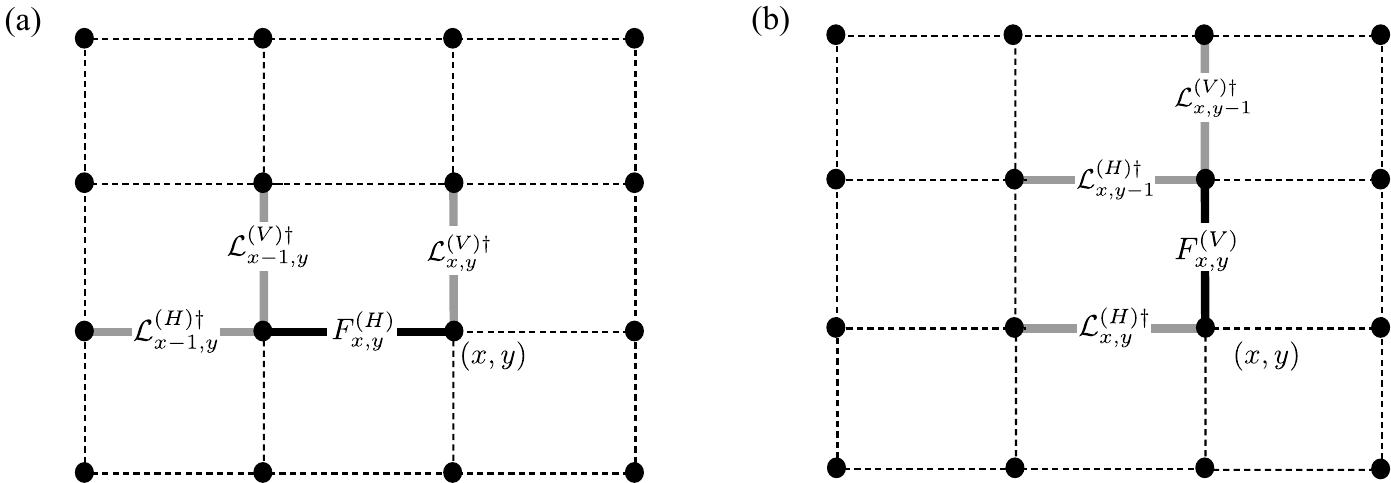}
\caption{Schematic depiction of which jump operators $P_j$ effect error number operators $F_{x, y}^{(H)}, F_{x, y}^{(V)}$ corresponding to a given edge. The edges whose associated $P_j$ can change $F_{x, y}^{(H)}, F_{x, y}^{(V)}$ corresponding to a dark edge are shown in gray, while the $P_j$ operators associated with the dotted edges do not impact the error number operator corresponding to the dark edge.}
\label{fig:plaq_comm}
\end{figure}
\noindent These relationships are depicted in Fig.~\ref{fig:plaq_comm} which shows, for a given horizontal or vertical edge associated with a $F_{x, y}^{(H/V)}$, which edges are associated with terms $\mathcal{L}_{x, y}^{(H/V)\dagger}$ that result in a non-zero value when applied on the $F_{x, y}^{(H/V)}$. Furthermore, it can be verified by explicit computation that
\begin{align}\label{eq:top_right_commutation}
\sum_{z\in \{0, 1, \bar{0}, \bar{1}\}} \mathcal{D}_{\ket{z_{R, 1}}\!\bra{\times_{R, 1}}}^\dagger \left(F^{(H)}_{R, 1}\right) = 0.
\end{align}
It can also be noted that, as a consequence of the specific choice of $L_t$ (Eqs.~\ref{eq:jump_op_computation}) which checks the validity of the qudits horizontally and vertically adjacent to the qudit whose type is being changed, 
\begin{align}\label{eq:penality_operator_L}
[F_{x, y}^{(H)}, L_t] = 0 = [F_{x, y}^{(V)}, L_t] \ \forall \ t \in \mathcal{T}.
\end{align}
Furthermore, since $F^{(H)}_{x, y}$ and $F^{(V)}_{x, y}$ are expressible entirely in terms of the projector $\pi = \ket{0}\!\bra{0} + \ket{1}\!\bra{1}$, as opposed to the individual projectors $\ket{0}\!\bra{0}$ and $\ket{1}\!\bra{1}$, it also follow that
\begin{align}\label{eq:penalty_operator_S}
[F_{x, y}^{(H)}, S_\alpha] = 0 = [F_{x, y}^{(V)}, S_\alpha] \ \forall \alpha \in \mathcal{A}.
\end{align}
We now obtain upper bounds on $\langle F_{x, y}^{(H)}(t)\rangle = \textnormal{Tr}(F_{x, y}^{(H)}\rho(t))$ and $\langle F_{x, y}^{(V)}(t)\rangle = \textnormal{Tr}(F^{(V)}_{x, y} \rho(t))$. It follows from Eqs.~\ref{eq:penality_operator_penalty}, \ref{eq:top_right_commutation}, \ref{eq:penality_operator_L} and \ref{eq:penalty_operator_S} that $\forall x \in [2: R], y\in [1:N]$, 
\begin{subequations}\label{eq:dynamics_F}
\begin{align}
\frac{d}{dt} \langle F_{x, y}^{(H)}(t)\rangle  + \langle F_{x, y}^{(H)}(t)\rangle  &= \delta_{x \neq 2} \tr{}{\mathcal{L}^{(H)\dagger}_{x - 1, y} (F_{x, y}^{(H)}) \rho(t)} + \delta_{y \neq 1} \tr{}{\mathcal{L}^{(V)\dagger}_{x, y}(F_{x, y}^{(H)}) \rho(t)} + \delta_{y \neq 1} \tr{}{\mathcal{L}^{(V)\dagger}_{x - 1, y}(F_{x, y}^{(H)}) \rho(t)} + \nonumber\\
&\qquad \qquad \frac{1}{4}\delta_{x, 2} \delta_{y, 1}\sum_{z \in \{0, 1, \bar{0}, \bar{1}\}}\text{Tr}\big(\mathcal{D}_{\ket{z_{1, 1}}\!\bra{\odot_{1, 1}}}^\dagger(F_{x, y}^{(H)})\rho(t)\big),
\end{align}
and $\forall x \in [1:R], y \in [2:N]$, 
\begin{align}
&\frac{d}{dt} \langle F_{x, y}^{(V)}(t)\rangle  + \langle F_{x, y}^{(V)}(t)\rangle  = \delta_{y \neq 2} \tr{}{\mathcal{L}^{(V)\dagger}_{x, y - 1} (F_{x, y}^{(V)}) \rho(t)} + \delta_{x \neq 1} \tr{}{\mathcal{L}^{(H)\dagger}_{x, y}(F_{x, y}^{(V)}) \rho(t)} + \delta_{x \neq 1} \tr{}{\mathcal{L}^{(H)\dagger}_{x, y - 1}(F_{x, y}^{(V)}) \rho(t)} + \nonumber\\
&\qquad \qquad \frac{1}{4}\delta_{x, 1}\delta_{y, 2}\sum_{z \in \{0, 1, \bar{0}, \bar{1}\}}\text{Tr}\big({\mathcal{D}^\dagger_{\ket{z_{1, 1}}\!\bra{\odot_{1, 1}}}}(F_{x, y}^{(V)})\rho(t)\big) + \frac{1}{4}\delta_{x, R}\delta_{y, 2}\sum_{z \in \{0, 1, \bar{0}, \bar{1}\}}\text{Tr}\big({\mathcal{D}^\dagger_{\ket{z_{R, 1}}\!\bra{\times_{R, 1}}}}(F_{x, y}^{(V)})\rho(t)\big)
\end{align}
\end{subequations}
where we have used the fact that $\mathcal{L}_{x, y}^{(H)\dagger}(F_{x, y}^{(H)}) = -F_{x, y}^{(H)}$ and $\mathcal{L}_{x, y}^{(V)\dagger}(F_{x, y}^{(V)}) = -F_{x, y}^{(V)}$. We can upper bound some of the terms on the right hand side of Eqs.~\ref{eq:dynamics_F}a and b with a simple argument --- note that
\begin{align}\label{eq:upper_bound_simple_argument}
 \tr{}{\mathcal{L}^{(H)\dagger}_{x - 1, y} (F_{x, y}^{(H)}) \rho(t)} &= \sum_{j \in \mathcal{J}^{(H)}_{x - 1, y}} \text{Tr}(P_j^\dagger F_{x, y}^{(H)} P_j \rho(t)) - \tr{}{F_{x - 1, y}^{(H)} F_{x, y}^{(H)} \rho(t)}, \nonumber \\
& \numleq{1} \sum_{j \in \mathcal{J}^{(H)}_{x - 1, y}} \text{Tr}( F_{x, y}^{(H)} P_j \rho(t)P_j^\dagger)\numleq{2} \sum_{j \in \mathcal{J}^{(H)}_{x - 1, y}} \text{Tr}(P_j \rho(t) P_j^\dagger) = \langle F_{x - 1, y}^{(H)}(t) \rangle,
\end{align}
where in (1) we have used the fact that $F^{(H)}_{x - 1, y} F^{(H)}_{x, y} \succeq 0$, and in (2) we have used the fact that $\norm{F}_{x, y}^{(H)} \leq 1$ and consequently for any $\sigma \succeq 0, \text{Tr}(F^{(H)}_{x, y} \sigma) \leq \norm{F_{x, y}^{(H)}} \norm{\sigma}_1 \leq \text{Tr}(\sigma)$. Proceeding similarly, we can obtain the upper bounds
\begin{align}
\tr{}{\mathcal{L}^{(V)\dagger}_{x - 1, y}(F_{x, y}^{(H)})\rho(t)}, \tr{}{\mathcal{L}^{(V)\dagger}_{x, y - 1}(F^{(V)}_{x, y}) \rho(t)} \leq \langle F_{x, y - 1}^{(V)}(t) \rangle,  \tr{}{\mathcal{L}^{(H)\dagger}_{x - 1, y}(F_{x, y}^{(V)})\rho(t)} \leq \langle F_{x - 1, y}^{(H)}(t) \rangle.
\end{align}
Furthermore,
\begin{align}
\sum_{z \in \{0, 1, \bar{0}, \bar{1}\}} \text{Tr}\big({\mathcal{D}^\dagger_{\ket{z_{1, 1}}\!\bra{\odot_{1, 1}}}}(F_{1, 2}^{(V)})\rho(t)\big), \sum_{z \in \{0, 1, \bar{0}, \bar{1}\}} \text{Tr}\big({\mathcal{D}^\dagger_{\ket{z_{1, 1}}\!\bra{\odot_{1, 1}}}}(F_{2, 1}^{(H)})\rho(t)\big) \leq 4\langle\pi^{\odot}_{1, 1}(t)\rangle,
\end{align}
and
\begin{align}
\sum_{z \in \{0, 1, \bar{0}, \bar{1}\}} \text{Tr}\big({\mathcal{D}^\dagger_{\ket{z_{R, 1}}\!\bra{\times_{R, 1}}}}(F_{R, 2}^{(V)})\rho(t)\big) \leq 4\langle\pi^{\times}_{R, 1}(t)\rangle
\end{align}
Using these bounds and Eq.~\ref{eq:dynamics_F}, we obtain that $\forall \ x\in [2:R], y \in [1:N]$,
\begin{subequations}\label{eq:dynamics_F_ineq}
\begin{align}
\frac{d}{dt} \langle F_{x, y}^{(H)}(t)\rangle  + \langle F_{x, y}^{(H)}(t)\rangle  &\leq \delta_{x \neq 2} \langle F_{x - 1, y}^{(H)}(t)\rangle + \delta_{y \neq 1}\langle F_{x - 1, y}^{(V)}(t) \rangle + \delta_{y \neq 1} \tr{}{\mathcal{L}^{(V)\dagger}_{x, y}(F_{x, y}^{(H)}) \rho(t)} + \delta_{x, 2}\delta_{y, 1}\langle \pi^{\odot}_{1, 1}(t)\rangle,
\end{align}
and $\forall x \in [1:R], y \in [2:N]$, 
\begin{align}
\frac{d}{dt} \langle F_{x, y}^{(V)}(t)\rangle  + \langle F_{x, y}^{(V)}(t)\rangle  \leq \delta_{y \neq 2} \langle F_{x, y - 1}^{(V)}(t)\rangle + \delta_{x \neq 1}\langle F_{x, y - 1}^{(H)} (t)\rangle+ \delta_{x \neq 1} \tr{}{\mathcal{L}^{(H)\dagger}_{x, y}(F_{x, y}^{(V)}) \rho(t)}  + \delta_{x, 1}\delta_{y, 2} \langle\pi_{1, 1}^{\odot}(t)\rangle + \nonumber\\
 \delta_{x, R} \delta_{y, 2} \langle \pi^{\times}_{R, 1}(t) \rangle.
\end{align}
\end{subequations}
We will now arrange these inequalities into a sequence which is of the form analyzed in Lemma~\ref{lemma:bound_sequence_inequalities}, and then use this lemma to obtain an upper bound on $\langle F(t) \rangle$.  We will first consider the first column of qudits and upper bound $\langle F^{(V)}_{1, y}(t)\rangle$, as well as $\langle \pi^{\odot}_{1, 1}(t) \rangle$. For $\langle \pi^{\odot}_{1, 1}(t) \rangle$, we obtain that
\begin{align}\label{eq:sequence_1}
\frac{d}{dt}\langle \pi^{\odot}_{1, 1}(t) \rangle + \langle \pi^{\odot}_{1, 1}(t) \rangle = 0.
\end{align}
Specializing Eq.~\ref{eq:dynamics_F_ineq}b to the first column of qudits ($x = 1$), we obtain
\begin{align}\label{eq:sequence_2}
\frac{d}{dt}\langle F_{1, y}(t) \rangle + \langle F_{1, y}(t) \rangle \leq \delta_{y \neq 2} \langle F_{1, y - 1}(t) \rangle + \delta_{y, 2} \langle \pi^{\odot}_{1, 1}(t) \rangle,
\end{align}
where we have used the fact that $F_{1, y} = F_{1, y}^{(V)}$.

We next consider inequalities for all the remaining columns. For $\langle F_{x, 1}^{(H)}(t) \rangle = \langle F_{x, 1}(t) \rangle$, we obtain from Eq.~\ref{eq:dynamics_F_ineq}a that
\begin{align}\label{eq:sequence_3}
\frac{d}{dt} \langle F_{x, 1}(t) \rangle  + \langle F_{x, 1}(t) \rangle \leq \delta_{x\neq 2} \langle F_{x - 1, 1}(t) \rangle + \delta_{x, 2}\langle \pi^{\odot}_{1, 1}(t) \rangle .
\end{align}
To use Eqs.~\ref{eq:dynamics_F_ineq}a, b when $y \in [2:N ]$, we need to provide upper bounds on $\tr{}{\mathcal{L}^{(H)\dagger}_{x, y}(F_{x, y}^{(V)})\rho(t)}$ and $\tr{}{\mathcal{L}^{(V)\dagger}_{x, y}(F_{x, y}^{(H)})\rho(t)}$. While these terms could be bounded by following the argument in Eq.~\ref{eq:upper_bound_simple_argument}, the bound thus obtained is too loose to be useful.  Instead, a more careful analysis is needed --- in particular, one important fact that we will use is that if $F_{x - 1, y}^{(V)}$ and $F_{x, y - 1}^{(H)}$ have small expected value (i.e.~the invalid configurations on the edges $((x - 1, y - 1), (x - 1, y))$ and $((x - 1, y - 1), (x, y - 1))$) are unlikely), then the probable configurations on the diagonally opposite sites $(x - 1, y), (x, y - 1)$ are also constrained. This can be made formal by noting the simple inequality
\begin{align}\label{eq:penalty_diagonal}
&\bigtr{}{\begin{bmatrix}
\ & \ket{\times}\!\bra{\times} + \pi + \bar{\pi} \\
\ket{\odot}\!\bra{\odot} + \pi + \bar{\pi} & \
\end{bmatrix}_{x, y}\rho(t)} \nonumber\\
&\qquad = \bigtr{}{\begin{bmatrix}
\ket{\odot}\!\bra{\odot} + \pi + \bar{\pi} & \ket{\times}\!\bra{\times} + \pi + \bar{\pi} \\
\ket{\odot}\!\bra{\odot} + \pi + \bar{\pi} & \
\end{bmatrix}_{x, y}\rho(t)}  + \bigtr{}{\begin{bmatrix}
\ket{\times}\!\bra{\times} & \ket{\times}\!\bra{\times} + \pi + \bar{\pi} \\
\ket{\odot}\!\bra{\odot} + \pi + \bar{\pi} & \
\end{bmatrix}_{x, y}\rho(t)}, \nonumber \\
&\qquad \leq \bigtr{}{\begin{bmatrix}
\ket{\odot}\!\bra{\odot} + \pi + \bar{\pi} & \ket{\times}\!\bra{\times} + \pi + \bar{\pi} 
\end{bmatrix}_{x, y - 1}\rho(t)}  + \bigtr{}{\begin{bmatrix}
\ket{\times}\!\bra{\times} \\
\ket{\odot}\!\bra{\odot} + \pi + \bar{\pi} 
\end{bmatrix}_{x - 1, y}\rho(t)}, \nonumber\\
&\qquad \leq \langle F^{(H)}_{x, y - 1}(t) \rangle + \langle F^{(V)}_{x - 1, y}(t) \rangle.
\end{align}
Equation \ref{eq:penalty_diagonal} can be interpreted as indicating that if the configurations on the edges  $((x - 1, y - 1), (x - 1, y))$ and $((x - 1, y - 1), (x, y - 1))$ are valid (i.e. $\langle F^{(H)}(x, y)\rangle = \langle F^{(V)}_{x, y}\rangle = 0$), then the diagonally opposite vertices, $(x - 1, y)$ and $(x, y - 1)$, cannot have the configurations $(a, b)$ where $a \in \{\odot, \bar{0}, \bar{1}, 0, 1\}$ and $b \in \{\times, \bar{0}, \bar{1}, 0, 1\}$. We now consider upper bounding $\tr{}{\mathcal{L}_{x, y}^{(V)\dagger}(F_{x, y}^{(H)})\rho(t)}$:
\begin{subequations}\label{eq:upper_bound_cross_term}
\begin{align}
&\tr{}{\mathcal{L}_{x, y}^{(V)\dagger}(F_{x, y}^{(H)})\rho(t)}= \sum_{j \in \mathcal{J}_{x, y}^{(V)}} \text{Tr}(P_j^\dagger F_{x, y}^{(H)} P_j \rho(t)) - \text{Tr}(F_{x, y}^{(H)} F_{x, y}^{(V)} \rho(t)), \nonumber \\
&=\frac{1}{3} \bigtr{}{\begin{bmatrix}
 \ & \ket{\odot}\!\bra{\odot} \\
 \ket{\times}\!\bra{\times} & \bar{\pi} + \ket{\times}\!\bra{\times}
\end{bmatrix}_{x, y}\rho(t)} + \frac{2}{3}\bigtr{}{\begin{bmatrix}
 \ & \ket{\odot}\!\bra{\odot} \\
 \pi + \bar{\pi} + \ket{\odot}\!\bra{\odot} & \bar{\pi}
\end{bmatrix}_{x, y}\rho(t)}  +  \bigtr{}{\begin{bmatrix} 
 \ & \pi + \ket{\times}\!\bra{\times} \\
 \pi + \bar{\pi} + \ket{\odot}\!\bra{\odot} &  \bar{\pi}
\end{bmatrix}_{x, y}\rho(t)} \nonumber \\
& \qquad+ \bigtr{}{\begin{bmatrix} 
 \ & \ket{\times}\!\bra{\times} \\
 \pi + \bar{\pi} + \ket{\odot}\!\bra{\odot} & {\pi} + \ket{\odot}\!\bra{\odot}
\end{bmatrix}_{x, y}\rho(t)}  + \bigtr{}{\begin{bmatrix} 
 \ & \bar{\pi} \\
 \pi + \bar{\pi} + \ket{\odot}\!\bra{\odot} & \ket{\times}\!\bra{\times}
\end{bmatrix}_{x, y}\rho(t)} - \tr{}{F_{x, y}^{(H)} F_{x, y}^{(V)}\rho(t)}, \nonumber \\
& \numleq{1}  \frac{1}{3} \bigtr{}{\begin{bmatrix}
 \ & \ket{\odot}\!\bra{\odot} \\
 \ket{\times}\!\bra{\times} & \bar{\pi} + \ket{\times}\!\bra{\times}
\end{bmatrix}_{x, y}\rho(t)} + \bigtr{}{\begin{bmatrix} 
 \ & \bar{\pi} \\
 \pi + \bar{\pi} + \ket{\odot}\!\bra{\odot} & \ket{\times}\!\bra{\times}
\end{bmatrix}_{x, y}\rho(t)} + \bigtr{}{\begin{bmatrix} 
 \ & \ket{\times}\!\bra{\times} \\
 \pi + \bar{\pi} + \ket{\odot}\!\bra{\odot} &  \ket{\odot}\!\bra{\odot}
\end{bmatrix}_{x, y}\rho(t)} , \nonumber\\
& \leq \frac{1}{3} \bigtr{}{\begin{bmatrix}
\ket{\odot}\!\bra{\odot} \\
\bar{\pi} + \ket{\times}\!\bra{\times}
\end{bmatrix}_{x, y}\rho(t)} + \bigtr{}{\begin{bmatrix} 
 \ & \bar{\pi} + \ket{\times}\!\bra{\times} \\
 \pi + \bar{\pi} + \ket{\odot}\!\bra{\odot} & \
\end{bmatrix}_{x, y}\rho(t)}, \nonumber\\
& \numleq{2} \frac{1}{3}\langle F_{x, y}^{(V)}(t) \rangle +  \langle F_{x - 1, y}^{(H)}(t) \rangle + \langle F_{x, y - 1}^{(V)}(t) \rangle,
\end{align}
where to obtain (1), we need to use the explicit expression for $F_{x, y}^{(H)} F_{x, y}^{(V)}$ to cancel terms from $\sum_{j \in \mathcal{J}^{(V)}_{x, y}} P_j ^\dagger F_{x, y}^{(H)} P_j$, and in (2) we have used the bound from Eq.~\ref{eq:penalty_diagonal}. Proceeding similarly, we can upper bound $\text{Tr}(\mathcal{L}_{x, y}^{(H)\dagger}(F_{x, y}^{(V)})\rho(t))$:
\begin{align}
&\text{Tr}(\mathcal{L}_{x, y}^{(H)\dagger}(F_{x, y}^{(V)}) \rho(t)) = \sum_{j \in \mathcal{J}_{x, y}^{(H)}} \text{Tr}(P_j^\dagger F_{x, y}^{(V)} P_j \rho(t)) - \tr{}{F_{x, y}^{(V)} F_{x, y}^{(H)} \rho(t)}, \nonumber\\
&=\bigtr{}{\begin{bmatrix}
&\ket{\times}\!\bra{\times} + \pi + \bar{\pi} \\
\ket{\odot}\!\bra{\odot} + \pi + \bar{\pi} &  \ket{\times}\!\bra{\times} + \pi + \bar{\pi}
\end{bmatrix}\rho(t)}_{x, y} + \frac{3}{5}\bigtr{}{\begin{bmatrix}
& \ket{\odot}\!\bra{\odot} + \pi + \frac{4}{3}\ket{\times}\!\bra{\times} \\
\ket{\times}\!\bra{\times} &  \ket{\odot}\!\bra{\odot} 
\end{bmatrix}\rho(t)}_{x, y}  - \textnormal{Tr}(F_{x, y}^{(V)} F_{x, y}^{(H)} \rho(t)), \nonumber\\
&\leq\bigtr{}{\begin{bmatrix}
&\ket{\times}\!\bra{\times} + \pi + \bar{\pi} \\
\ket{\odot}\!\bra{\odot} + \pi + \bar{\pi} &  \ket{\times}\!\bra{\times} + \pi + \bar{\pi}
\end{bmatrix}\rho(t)}_{x, y} + \frac{3}{5}\bigtr{}{\begin{bmatrix}
& \ket{\odot}\!\bra{\odot}  \\
\ket{\times}\!\bra{\times} &  \ket{\odot}\!\bra{\odot} 
\end{bmatrix}\rho(t)}_{x, y}, \nonumber\\
&\leq\bigtr{}{\begin{bmatrix}
&\ket{\times}\!\bra{\times} + \pi + \bar{\pi} \\
\ket{\odot}\!\bra{\odot} + \pi + \bar{\pi} & 
\end{bmatrix}\rho(t)}_{x, y} + \frac{3}{5}\bigtr{}{\begin{bmatrix}
\ket{\times}\!\bra{\times} &  \ket{\odot}\!\bra{\odot} 
\end{bmatrix}\rho(t)}_{x, y}, \nonumber\\
&\leq \frac{3}{5}\langle F_{x, y}^{(H)}(t)\rangle +\langle F_{x - 1, y}^{(H)}\rangle + \langle F_{x, y - 1}^{(V)}(t) \rangle.
\end{align}
\end{subequations}
From Eqs.~\ref{eq:upper_bound_cross_term}a, b together with Eqs.~\ref{eq:dynamics_F_ineq}a, b, we obtain that for $x \in [2:R - 1], y \in [2: N]$
\begin{align}\label{eq:sequence_4}
\frac{d}{dt}\expect{F_{x, y}(t)} + \frac{2}{5} \expect{F_{x, y}(t)} \leq 2\big( \langle F_{x - 1, y}(t) \rangle +  \langle F_{x, y - 1}(t) \rangle\big) + \delta_{x, R}\delta_{y, 2} \langle \pi^{\times}_{R, 1}(t) \rangle.
\end{align}
Finally, we can similarly obtain an inequality for $\langle \pi_{1, R}^\times(t) \rangle$ from
\begin{align}\label{eq:sequence_5}
\frac{d}{dt} \langle \pi_{R, 1}^\times(t) \rangle + \langle \pi_{1, R}^\times(t) \rangle \leq \text{Tr}\big(\mathcal{L}_{R, 1}^{(H)\dagger}(\pi_{R, 1}^\times)\rho(t)\big) \leq \langle F_{R, 1}(t) \rangle.
\end{align}
Now, we can see that the inequalities in Eqs.~\ref{eq:sequence_1}, \ref{eq:sequence_2}, \ref{eq:sequence_3}, \ref{eq:sequence_4} and \ref{eq:sequence_5} satisfy the conditions of Lemma~\ref{lemma:bound_sequence_inequalities} with $x_1(t), x_2(t), x_3(t) \dots x_n(t)$ being chosen as $\expect{\pi^{\odot}_{1, 1}(t)}, \expect{F_{1, 2}(t)}, \expect{F_{1, 3}(t)} \dots \expect{F_{1, N}(t)} , \expect{F_{2, 1}(t)}, \expect{F_{2, 2}(t)} \dots \expect{F_{2, N}(t)} \dots   \expect{F_{R - 1, 1}(t)}, \expect{F_{R - 1, 2}(t)} \dots \expect{F_{R - 1, N}(t)}$, $\expect{F_{R, 1}(t)}, \expect{\pi^{\times}_{R, 1}(t)},  \expect{F_{R, 2}(t)} \dots \expect{F_{R, N}(t)}$ and with $\alpha = 2/5, \beta = 4$. We thus obtain from \ref{lemma:bound_sequence_inequalities} that
\[
\expect{F_{x, y}(t)}, \expect{\pi^{\odot}_{1, 1}(t)}, \expect{\pi^{\times}_{R, 1}(t)}, \leq 20^{N R + 1} e^{-t/5} \implies \expect{F(t)} \leq (NR + 1) 20^{N  R + 1} e^{-t/5},
\]
which proves the lemma.
\end{proof}
We can now finally prove Lemma~\ref{lemma:encoded_lindbladian} from the main text.

\begin{replemma}{lemma:encoded_lindbladian}
    Suppose we are given a quantum circuit $\mathcal{C}$ on $N$ qubits with architecture shown in Fig.  depth and with $R$ rounds of gates. Then, there exists a two-dimensional spatially local Lindbladian $\mathcal{L}$ on $NR$ 6-level qudits with a unqiue fixed point $\sigma$, as well as a local observable $O$ such that 
    \[
    \textnormal{Tr}(O \sigma) = \frac{1}{2NR }z_{\mathcal{C}},
    \]
    where $z_{\mathcal{C}}$ is the expected value of a pauli-\textnormal{Z} operator on the last qubit at the output of $\mathcal{C}$. Furthermore, for any initial state $\rho(0)$ of the two-dimensional grid of qudits,
    \[
    \norm{e^{\mathcal{L}t}(\rho(0)) - \sigma_{\mathcal{C}}}_1 \leq c_0(N, R) \exp(-\gamma_0(N, R) t),
    \]
    where $c_0(N, R) = O(N^6 R^2 \exp(O(NR)))$ and $\gamma_0(N, R) = \Theta(N^{-3}R^{-3})$.
\end{replemma}
\begin{proof}
We first bound the 1-norm distance between $\rho(t) = e^{\mathcal{L}t}(\rho_0)$ and $\Pi_\mathcal{S}\rho(t) \Pi_\mathcal{S}$ to show that as $t \to \infty$, due to the action of the penalizing jump operators, $\rho(t)$ evolves to be almost supported in the valid subspace $\mathcal{S}$. For this, we will use Lemma~\ref{lemma:encoding_num_error_bound} --- note that by definition of $\mathcal{S}$, a state $\ket{\psi} \in \mathcal{S}^\perp$ is orthogonal to the kernel of at least one of the projectors $P_j^\dagger P_j$ and thus $\bra{\psi}F\ket{\psi}\geq \bra{\psi}P_j^\dagger P_j \ket{\psi} = \bra{\psi}\psi\rangle$. Also, we observe that $F - \Pi_{S^\perp} \succeq 0$ since for any $\ket{\psi} \neq 0$,
\begin{align*}
    \bra{\psi}\big(F - \Pi_{S^\perp} \big) \ket{\psi} &= \bra{\psi}F\ket{\psi} - \bra{\psi} \Pi_{S^\perp} \ket{\psi}, \nonumber \\
    &\numeq{1}\bra{\psi}\Pi_{S^\perp} F \Pi_{S^\perp}\ket{\psi} - \bra{\psi} \Pi_{S^\perp} \ket{\psi}, \nonumber \\
    &\geq \bra{\psi} \Pi_{S^\perp} \Pi_{S^\perp} \ket{\psi} -  \bra{\psi} \Pi_{S^\perp} \ket{\psi} = 0,
\end{align*}
where, in (1) we have used the fact that, since $P_j \Pi_{\mathcal{S}} = 0 \ \forall \ j \in \mathcal{J}$, $F\Pi_\mathcal{S} = \Pi_{\mathcal{S}} F = 0 \implies \Pi_{\mathcal{S}^\perp} F\Pi_{\mathcal{S}^\perp} = F $. Using $F \succeq \Pi_{\mathcal{S}^\perp}$, we obtain that
\[
\textnormal{Tr}(\Pi_{\mathcal{S}^\perp}\rho(t)) \leq \textnormal{Tr}(F\rho(t)) \leq (NR + 1)20^{NR + 1}e^{-t/5}.
\]
Suppose for $\rho \in \text{D}_1((\mathbb{C}^6)^{\otimes NR})$, $\rho(t) = e^{\mathcal{L}t}(\rho)$, then
\begin{align}\label{eq:particle_num_bound_encoding}
    \norm{\rho(t) - \Pi_\mathcal{S}\rho(t) \Pi_{\mathcal{S}}}_1 &\leq \norm{\Pi_{\mathcal{S}^\perp}\rho(t) }_1 + \norm{\Pi_{\mathcal{S}}\rho(t)\Pi_{\mathcal{S}^\perp} }_1, \nonumber\\
    &\leq \sqrt{\textnormal{Tr}(\Pi_{\mathcal{S}^\perp} \rho(t))\textnormal{Tr}(\rho(t))} +\sqrt{\textnormal{Tr}(\Pi_{\mathcal{S}^\perp}\rho(t))\textnormal{Tr}(\Pi_{\mathcal{S}}\rho(t))}, \nonumber\\
    &\leq 2\sqrt{\textnormal{Tr}(\Pi_{\mathcal{S}^\perp}\rho(t))} \leq 2(NR + 1)^{1/2}20^{(NR + 1)/2}e^{-t/10},
\end{align}
Next, it is convenient to introduce the superoperators $\mathcal{P}_{\mathcal{S}}$ and $\mathcal{Q}_{\mathcal{S}}$ defined by
\[
\mathcal{P}_{\mathcal{S}}(X) = \Pi_{\mathcal{S}} X \Pi_{\mathcal{S}} \text{ and }\mathcal{Q}_{\mathcal{S}} = \text{id} - \mathcal{P}_{\mathcal{S}}.
\]
$\mathcal{P}_{\mathcal{S}}$ can be interpreted to project an operator onto another operator that is entirely supported on $\mathcal{S}$. Now, the master equation can be expressed as the following two equations:
\begin{subequations}\label{eq:coupled_ad_elim_eq}
\begin{align}
&\frac{d}{dt}\mathcal{P}_\mathcal{S}(\rho) = \mathcal{P}_\mathcal{S}\mathcal{L} \mathcal{P}_\mathcal{S}(\rho) + \mathcal{P}_\mathcal{S}\mathcal{L}\mathcal{Q}_\mathcal{S}(\rho), \\
&\frac{d}{dt}\mathcal{Q}_\mathcal{S}(\rho) = \mathcal{Q}_\mathcal{S} \mathcal{L} \mathcal{P}_\mathcal{S}(\rho) + \mathcal{Q}_\mathcal{S} \mathcal{L} \mathcal{Q}_\mathcal{S}(\rho),
\end{align}
\end{subequations}
Furthermore, since $\forall \ j \in \mathcal{J}: P_j \Pi_\mathcal{S} = 0$, it immediately follows that
\begin{align}\label{eq:penalty_operator}
\mathcal{D}_{P_j}  \mathcal{P}_\mathcal{S}(X) = P_j \Pi_\mathcal{S}X\Pi_\mathcal{S}P_j^\dagger - \frac{1}{2}\{P_j^\dagger P_j, \Pi_\mathcal{S}X\Pi_\mathcal{S}\} = 0.
\end{align}
Since $\forall \ \alpha \in \mathcal{A}, t \in \mathcal{T}: [L_t, \Pi_\mathcal{S}] = [S_\alpha, \Pi_\mathcal{S}] = 0$, $\mathcal{D}_{S_\alpha} \mathcal{P}_\mathcal{S} = \mathcal{P}_{\mathcal{S}}  \mathcal{D}_{S_\alpha}$, $\mathcal{D}_{L_t} \mathcal{P}_\mathcal{S} = \mathcal{P}_\mathcal{S} \mathcal{D}_{L_t}$. Thus, we obtain that
\begin{align*}
\mathcal{Q}_\mathcal{S}  \mathcal{L} \mathcal{P}_\mathcal{S} &= \sum_{\alpha \in \mathcal{A}} \mathcal{Q}_{\mathcal{S}} \mathcal{D}_{S_\alpha} \mathcal{P}_{\mathcal{S}} + \sum_{t \in \mathcal{T}} \mathcal{Q}_{\mathcal{S}} \mathcal{D}_{L_t} \mathcal{P}_{\mathcal{S}} +  \sum_{j\in \mathcal{J}}\mathcal{Q}_{\mathcal{S}} \mathcal{D}_{P_j} \mathcal{P}_{\mathcal{S}}, \nonumber \\
&=\sum_{\alpha \in \mathcal{A}} \mathcal{Q}_{\mathcal{S}} \mathcal{P}_{\mathcal{S}} \mathcal{D}_{S_\alpha} + \sum_{t \in \mathcal{T}}\mathcal{Q}_{\mathcal{S}} \mathcal{P}_{\mathcal{S}} \mathcal{D}_{L_t} = 0,
\end{align*}
where we have used that $\mathcal{Q}_{\mathcal{S}} \mathcal{P}_{\mathcal{S}} = 0$. Similarly, 
\begin{align*}
\norm{\mathcal{P}_\mathcal{S}  \mathcal{L} \mathcal{Q}_\mathcal{S}}_{1\to 1} &= \bignorm{\sum_{\alpha \in \mathcal{A}} \mathcal{P}_{\mathcal{S}} \mathcal{D}_{S_\alpha} \mathcal{Q}_{\mathcal{S}} + \sum_{t = 1}^T \mathcal{P}_{\mathcal{S}} \mathcal{D}_{L_t} \mathcal{Q}_{\mathcal{S}} +  \sum_{j \in \mathcal{J}}\mathcal{P}_{\mathcal{S}} \mathcal{D}_{P_j} \mathcal{Q}_{\mathcal{S}}}_{1\to 1}, \nonumber \\
&=\bignorm{\sum_{\alpha \in \mathcal{A}} \mathcal{P}_{\mathcal{S}} \mathcal{Q}_{\mathcal{S}} \mathcal{D}_{S_\alpha} + \sum_{t \in \mathcal{T}} \mathcal{P}_{\mathcal{S}} \mathcal{Q}_{\mathcal{S}} \mathcal{D}_{L_t} + \sum_{j \in \mathcal{J}}\mathcal{P}_\mathcal{S} \mathcal{D}_{P_j} \mathcal{Q}_{\mathcal{S}}}_{1\to 1}, \nonumber \\
&=\bignorm{\sum_{j \in \mathcal{J}}\big( \mathcal{P}_\mathcal{S} \mathcal{D}_{P_j} -  \mathcal{P}_\mathcal{S} \mathcal{D}_{P_j}  \mathcal{P}_\mathcal{S} \big)}_{1\to 1}, \nonumber \\
&\leq \sum_{j \in \mathcal{J}} \bigg( \norm{\mathcal{P}_\mathcal{S} \mathcal{D}_{P_j}}_{1\to 1} + \bignorm{\mathcal{P}_\mathcal{S} \mathcal{D}_{P_j}  \mathcal{P}_\mathcal{S} }_{1\to 1}\bigg), \\
&\numleq{1} 2\sum_{j \in \mathcal{J}} \norm{\mathcal{D}_{P_j}}_{1\to 1} \numleq{2} 4\sum_{j \in \mathcal{J}} \norm{P_j}^2 \leq 4\abs{\mathcal{J}},
\end{align*}
where in (1) we use that $\norm{\mathcal{P}_\mathcal{S}}_{1\to 1} \leq 1$ and in (2) we use that $\norm{\mathcal{D}_{P_j}}_{1 \to 1} \leq 2\norm{P_j}^2 = 2$. Finally, it also follows that $\mathcal{P}_\mathcal{S}\mathcal{D}_{L_t}\mathcal{P}_\mathcal{S} = \mathcal{D}_{\Pi_\mathcal{S}L_t \Pi_\mathcal{S}}$ and $\mathcal{P}_\mathcal{S}\mathcal{D}_{S_\alpha}\mathcal{P}_\mathcal{S} = \mathcal{D}_{\Pi_\mathcal{S}S_\alpha \Pi_\mathcal{S}}$ \ $\forall\ t\in \mathcal{T}, \alpha \in \mathcal{A}$. To see this, note that since any $A \in \{S_\alpha\}_{\alpha \in \mathcal{A}} \cup \{L_t\}_{t\in \mathcal{T}}$ commutes with $\Pi_{\mathcal{S}}$, 
\begin{align}\label{eq:commuting_with_PS}
\mathcal{P}_S\mathcal{D}_A\mathcal{P}_\mathcal{S}(X) &= \Pi_\mathcal{S}A \Pi_\mathcal{S}X\Pi_{\mathcal{S}} A^\dagger\Pi_{\mathcal{S}} - \frac{1}{2}\big(\Pi_{\mathcal{S}}A^\dagger A \Pi_{\mathcal{S}} X\Pi_{\mathcal{S}} + \Pi_{\mathcal{S}}A^\dagger A \Pi_{\mathcal{S}} X\Pi_{\mathcal{S}} \big), \nonumber \\
&=\Pi_\mathcal{S}A \Pi_\mathcal{S}X\Pi_{\mathcal{S}} A^\dagger \Pi_\mathcal{S}- \frac{1}{2}\big(\Pi_{\mathcal{S}}A^\dagger\Pi_{\mathcal{S}} A \Pi_{\mathcal{S}} X\Pi_{\mathcal{S}} + \Pi_{\mathcal{S}}A^\dagger \Pi_{\mathcal{S}} A \Pi_{\mathcal{S}} X\Pi_{\mathcal{S}} \big), \nonumber\\
&=\mathcal{D}_{\Pi_{\mathcal{S}} A \Pi_{\mathcal{S}}}(X).
\end{align}
From Eqs.~\ref{eq:penalty_operator} and \ref{eq:commuting_with_PS}, it then follows that
\[
\mathcal{L}_\mathcal{S}:= \mathcal{P}_{\mathcal{S}} \mathcal{L} \mathcal{P}_{\mathcal{S}} = \sum_{\alpha \in \mathcal{A}} \mathcal{P}_{\mathcal{S}} \mathcal{D}_{S_\alpha}  \mathcal{P}_{\mathcal{S}} + \sum_{t \in \mathcal{T}} \mathcal{P}_{\mathcal{S}} \mathcal{D}_{L_t}  \mathcal{P}_{\mathcal{S}} = \sum_{\alpha \in \mathcal{A}} \mathcal{D}_{\Pi_\mathcal{S} S_\alpha \Pi_{\mathcal{S}}} + \sum_{t\in \mathcal{T}} \mathcal{D}_{\Pi_\mathcal{S} L_t \Pi_{\mathcal{S}}}.
\]
Furthermore, $\mathcal{L}_\mathcal{S}$, which is a Lindbladian superoperator corresponding to the restriction of the full Linbladian $\mathcal{L}$ onto the valid subspace of states, is by construction unitarily equivalent to $\mathcal{L}_\text{ref}$ with each gate (including the identity gates) in Fig.~\ref{fig:encoding_circuit_format}b treated as an independent time-step.

Using Eq.~\ref{eq:coupled_ad_elim_eq}, we can now bound $\norm{e^{\mathcal{L}t}(\rho_0) - \sigma}_1$. From Eq.~\ref{eq:coupled_ad_elim_eq}a, we obtain that $\mathcal{Q}_\mathcal{S}(\rho(t)) = e^{\mathcal{Q}_\mathcal{S} \mathcal{L}\mathcal{Q}_\mathcal{S}}(\rho_0)$ and from Eq.~\ref{eq:coupled_ad_elim_eq}b, we obtain that
\begin{align}\label{eq:p_s_integrated}
\mathcal{P}_\mathcal{S}(\rho(t)) = e^{\mathcal{L}_\mathcal{S}t}\mathcal{P}_\mathcal{S}(\rho_0) + \int_0^t e^{\mathcal{P}_\mathcal{S}\mathcal{L}\mathcal{P}_\mathcal{S}(t - t')}\mathcal{P}_\mathcal{S} \mathcal{L}\mathcal{Q}_\mathcal{S}(\rho(t'))dt'.
\end{align}
    Now, since $\mathcal{L}_\mathcal{S} $ is unitarily equivalent to $\mathcal{L}_\text{ref}$, we can use Lemma~\ref{prop:logN_conv_bound} to obtain that for any $A \in \text{L}((\mathbb{C}^{6})^{\otimes NR}; \mathcal{S})$
    \[
    e^{\mathcal{L}_\mathcal{S}t}(A) = \text{Tr}(A) \sigma + \chi_t(A), \text{ where }\norm{\chi_t}_{1\to 1}\leq c_0(T, N) e^{-a_0(T) t},
    \]
    where $T = 2NR$ and $c_0(T, N), a_0(T)$ are defined in Lemma~\ref{prop:logN_conv_bound}. With this decomposition, we obtain from Eq.~\ref{eq:p_s_integrated} that
    \begin{align*}
    &\mathcal{P}_{\mathcal{S}}(\rho(t)) \nonumber \\
    &= \bigg[\text{Tr}(\mathcal{P}_S(\rho_0))  +  \int_0^t \textnormal{Tr}\big(\mathcal{P}_\mathcal{S} \mathcal{L} \mathcal{Q}_{\mathcal{S}}(\rho(t'))\big) dt'\bigg]\sigma + \chi_t(\mathcal{P}_\mathcal{S}(\rho_0)) + \int_0^t \chi_{t - t'}(\mathcal{P}_\mathcal{S} \mathcal{L} \mathcal{Q}_{\mathcal{S}}(\rho(t'))dt', \nonumber\\
    &\numeq{1} \bigg[\text{Tr}(\mathcal{P}_S(\rho_0))  -  \int_0^t \textnormal{Tr}\big(\mathcal{Q}_\mathcal{S} \mathcal{L} \mathcal{Q}_{\mathcal{S}}(\rho(t'))\big)  dt'\bigg]\sigma + \chi_t(\mathcal{P}_\mathcal{S}(\rho_0)) + \int_0^t \chi_{t - t'}(\mathcal{P}_\mathcal{S} \mathcal{L} \mathcal{Q}_{\mathcal{S}}(\rho(t'))dt', \nonumber\\
    &\numeq{2}\bigg[\text{Tr}(\mathcal{P}_S(\rho_0)) -  \int_0^t \textnormal{Tr}\bigg(\frac{d}{dt}\mathcal{Q}_\mathcal{S}(\rho(t'))\bigg)  dt'\bigg]\sigma + \chi_t(\mathcal{P}_\mathcal{S}(\rho_0)) + \int_0^t \chi_{t - t'}(\mathcal{P}_\mathcal{S} \mathcal{L} \mathcal{Q}_{\mathcal{S}}(\rho(t'))dt', \nonumber \\
    &= \text{Tr}(\rho_0)\sigma - \text{Tr}(\mathcal{Q}_\mathcal{S}(\rho(t))) + \chi_t(\mathcal{P}_{\mathcal{S}}(\rho_0)) + \int_0^t \chi_{t - t'}(\mathcal{P}_{\mathcal{S}} \mathcal{L} \mathcal{Q}_\mathcal{S}(\rho(t'))dt',
    \end{align*}
    where in (1) we have used the fact that $\text{Tr}(\mathcal{P}_\mathcal{S} \mathcal{L}(\cdot)) = \text{Tr}(\mathcal{L}(\cdot)) - \text{Tr}(\mathcal{Q}_\mathcal{S} \mathcal{L}(\cdot)) = - \text{Tr}(\mathcal{Q}_\mathcal{S} \mathcal{L}(\cdot))$, since every operator in the image of a Lindbladian is traceless and in (2), we have used Eq.~\ref{eq:coupled_ad_elim_eq}b. Noting that $\norm{\mathcal{P}_\mathcal{S}(\rho(0))}_1 = \norm{\Pi_\mathcal{S}\rho(0)\Pi_\mathcal{S}}_1 \leq 1$, from Eq~\ref{eq:particle_num_bound_encoding}, $\norm{\mathcal{Q}_\mathcal{S}(\rho(t))}_1 \leq 2(NR + 1)^{1/2}10^{NR + 1}e^{-t/5}$, and we obtain
    \begin{align*}
        &\norm{\mathcal{P}_\mathcal{S}(\rho(t)) - \text{Tr}(\rho_0)\sigma}_{1} \nonumber\\
        &\leq \norm{\mathcal{Q}_{\mathcal{S}}(\rho(t))}_1 + \norm{\chi_t(\mathcal{P}_{\mathcal{S}}(\rho(0)))}_{1} + \bignorm{\int_0^t \chi_{t - t'}(\mathcal{P}_{\mathcal{S}} \mathcal{L} \mathcal{Q}_\mathcal{S}(\rho(t'))dt'}_{1}, \nonumber \\
        &\leq 2({NR + 1})^{1/2}10^{NR + 1}e^{-t/5}+ \norm{\chi_t}_{1\to 1} \norm{\mathcal{P}_\mathcal{S}(\rho(0))}_1 + \int_0^t \norm{\chi_{t - t'}}_{1\to 1} \norm{\mathcal{P}_\mathcal{S}  \mathcal{L} \mathcal{Q}_\mathcal{S}}_{1\to 1} \norm{\mathcal{Q}_{\mathcal{S}}(\rho(t'))}_1 dt', \nonumber \\
        &\leq 2({NR + 1})^{1/2}10^{NR + 1}e^{-t/5} + \norm{\chi_t}_{1\to 1} \norm{\mathcal{P}_\mathcal{S}(\rho(0))}_1 + \int_0^t \norm{\chi_{t - t'}}_{1\to 1} \norm{\mathcal{P}_\mathcal{S}  \mathcal{L} \mathcal{Q}_\mathcal{S}}_{1\to 1} \norm{\mathcal{Q}_{\mathcal{S}}\rho(t')}_1 dt', \nonumber \\
        &\leq 2({NR + 1})^{1/2}10^{NR + 1}e^{-t/5} + c_0(T, N) e^{-a_0(T)t}\norm{\mathcal{P}_\mathcal{S}(\rho(0))}_1 + 8\abs{\mathcal{J}}({NR + 1})^{1/2}10^{NR + 1} c_0(T, N) \int_0^t e^{-a_0(T) (t - t')}e^{-t'/5}dt', \nonumber \\
        &\leq 2({NR + 1})^{1/2}10^{NR + 1}e^{-t/5} + c_0(T, N) e^{-a_0(T)t} + \frac{40\abs{\mathcal{J}} (NR + 1)^{1/2} 10^{NR + 1} c_0(T, N)}{1 - 5a_0(T)}  \big(e^{-t/5} - e^{-a_0(T)t}\big), \nonumber
    \end{align*}
Finally, we obtain from $\norm{e^{\mathcal{L}t}(\rho_0) - \text{Tr}(\rho_0)\sigma}_1 \leq \norm{\mathcal{Q}_\mathcal{S}(\rho(t))}_1 + \norm{\mathcal{P}_\mathcal{S}(\rho(t))}_1$ that
\[
\norm{e^{\mathcal{L}t}(\rho_0) - \text{Tr}(\rho_0)\sigma}_{1} \leq 4({NR + 1})^{1/2}10^{NR + 1}e^{-t/5} + c_0(T, N) e^{-a_0(T)t} + \frac{40\abs{\mathcal{J}} (NR + 1)^{1/2} 10^{NR + 1} c_0(T, N)}{1 - 5a_0(T)}  \big(e^{-t/5} - e^{-a_0(T)t}\big).
\]
From this bound, we obtain that as $N, R \to \infty$,
\[
\norm{e^{\mathcal{L}t}(\rho_0) - \text{Tr}(\rho_0)\sigma}_{1} \leq O(N^6 R^2 \exp(O(NR))) \exp\big(-\Omega(N^{-3}R^{-3}) t\big).
\]
Finally, we construct a local observable whose expected value is the expected value of the pauli $Z$-operator on the last qubit at the output of the quantum circuit --- this local observable will be $O = \ket{\bar{1}_{R, N}}\!\bra{\bar{1}_{R, N}} -  \ket{\bar{0}_{R, N}}\!\bra{\bar{0}_{R, N}}$ and it acts on the lower right-most qudit on the 2D grid. To see that this observable indeed measures the pauli $Z$-operator on the last qubit of the encoded circuit, recall from Lemma~\ref{prop:logN_conv_bound} that the fixed point of the Lindbladian contains a mixture of the (valid) state of the qudits corresponding to different time-steps of the encoded circuit. Note that in all of these time-steps except for the last one, the lower rightmost qudit is in the state $\ket{\times}$ or in an unbarred $0/$ state and thus yields a $0$ expected value for the observable $O$. Consequently, the expected value of $O$ in the fixed point only has a contribution from the state at the last time-step and is equal to the expected value of the pauli-$Z$ on the last qubit of the encoded circuit multiplied by a normalization factor of $1 / 2NR$, which arises from the fact that the fixed point contains a convex combination of the state of the qudits at all time-steps.
\end{proof}

\section{Stability to noise and errors}
\label{appendix:analysis_noisy}
\noindent As detailed in the main text, we will asssume that the quantum simulator due to noise, instead of implementing $\mathcal{L}_{\omega}$, will implement the Lindbladian $\mathcal{L}_{\omega, \delta}$ 
\[
\mathcal{L}_{\omega, \delta} = \mathcal{L}_\omega + \delta \sum_{\beta}\mathcal{N}_\beta,
\]
where $\mathcal{N}_\beta$, which itself is a Lindbladian, acts on the system qudits and ancilla qubits and is supported on ${S}'_\beta$ and $\delta$ is the noise rate. We will assume that there is a $\mathcal{Z}' \geq 0$ such that
\[
\abs{\{\beta' : S'_{\beta'}\cap S'_\beta \neq \emptyset \}} + \abs{\{\alpha : S'_{\beta'}\cap S_\alpha \neq \emptyset \}} \leq \mathcal{Z}' \ \text{for all }\beta.
\]
i.e.~${S}_\beta'$ intersects with at most $\mathcal{Z}'$ other subsets $S'_\beta$ or $\mathcal{S}_\alpha$ (on which the jump operators $L_\alpha$ and Hamiltonian terms $h_\alpha$ corresponding to the target Lindbladian are supported). In particular, this implies that any one ancilla qubit is in the support of at most $\mathcal{Z}'$ Lindbladians $\mathcal{N}_\beta$. We will denote by $\mathcal{E}_{\omega, \delta}(t, s) = \exp(\mathcal{L}_{\omega, \delta}(t - s))$ and $\rho_{\omega, \delta}(t) = \mathcal{E}_{\omega, \delta}(t, 0)\rho(0)$, where we will again assume the ancillae to initially be in the $\ket{0}$ state. We first modify the remainder expression given in Lemma~\ref{lemma:remainder} for the noiseless case.
\begin{replemma}{lemma:remainder_noisy}
If the ancillae are initially in $\ket{0}$ then 
\[
\frac{d}{dt}\textnormal{Tr}_\mathcal{A}(\rho_{\omega, \delta}(t)) - \omega^2 \mathcal{L} \textnormal{Tr}_\mathcal{A}(\rho_{\omega, \delta}(t)) = \mathcal{R}_{\omega}(t) + \delta \sum_{\beta} \mathcal{K}^{(0)}_{\beta}(t) + \delta \sum_{\alpha, \beta} \int_0^t e^{-2(t-s)} \left(\omega \mathcal{K}^{(1)}_{\alpha, \beta}(s) + \omega^2 \mathcal{K}^{(2)}_{\alpha, \beta}(s)\right) ds,
\]
where $\mathcal{R}_{\omega}(t)$ is as defined in Lemma~\ref{lemma:remainder} but with $\rho_{\omega} \to \rho_{\omega, \delta}$ and
\begin{align*}
    \mathcal{K}_{\beta}^{(0)}(t) &= \tr{\mathcal{A}}{\mathcal{N}_\beta(\rho_{\omega,\delta}(t)},
    \nonumber \\
    \mathcal{K}_{\alpha,\beta}^{(1)}(t) &= -i[L_\alpha^\dagger, \tr{\mathcal{A}}{\sigma_\alpha\mathcal{N}_\beta(\rho_{\omega,\delta}(t))}] + \textnormal{h.c.},
    \nonumber \\
    \mathcal{K}_{\alpha,\beta}^{(2)}(t) &= \frac{1}{2}[L_\alpha^\dagger,L_\alpha\tr{\mathcal{A}}{\mathcal{N}_\beta(\rho_{\omega,\delta}(t)}] + \textnormal{h.c.}.
\end{align*}
\end{replemma}
\begin{proof} We follow the method as in the proof of Lemma~\ref{lemma:remainder}. We begin by writing the equations of motion for $\tr{\mathcal{A}}{\rho_{\omega,\delta}(t)}$ and $\tr{\mathcal{A}}{\sigma_\alpha \rho_{\omega,\delta}(t)}$,
\begin{subequations}
\begin{align}
    \label{eq:dynamics_trA_rho_witherrors}
    & \frac{d}{dt}\tr{\mathcal{A}}{\rho_{\omega,\delta}(t)}
    = -\omega \sum_{\alpha}(i[L_\alpha^\dagger, \tr{\mathcal{A}}{\sigma_\alpha {\rho}_{\omega,\delta}(t)}]+\textnormal{h.c.})  -i\omega^2 [H_\textnormal{sys}, \tr{\mathcal{A}}{{\rho}_{\omega,\delta}(t)}] + \delta \sum_\beta \tr{\mathcal{A}}{\mathcal{N}_\beta(\rho_{\omega,\delta}(t))},
    \\
    \label{eq:dynamics_trA_sigma_rho_witherrors}
    & \frac{d}{dt}\tr{\mathcal{A}}{\sigma_\alpha \rho_{\omega,\delta}(t)}
    \nonumber \\
    & \quad = 
    - 2 \tr{\mathcal{A}}{\sigma_\alpha \rho_{\omega,\delta} (t)} - i\omega L_\alpha \tr{\mathcal{A}}{\rho_{\omega,\delta}(t)}
  - i\omega^2[H_\text{sys}, \tr{\mathcal{A}}{\sigma_\alpha\rho_{\omega,\delta}(t)}]+ \omega\sum_{\alpha'} {E_{\alpha,\alpha'}(t)} + \delta \sum_\beta \tr{\mathcal{A}}{\sigma_{\alpha}\mathcal{N}_\beta(\rho_{\omega,\delta}(t))}. 
  % \\
  % & \frac{d}{dt}\tr{\mathcal{A}}{\sigma_\alpha^\dagger \rho_{\omega,\delta}(t)}
  %   \nonumber \\
  %   & \quad = 
  %   - 2 \tr{\mathcal{A}}{\sigma_\alpha^\dagger \rho_{\omega,\delta} (t)} + i\omega  \tr{\mathcal{A}}{\rho_{\omega,\delta}(t)} L_\alpha^\dagger
  % - i\omega^2[H_\text{sys}, \tr{\mathcal{A}}{\sigma_\alpha^\dagger\rho_{\omega,\delta}(t)}]+ \omega\sum_{\alpha'} {E_{\alpha,\alpha'}^\dagger(t)} + \delta \sum_\beta \tr{\mathcal{A}}{\sigma_{\alpha}^\dagger \mathcal{N}_\beta(\rho_{\omega,\delta}(t))}.
\end{align}
\end{subequations}
We now consider the remainder $\mathcal{R}_{\omega,\delta}(t)$ --- using the definition of the remainder (Eq.~\ref{eq:remainder_def}) together with Eq.~\ref{eq:dynamics}a, we obtain that
\begin{align}\label{eq:simplified_remainder_witherrors}
    \mathcal{R}_{\omega,\delta}(t) & = \frac{d}{dt}\tr{\mathcal{A}}{\rho_{\omega,\delta}(t)} - \omega^2 \mathcal{L}\tr{\mathcal{A}}{\rho_{\omega,\delta}(t)}
    \nonumber \\
    & = \sum_{\alpha}\left(\left(-i\omega[L_\alpha^\dagger, \tr{\mathcal{A}}{\sigma_\alpha {\rho}_{\omega,\delta}(t)}] + \text{h.c}\right)- \omega^2 \mathcal{D}_{L_\alpha} \tr{A}{\rho_{\omega,\delta}(t)}\right) + \delta \sum_\beta \tr{\mathcal{A}}{\mathcal{N}_\beta(\rho_{\omega,\delta}(t))}.
\end{align}
Integrating Eq.~\ref{eq:dynamics_trA_sigma_rho_witherrors}, we find
\begin{align}
    & \tr{\mathcal{A}}{\sigma_\alpha \rho_{\omega,\delta}(t)}
    \nonumber \\
    & \quad = \int_0^t e^{-2(t - s)} \bigg(
    -i \omega^2 [H_\text{sys}, \tr{\mathcal{A}}{\sigma_\alpha\rho_\omega(s)}] -i \omega L_\alpha \tr{\mathcal{A}}{\rho_\omega(s)} + \omega\sum_{\alpha'} E_{\alpha, \alpha'}(s) + \delta \sum_\beta \tr{\mathcal{A}}{\sigma_\alpha\mathcal{N}_\beta(\rho_{\omega,\delta}(s))}\bigg)ds.
    \label{eq:trA_sigma_rho_integral_form_witherrors}
\end{align}
Using integration by parts and the expression for $\frac{d}{dt}\tr{\mathcal{A}}{\rho_{\omega,\delta}(t)}$ from Eq.~\ref{eq:dynamics_trA_rho_witherrors}, we obtain
\begin{align}
    & -i\omega \int_0^t e^{-2(t-s)} L_\alpha \tr{\mathcal{A}}{\rho_{\omega,\delta}(s)} ds = -i\frac{\omega}{2} L_\alpha \tr{\mathcal{A}}{\rho_{\omega,\delta}(t)} + i \frac{\omega}{2}e^{-2t} \rho(0) + i \frac{\omega}{2} \int_0^t e^{-2(t-s)}L_\alpha\frac{d}{ds}\tr{\mathcal{A}}{\rho_{\omega,\delta}(s)} ds,
    \nonumber \\
    & \quad = -i\frac{\omega}{2} L_\alpha \tr{\mathcal{A}}{\rho_{\omega,\delta}(t)} + i \frac{\omega}{2}e^{-2t} \tr{\mathcal{A}}{\rho(0)} - \frac{\omega^3}{2} \int_0^t e^{-2(t-s)} L_\alpha [H_\textnormal{sys}, \tr{\mathcal{A}}{\rho_{\omega,\delta}(s)}] ds + \frac{\omega^2}{2} \sum_{\alpha'} \int_0^t e^{-2(t-s)} F_{\alpha,\alpha'}(s) ds
    \nonumber \\
    & \quad \quad + i \frac{\omega \delta}{2} \sum_{\beta} \int_0^t e^{-2(t-s)} L_\alpha \tr{\mathcal{A}}{\mathcal{N}_\beta(\rho_{\omega,\delta}(s))} ds.
    \label{eq:trA_rho_byparts_witherrors}
\end{align}
Inserting Eq.~\ref{eq:trA_rho_byparts_witherrors} into Eq.~\ref{eq:trA_sigma_rho_integral_form_witherrors}, we obtain
\begin{align}
    \label{eq:Ldag_tr_sigma_rho_commutator_witherrors}
     -i\omega[L_\alpha^\dagger, \tr{\mathcal{A}}{\sigma_\alpha \rho_{\omega,\delta}(t)}] & = -\frac{\omega^2}{2}[L_\alpha^\dagger, L_\alpha \tr{\mathcal{A}}{\rho_{\omega,\delta}(t)}] +\frac{\omega^2}{2}e^{-2t}[L_\alpha^\dagger, L_\alpha \rho(0)]
     \nonumber \\
     & \quad + \int_0^t e^{2(t - s)}\bigg(-\omega^3 [L_\alpha^\dagger, [H_\textnormal{sys}, \tr{\mathcal{A}}{\sigma_\alpha\rho_{\omega,\delta}(t)} ] -i\frac{\omega^4}{2}[L_\alpha^\dagger, L_\alpha[H_\textnormal{sys}, \tr{\mathcal{A}}{\rho_{\omega,\delta}(s)}]]\bigg) ds
     \nonumber \\
     & \quad + \int_0^t e^{-2(t-s)} \left(- i \sum_{\alpha'}\bigg(\omega^2[L_\alpha^\dagger, E_{\alpha, \alpha'}(s)] + \frac{\omega^3}{2}[L_\alpha^\dagger, F_{\alpha, \alpha'}(s)]\right) ds
     \nonumber \\
     & \quad + \sum_\beta \int_0^t e^{-2(t-s)} \left( - i \omega \delta [L_\alpha^\dagger, \tr{\mathcal{A}}{\sigma_\alpha \mathcal{N}_\beta(\rho_{\omega,\delta}(s))}] +\frac{\omega^2\delta}{2} L_{\alpha} \tr{\mathcal{A}}{\mathcal{N}_\beta(\rho_{\omega,\delta}(s))} \right) ds,
\end{align}
Using Eq.~\ref{eq:Ldag_tr_sigma_rho_commutator_witherrors} and Eq.~\ref{eq:simplified_remainder_witherrors}, we note that $-[L_\alpha^\dagger,L_\alpha\tr{\mathcal{A}}{\rho_{\omega,\delta}(t)}]/2 + \text{h.c} = \mathcal{D}_{L_\alpha}$ and so arrive at
\begin{align}
     \mathcal{R}_{\omega,\delta}(t) & = \sum_{\alpha} \left(\omega^2 e^{-2t} q_{\alpha} +  \omega^4 \int_0^t e^{-2(t-s)} \left( {\mathcal{Q}^{(1)}_{\alpha,H_\textnormal{sys}}(s)} + \mathcal{Q}^{(2)}_{\alpha,H_\textnormal{sys}}(s) \right) ds \right) + \omega^4\sum_{\alpha, \alpha'} \int_0^t e^{-2(t - s)}\big({\mathcal{Q}^{(3)}_{\alpha, \alpha'}(s)} +  \mathcal{Q}^{(4)}_{\alpha, \alpha'}(s)\big) ds
     \nonumber \\
     & \quad +\omega^2 \delta \sum_\beta \mathcal{K}_\beta^{(0)}(t) + \delta \sum_{\alpha,\beta} \int_0^t e^{-2(t-s)} \left( \omega \mathcal{K}_{\alpha,\beta}^{(1)}(s) + \omega^2 \mathcal{K}_{\alpha,\beta}^{(2)}(s) \right),
\end{align}
where $q_\alpha$, $\mathcal{Q}_{\alpha,H_{sys}}^{(1)}(t)$, $\mathcal{Q}_{\alpha,H_{sys}}^{(2)}(t)$, $\mathcal{Q}_{\alpha,\alpha'}^{(3)}(t)$, and $\mathcal{Q}_{\alpha,\alpha'}^{(4)}(t)$ are defined in Lemma~\ref{lemma:remainder} (with the substitution $\rho_{\omega}(t)\to\rho_{\omega,\delta}(t)$) and $\mathcal{K}^{(0)}_\beta(t)$, $\mathcal{K}^{(1)}_{\alpha,\beta}(t)$, $\mathcal{K}^{(2)}_{\alpha,\beta}(t)$ are defined in the statement of this lemma.
\end{proof}
Next we provide bounds on the excitation of the ancillae in the presence of noise. The following lemma is the counterpart of Lemma~\ref{lemma:tr_sigma_bounds} in the noiseless case.
\begin{replemma}{lemma:tr_sigma_bounds_noisy}
Suppose $\rho_{\omega, \delta}(t)$ is the joint state of the system and the ancilla qubits with the ancilla qubits initially being in the state $\ket{0}$, then for all $\alpha, \alpha'$,
\begin{align*}
    &\norm{\sigma_\alpha \rho_{\omega, \delta}(t)}_1 \leq \frac{\omega}{2} + \mathcal{Z}'\delta  \text{ and }\norm{\tr{\mathcal{A}}{\sigma_\alpha \sigma_{\alpha'} \rho_{\omega, \delta}(t)}}_1, \norm{\tr{\mathcal{A}}{\sigma_\alpha^\dagger \sigma_{\alpha'} \rho_{\omega, \delta}(t)}}_1 \leq \frac{\omega^2}{4}  + \frac{\omega\mathcal{Z}'\delta}{2} + \mathcal{Z}'\delta.
\end{align*}
\end{replemma}
\begin{proof}
    The proof follows the same steps as the proof of Lemma~\ref{lemma:tr_sigma_bounds}. We will denote by $\vecket{0_{\mathcal{A}}}$ the vectorization of $\ket{0}\!\bra{0}$ on all the ancilla qubits. It will be convenient to note the following for $\sigma_{\alpha, l}(t) = \mathcal{E}_{\omega, \delta}^{-1}(t, 0)\sigma_{\alpha, l} \mathcal{E}_{\omega, \delta}(t, 0)$. We can now obtain
    \begin{align*}
    \frac{d}{dt}\sigma_{\alpha, l}(t) &= \mathcal{E}_{\omega, \delta}^{-1}(t,0) [\sigma_{\alpha, l}, \mathcal{L}_{\omega, \delta}]\mathcal{E}_{\omega, \delta}(t, 0), \\
    &= -2\sigma_{\alpha, l}(t) -i\omega \mathcal{E}_{\omega, \delta}^{-1}(t, 0)[\sigma_{\alpha, l}, \sigma_{\alpha, l}^\dagger]L_{\alpha, l}\mathcal{E}_{\omega, \delta}(t, 0) + \delta \mathcal{E}_{\omega, \delta}^{-1}(t, 0) [\sigma_{\alpha, l}, \mathcal{N}]\mathcal{E}_{\omega, \delta}(t, 0).
    \end{align*}
    This can be integrated to obtain
    \begin{align}
        \label{eq:vectorized_input_output_commutator_sigma}
        \sigma_{\alpha, l}\mathcal{E}_{\omega, \delta}(t, 0) = \mathcal{E}_{\omega, \delta}(t, 0)\sigma_{\alpha, l}e^{-2t} + \int_0^t e^{-2(t - s)}\mathcal{E}_{\omega, \delta}(t,s)\big(-i\omega[\sigma_{\alpha, l}, \sigma_{\alpha, l}^\dagger]L_{\alpha, l} + \delta [\sigma_{\alpha, l}, \mathcal{N}]\big)\mathcal{E}_{\omega, \delta}(s, 0) ds.
    \end{align}
    % Similarly, we can also show that
    % \begin{align}
    %     \sigma_{\alpha, r}^\dagger \mathcal{E}_{\omega, \delta}(t, s) = \mathcal{E}_{\omega, \delta}(t, s)\sigma_{\alpha, r}^\dagger e^{-2(t - s)} + \int_s^t e^{-2(t - s')}\mathcal{E}_{\omega, \delta}(t, s')\big(-i\omega[\sigma_{\alpha, r}^\dagger, \sigma_{\alpha, r}]L_{\alpha, r}^\dagger + \delta [\sigma_{\alpha, r}^\dagger, \mathcal{N}]\big) \mathcal{E}_{\omega, \delta}(s', s) ds'.
    % \end{align}
    % \end{subequations}
Consider now upper bounding $\norm{\tr{\mathcal{A}}{\sigma_\alpha \rho_{\omega, \delta}(t)}}_1$. We can first upper bound
\[
\norm{\tr{\mathcal{A}}{\sigma_\alpha \rho_{\omega, \delta}(t)}}_1 \leq \norm{\vecbra{\text{Tr}_\mathcal{A}}\sigma_{\alpha, l}\mathcal{E}_{\omega, \delta}(t, 0)\vecket{0_{\mathcal{A}}}}_\diamond, 
\]
where $\vecbra{\text{Tr}_\mathcal{A}}\sigma_{\alpha, l}\mathcal{E}_{\omega, \delta}(t, 0)\vecket{0_{\mathcal{A}}}$ is interpreted as a superoperator on the system qudits. Using Eq.~\ref{eq:vectorized_input_output_commutator_sigma} and the fact that $\sigma_{\alpha, l}\vecket{0_\mathcal{A}} = 0$, we obtain that,
\begin{align*}
    &\bignorm{\vecbra{\text{Tr}_\mathcal{A}}\sigma_{\alpha, l}\mathcal{E}_{\omega, \delta}(t, 0)\vecket{0_{\mathcal{A}}}}_\diamond \nonumber\\
    &\leq \int_0^t e^{-2(t - s)}\bigg(\omega \norm{\vecbra{\text{Tr}_\mathcal{A}} \mathcal{E}_{\omega, \delta}(t, s)[\sigma_{\alpha, l}, \sigma_{\alpha, l}^\dagger]L_{\alpha, l}\mathcal{E}_{\omega, \delta}(s, 0) \vecket{0_\mathcal{A}}}_\diamond + \delta \bignorm{\vecbra{\text{Tr}_\mathcal{A}} \mathcal{E}_{\omega, \delta}(t, s)[\sigma_{\alpha, l},\mathcal{N} ]\mathcal{E}_{\omega, \delta}(s, 0)]\vecket{0_\mathcal{A}}}_\diamond \bigg)ds,\nonumber\\
    &\leq \int_0^t e^{-2(t - s)}\bigg(\omega \norm{[\sigma_{\alpha, l}, \sigma_{\alpha, l}^\dagger]}_\diamond \norm{L_{\alpha, l}}_\diamond + \delta \norm{[\sigma_{\alpha, l}, \mathcal{N}]}_\diamond\bigg) ds,\nonumber\\
    &\numleq{1} \int_0^t e^{-2(t - s)} \big(\omega + 2  \mathcal{Z}'\delta\big)ds = \frac{\omega}{2} +  \mathcal{Z}'\delta,
\end{align*}
where in (1) we have used the fact that at most only $\mathcal{Z}'$ terms $\mathcal{N}_\beta$ do not commute with $\sigma_{\alpha, l}$ i.e. $\abs{\{\beta : [\mathcal{N}_\beta, \sigma_{\alpha, l}] \neq 0\}} \leq  \mathcal{Z}'$.

We follow a similar method to derive a bound on $\tr{\mathcal{A}}{\sigma_\alpha^\dagger \sigma_{\alpha'} \rho_{\omega,\delta}(t)}$ and $\tr{\mathcal{A}}{\sigma_{\alpha}^\dagger \sigma_{\alpha'}\rho_{\omega,\delta}(t)}$. For $u\in\{-,+\}$, we write in vectorized notation,
\begin{equation}
    \label{eq:trA_sigmadag_sigma_rho_vectorized_witherrors}
    \tr{\mathcal{A}}{\sigma_\alpha^{(u)} \sigma_{\alpha'} \rho_{\omega,\delta}(t)} = \vecbra{\textnormal{Tr}_{\mathcal{A}}} \sigma_{\alpha,u}^{(u)} \sigma_{\alpha',l}\vecket{\rho_{\omega,\delta}(t)} = \vecbra{\textnormal{Tr}_{\mathcal{A}}} \mathcal{E}_{\omega,\delta}(t,0) \sigma_{\alpha,u}^{(u)}(t) \sigma_{\alpha',l}(t) \vecket{\rho(0)},
\end{equation}
where we interpret the subcripts as $-=l, +=r$. We have that
\begin{align}
    &\frac{d}{dt} (\sigma_{\alpha,u}^{(u)}(t)\sigma_{\alpha,l}(t)) 
     = [\sigma_{\alpha,u}^{(u)}(t)\sigma_{\alpha,l}(t), \mathcal{L}_{\omega,\delta}(t)],
    \nonumber \\
    & \quad = -4\sigma_{\alpha,u}^{(u)}(t)\sigma_{\alpha,l}(t) -i\omega L_{\alpha',l}(t)\sigma_{\alpha',l}^z\sigma_{\alpha,u}^{(u)}(t) + u i\omega L_{\alpha,u}^{(u)}(t)\sigma_{\alpha,u}^z(t)\sigma_{\alpha',l}(t) + \delta [\sigma_{\alpha,u}^{(u)}(t)\sigma_{\alpha',l}(t),\mathcal{N}(t)],
\end{align}
which can be integrated to obtain
\begin{align}
    & \sigma_{\alpha,u}^{(u)}(t)\sigma_{\alpha,l}(t) = e^{-4t}\sigma_{\alpha,u}^{(u)}\sigma_{\alpha',l} \nonumber\\
    & \quad + \int_0^t e^{-4(t-s)} \left( -i\omega L_{\alpha',l}(s)\sigma_{\alpha',l}^z(s)\sigma_{\alpha,u}^{(u)}(s) + u i\omega L_{\alpha,u}^{(u)}(s)\sigma_{\alpha,u}^z(s)\sigma_{\alpha',l}(s) + \delta [\sigma_{\alpha,u}^{(u)}(s)\sigma_{\alpha',l}(s),\mathcal{N}(s)] \right) ds.
    \label{eq:sigmadag_sigma_integral_form_witherrors}
\end{align}
Inserting Eq.~\ref{eq:sigmadag_sigma_integral_form_witherrors} into Eq.~\ref{eq:trA_sigmadag_sigma_rho_vectorized_witherrors} and applying the fact that $\sigma_{\alpha,l}\vecket{\rho(0)}=0$, we obtain
\begin{align}
    \norm{\tr{\mathcal{A}}{\sigma_{\alpha}^{(u)}\sigma_{\alpha'}\rho_{\omega,\delta}(t)}}_1 & \leq \int_0^t e^{-4(t-s)} \omega\norm{\vecbra{\textnormal{Tr}_{\mathcal{A}}}\mathcal{E}_{\omega,\delta}(t,0) L_{\alpha',l}(s) \sigma_{\alpha',l}^z(s) \sigma_{\alpha,u}^{(u)}(s)\vecket{\rho(0)}}_1 ds
    \nonumber \\
    & \quad + \int_0^t e^{-4(t-s)} \omega\norm{\vecbra{\textnormal{Tr}_{\mathcal{A}}}\mathcal{E}_{\omega,\delta}(t,0) L_{\alpha,u}^{(u)}(s) \sigma_{\alpha,u}^z(s) \sigma_{\alpha',l}(s)\vecket{\rho(0)}}_1 ds
    \nonumber \\
    & \quad + \int_0^t e^{-4(t-s)} \delta \norm{\vecbra{\textnormal{Tr}_{\mathcal{A}}}\mathcal{E}_{\omega,\delta}(t,0) [\sigma_{\alpha,u}^{(u)}(s)\sigma_{\alpha',l}(s),\mathcal{N}(s)] \vecket{\rho(0)}}_1 ds.
    \nonumber \\
    & \leq \int_0^t e^{-4(t-s)} \left( \omega \norm{\sigma_{\alpha,u}^{(u)}\vecket{\rho_{\omega,\delta}(s)}}_1 + \omega \norm{\sigma_{\alpha,l}\vecket{\rho_{\omega,\delta}(s)}}_1 + \delta \norm{[\sigma_{\alpha,u}^{(u)}\sigma_{\alpha',l},\mathcal{N}]}_\diamond \norm{\vecket{\rho(0)}}_1 \right) ds
    \nonumber \\
    & \numleq{1} \frac{\omega^2}{4} + \frac{\omega\mathcal{Z'}\delta}{2} + \mathcal{Z}'\delta.
\end{align}
In (1) we have used the bound $\norm{\sigma_\alpha \rho(t)}_1<\omega/2+\mathcal{Z}'\delta$ derived in the first part of this lemma to insert the bounds $\norm{\sigma_{\alpha,r}^\dagger(s)\vecket{\rho(0)}}_1, \norm{\sigma_{\alpha,l}(s)\vecket{\rho(0)}}_1 \leq \omega/2+\mathcal{Z}'\delta$. We have also used the fact that at most $2\mathcal{Z}'$ terms $\mathcal{N}_{\beta}$ do not commute with $\sigma_{\alpha}^{(u)}\sigma_{\alpha'}$. The lemma statement is confirmed choosing $u\in\{-,+\}$.
\end{proof}

\emph{Dynamics}. We now establish the counterpart of Lemma~\ref{lemma:bounds_remainder_lr} in the presence of errors. This requires additionally providing bounds on $\sum_{\beta}\mathcal{K}_\beta^{(0)}(t)$, $\sum_{\alpha, \beta} \mathcal{K}_{\alpha, \beta}^{(1)}(t)$ and $\sum_{\alpha, \beta} \mathcal{K}_{\alpha, \beta}^{(2)}(t)$. We begin by providing a counterpart to Lemma~\ref{lemma:bounds_remainder_lr_intermediate}.
\begin{lemma}\label{lemma:bounds_remainder_lr_noisy_intermediate}
    Suppose $O$ is a local observable with $\norm{O}\leq 1$ supported on $S_O$, and for $\tau>0$, let $O(\tau) = \exp(\mathcal{L}^\textnormal{\dagger} \tau)(O)$ where $\mathcal{L}$ is a geometrically local Lindbladian of the form in Eq.~\ref{eq:geom_local_lind}. Then for $q_{\alpha}, \mathcal{Q}_{\alpha,\alpha'}^{(j)}(s)$, with $\mathcal{Q}_{\alpha, \alpha'}^{(1)} =\mathcal{Q}_{\alpha, h_{\alpha'}}^{(1)}$ and $\mathcal{Q}_{\alpha, \alpha'}^{(2)} =\mathcal{Q}_{\alpha, h_{\alpha'}}^{(2)}$, as defined in Lemma~\ref{lemma:remainder_noisy},
        \[
        \bigabs{\tr{}{O(\tau) q_{\alpha,\delta}}} \leq 2  \min\bigg(\eta_{S_O} \exp\bigg(4e\mathcal{Z}\tau - \frac{d(S_\alpha, S_O)}{a}\bigg), 1\bigg) \text{ for any }\alpha,
        \]
        and,
        \[
        \abs{\tr{}{O(\tau)\mathcal{Q}^{(j)}_{\alpha, \alpha'}(s)}} \leq \left(4+\frac{8\mathcal{Z}'\delta}{\omega}+\frac{16\mathcal{Z}'\delta}{\omega^2}\right) \min\bigg(e \eta_{S_O}\exp\bigg(4e\mathcal{Z}\tau - \frac{1}{2a}\big( d(S_\alpha, S_O) + d(S_{\alpha'}, S_O)\big)\bigg), 1 \bigg) \text{ for any }\alpha, \alpha', j.
        \]
        Furthermore, 
        \[
        \bigabs{\textnormal{Tr}\big(O(\tau)\mathcal{K}_\beta^{(0)}(s)\big)} \leq \min\bigg(\eta_{S_O}\exp\bigg(4e\mathcal{Z}\tau - \frac{d(\tilde{S}'_\beta, S_O)}{a}\bigg), 1\bigg) \text{ for any }\beta,
        \]
        and
        \[
        \abs{\tr{}{O(\tau)\mathcal{K}^{(j)}_{\alpha,\beta}(s)}} \leq 2 \min\bigg(e\eta_{S_O}\exp\bigg(4e\mathcal{Z}\tau - \frac{1}{2a}\big(d(\tilde{S}'_\beta, S_O) + d({S}_\alpha, S_O) \big)\bigg), 1\bigg). \text{ for any }\alpha,\beta, j
        \]
    where $\tilde{S}'_\beta$ is the set of system qudits contained in $S'_\beta$.
\end{lemma}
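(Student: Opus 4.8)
The plan is to prove each of the four bounds by combining the explicit operator expressions from Lemma~\ref{lemma:remainder_noisy}, the excitation-number bounds from the noisy version Lemma~\ref{lemma:tr_sigma_bounds_noisy}, and the Lieb--Robinson bounds from Lemmas~\ref{lemma:lieb_robinson} and \ref{lemma:2superopLR}. The structure mirrors the noiseless proof of Lemma~\ref{lemma:bounds_remainder_lr_intermediate} almost verbatim; the only genuinely new content is the treatment of the two noise terms $\mathcal{K}^{(0)}_\beta$ and $\mathcal{K}^{(j)}_{\alpha,\beta}$, and the bookkeeping of the extra $\delta$-dependent additive factors coming from Lemma~\ref{lemma:tr_sigma_bounds_noisy}.

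First I would dispatch the $q_{\alpha,\delta}$ bound: since $q_\alpha = -\mathcal{D}_{L_\alpha}(\rho(0))$ is independent of $\delta$ (it only involves the initial state and the target jump operator), the argument is identical to the noiseless case --- write $\tr{}{O(\tau)q_\alpha} = \tr{}{\mathcal{D}_{L_\alpha}^\dagger(O(\tau))\rho(0)}$, bound by $\norm{\mathcal{D}_{L_\alpha}^\dagger(O(\tau))}$, and apply Lemma~\ref{lemma:lieb_robinson} with $\norm{\mathcal{D}_{L_\alpha}^\dagger}_{cb,\infty\to\infty}\leq 2$. Next, for the $\mathcal{Q}^{(j)}_{\alpha,\alpha'}$ terms, I would repeat the four computations (for $j=1,2,3,4$) from the proof of Lemma~\ref{lemma:bounds_remainder_lr_intermediate}, each of which reduces to a commutator-of-superoperators expression bounded via Lemma~\ref{lemma:2superopLR}, multiplied by a factor of the form $\norm{\tr{\mathcal{A}}{\sigma_\alpha\rho_{\omega,\delta}}}_1$ or $\norm{\tr{\mathcal{A}}{\sigma_\alpha^{(\bar u)}\sigma_{\alpha'}^{(\bar u')}\rho_{\omega,\delta}}}_1$. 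The only change is that, invoking Lemma~\ref{lemma:tr_sigma_bounds_noisy} in place of Lemma~\ref{lemma:tr_sigma_bounds}, these norms are now bounded by $\omega/2+\mathcal{Z}'\delta$ and $\omega^2/4+\omega\mathcal{Z}'\delta/2+\mathcal{Z}'\delta$ respectively. After dividing by the explicit powers of $\omega$ appearing in front of each $\mathcal{Q}^{(j)}$ (namely $1/\omega$ for $j=1,3$ and $1/\omega^2$ for $j=4$, with the $\alpha=\alpha'$ case of $j=4$ carrying $n_\alpha$ handled as in the noiseless proof), the extra $\delta$ terms collect into the prefactor $4+8\mathcal{Z}'\delta/\omega+16\mathcal{Z}'\delta/\omega^2$; I would verify this prefactor is a uniform upper bound across all $j$ by taking the worst case among the four.

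For the genuinely new noise terms, I would follow the sketch already given in the main text after the statement of Lemma~\ref{lemma:bounds_remainder_noisy_lr}. For $\mathcal{K}^{(0)}_\beta(s) = \tr{\mathcal{A}}{\mathcal{N}_\beta(\rho_{\omega,\delta}(s))}$, since $O(\tau)$ acts only on the system, write $\tr{}{O(\tau)\mathcal{K}^{(0)}_\beta(s)} = \tr{}{\mathcal{N}^\dagger_\beta(O(\tau)\otimes I_\mathcal{A})\,\rho_{\omega,\delta}(s)}$, bound by $\norm{\tilde{\mathcal{N}}^\dagger_\beta(O(\tau))}$ where $\tilde{\mathcal{N}}^\dagger_\beta(X):=\mathcal{N}_\beta(X\otimes I_\mathcal{A})$ is a system-only superoperator satisfying $\tilde{\mathcal{N}}^\dagger_\beta(I)=0$ and supported on $\tilde S'_\beta$, then apply Lemma~\ref{lemma:lieb_robinson} with $\norm{\tilde{\mathcal{N}}^\dagger_\beta}_{cb,\infty\to\infty}\leq\norm{\mathcal{N}^\dagger_\beta}_{cb,\infty\to\infty}\leq 1$ (using the diamond-norm normalization $\norm{\mathcal{N}_\beta}_\diamond\leq 1$). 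For $\mathcal{K}^{(j)}_{\alpha,\beta}$ with $j\in\{1,2\}$, each is a double-commutator-type expression $[L_\alpha^\dagger,\,\cdot\,]$ (or $[L_\alpha^\dagger,L_\alpha\,\cdot\,]$) applied to $\tr{\mathcal{A}}{\sigma_\alpha\mathcal{N}_\beta(\rho_{\omega,\delta})}$ or $\tr{\mathcal{A}}{\mathcal{N}_\beta(\rho_{\omega,\delta})}$; dualizing onto $O(\tau)$ gives a product of two system-supported superoperators, one supported on $S_\alpha$ and one on $\tilde S'_\beta$, each annihilating the identity, so Lemma~\ref{lemma:2superopLR} yields the stated two-point exponential with prefactor $2$ (using $\norm{[L_\alpha^{(\pm)},\cdot]}_{cb,\infty\to\infty}\leq 2$ and $\norm{\tilde{\mathcal{N}}^\dagger_\beta}_{cb,\infty\to\infty}\leq 1$).

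The main obstacle I anticipate is not any single inequality but the careful tracking of the superoperator supports and commutator norms in the $\mathcal{K}^{(1)}_{\alpha,\beta}$ term, where $\sigma_\alpha$ and $\mathcal{N}_\beta$ both act on ancillae: one must confirm that after tracing out the ancillae the resulting object genuinely factors as two \emph{system-supported}, identity-annihilating superoperators so that Lemma~\ref{lemma:2superopLR} applies with the correct distances $d(S_\alpha,S_O)$ and $d(\tilde S'_\beta,S_O)$. I would handle this by writing $\tr{\mathcal{A}}{\sigma_\alpha\mathcal{N}_\beta(\rho_{\omega,\delta})}$ in vectorized notation and checking that the ancilla-supported factors $\sigma_\alpha$ and the ancilla part of $\mathcal{N}_\beta$ are absorbed by the partial trace into a bounded system-only map, exactly as $\tilde{\mathcal{N}}^\dagger_\beta$ was defined for the $\mathcal{K}^{(0)}_\beta$ case, while the $[L_\alpha^\dagger,\cdot]$ commutator remains the second system-supported factor. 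Once all four bounds are in place, summation over the lattice (carried out in the subsequent proof of Lemma~\ref{lemma:bounds_remainder_noisy_lr} via Lemma~\ref{lemma:lr_summation}) produces the size-independent functions $\nu,\nu'$, but that step is downstream of this lemma and need not be addressed here.
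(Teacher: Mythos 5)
Your proposal follows the paper's proof essentially step for step: the $q_\alpha$ bound is inherited unchanged from the noiseless Lemma~\ref{lemma:bounds_remainder_lr_intermediate}, the $\mathcal{Q}^{(j)}_{\alpha,\alpha'}$ bounds are obtained by rerunning the noiseless computations with Lemma~\ref{lemma:tr_sigma_bounds_noisy} in place of Lemma~\ref{lemma:tr_sigma_bounds} and taking the worst-case prefactor over $j$ (which is the $j=4$, $\alpha\neq\alpha'$ case), and the $\mathcal{K}^{(0)}_\beta$ and $\mathcal{K}^{(j)}_{\alpha,\beta}$ terms are handled exactly as in the paper by dualizing onto $O(\tau)$ via the system-only map $\tilde{\mathcal{N}}_\beta^\dagger$ and applying Lemmas~\ref{lemma:lieb_robinson} and \ref{lemma:2superopLR}. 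The subtle point you flag about $\sigma_\alpha$ in $\mathcal{K}^{(1)}_{\alpha,\beta}$ is resolved in the paper precisely as you propose, by absorbing the ancilla factor into the dualized noise superoperator before invoking the two-point Lieb--Robinson bound.
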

\begin{proof}
    The bound on $\abs{\tr{}{O(\tau)q_{\alpha}}}$ is a restatement of the bound given and proven in Lemma~\ref{lemma:bounds_remainder_lr_intermediate}. The bounds on $\abs{\tr{}{O(\tau)\mathcal{Q}^{(j)}_{\alpha, \alpha'}(s)}}$ are proven in the same manner as the bounds on $\abs{\tr{}{O(\tau)\mathcal{Q}^{(j)}_{\alpha, \alpha'}(s)}}$ in Lemma~\ref{lemma:bounds_remainder_lr_intermediate}.
    
    First we consider $\text{Tr}(O(\tau) \mathcal{Q}_{ \alpha, \alpha'}^{(1)}(s))$. For any $\alpha, \alpha'$ we obtain that
    \begin{align*}
    \abs{\tr{}{O(\tau) \mathcal{Q}^{(1)}_{\alpha, \alpha'}(s)}} &\leq \frac{2}{\omega} \bigabs{\tr{}{O(\tau) [L_\alpha^\dagger, [h_{\alpha'}, \tr{\mathcal{A}}{(\sigma_\alpha \rho_{\omega,\delta}(s)}]]}}, \nonumber\\
    &\leq \frac{2}{\omega} \bignorm{[h_{\alpha'}, [L_\alpha^\dagger, O(\tau)]]} \bignorm{\tr{\mathcal{A}}{(\sigma_\alpha\rho_{\omega,\delta}(s)}}_1, \\
    &\leq \left( 1 + \frac{2\mathcal{Z}'\delta}{\omega}\right) \bignorm{[h_{\alpha'}, [L_\alpha^\dagger, O(\tau)]]},
    \end{align*}
    where, in the last step, we have used Lemma~\ref{lemma:tr_sigma_bounds_noisy}. Next, we can use Lemma~\ref{lemma:2superopLR} together with the fact that $\norm{[h_{\alpha'}, \cdot]}_{cb, \infty \to \infty},\norm{[L_{\alpha}^{(\pm)}, \cdot]}_{cb, \infty \to \infty}  \leq 2$ to obtain
    \begin{align*}
    &\abs{\tr{}{O(\tau)\mathcal{Q}^{(1)}_{\alpha, h_{\alpha'}}(s)}}  \leq \left(4 + \frac{8\mathcal{Z}'\delta}{\omega} \right) \norm{O}\min\bigg(e \eta_{S_O}\exp\bigg(4e\mathcal{Z}\tau - \frac{1}{2a}\big( d(S_\alpha, S_O) + d(S_{\alpha'}, S_O)\big)\bigg), 1 \bigg).
    \end{align*}
    Next, for any $\alpha, \alpha'$, we obtain that,
    \begin{align*}
        \bigabs{\tr{}{O(\tau)\mathcal{Q}^{(2)}_{ \alpha, {\alpha'}}(s)}}
        & \leq \bigabs{\tr{}{O(\tau) [L_\alpha^\dagger, L_\alpha [h_{\alpha'}, \rho_{\omega,\delta}(s)]]}},
        \nonumber \\
        & \leq \norm{[h_{\alpha'},[L_\alpha^\dagger, O(\tau)]L_\alpha]} \norm{\tr{A}{\rho_{\omega,\delta}(s)}}_1,
        \nonumber \\
        & \leq 4 \norm{O} \min\left(e\eta_{S_O} \exp\left(4e\mathcal{Z}\tau - \frac{1}{2a}( d(S_\alpha,S_O) + d(S_{\alpha'},S_O) )\right), 1 \right).
    \end{align*}
    Next, we bound $\abs{\tr{\mathcal{A}}{\mathcal{Q}^{(3)}_{ \alpha, \alpha'}}}$. For any $\alpha, \alpha'$, we obtain that
    \begin{align*}
    \bigabs{\tr{}{O(\tau)\mathcal{Q}^{(3)}_{ \alpha, \alpha'}(s)}} &\leq \frac{1}{\omega} \sum_{u \in \{+, -\}} \abs{\tr{}{O(\tau)\mathcal{D}_{L_\alpha}\left([L_{\alpha'}^{(u)}, \tr{\mathcal{A}}{\sigma_{\alpha'}^{(\bar u)}\rho_{\delta,\omega}(s)}]\right)}}, \nonumber\\
    &\leq \frac{1}{\omega} \sum_{u \in \{+, -\}} \bignorm{[L_{\alpha'}^{(u)}, \mathcal{D}_{L_\alpha}^\dagger(O(\tau))]} \bignorm{\tr{\mathcal{A}}{\sigma_{\alpha'}^{(\bar u)} \rho_{\delta,\omega}(s)}}_1, \nonumber \\
    &\leq \left(\frac{1}{2}+\frac{\mathcal{Z}'\delta}{\omega}\right) \sum_{u \in \{+, -\}}\bignorm{[L_{\alpha'}^{(u)}, \mathcal{D}_{L_\alpha}^\dagger(O(\tau))]},
    \end{align*}
    where, in the last step, we have used Lemma~\ref{lemma:tr_sigma_bounds_noisy}. Next, we use Lemma~\ref{lemma:2superopLR} together with $\norm{[L_{\alpha}^{(u)}, \cdot]}_{cb,\infty\to\infty} \leq 2$ and $\norm{\mathcal{D}_{L_\alpha}^\dagger}_{cb,\infty\to\infty} \leq 2$ to obtain
    \[
    \bigabs{\tr{}{O(\tau)\mathcal{Q}^{(3)}_{\alpha, \alpha'}(s)}} \leq \left(4+\frac{8\mathcal{Z}'\delta}{\omega}\right) \norm{O}\min\bigg(e\eta_{S_O} \exp\bigg(4e\mathcal{Z}\tau - \frac{1}{2a}\big(d(S_\alpha, S_O) + d(S_{\alpha'}, S_O)\big)\bigg), 1\bigg). 
    \]
    Next we bound $\abs{\tr{}{O(\tau)\mathcal{Q}^{(4)}_{ \alpha,\alpha'}(s)}}_1$. For $\alpha \neq \alpha'$
    \begin{align}
    \abs{\tr{}{O(\tau) \mathcal{Q}^{(4)}_{ \alpha, \alpha'}(s)}} &\leq \frac{1}{\omega^2}\sum_{u, u' \in \{-, +\}}\abs{ \tr{}{O(\tau) [L_\alpha^{(u)}, [L_{\alpha'}^{(u')}, \tr{\mathcal{A}}{\sigma_\alpha^{(\bar{u})} \sigma_{\alpha'}^{(\bar{u}')} \rho_{\omega,\delta}(s)}]]}}, \nonumber\\
    &\leq \frac{1}{\omega^2} \sum_{u, u' \in \{-, +\}} \bignorm{[L_{\alpha'}^{({u}')}, [L_\alpha^{(u)}, O(\tau)]]}\bignorm{\tr{\mathcal{A}}{\sigma_\alpha^{(\bar{u})} \sigma_{\alpha'}^{(\bar{u}')} \rho_{\omega,\delta}(s)}}_1, \nonumber\\
    &\leq \left(\frac{1}{4} + \frac{\mathcal{Z}'\delta}{2\omega} +\frac{\mathcal{Z}'\delta}{\omega^2}\right)\sum_{u, u' \in \{-, +\}} \bignorm{[L_{\alpha'}^{({u}')}, [L_\alpha^{(u)}, O(\tau)]]}\nonumber,
    \end{align}
    where in the last step we have used Lemma~\ref{lemma:tr_sigma_bounds_noisy}. Furthermore, from Lemma~\ref{lemma:2superopLR} and the fact that $\norm{[L_\alpha^{(u)}, \cdot]}_{cb,\infty\to\infty} \leq 2$ it follows that
    \begin{align}\label{eq:remainder_lr_bound_alphas_unequal_noisy}
    \bigabs{\tr{}{O(\tau) \mathcal{Q}^{(4)}_{\alpha, \alpha'}(s)}} \leq \left(4 + \frac{8\mathcal{Z}'\delta}{\omega} + \frac{16\mathcal{Z}'\delta}{\omega^2}\right) \norm{O} \min\bigg(e \eta_{S_O}\exp\bigg(4e\mathcal{Z}\tau - \frac{1}{2a}\big( d(S_\alpha, S_O) + d(S_{\alpha'}, S_O)\big)\bigg), 1 \bigg).
    \end{align}
    Similarly, for $\alpha = \alpha'$, 
    \begin{align*}
        \bigabs{\tr{}{O(\tau) \mathcal{Q}^{(4)}_{\alpha, \alpha}(s)}} &\leq \frac{2}{\omega^2} \bigabs{\tr{}{O(\tau) (\mathcal{D}_{L_\alpha} - \mathcal{D}_{L_\alpha^\dagger})(\tr{\mathcal{A}}{n_\alpha \rho_{\omega,\delta}(s)})}}, \nonumber \\
        &\leq \frac{2}{\omega^2} \norm{(\mathcal{D}_{L_\alpha} - \mathcal{D}_{L_\alpha^\dagger})^\dagger \big(O(\tau)\big)} \norm{\tr{\mathcal{A}}{n_\alpha \rho_{\omega,\delta}(s)}}_1, \nonumber\\
        &\leq \left(\frac{1}{2} + \frac{\mathcal{Z}'\delta}{\omega} + \frac{2\mathcal{Z}'\delta}{\omega^2}\right)\norm{(\mathcal{D}_{L_\alpha} - \mathcal{D}_{L_\alpha^\dagger})^\dagger \big(O(\tau)\big)},
    \end{align*}
    where, again, in the last step, we have used Lemma~\ref{lemma:tr_sigma_bounds_noisy}. Next, we can use Lemma~\ref{lemma:lieb_robinson} together with the fact that $\norm{(\mathcal{D}_{L_\alpha} - \mathcal{D}_{L_\alpha^\dagger})^\dagger}_{cb,\infty\to\infty} \leq 4$, we obtain that
    \begin{align}\label{eq:remainder_lr_bound_alphas_equal_noisy}
    \abs{\tr{}{O(\tau)\mathcal{Q}^{(4)}_{\alpha, \alpha}(s)}} &\leq \left(2 + \frac{4\mathcal{Z}'\delta}{\omega} + \frac{8\mathcal{Z}'\delta}{\omega^2}\right) \norm{O} \min\bigg(\eta_{S_O}\exp\bigg(4e\mathcal{Z}\tau - \frac{1}{a}d(S_\alpha, S_O)\bigg), 1\bigg).
    \end{align}
    Eqs.~\ref{eq:remainder_lr_bound_alphas_equal_noisy} and \ref{eq:remainder_lr_bound_alphas_unequal_noisy} together establish establish a bound on $\abs{\tr{}{O(\tau)\mathcal{Q}^{(4)}_{\alpha, \alpha'}(s)}}$ for any $\alpha,\alpha'$ and is consistent with the lemma statement.

    The bound on $\abs{\text{Tr}(O(\tau) \mathcal{K}^{(0)}_\beta(s))}$ is proven in the main text. Next we bound $\abs{\tr{}{O(\tau)\mathcal{K}^{(1)}_{\alpha,\beta}}}$. For any $\alpha, \beta$,
    \begin{align}
        \abs{\tr{}{O(\tau)\mathcal{K}^{(1)}_{\alpha,\beta}(s)}}
        & \leq 2\abs{\tr{}{O(\tau)[L_{\alpha}^\dagger, \sigma_\alpha \mathcal{N}_\beta( \rho_{\omega,\delta}(s))]}}
        \nonumber \\
        & \leq 2\norm{\tilde{\mathcal{N}}_\beta^\dagger([L_\alpha^\dagger,O(\tau)]\sigma_\alpha)} \norm{\rho_{\omega,\delta}(s)}_1
        \nonumber \\
        & \leq 2 \norm{O} \min\bigg(e\eta_{S_O}\exp\bigg(4e\mathcal{Z}\tau - \frac{1}{2a}\big(d(\tilde{S}'_\beta, S_O) + d({S}_\alpha, S_O) \big)\bigg), 1\bigg),
    \end{align}
    where $\tilde{\mathcal{N}}_\beta(X)=\mathcal{N}_\beta(X\otimes I_\mathcal{A})$ and $\tilde{\mathcal{N}}_\beta$ is supported only on the system qudits in $\tilde{S}'_\beta$.
    Finally, we bound $\abs{\tr{}{O(\tau)\mathcal{K}^{(2)}_{\alpha,\beta}}}$. For any $\alpha, \beta$,
    \begin{align}
        \abs{\tr{}{O(\tau)\mathcal{K}^{(2)}_{\alpha,\beta}}}
        & \leq \abs{\tr{}{O(\tau)[L_\alpha^\dagger, L_\alpha \mathcal{N}_\beta(\rho_{\omega,\delta}(s)]}}
        \nonumber \\
        & \leq \norm{\tilde{\mathcal{N}}_\beta^\dagger\left([L^\dagger_\alpha,O(\tau)]L_\alpha\right)} \norm{\rho_{\omega,\delta}(s)}_1
        \nonumber \\
        & \leq 2 \norm{O} \min\bigg(e\eta_{S_O}\exp\bigg(4e\mathcal{Z}\tau - \frac{1}{2a}\big(d(\tilde{S}'_\beta, S_O) + d({S}_\alpha, S_O) \big)\bigg), 1\bigg),
    \end{align}
    producing a bound identical to that for $\abs{\tr{}{O(\tau)\mathcal{K}^{(1)}_{\alpha,\beta}}}$.
\end{proof}
\begin{replemma}{lemma:bounds_remainder_noisy_lr}
Suppose $O$ is a local observable on the system qudits with $\norm{O} \leq 1$ supported on $S_O$, and for $\tau>0$, let $O(\tau) = \exp(\mathcal{L}^\textnormal{\dagger} \tau)(O)$ where $\mathcal{L}$ is a geometrically local target Lindbladian of the form in Eq.~\ref{eq:geom_local_lind}. Then for\ $q_\alpha$ as defined in Lemma~\ref{lemma:remainder}, then there are non-decreasing piecewise continuous function $\nu, \nu'$ such that $\nu(t), \nu'(t) \leq O(t^d)$ as $t \to \infty$ and for $\omega \leq 2$
        \begin{align*}
         \sum_{\alpha}\bigabs{\tr{}{O(\tau) q_\alpha}} \leq  \nu(\tau) \text{ and }\sum_{\alpha, \alpha'}\bigabs{\tr{}{O(\tau)\mathcal{Q}_{\alpha, \alpha'}^{(j)}(s)}} \leq  \bigg(1 + \frac{2\mathcal{Z}' \delta}{\omega} + \frac{4\delta \mathcal{Z}'}{\omega^2}\bigg)\nu^2(\tau).
        \end{align*}
where, for $j \in \{3, 4\}$, $\mathcal{Q}_{\alpha, \alpha'}^{(j)}$ is defined in Lemma~\ref{lemma:remainder} and for $j\in \{1, 2\}$, we define $\mathcal{Q}_{\alpha, \alpha'}^{(j)} = \mathcal{Q}_{\alpha, h_{\alpha'}}^{(j)}$ where $\mathcal{Q}_{\alpha, h}^{(j)}$ is defined in Lemma~\ref{lemma:remainder}. Furthermore,
\begin{align*}
    \sum_{\beta}\bigabs{\textnormal{Tr}(O(\tau)\mathcal{K}_{\beta}^{(0)}(s)} \leq \nu'(\tau) & \text{ and for }j\in \{1, 2\}, \ \sum_{\alpha, \beta}\bigabs{\textnormal{Tr}(O(\tau)\mathcal{K}_{\beta}^{(j)}(s)} \leq (\nu'(\tau))^2,
\end{align*}
where $\mathcal{K}^{(0)}_\beta$ and $\mathcal{K}^{(j)}_{\alpha, \beta}$ for $j \in \{1,2\}$ are defined in Lemma~\ref{lemma:remainder_noisy}.
\end{replemma}
\begin{proof}
    We prove this lemma in the same manner as Lemma~\ref{lemma:bounds_remainder_lr}, using Lemmas~\ref{lemma:lr_summation} and \ref{lemma:bounds_remainder_lr_noisy_intermediate}. Consider first,
    \begin{align}\label{eq:function_bound_single_sum_noisy}
        \sum_{\alpha}\abs{\text{Tr}(O(\tau) q_\alpha)}
        & \leq 2 \sum_{\alpha} \min\bigg(\eta_{S_O} \exp\bigg(4e\mathcal{Z}\tau - \frac{d(S_\alpha, S_O)}{a}\bigg), 1\bigg), \nonumber\\
        & = 2\xi^{(0, 1)}_{a, \eta_{S_O}}(4e\mathcal{Z}\tau) \leq 2 \max(\eta_{S_O}, 1)\nu^{(0)}(a, 4e\mathcal{Z}\tau) \leq 2e\max(\eta_{S_O}, 1) \nu^{(0)}(2a, 4e\mathcal{Z}\tau),
    \end{align}
    where in the last step, we have used the fact that, as per Lemma~\ref{lemma:lr_summation}, $\nu^{(0)}(\lambda, T)$ is a non-decreasing function of $\lambda$ for a fixed $T$.

    Similarly, for $j \in \{1, 2, 3, 4\}$,
    \begin{align}\label{eq:function_bound_double_sum_noisy}
        &\sum_{\alpha_1, \alpha_2}\abs{\text{Tr}(O(\tau) \mathcal{Q}_{\alpha, \alpha'}^{(j)}(s)}, \nonumber\\
        &\qquad \leq \left( 4 + 8\omega\mathcal{Z}'\delta + 16\mathcal{Z}'\delta \right) \sum_{\alpha, \alpha'} \min\bigg(e\eta_{S_O} \exp\bigg(4e\mathcal{Z}\tau - \frac{d(S_\alpha, S_O) + d(S_{\alpha'}, S_O)}{2a}\bigg), 1\bigg),\nonumber\\
        &\qquad = \left( 4 + 8\omega\mathcal{Z}'\delta + 16\mathcal{Z}'\delta \right) \xi_{2a, e\eta_{S_O}}^{(0, 2)}(2a, 4e\mathcal{Z}\tau) \leq \left( 4 + 8\omega\mathcal{Z}'\delta + 16\mathcal{Z}'\delta \right) \big(\max(e\eta_{S_O}, 1)\big)^2 \left(\nu^{(0)}(2a, 4e\mathcal{Z}\tau)\right)^2
        \nonumber \\
        & \qquad \leq \left( 4 + 8\omega\mathcal{Z}'\delta + 16\mathcal{Z}'\delta \right) e^2 \big(\max(\eta_{S_O}, 1)\big)^2 \big(\nu^{(0)}(2a, 4e\mathcal{Z}\tau)\big)^2.
     \end{align}
     From Eqs.~\ref{eq:function_bound_single_sum_noisy} and \ref{eq:function_bound_double_sum_noisy}, it follows that choosing $\nu(\tau) = 2e \max(\eta_{S_O}, 1) \nu^{(0)}(2a, 4e\mathcal{Z}\tau)$ satisfies the lemma statement.

     Next we apply Lemmas~\ref{lemma:bounds_remainder_lr_noisy_intermediate} and \ref{lemma:lr_summation} to obtain
     \begin{align}
         \sum_\beta\bigabs{\tr{}{O(\tau)\mathcal{K}_\beta^{(0)}(s)}} & \leq \min\left(\eta_{S_O}\exp\left(4e\mathcal{Z}\tau - \frac{d(S_O,\tilde{S}'_\beta)}{a}\right),1\right)
         \nonumber \\
         & = \xi^{(0,1)}_{a, \eta_{S_O}}(4e\mathcal{Z}\tau) \leq \max(\eta_{S_O},1)\nu^{(0)}(a,4e\mathcal{Z}\tau)
         \nonumber \\
         & \leq \max(\eta_{S_O},1)\nu^{(0)}(2a,4e\mathcal{Z}\tau),
         \label{eq:single_sum_K_bound}
     \end{align}
     where in the last step we have used the fact that, as per Lemma~\ref{lemma:lr_summation}, $\nu^{(0)}(\lambda,T)$ is a non-decreasing function of $\lambda$ for fixed $T$.

     For $j\in\{1,2\}$,
     \begin{align}
         \sum_{\alpha,\beta} \bigabs{\tr{}{O(\tau)\mathcal{K}_{\beta}^{(j)}(s)}} & \leq 2\sum_{\alpha,\beta}\min\left(\eta_{S_O}\exp\left(4e\mathcal{Z}\tau - \frac{d(\tilde{S}'_\beta,S_O)+d(S_\alpha,S_O)}{2a}\right),1\right)
         \nonumber \\
         &  = \xi_{2a,e\eta_{S_O}}(4e\mathcal{Z}) \leq 2(\max(e\eta_{S_O},1))^2\left(\nu^{(0)}(2a,4e\mathcal{Z}\tau)\right)^2
         \nonumber \\
         & \leq 2 e^2 (\max(\eta_{S_O},1))^2\left(\nu^{(0)}(2a,4e\mathcal{Z}\tau)\right)^2.
         \label{eq:double_sum_K_bound}
     \end{align}
    Considering Eqs.~\ref{eq:single_sum_K_bound} and \ref{eq:double_sum_K_bound}, we may choose $\nu'(\tau)=\sqrt{2}e\max(\eta_{S_O},1)\nu^{(0)}(2a,4e\mathcal{Z}\tau)$ to satisfy the lemma statement.
    It can also be noted from the asymptotics of $\nu^{(0)}(\lambda, T) $ in Lemma~\ref{lemma:lr_summation} that both $\nu(\tau),\nu'(\tau) \leq O(\tau^d)$. Additionally, since $\nu^{(0)}(\lambda, T)$, for a fixed $\lambda$, is a non-decreasing and piecewise continuous function of $T$, $\nu(\tau)$ and $\nu'(\tau)$ are also piecewise continuous non-decreasing functions of $\tau$.
\end{proof}

\emph{Long-time dynamics or fixed points}. Next, we consider the problem of long-time dynamics or fixed points.
\begin{replemma}{lemma:error_term_rapid_mixing_noisy}
    Suppose $O$ is a local observable with $\norm{O} \leq 1$ supported on $S_O$, and for $\tau>0$, let $O(\tau) = \exp(\mathcal{L}^\textnormal{\dagger} \tau)(O)$ where $\mathcal{L}$ is a geometrically local Lindbladian of the form in Eq.~\ref{eq:geom_local_lind}. Furthermore, suppose $O$ is rapidly mixing with respect to $\mathcal{L}$ and satisfies Eq.~\ref{eq:rapid_mixing_observable_v2} with $k(\abs{S_O}, \gamma) \leq O(\exp(\gamma^{-\kappa}))$. Then for\ $q_\alpha$ as defined in Lemma~\ref{lemma:remainder},
        \[
        \sum_{\alpha}\bigabs{\int_0^t \tr{}{O(t - s) q_\alpha}e^{-2s/\omega^2}ds} \leq  \omega^2 \lambda^{(1)}(\gamma),
        \]
        where $\lambda^{(1)}(\gamma) \leq O(\gamma^{-d(\kappa + 1)})$ as $\gamma \to 0$ and for $j \in \{1,2,3,4\}$
        \begin{align*}
        &\sum_{\alpha, \alpha'}\bigabs{\int_0^t \int_0^{s/\omega^2} \tr{}{O(t - s)\mathcal{Q}_{\alpha, \alpha'}^{(j)}(s')} e^{-2(s/\omega^2 - s')}ds' ds} \leq  \bigg(1 + \frac{2\mathcal{Z}'\delta}{\omega} + \frac{4\mathcal{Z}'\delta}{\omega^2}\bigg)\lambda^{(2)}(\gamma),
        \end{align*}
where $\lambda^{(2)}(\gamma) \leq O(\gamma^{-(2d + 1)(\kappa + 1)})$ as $\gamma \to 0$ and for $j \in \{3, 4\}$, $\mathcal{Q}_{\alpha, \alpha'}^{(j)}$ is defined in Lemma~\ref{lemma:remainder} but with $\rho_\omega \to \rho_{\omega, \delta}$ and for $j\in \{1, 2\}$, we define $\mathcal{Q}_{\alpha, \alpha'}^{(j)} = \mathcal{Q}_{\alpha, h_{\alpha'}}^{(j)}$ where $\mathcal{Q}_{\alpha, h}^{(j)}$ is defined in Lemma~\ref{lemma:remainder} but with $\rho_\omega \to \rho_{\omega, \delta}$. Furthermore,
\begin{align*}
\sum_{\beta}\bigabs{\int_0^t \textnormal{Tr}\bigg(O(t - s)\mathcal{K}_\beta^{(0)}\bigg(\frac{s}{\omega^2}\bigg)\bigg)ds} \leq {\lambda'}^{(1)}(\gamma)
\end{align*}
where ${\lambda'}^{(1)}(\gamma) \leq O(\gamma^{-(d + 1)(\kappa + 1)})$ as $\gamma \to 0$ and for $j \in \{1,2\}$,
\begin{align*}
&\sum_{\alpha, \beta}\bigabs{\int_0^t \int_0^{s/\omega^2} \textnormal{Tr}\big(O(t - s) \mathcal{K}_{\alpha,\beta}^{(j)}(s') \big) e^{-2(s/\omega^2 - s')}ds' ds} \leq {\lambda'}^{(2)}(\gamma),
\end{align*}
where ${\lambda'}^{(2)}(\gamma) \leq O(\gamma^{-(2d + 1)(\kappa + 1)})$ as $\gamma \to 0$.

\end{replemma}
\begin{proof}
    We follow the same method as the proof of Lemma~\ref{lemma:error_term_rapid_mixing}. Since $q_\alpha$ is defined identically in the noisy and noiseless models, the first statement of the lemma is simply a restatement of the bound from Lemma~\ref{lemma:error_term_rapid_mixing}.
    Next we note that, for a function $\mathcal{F}_{\alpha,\alpha'}(t)$,
    \begin{align}\label{eq:change_of_variable_restate}
    \sum_{\alpha, \alpha'}\bigabs{\int_0^t \int_0^{s/\omega^2} \tr{}{O(t - s)\mathcal{F}_{\alpha, \alpha'}(s')} e^{-2(s/\omega^2 - s')}ds' ds} = \frac{1}{\omega^2}\sum_{\alpha, \alpha'}\bigabs{\int_0^t \int_0^{s} \text{Tr}\bigg({O(t - s)\mathcal{F}_{\alpha, \alpha'}\bigg(\frac{s'}{\omega^2}\bigg)\bigg)} e^{-2(s - s')/\omega^2}ds' ds}.
    \end{align}
    We now consider the terms involving $Q^{(j)}$ for $j \in \{1, 2, 3, 4\}$ separately --- for $j = 1$, we have from Lemma~\ref{lemma:remainder_noisy} that $\mathcal{Q}_{\alpha, \alpha'}^{(1)}(s'/\omega^2) = \mathcal{K}_{\alpha}\mathcal{J}_{\alpha'}(\sigma_\omega(s')) +\textnormal{h.c.}$ with $\mathcal{K}_\alpha(X) = [L_\alpha^\dagger, X]$, $\mathcal{J}_{\alpha'}(X) = [h_{\alpha'}, X]$ and $\sigma_\omega(s') = -\omega^{-1}\tr{\mathcal{A}}{\sigma_\alpha \rho_{\omega,\delta}(s'/\omega^2)}$. We note that $\norm{\mathcal{K}_\alpha}_{\diamond}, \norm{\mathcal{J}_{\alpha'}}_\diamond \leq 2$ and, using Lemma~\ref{lemma:tr_sigma_bounds_noisy}, $\norm{\sigma_\omega(s')}_1 \leq 1/2+\mathcal{Z}'\delta/\omega$. Thus, applying Eq.~\ref{eq:change_of_variable_restate} and Lemma~\ref{lemma:local_rapid_mixing_superop_bounds}, we obtain that
    \begin{align}
    \sum_{\alpha, \alpha'}\bigabs{\int_0^t \int_0^{s/\omega^2} \tr{}{O(t - s)\mathcal{Q}_{\alpha, \alpha'}^{(1)}(s')} e^{-2(s/\omega^2 - s')}ds' ds} \leq \left(1+\frac{2\mathcal{Z'}\delta}{\omega}\right)\zeta^{(3)}(\gamma).
    \label{eq:Q1_lrm_bound_noisy}
    \end{align}
    Similarly, for $j = 2$, we have from Lemma~\ref{lemma:remainder_noisy} that $\mathcal{Q}_{\alpha, \alpha'}^{(2)}(s'/\omega^2) = \mathcal{K}_\alpha \mathcal{J}_{\alpha'}(\sigma_\omega(s')) + \textnormal{h.c}$ where $\mathcal{K}_\alpha(X) = [L_\alpha^\dagger, L_\alpha X]$, $\mathcal{J}_{\alpha'}(X) = [h_{\alpha'}, X]$ and $\sigma_\omega(s') = -i\text{Tr}_\mathcal{A}(\rho_{\omega,\delta}(s'/\omega^2))/2$. We note that $\norm{\mathcal{K}_\alpha}_{\diamond}, \norm{\mathcal{J}_{\alpha'}}_\diamond \leq 2$ and that $\norm{\sigma_\omega(s')}_1 \leq 1/2$. Thus, from Eq.~\ref{eq:change_of_variable_restate} and Lemma~\ref{lemma:local_rapid_mixing_superop_bounds}, we obtain that
    \begin{equation}
    \sum_{\alpha, \alpha'}\bigabs{\int_0^t \int_0^{s/\omega^2} \tr{}{O(t - s)\mathcal{Q}_{\alpha, \alpha'}^{(2)}(s')} e^{-2(s/\omega^2 - s')}ds' ds} \leq \zeta^{(3)}(\gamma).
    \label{eq:Q2_lrm_bound_noisy}
    \end{equation}
    For $j = 3$, we have from Lemma~\ref{lemma:remainder_noisy} that $\mathcal{Q}_{\alpha, \alpha'}^{(3)}(s'/\omega^2) = \sum_{u \in \{-, +\}}\mathcal{K}_\alpha^{(u)} \mathcal{J}_{\alpha'}^{(u)}(\sigma_\omega^{(u)}(s'))$ where $\mathcal{K}^{(u)}_\alpha = \mathcal{D}_{L_\alpha}$, $\mathcal{J}^{(u)}_{\alpha'}(X) = [L_{\alpha'}^{(u)}, X]$ and $\sigma_\omega^{(u)}(s') = -i\omega^{-1}\text{Tr}_\mathcal{A}(\sigma_{\alpha'}^{(\bar{u})}\rho_{\omega,\delta}(s'/\omega^2))$. We note that $\norm{\mathcal{K}^{(u)}_\alpha}_{\diamond}, \norm{\mathcal{J}^{(u)}_{\alpha'}}_\diamond \leq 2$ and, using Lemma~\ref{lemma:tr_sigma_bounds_noisy}, $\norm{\sigma^{(\bar{u})}_\omega(s')}_1 \leq 1/2+\mathcal{Z'}\delta/\omega$. Thus, from Lemma~\ref{lemma:local_rapid_mixing_superop_bounds} and Eq.~\ref{eq:change_of_variable_restate}, we obtain that
    \begin{equation}
    \sum_{\alpha, \alpha'}\bigabs{\int_0^t \int_0^{s/\omega^2} \tr{}{O(t - s)\mathcal{Q}_{\alpha, \alpha'}^{(3)}(s')} e^{-2(s/\omega^2 - s')}ds' ds} \leq \left(1+2\frac{\mathcal{Z}'\delta}{\omega}\right)\zeta^{(3)}(\gamma).
    \label{eq:Q3_lrm_bound_noisy}
    \end{equation}
    For $j = 4$, we have from Lemma~\ref{lemma:remainder_noisy} that $\mathcal{Q}_{\alpha, \alpha'}^{(4)}(s'/\omega^2) = \sum_{u, u' \in \{-, +\}}\mathcal{K}_\alpha \mathcal{J}_{\alpha'}^{(u,u')}(\sigma_{\omega}^{(u,u')}(s')) + \textnormal{h.c.}$ according to the following definitions: We define $\mathcal{K}_\alpha(X)=[L^\dagger_\alpha,X]$.  $J_{\alpha'}^{(u,u')}(X)=L^{(u)}X$ if $u'=-$  and $J_{\alpha'}^{(u,u')}(X)=XL^{(u)}$ if $u'=+$. Finally, $\sigma_{\omega}^{(u,u')}(s')=\omega^{-2}\tr{\mathcal{A}}{\sigma_{\alpha}^{(\bar{u})}\sigma_{\alpha}\rho_{\omega,\delta}(s'/\omega^2)}$ if $\alpha=\alpha'$ and $\sigma_{\omega}^{(u,u')}(s')=u'\omega^{-2}\tr{\mathcal{A}}{\sigma_{\alpha'}^{(\bar{u})}\sigma_{\alpha}\rho_{\omega,\delta}(s'/\omega^2)}$ if $\alpha\neq\alpha'$. We note that for any $u,u'$, $\norm{\mathcal{K}_\alpha}_\diamond, \norm{\mathcal{J_{\alpha'}^{(u,u')}}}_\diamond\leq 2$ and, using Lemma~\ref{lemma:tr_sigma_bounds_noisy}, $\norm{\sigma_\omega^{(u,u')}(s')}_1 \leq 1/4+\omega^{-1}\mathcal{Z}'\delta/2+\omega^{-2}\mathcal{Z}'\delta$. Thus, applying Eq.~\ref{eq:change_of_variable_restate} and Lemma~\ref{lemma:local_rapid_mixing_superop_bounds}(c), we obtain that
    \begin{equation}
        \sum_{\alpha, \alpha'}\bigabs{\int_0^t \int_0^{s/\omega^2} \tr{}{O(t - s)\mathcal{Q}_{\alpha, \alpha'}^{(4)}(s')} e^{-2(s/\omega^2 - s')}ds' ds} \leq \left(2+4\frac{\mathcal{Z}'\delta}{\omega}+8\frac{\mathcal{Z}'\delta}{\omega^2}\right) \zeta^{(3)}(\gamma).
        \label{eq:Q4_lrm_bound_noisy}
    \end{equation}
    Considering Eqs.~\ref{eq:Q1_lrm_bound_noisy}-\ref{eq:Q4_lrm_bound_noisy}, we may choose $\lambda^{(2)}(\gamma)=2\zeta^{(3)}(\gamma)$ to satisfy the lemma statement. Note that $\zeta^{(3)}(\gamma)$ depends on $a$ and $\mathcal{Z}$ and that by Lemma~\ref{lemma:local_rapid_mixing_superop_bounds}, so long as $k(\abs{S_O},\gamma) \leq O\left(\exp(\gamma^{-\kappa})\right)$ as $\gamma \to 0$, then $\zeta^{(3)}(\gamma)\leq O\left(\gamma^{-(2d+1)(\kappa+1)}\right)$ as $\gamma \to 0$.
    
    Next, using the definition of $\mathcal{K}^{(0)}_\beta(t)$ from Lemma~\ref{lemma:remainder_noisy} and the fact that $O(t-s)$ is not supported on the ancillae, we have that $\tr{}{O(t-s)\mathcal{K}_\beta^{(0)}\left(s/\omega^2\right)}=\tr{}{O(t-s)\mathcal{N}_\beta(\sigma_{\omega}(s))}$ where $\sigma_{\omega}(s) = \rho_{\omega,\delta}(s/\omega^2)$. Noting that $\norm{\mathcal{N}_\beta}_\diamond\leq2$ and that $\norm{\sigma_\omega(s)}_1\leq 1$, we apply Lemma~\ref{lemma:local_rapid_mixing_superop_bounds}(b) to obtain
    \begin{align}
        \sum_\beta \bigabs{\int_0^t \tr{}{O(t-s)\mathcal{K}_\beta^{(0)}\left(s/\omega^2\right)}ds} \leq \zeta^{(2)}(\gamma),
    \end{align}
    where in the application of Lemma~\ref{lemma:local_rapid_mixing_superop_bounds} we consider $d(S_O, S'_\beta)=d(S_O,\tilde{S}'_\beta)$. Picking $\lambda'^{(1)}(\gamma)=\zeta^{(2)}(\gamma)$ satisfies the lemma statement. Note that $\zeta^{(2)}(\gamma)$ depends on $\mathcal{Z}$, $\mathcal{Z}'$, and $a$.

    In order to prove the last statement in this lemma, we must define a way to sum over all regions $S_\alpha$ and $S'_\beta$ in a single summation. For $c\in\{1,2,...,M+M'\}$, we define $S''_c=S_c$ if $c\leq M$ and $S''_c=S'_{c-M}$ if $c>M$. Remember that $M$ is the number of target Lindbladian terms $\mathcal{L}_\alpha$ and that $M'$ is the number of error Lindbladian terms $\mathcal{N}_\beta$. Given any $c$, we define the number other regions $S''_{c'}$ that intersect $S''_c$ to be bounded by some $\mathcal{Z}''\geq0$, i.e. $\abs{\{c' | S''_{c'} \cap S''_c \neq \emptyset\}} \leq \mathcal{Z}''$. If $c>M$, then $S''_c=S'_{c-M}$ can intersect a maximum of $\mathcal{Z}'$ other regions $S''_{c'}$. If $c \leq M$, then $S''_c=S_c$ can intersect a maximum of $\mathcal{Z}$ other regions $S_{\alpha}$ and each of the $\abs{S_c}$ points in $S_c$ can be in the support of a maximum of $\mathcal{Z}'$ regions $S''_\beta$. We can bound $\max_{\alpha}\abs{S_\alpha}$ by some $V_S$ and note that $V_S$ is a function of only $a$ and $d$. Hence we can choose $\mathcal{Z}'' = \max(\mathcal{Z'}, \mathcal{Z}+V_S\mathcal{Z}')$.
    
    Using these definitions, from Lemma~\ref{lemma:remainder_noisy}, we have that $\sum_{\alpha,\beta}\tr{}{O(t-s)\mathcal{K}^{(1)}_{\alpha,\beta}(s'/\omega^2)} = \sum_{c,c'} \tr{}{O(t-s)\mathcal{K}_c\mathcal{J}_{c'}(\sigma_\omega(s'))} + \textnormal{h.c.}$ where $\mathcal{K}_\alpha(X) = -i[L_c^\dagger, \sigma_c X]$ if $c\leq M$ and $0$ otherwise, $\mathcal{J}_{c'}=\mathcal{N}_{c'}$ if $c' > M$ and $0$ otherwise, and $\sigma_\omega(s')=\rho_{\omega,\delta}(s'/\omega^2)$. We note that $\norm{\mathcal{K}_c}_\diamond, \norm{\mathcal{J}_{c'}}_\diamond \leq 2$ and $\norm{\sigma_\omega(s')}_1 \leq 1$. Thus, applying Eq.~\ref{eq:change_of_variable_restate} and Lemma~\ref{lemma:local_rapid_mixing_superop_bounds}(c), we obtain that
    \begin{align}
        \sum_{\alpha, \beta}\bigabs{\int_0^t \int_0^{s/\omega^2} e^{-2(s/\omega^2 - s')} \textnormal{Tr}\big(O(t - s) \mathcal{K}_{\alpha,\beta}^{(1)}(s') \big) ds' ds} \leq 2{\zeta'}^{(3)}(\gamma).
    \end{align}
    
    Using the definition of $\mathcal{K}^{(2)}_{\alpha,\beta}$ in Lemma~\ref{lemma:remainder_noisy}, we have that $\sum_{\alpha,\beta}\tr{}{O(t-s)K^{(2)}_{\alpha,\beta}(s'/\omega^2)} = \sum_{c,c'}\tr{}{O(t-s)\mathcal{K}_c\mathcal{J}_{c'}(\sigma_\omega(s'))} + \textnormal{h.c.}$
    where $\mathcal{K}_c(X) = [L_c^\dagger, L_c X]$ if $c\leq M$ and $0$ otherwise, $\mathcal{J}_{c'}=\mathcal{N}_{c'}$ if $c' > M$ and $0$ otherwise, and $\sigma_\omega(s')=\rho_{\omega,\delta}(s'/\omega^2)/2$. We note that $\norm{\mathcal{K}_c}_\diamond, \norm{\mathcal{J}_{c'}}_\diamond \leq 2$ and $\norm{\sigma_\omega(s')}_1 \leq 1/2$. Thus, applying Eq.~\ref{eq:change_of_variable_restate} and Lemma~\ref{lemma:local_rapid_mixing_superop_bounds}(c), we obtain that
    \begin{align}
        \sum_{\alpha, \beta}\bigabs{\int_0^t \int_0^{s/\omega^2} e^{-2(s/\omega^2 - s')} \textnormal{Tr}\big(O(t - s) \mathcal{K}_{\alpha,\beta}^{(2)}(s') \big) ds' ds} \leq {\zeta'}^{(3)}(\gamma).
    \end{align}
    By Lemma~\ref{lemma:local_rapid_mixing_superop_bounds}, $\zeta^{(3)}(\gamma)\leq O\left(\gamma^{-(2d+1)(\kappa+1)}\right)$ as $\gamma \to 0$ and $\zeta^{(3)}(\gamma)$ depends on $\mathcal{Z}$, $a$, and  $\mathcal{Z}''$. Picking $\lambda'^{(2)}(\gamma)=2\zeta^{(3)}(\gamma)$ satisfies the lemma statement.
\end{proof}
\end{document}